\title{Complex Event Forecasting with Prediction Suffix Trees: Extended Technical Report\footnote{ \textcolor{red}{This is the extended technical report for the paper \emph{Complex Event Forecasting with Prediction Suffix Trees} to be published at the VLBD Journal (VLDBJ). Please, use the VLDBJ version, when it becomes available, if you need to cite the paper.}}} %TODO Please add
\author{Elias Alevizos}{Department of Informatics, National and Kapodistrian University of Athens, Greece \and Institute of Informatics \& Telecommunications, National Center for Scientific Research ``Demokritos'', Greece}{ilalev@di.uoa.gr,alevizos.elias@iit.demokritos.gr}{https://orcid.org/0000-0002-9260-0024}{}%TODO mandatory, please use full name; only 1 author per \author macro; first two parameters are mandatory, other parameters can be empty. Please provide at least the name of the affiliation and the country. The full address is optional
\author{Alexander Artikis}{Department of Maritime Studies, University of Piraeus, Greece \and Institute of Informatics \& Telecommunications, National Center for Scientific Research ``Demokritos'', Greece}{a.artikis@unipi.gr}{https://orcid.org/0000-0001-6899-4599}{}
\author{Georgios Paliouras}{Institute of Informatics \& Telecommunications, National Center for Scientific Research ``Demokritos'', Greece}{paliourg@iit.demokritos.gr}{https://orcid.org/0000-0001-9629-2367}{}
\authorrunning{Alevizos et al.}%TODO mandatory. First: Use abbreviated first/middle names. Second (only in severe cases): Use first author plus 'et al.'
\keywords{Finite Automata, Regular Expressions, Complex Event Recognition, Complex Event Processing, Symbolic Automata, Variable-order Markov Models}%TODO mandatory; please add comma-separated list of keywords
\def\true{\textsf{\footnotesize TRUE}}
\def\false{\textsf{\footnotesize FALSE}}
\def\sfa{$\mathit{SFA}$}
\def\sre{$\mathit{SRE}$}
\def\ssfa{$\mathit{sSFA}$}
\def\ssre{$\mathit{sSRE}$}
\def\dsfa{$\mathit{DSFA}$}
\def\dfa{$\mathit{DFA}$}
\def\psa{$\mathit{PSA}$}
\def\pst{$\mathit{PST}$}
\def\cst{$\mathit{CST}$}
\def\skipany{\textsf{\footnotesize skip-till-any-match}}
\def\where{\textsf{\footnotesize WHERE}}
\newtheorem*{proposition*}{Proposition}
\newtheorem*{theorem*}{Theorem}
\begin{document}

\maketitle

%TODO mandatory: add short abstract of the document
\begin{abstract}
Complex Event Recognition (CER) systems have become popular in the past two decades due to their ability to ``instantly'' detect patterns on real-time streams of events.
However, there is a lack of methods for forecasting when a pattern might occur before such an occurrence is actually detected by a CER engine.
We present a formal framework that attempts to address the issue of Complex Event Forecasting (CEF).
Our framework combines two formalisms:
a) symbolic automata which are used to encode complex event patterns; and
b) prediction suffix trees which can provide a succinct probabilistic description of an automaton's behavior.
%We show how these two formalisms can be combined in order to accurately and efficiently perform CEF.
We compare our proposed approach against state-of-the-art methods and show its advantage in terms of accuracy and efficiency. 
In particular, prediction suffix trees,
being variable-order Markov models,
have the ability to capture long-term dependencies in a stream by remembering only those past sequences that are informative enough. 
Our experimental results demonstrate the benefits,
in terms of accuracy,
of being able to capture such long-term dependencies.
This is achieved by increasing the order of our model beyond what is possible with full-order Markov models that need to perform an exhaustive enumeration of all possible past sequences of a given order.
We also discuss extensively how CEF solutions should be best evaluated on the quality of their forecasts.     
\end{abstract}

\section{Introduction}

The avalanche of streaming data in the last decade has sparked an interest in technologies processing high-velocity data streams.
Complex Event Recognition (CER) is one of these technologies which have enjoyed increased popularity
\cite{DBLP:journals/csur/CugolaM12,DBLP:journals/vldb/GiatrakosAADG20}.
The main goal of a CER system is to detect interesting activity patterns occurring within a stream of events,
coming from sensors or other devices. 
Complex Events must be detected with minimal latency.
As a result,
a significant body of work has been devoted to computational optimization issues. 
Less attention has been paid to forecasting event patterns \cite{DBLP:journals/vldb/GiatrakosAADG20},
despite the fact that forecasting has attracted considerable attention in various related research areas,
such as time-series forecasting \cite{montgomery2015introduction},
sequence prediction \cite{DBLP:journals/jair/BegleiterEY04,buhlmann1999variable,DBLP:journals/ml/RonST96,DBLP:journals/tcom/ClearyW84,DBLP:journals/tit/WillemsST95}, 
temporal mining \cite{DBLP:conf/icdm/VilaltaM02,DBLP:conf/kdd/LaxmanTW08,DBLP:journals/eswa/ZhouCG15,DBLP:journals/vldb/ChoWYZC11} 
and process mining \cite{DBLP:journals/tsc/Marquez-Chamorro18}. 
The need for Complex Event Forecasting (CEF) has been acknowledged though,
as evidenced by several conceptual proposals \cite{DBLP:conf/bci/FulopBTDVF12,DBLP:conf/edoc/ChristKK16,DBLP:journals/tasm/ArtikisBBWEFGHLPSS14,DBLP:conf/debs/EngelE11}.

Consider, for example, the domain of credit card fraud management \cite{DBLP:conf/debs/ArtikisKCBMSFP17},
where the detection of suspicious activity patterns of credit cards must occur with minimal latency that is in the order of a few milliseconds.
The decision margin is extremely narrow. 
Being able to forecast that a certain sequence of transactions is very likely to be a fraudulent pattern provides wider margins both for decision and for action.
For example, a processing system might decide to devote more resources and higher priority to those suspicious patterns to ensure that the latency requirement will be satisfied.
The field of moving object monitoring (for ships at sea, aircrafts in the air or vehicles on the ground) provides yet another example where CEF could be a crucial functionality \cite{DBLP:conf/edbt/VourosVSDPGTPAA18}.
Collision avoidance is obviously of paramount importance for this domain.
A monitoring system with the ability to infer that two (or more) moving objects are on a collision course and forecast that they will indeed collide if no action is taken would provide significant help to the relevant authorities.
CEF could play an important role even in in-silico biology,
where computationally demanding simulations of biological systems are often executed to determine the properties of these systems and their response to treatments \cite{ozik2019learning}.
These simulations are typically run on supercomputers and are evaluated afterwards to determine which of them seem promising enough from a therapeutic point of view. 
A system that could monitor these simulations as they run, forecast which of them will turn out to be non-pertinent and decide to terminate them at an early stage, could thus save valuable computational resources and significantly speed-up the execution of such in-silico experiments.
Note that these are domains with different characteristics.
For example, some of them have a strong geospatial component (monitoring of moving entities),
whereas in others this component is minimal (in-silico biology).
Domain-specific solutions (e.g., trajectory prediction for moving objects) cannot thus be universally applied. 
We need a more general framework.

Towards this direction,
we present a formal framework for CEF,
along with an implementation and extensive experimental results on real and synthetic data from diverse application domains.
Our framework allows a user to define a pattern for a complex event, 
e.g., a pattern for fraudulent credit card transactions or for two moving objects moving in close proximity and towards each other.
It then constructs a probabilistic model for such a pattern in order to forecast,
on the basis of an event stream,
if and when a complex event is expected to occur.
We use the formalism of symbolic automata \cite{DBLP:conf/cav/DAntoniV17} to encode a pattern and that of prediction suffix trees \cite{DBLP:journals/ml/RonST96,DBLP:conf/nips/RonST93} to learn a probabilistic model for the pattern.    
We formally show how symbolic automata can be combined with prediction suffix trees to perform CEF.
Prediction suffix trees fall under the class of the so-called variable-order Markov models,
i.e., Markov models whose order (how deep into the past they can look for dependencies) can be increased beyond what is computationally possible with full-order models. 
They can do this by avoiding a full enumeration of every possible dependency and focusing only on ``meaningful'' dependencies. 

Our empirical analysis shows the advantage of being able to use high-order models over related non-Markov methods for CEF and methods based on low-order Markov models (or Hidden Markov Models). 
The price we have to pay for this increased accuracy is a decrease in throughput,
which still however remains high (typically tens of thousands of events per second).
The training time is also increased, 
but still remains within the same order of magnitude.
%(typically tens of seconds for training datasets composed of hundreds of thousands of input events).
This fact allows us to be confident that training could also be performed online.

Our contributions may be summarized as follows:
\begin{itemize}
	\item We present a CEF framework that is both formal and easy to use. 
	It is often the case that CER frameworks lack clear semantics,
	which in turn leads to confusion about how patterns should be written and which operators are allowed \cite{DBLP:journals/vldb/GiatrakosAADG20}. 
	This problem is exacerbated in CEF, 
	where a formalism for defining the patterns to be forecast may be lacking completely. 
	Our framework is formal, compositional and as easy to use as writing regular expressions.
	The only basic requirement is that the user declaratively define a pattern and provide a training dataset.
	\item Our framework can uncover deep probabilistic dependencies in a stream by using a variable-order Markov model. 
	By being able to look deeper into the past, we achieve higher accuracy scores compared to other state-of-the-art solutions for CEF, as shown in our extensive empirical analysis.
	\item Our framework can perform various types of forecasting and thus subsumes previous methods 		that restrict themselves to one type of forecasting.
	It can perform both simple event forecasting (i.e., predicting what the next input event might 			be) and Complex Event forecasting (events defined through a pattern). 
	As we explain later, moving from simple event to Complex Event forecasting is not trivial.
	Using simple event forecasting to project in the future the most probable sequence of input 			events and then attempt to detect Complex Events on this future sequence yields sub-optimal 			results. 
	A system that can perform simple event forecasting cannot thus be assumed to perform CEF as 			well.
	\item We also discuss the issue of how the forecasts of a CEF system may be evaluated with 				respect to their quality.
	Previous methods have used metrics borrowed from time-series forecasting (e.g., the root mean 			square error) or typical machine learning tasks (e.g., precision). 
	We propose a more comprehensive set of metrics that takes into account the idiosyncrasies of CEF.
	Besides accuracy itself, the usefulness of forecasts is also judged by their ``earliness''.
	We discuss how the notion of earliness may be quantified.
\end{itemize}

\subsection{Running Example}
\label{sec:example}

We now present the general approach of CER/CEF systems,
along with an example that we will use throughout the rest of the paper to make our presentation more accessible.

The input to a CER system consists of two main components: 
a stream of events, also called simple derived events (SDEs);
and a set of patterns that define relations among the SDEs.
Instances of pattern satisfaction are called Complex Events (CEs).
The output of the system is another stream, composed of the detected CEs.
Typically, CEs must be detected with very low latency,
which, in certain cases, may even be in the order of a few milliseconds \cite{luckham_power_2001,DBLP:books/daglib/0024062,hedtstuck_complex_2017}.

\begin{table*}[!ht]
\centering
\caption{Example event stream from the maritime domain.}
\begin{tabular}{cccccccc} 
\toprule
Navigational status & fishing & fishing & fishing & under way & under way & under way & ... \\ 
\midrule
vessel id & 78986 & 78986 & 78986 & 78986 & 78986 & 78986 & ... \\
\midrule
speed & 2 & 1 & 3 & 22 & 19 & 27 & ... \\
\midrule
timestamp & 1 & 2 & 3 & 4 & 5 & 6 & ... \\
\bottomrule
\end{tabular}
\label{table:stream}
\end{table*}

As an example, consider the scenario of a system receiving an input stream consisting of events emitted from vessels sailing at sea.
These events may contain information regarding the status of a vessel,
e.g., its location, speed and heading.
This is indeed a real-world scenario and the emitted messages are called AIS (Automatic Identification System) messages.
Besides information about a vessel's kinematic behavior,
each such message may contain additional information about the vessel's status (e.g., whether it is fishing),
along with a timestamp and a unique vessel identifier.
Table \ref{table:stream} shows a possible stream of AIS messages,
including \emph{speed} and \emph{timestamp} information.
A maritime expert may be interested to detect several activity patterns for the monitored vessels,
such as sudden changes in the kinematic behavior of a vessel (e.g., sudden accelerations),
sailing in protected (e.g., NATURA) areas, etc.
The typical workflow consists of the analyst first writing these patterns in some (usually) declarative language,
which are then used by a computational model applied on the stream of SDEs to detect CEs. 

\subsection{Structure of the Paper}

The rest of the paper is structured as follows. 
We start by presenting in Section \ref{sec:related} the relevant literature on CEF. 
Since work on CEF has been limited thus far,
we also briefly mention forecasting ideas from some other related fields that can provide inspiration to CEF.
Subsequently, in Section \ref{sec:symbolic} we discuss the formalism of symbolic automata and how it can be adapted to perform recognition on real-time event streams.
Section \ref{sec:prob} shows how we can create a probabilistic model for a symbolic automaton by using prediction suffix trees,
while Section \ref{sec:complexity} presents a detailed complexity analysis.
We then discuss how we can quantify the quality of forecasts in Section \ref{sec:metrics}.
We finally demonstrate the efficacy of our framework in Section \ref{sec:experiments},
by showing experimental results on two application domains.
We conclude with Section \ref{sec:outro},
discussing some possible directions for future work. 
The paper assumes a basic familiarity with automata theory, logic and Markov chains.
In Table \ref{table:notation} we have gathered the notation that we use throughout the paper, 
along with a brief description of every symbol.

\begin{table*}
\centering
\scriptsize
\caption{Notation used throughout the paper.}
\begin{tabular}{ccc}
\toprule
Symbol & Meaning \\
\midrule
\multicolumn{2}{c}{\textbf{Boolean algebra}} \\
\midrule
$\mathcal{A}$ & effective Boolean algebra \\
\midrule
$\mathcal{D}$ & domain elements of a Boolean algebra \\
\midrule
$\Psi$ & predicates of a Boolean algebra \\
\midrule
$\bot,\top$ & \false\ and \true\ predicates of a Boolean algebra  \\
\midrule
$\wedge$, $\vee$, $\neg$ & logical conjunction, disjunction, negation \\
\midrule
$\mathcal{L}$ ($\mathcal{L} \subseteq \mathcal{D}^{*}$) & a language over $\mathcal{D}$ \\
\midrule
\multicolumn{2}{c}{\textbf{Symbolic expressions and automata}} \\
\midrule
$\epsilon$ & the ``empty'' symbol \\
\midrule
$R_{1} + R_{2}$, $R_{1} \cdot R_{2}$, $R^{*}$ & regular disjunction / concatenation / iteration\\
\midrule
$M$ & automaton \\
\midrule
$Q$, $q^{s}$, $Q^{f}$ & automaton states / start state / final states \\
\midrule
$\Delta$, $\delta$ & automaton transition function / transition \\
\midrule
$\mathcal{L}(R)$, $\mathcal{L}(M)$ & language of expression $R$ / automaton $M$ \\ 
\midrule
\multicolumn{2}{c}{\textbf{Streaming expressions and automata}} \\
\midrule
$t_{i} \in \mathcal{D}$ & tuple / simple event \\
\midrule
$S=t_{1},t_{2},\cdots$, $S_{i..j}=t_{i},\cdots,t_{j}$ & stream / stream ``slice'' from index $i$ to $j$ \\
\midrule
$c=[i,q]$ & \parbox{8.0cm}{\centering automaton configuration ($i$ current position, $q$ current state)} \\
\midrule
$[i,q] \overset{\delta}{\rightarrow} [i',q']$ & configuration succession \\
\midrule
$\varrho = [1,q_{1}] \overset{\delta_{1}}{\rightarrow} \cdots \overset{\delta_{k}}{\rightarrow} [k+1,q_{k+1}]$ & run of automaton $M$ over  stream $S_{1..k}$ \\
\midrule
$N=\mathit{Minterms}(\mathit{Predicates}(M))$ & minterms of automaton $M$ \\
\midrule
\multicolumn{2}{c}{\textbf{Variable-order Markov models}} \\
\midrule
$\Sigma$, $\sigma$, $s$ & alphabet of classical automaton / symbol / string \\
\midrule
$\mathit{suffix}(s)$ & the longest suffix of $s$ different from $s$ \\
\midrule
$\hat{P}$ & predictor \\
\midrule
$l(\hat{P},S_{1..k})$ & average log-loss of $\hat{P}$ over $S_{1..k}$ \\
\midrule
$T$ & prediction suffix tree \\
\midrule
$\gamma_{s}$ & next symbol probability function for a node of $T$\\
\midrule
$\theta_{1}$,$\theta_{2}$ & thresholds for tree learning \\
\midrule
$\alpha$ & approximation parameter for tree learning \\
\midrule
$n$ & \parbox{8.0cm}{\centering maximum number of states for learned suffix automaton} \\
\midrule
$m$ & order of suffix tree / automaton / Markov model \\
\midrule
$\tau$ & transition function of suffix automaton \\
\midrule
$\gamma$, $\Gamma$ & \parbox{8.0cm}{\centering next symbol probability function of suffix automaton / embedding}\\
\midrule
$\pi$ & \parbox{8.0cm}{\centering initial probability distribution over start states of automaton / embedding}\\
\bottomrule
\end{tabular}
\label{table:notation}
\end{table*}

\section{Related Work}
\label{sec:related}

There are multiple ways to define the task of forecasting over time-evolving data streams. 
Before proceeding with the presentation of previous work on forecasting,
we first begin with a terminological clarification.
It is often the case that the terms ``forecasting'' and ``prediction'' are used interchangeably as equivalent terms. 
For reasons of clarity,
we opt for the term of ``forecasting'' to describe our work,
since there does exist a conceptual difference between forecasting and prediction, as the latter term is understood in machine learning.
In machine learning,
the goal is to ``predict'' the output of a function on previously unseen input data.
The input data need not necessarily have a temporal dimension and the term ``prediction'' refers to the output of the learned function on a new data point.  
For this reason we avoid using the term ``prediction''.
Instead, we choose the term ``forecasting'' to define the task of predicting the temporally future output of some function or the occurrence of an event.
Time is thus a crucial component for forecasting.
Moreover, an important challenge stems from the fact that,
from the (current) timepoint where a forecast is produced until the (future) timepoint for which we try to make a forecast,
no data is available.
A forecasting system must (implicitly or explicitly) fill in this data gap in order to produce a forecast.
%whereas in a typical machine learning task a prediction can be made using all available data that refer to the target point of the prediction.

In what follows,
we present previous work on CEF,
as defined above,
in order of increasing relevance to CER.
Since work on CEF has been limited thus far,
we start by briefly mentioning some forecasting ideas from other fields and discuss how CEF differs from these research areas.

\emph{Time-series forecasting.} 
Time-series forecasting is an area with some similarities to CEF and a significant history of contributions \cite{montgomery2015introduction}. 
However, it is not possible to directly apply techniques from time-series forecasting to CEF.
Time-series forecasting typically focuses on streams of (mostly) real-valued variables and the goal is to forecast relatively simple patterns. 
On the contrary, in CEF we are also interested in categorical values, 
related through complex patterns and involving multiple variables.
Another limitation of time-series forecasting methods is that they do not provide a language with which we can define complex patterns,
but simply try to forecast the next value(s) from the input stream/series. 
In CER, the equivalent task would be to forecast the next input event(s) (SDEs).
This task in itself is not very useful for CER though,
since the majority of SDE instances should be ignored and do not contribute to the detection of CEs
(see the discussion on selection policies in Section \ref{sec:symbolic}).
For example,
if we want to determine whether a ship is following a set of pre-determined waypoints at sea,
we are only interested in the messages where the ship ``touches'' each waypoint. 
All other intermediate messages are to be discarded and should not constitute part of the match. 
CEs are more like ``anomalies'' and their number is typically orders of magnitude lower than the number of SDEs.
One could possibly try to leverage techniques from SDE forecasting to perform CE forecasting.
At every timepoint, we could try to estimate the most probable sequence of future SDEs, 
then perform recognition on this future stream of SDEs and check whether any future CEs are detected.
We have experimentally observed that such an approach yields sub-optimal results. 
It almost always fails to detect any future CEs.
This behavior is due to the fact that CEs are rare.
As a result, projecting the input stream into the future creates a ``path'' with high probability but fails to include the rare ``paths'' that lead to a CE detection.
Because of this serious under-performance of this method,
we do not present detailed experimental results.

\emph{Sequence prediction (compression).}
Another related field is that of prediction of discrete sequences over finite alphabets and is closely related to the field of compression,
as any compression algorithm can be used for prediction and vice versa.
The relevant literature is extensive.
Here we focus on a sub-field with high importance for our work,
as we have borrowed ideas from it.
It is the field of sequence prediction via variable-order Markov models \cite{DBLP:journals/jair/BegleiterEY04,buhlmann1999variable,DBLP:journals/ml/RonST96,DBLP:conf/nips/RonST93,DBLP:journals/tcom/ClearyW84,DBLP:journals/tit/WillemsST95}.
As the name suggests,
the goal is to perform prediction by using a high-order Markov model.
Doing so in a straightforward manner,
by constructing a high-order Markov chain with all its possible states,
is prohibitively expensive due to the combinatorial explosion of the number of states.
Variable-order Markov models address this issue by retaining only those states that are ``informative'' enough.
In Section \ref{sec:vmm},
we discuss the relevant literature in more details.
The main limitation of previous methods for sequence prediction is that they they also do not provide a language for patterns and focus exclusively on next symbol prediction,
i.e., they try to forecast the next symbol(s) in a stream/string of discrete symbols.
As already discussed,
this is a serious limitation for CER.
An additional limitation is that they work on single-variable discrete sequences of symbols,
whereas CER systems consume streams of events,
i.e., streams of tuples with multiple variables, both numerical and categorical.
Notwithstanding these limitations,
we show that variable-order models can be combined with symbolic automata in order to overcome their restrictions and perform CEF.

\emph{Temporal mining.}
Forecasting methods have also appeared in the field of temporal pattern mining \cite{DBLP:conf/icdm/VilaltaM02,DBLP:conf/kdd/LaxmanTW08,DBLP:journals/eswa/ZhouCG15,DBLP:journals/vldb/ChoWYZC11}. 
A common assumption in these methods is that patterns are usually defined either as association rules \cite{DBLP:conf/sigmod/AgrawalIS93} or as frequent episodes \cite{DBLP:journals/datamine/MannilaTV97}.
In \cite{DBLP:conf/icdm/VilaltaM02} the goal is to identify sets of event types that frequently precede a rare, target event within a temporal window, using a framework similar to that of association rule mining.
In \cite{DBLP:conf/kdd/LaxmanTW08}, a forecasting model is presented,
based on a combination of standard frequent episode discovery algorithms, Hidden Markov Models and mixture models.
The goal is to calculate the probability of the immediately next event in the stream.
In \cite{DBLP:journals/eswa/ZhouCG15} a method is presented for batch, online mining of sequential patterns.
The learned patterns are used to test whether a prefix matches the last events seen in the stream and therefore make a forecast.
The method proposed in \cite{DBLP:journals/vldb/ChoWYZC11} starts with a given episode rule (as a Directed Acyclic Graph) and detects the
minimal occurrences of the antecedent of a rule defining a complex event, 
i.e., those ``clusters'' of antecedent events that are closer together in time.
From the perspective of CER,
the disadvantage of these methods is that they usually target simple patterns,
defined either as strict sequences or as sets of input events.
Moreover, the input stream is composed of symbols from a finite alphabet,
as is the case with the compression methods mentioned above. 

\emph{Sequence prediction based on neural networks.}
Lately, a significant body of work has focused on event sequence prediction and point-of-interest recommendations through the use of neural networks
(see, for example, \cite{DBLP:conf/ijcai/LiDL18,DBLP:conf/ijcai/ChangPPKK18}).
These methods are powerful in predicting the next input event(s) in a sequence of events,
but they suffer from limitations already mentioned above.
They do not provide a language for defining complex patterns among events and their focus is thus on SDE forecasting.
An additional motivation for us to first try a statistical method rather than going directly to neural networks is that, 
in other related fields, 
such as time series forecasting,
statistical methods have often been proven to be more accurate and less demanding in terms of computational resources than ML ones \cite{makridakis2018statistical}.

\emph{Process mining.} 
Compared to the previous categories for forecasting,
the field of process mining is more closely related to CER \cite{van2011process}.
Processes are typically defined as transition systems (e.g., automata or Petri nets) and are used to monitor a system, 
e.g., for conformance testing.
Process mining attempts to automatically learn a process from a set of traces,
i.e., a set of activity logs.
Since 2010,
a significant body of work has appeared,
targeting process prediction,
where the goal is to forecast if and when a process is expected to be completed
(for surveys, see \cite{DBLP:journals/tsc/Marquez-Chamorro18,DBLP:conf/bpm/Francescomarino18}). 
According to \cite{DBLP:journals/tsc/Marquez-Chamorro18},
until 2018, 39 papers in total have been published dealing with process prediction.
At a first glance,
process prediction seems very similar to CEF.
At a closer look though,
some important differences emerge.
An important difference is that processes are usually given directly as transition systems,
whereas CER patterns are defined in a declarative manner.
The transition systems defining processes are usually composed of long sequences of events.
On the other hand,
CER patterns are shorter,
may involve Kleene-star, iteration operators (usually not present in processes)  
and may even be instantaneous.
Consider,
for example,
a pattern for our running example,
trying to detect speed violations by simply checking whether a vessel's speed exceeds some threshold. 
This pattern could be expanded to detect more violations by adding more disjuncts,
e.g., for checking whether a vessel is sailing within a restricted area,
all of which might be instantaneous.
A CEF system cannot always rely on the memory implicitly encoded in a transition system and has to be able to learn the sequences of events that lead to a (possibly instantaneous) CE.
Another important difference is that process prediction focuses on traces, which are complete, full matches,
whereas CER focuses on continuously evolving streams which may contain many irrelevant events.
A learning method has to take into account the presence of these irrelevant events.
In addition to that,
since CEs are rare events, 
the datasets are highly imbalanced,
with the vast majority of ``labels'' being negative
(i.e., most forecasts should report that no CE is expected to occur,
with very few being positive).
A CEF system has to strike a fine balance between the positive and negative forecasts it produces
in order to avoid drowning the positives in the flood of all the negatives and, at the same time, avoid over-producing positives that lead to false alarms.
This is also an important issue for process prediction,
but becomes critical for a CEF system,
due to the imbalanced nature of the datasets.
In Section \ref{sec:experiments},
we have included one method from the field of process prediction to our empirical evaluation.
This ``unfair'' comparison (in the sense that it is applied on datasets more suitable for CER) shows that this method consistently under-performs with respect to other methods from the field of CEF. 

\emph{Complex event forecasting.}
Contrary to process prediction, forecasting has not received much attention in the field of CER,
although some conceptual proposals have acknowledged the need for CEF \cite{DBLP:conf/bci/FulopBTDVF12,DBLP:conf/debs/EngelE11,DBLP:conf/edoc/ChristKK16}.
The first concrete attempt at CEF was presented in \cite{DBLP:conf/debs/MuthusamyLJ10}.
A variant of regular expressions was used to define CE patterns,
which were then compiled into automata.
These automata were translated to Markov chains through a direct mapping,
where each automaton state was mapped to a Markov chain state.
Frequency counters on the transitions were used to estimate the Markov chain's transition matrix.
This Markov chain was finally used to estimate if a CE was expected to occur within some future window.
As we explain in Section \ref{sec:vmm},
in the worst case,
such an approach assumes that all SDEs are independent 
(even when the states of the Markov chain are not independent)  
and is thus unable to encode higher-order dependencies. 
This issue is explained in more detail in Section \ref{sec:vmm}.
Another example of event forecasting was presented in \cite{DBLP:conf/wf-iot/AkbarCMZ15}.
Using Support Vector Regression,
the proposed method was able to predict the next input event(s) within some future window.
This technique is similar to time-series forecasting,
as it mainly targets the prediction of the (numerical) values of the attributes of the input (SDE) events
(specifically, traffic speed and intensity from a traffic monitoring system).
Strictly speaking,
it cannot therefore be considered a CE forecasting method,
but a SDE forecasting one.
Nevertheless, the authors of \cite{DBLP:conf/wf-iot/AkbarCMZ15} proposed the idea that these future SDEs may be used by a CER engine to detect future CEs.
As we have already mentioned though,
in our experiments,
this idea has yielded poor results.
In \cite{DBLP:conf/colcom/PandeyNC11},
Hidden Markov Models (HMM) are used to construct a probabilistic model for the behavior of a transition system describing a CE.
The observable variable of the HMM corresponds to the states of the transition system,
i.e., an observation sequence of length $l$ for the HMM consists of the sequence of states visited by the system after consuming $l$ SDEs.
These $l$ SDEs are mapped to the hidden variable,
i.e.,
the last $l$ values of the hidden variable are the last $l$ SDEs.
In principle,
HMMs are more powerful than Markov chains.
In practice, however, HMMs are hard to train (\cite{DBLP:journals/jair/BegleiterEY04,DBLP:journals/ml/AbeW92}) and require elaborate domain modeling,
since mapping a CE pattern to a HMM is not straightforward (see Section \ref{sec:vmm} for details).
In contrast, 
our approach constructs seamlessly a probabilistic model from a given CE pattern (declaratively defined).
%\cite{DBLP:conf/icdm/GillaniKSSGFL17}
Automata and Markov chains are again used in \cite{DBLP:conf/debs/AlevizosAP17,DBLP:conf/lpar/AlevizosAP18}.
The main difference of these methods compared to \cite{DBLP:conf/debs/MuthusamyLJ10} is that they can accommodate higher-order dependencies by creating extra states for the automaton of a pattern.
The method presented in \cite{DBLP:conf/debs/AlevizosAP17} has two important limitations: 
first, it works only on discrete sequences of finite alphabets;
second, the number of states required to encode long-term dependencies grows exponentially.
The first issue was addressed in \cite{DBLP:conf/lpar/AlevizosAP18},
where symbolic automata are used that can handle infinite alphabets.
However, the problem of the exponential growth of the number of states still remains.
We show how this problem can be addressed by using variable-order Markov models.
A different approach is followed in \cite{li20data},
where knowledge graphs are used to encode events and their timing relationships. 
Stochastic gradient descent is employed to learn the weights of the graph's edges that determine how important an event is with respect to another target event.
However, this approach falls in the category of SDE forecasting, 
as it does not target complex events.
More precisely, it tries to forecast which predicates the forthcoming SDEs will satisfy,
without taking into account relationships between the events themselves (e.g., through simple sequences).

\section{Complex Event Recognition with Symbolic Automata}
\label{sec:symbolic}

Our approach for CEF is based on a specific formal framework for CER,
which we are presenting here.
There are various surveys of CER methods, 
presenting various CER systems and languages \cite{DBLP:journals/csur/CugolaM12,DBLP:journals/csur/AlevizosSAP17,DBLP:journals/vldb/GiatrakosAADG20}.
Despite this fact though, there is still no consensus about which operators must be supported by a CER language and what their semantics should be. 
In this paper, 
we follow \cite{DBLP:journals/vldb/GiatrakosAADG20} and \cite{DBLP:conf/icdt/GrezRU19}, 
which have established some core operators that are most often used.
In a spirit similar to \cite{DBLP:conf/icdt/GrezRU19},
we use automata as our computational model and define a CER language whose expressions can readily be converted to automata.
Instead of choosing one of the automaton models already proposed in the CER literature,
we employ symbolic regular expressions and automata \cite{DBLP:conf/cav/DAntoniV17,DBLP:journals/fmsd/DAntoniV15,DBLP:conf/icst/VeanesHT10}.
The rationale behind our choice is that,
contrary to other automata-based CER models,
symbolic regular expressions and automata have nice closure properties and clear (both declarative and operational), compositional semantics
(see \cite{DBLP:conf/icdt/GrezRU19} for a similar line of work,
based on symbolic transducers).
In previous automata-based CER systems,
it is unclear which operators may be used and if they can be arbitrarily combined
(see \cite{DBLP:conf/icdt/GrezRU19,DBLP:journals/vldb/GiatrakosAADG20} for a discussion of this issue).
On the contrary,
the use of symbolic automata allows us to construct any pattern that one may desire through an arbitrary use of the provided operators.
One can then check whether a stream satisfies some pattern either declaratively
(by making use of the definition for symbolic expressions, presented below) or operationally (by using a symbolic automaton).
This even allows us to write unit tests (as we have indeed done) ensuring that the semantics of symbolic regular expressions do indeed coincide with those of symbolic automata,
something not possible with other frameworks.
In previous methods,
there is also a lack of understanding with respect to the properties of the employed computational models,
e.g., whether the proposed automata are determinizable,
an important feature for our work.
Symbolic automata, on the other hand, have nice closure properties and are well-studied.
Notice that this would also be an important feature for possible optimizations based on pattern re-writing,
since such re-writing would require us to have a mechanism determining whether two expressions are equivalent. 
Our framework provides such a mechanism.

\subsection{Symbolic Expressions and Automata}

The main idea behind symbolic automata is that each transition,
instead of being labeled with a symbol from an alphabet, 
is equipped with a unary formula from an effective Boolean algebra \cite{DBLP:conf/cav/DAntoniV17}.
A symbolic automaton can then read strings of elements and,
upon reading an element while in a given state,
can apply the predicates of this state's outgoing transitions to that element.
The transitions whose predicates evaluate to \true\ are said to be ``enabled'' and the automaton moves to their target states.

The formal definition of an effective Boolean algebra is the following: 
\begin{definition}[Effective Boolean algebra \cite{DBLP:conf/cav/DAntoniV17}]
An effective Boolean algebra is a tuple 
($\mathcal{D}$, $\Psi$, $\llbracket \_ \rrbracket$, $\bot$, $\top$, $\vee$, $\wedge$, $\neg$)
where 
\begin{itemize}
\item $\mathcal{D}$ is a set of domain elements; 
\item $\Psi$ is a set of predicates closed under the Boolean connectives; 
\item $\bot,\top\in \Psi$ ; 
\item and the component $\llbracket \_ \rrbracket : \Psi \rightarrow 2^{\mathcal{D}}$ is a function
such that 
\begin{itemize}
	\item $\llbracket \bot \rrbracket = \emptyset$
	\item $\llbracket \top \rrbracket = \mathcal{D}$
	\item and $\forall \phi,\psi \in \Psi$:
	\begin{itemize}
		\item $\llbracket \phi \vee \psi \rrbracket = \llbracket \phi \rrbracket \cup \llbracket \psi \rrbracket$
		\item $\llbracket \phi \wedge \psi \rrbracket = \llbracket \phi \rrbracket \cap \llbracket \psi \rrbracket $
		\item $\llbracket \neg \phi \rrbracket = \mathcal{D} \setminus \llbracket \phi \rrbracket$
	\end{itemize}
\end{itemize}
\end{itemize}
It is also required that checking satisfiability of $\phi$, i.e., whether $\llbracket \phi \rrbracket \neq \emptyset$,  is decidable and that the operations of $\vee$, $\wedge$ and $\neg$ are computable.
\end{definition}
Using our running example, 
such an algebra could be one consisting of AIS messages,
corresponding to $\mathcal{D}$,
along with two predicates about the speed of a vessel,
e.g., $\mathit{speed} < 5$ and $\mathit{speed} > 20$.
These predicates would correspond to $\Psi$.
The predicate $\mathit{speed} < 5$ would be mapped, 
via $\llbracket \_ \rrbracket$,
to the set of all AIS messages whose speed level is below 5 knots.
According to the definition above,
$\bot$ and $\top$ should also belong to $\Psi$,
along with all the combinations of the original two predicates constructed from the Boolean connectives,
e.g., $\neg (\mathit{speed} < 5) \wedge \neg (\mathit{speed} > 20)$.

Elements of $\mathcal{D}$ are called \emph{characters} and finite sequences of characters are called \emph{strings}. 
A set of strings $\mathcal{L}$ constructed from elements of $\mathcal{D}$ ($\mathcal{L} \subseteq \mathcal{D}^{*}$, where $^{*}$ denotes Kleene-star) is called a language over $\mathcal{D}$.

As with classical regular expressions \cite{DBLP:books/daglib/0016921},
we can use symbolic regular expressions to represent a class of languages over $\mathcal{D}$.
\begin{definition}[Symbolic regular expression]
A symbolic regular expression (\sre) over an effective Boolean algebra ($\mathcal{D}$, $\Psi$, $\llbracket \_ \rrbracket$, $\bot$, $\top$, $\vee$, $\wedge$, $\neg$) is recursively defined as follows:
\begin{itemize}
	\item The constants $\epsilon$ and $\emptyset$ are symbolic regular expressions with $\mathcal{L}(\epsilon) = \{\epsilon\}$ and $\mathcal{L}(\emptyset) = \{ \emptyset \}$;
	\item If $\psi \in \Psi$, then $R := \psi$ is a symbolic regular expression, with $\mathcal{L}(\psi) = \llbracket \psi \rrbracket$, 
	i.e., the language of $\psi$ is the subset of $\mathcal{D}$ for which $\psi$ evaluates to \true;
	\item Disjunction / Union: If $R_{1}$ and $R_{2}$ are symbolic regular expressions, then $R := R_{1} + R_{2}$ is also a symbolic regular expression, with $\mathcal{L}(R) = \mathcal{L}(R_{1}) \cup \mathcal{L}(R_2)$;
	\item Concatenation / Sequence: If $R_{1}$ and $R_{2}$ are symbolic regular expressions, then $R := R_{1} \cdot R_{2}$ is also a symbolic regular expression, with $\mathcal{L}(R) = \mathcal{L}(R_{1}) \cdot \mathcal{L}(R_2)$, where $\cdot$ denotes concatenation. $\mathcal{L}(R)$ is then the set of all strings constructed from concatenating each element of $\mathcal{L}(R_{1})$ with each element of $\mathcal{L}(R_{2})$;
	\item Iteration / Kleene-star: If $R$ is a symbolic regular expression, then $R' := R^{*}$ is a symbolic regular expression, with $\mathcal{L}(R^{*})=(\mathcal{L}(R))^{*}$, 
	where $\mathcal{L}^{*} = \bigcup\limits_{i \geq 0}{\mathcal{L}^{i}}$ and $\mathcal{L}^{i}$ is the concatenation of $\mathcal{L}$ with itself $i$ times.
\end{itemize}
\end{definition}
As an example,
if we want to detect instances of a vessel accelerating suddenly,
we could write the expression $R := (\mathit{speed < 5}) \cdot (\mathit{speed > 20})$.
The third and fourth events of the stream of Table \ref{table:stream} would then belong to the language of $R$.

Given a Boolean algebra, we can also define symbolic automata.
The definition of a symbolic automaton is the following:
\begin{definition}[Symbolic finite automaton \cite{DBLP:conf/cav/DAntoniV17}]
A symbolic finite automaton (\sfa) is a tuple $M=$($\mathcal{A}$, $Q$, $q^{s}$, $Q^{f}$, $\Delta$), 
where 
\begin{itemize}
	\item $\mathcal{A}$ is an effective Boolean algebra;
	\item $Q$ is a finite set of states;
	\item $q^{s} \in Q$ is the initial state;
	\item $Q^{f} \subseteq Q$ is the set of final states;
	\item $\Delta \subseteq Q \times \Psi_{\mathcal{A}} \times Q$ is a finite set of transitions.
\end{itemize}
\end{definition}
A string $w=a_{1}a_{2} \cdots a_{k}$  is accepted by a \sfa\ $M$ iff, 
for $1 \leq i \leq k$, there exist transitions $q_{i-1} \overset{a_{i}}{\rightarrow} q_{i}$ such that $q_{0}=q^{s}$ and $q_{k} \in Q^{f}$. 
We refer to the set of strings accepted by $M$ as the language of $M$, denoted by $\mathcal{L}(M)$ \cite{DBLP:conf/cav/DAntoniV17}.
Figure \ref{fig:sfa_speed} shows a \sfa\ that can detect the expression of sudden acceleration for our running example.

As with classical regular expressions and automata,
we can prove that every symbolic regular expression can be translated to an equivalent (i.e., with the same language) symbolic automaton.
\begin{proposition}
\label{proposition:sre2sfa}
For every symbolic regular expression $R$ there exists a symbolic finite automaton $M$ such that $\mathcal{L}(R) = \mathcal{L}(M)$.
\end{proposition}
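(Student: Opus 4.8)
The plan is to prove the statement by structural induction on the symbolic regular expression $R$, following the five clauses of its recursive definition, via a Thompson-style construction. It is convenient to argue with a slightly richer automaton model in which a transition may also be labelled by the empty symbol $\epsilon$ (so transitions lie in $Q \times (\Psi_{\mathcal{A}} \cup \{\epsilon\}) \times Q$): I would first build, by induction, such an $\epsilon$-automaton whose language is $\mathcal{L}(R)$, and then show in a separate step that the $\epsilon$-transitions can be eliminated without changing the language and without leaving the class of \sfa{}s over effective Boolean algebras.

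For the base cases: if $R$ is the constant $\emptyset$, take the automaton with a single, non-final state and no transitions, whose accepted language contains no string; if $R = \epsilon$, take a single state that is simultaneously initial and final, again with no transitions, whose language is $\{\epsilon\}$; and if $R = \psi$ for some $\psi \in \Psi$, take $M$ with states $q^{s}, q^{f}$, final state $q^{f}$, and the single transition $q^{s} \overset{\psi}{\rightarrow} q^{f}$, so that by the definition of acceptance and of $\llbracket \_ \rrbracket$ we obtain $\mathcal{L}(M) = \llbracket \psi \rrbracket = \mathcal{L}(\psi)$. For the inductive step, assume $M_{1}, M_{2}$ are $\epsilon$-automata with $\mathcal{L}(M_{i}) = \mathcal{L}(R_{i})$ and, after renaming, disjoint state sets. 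For $R_{1} + R_{2}$ add a fresh initial state with $\epsilon$-transitions into the initial states of $M_{1}$ and $M_{2}$, and a fresh final state reached by $\epsilon$-transitions from the states that were final in $M_{1}$ and $M_{2}$ (which are no longer final in the combined automaton). For $R_{1} \cdot R_{2}$ keep the initial state of $M_{1}$, make the final states of $M_{2}$ the final states of the result, and add an $\epsilon$-transition from each final state of $M_{1}$ to the initial state of $M_{2}$. For $R^{*}$ add a fresh state that is both initial and final, with an $\epsilon$-transition to the initial state of $M$ and $\epsilon$-transitions from every final state of $M$ back to that fresh state. In each case, expanding the definition of acceptance along the $\epsilon$-transitions and applying the induction hypothesis together with the definitions of $\cup$, $\cdot$ and $(\cdot)^{*}$ on languages yields $\mathcal{L}(M) = \mathcal{L}(R)$; this part is verbatim the classical argument, because the predicates labelling the non-$\epsilon$ transitions are never inspected here.

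It then remains to remove the $\epsilon$-transitions. Given an $\epsilon$-automaton $M$, let $E(q)$ be the $\epsilon$-closure of $q$ (all states reachable from $q$ using $\epsilon$-transitions only) and define $M'$ on the same state set by adding a transition $q \overset{\psi}{\rightarrow} q''$ whenever $q' \overset{\psi}{\rightarrow} q''$ is a transition of $M$ for some $q' \in E(q)$, and by declaring $q$ final in $M'$ iff $E(q)$ meets the set of final states of $M$. A routine induction on the length of the input string shows $\mathcal{L}(M') = \mathcal{L}(M)$, and $M'$ conforms to the definition of an \sfa: its transitions use only predicates that already occur in $M$, hence decidability of satisfiability and computability of $\vee$, $\wedge$, $\neg$ carry over unchanged from the underlying algebra $\mathcal{A}$.

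The construction is entirely standard, so I would not expect a genuine difficulty; the one point worth stating carefully — and the one I would treat as the main (mild) obstacle — is precisely this last step, namely checking that $\epsilon$-elimination keeps us within the class of effective Boolean algebras. Here it is immediate, because we never build new predicates but merely copy existing transition guards; had the construction instead produced, say, conjunctions of guards, one would additionally appeal to the closure of $\Psi$ under $\wedge$ and the computability of $\wedge$ guaranteed by the definition of an effective Boolean algebra.
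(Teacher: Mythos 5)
Your proposal is correct and follows essentially the same route as the paper: a Thompson-style structural induction on $R$, with the base case of a single predicate $\psi$ and the classical constructions (using $\epsilon$-transitions) for disjunction, concatenation and iteration. The only differences are that you additionally spell out the $\epsilon$/$\emptyset$ constants and an explicit $\epsilon$-elimination step, whereas the paper simply keeps $\epsilon$-transitions in its \sfa\ constructions (as it does for \ssfa); your observation that elimination introduces no new predicates, so effectiveness of the algebra is preserved, is a sound and slightly more careful finishing touch.
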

\begin{proof}
The proof is essentially the same as that for classical expressions and automata \cite{DBLP:books/daglib/0016921}.
It is a constructive proof starting from the base case of an expression that is a single predicate (instead of a symbol, as in classical expressions) and then proceeds in a manner identical to that of the classical case.
For the sake of completeness, 
the full proof is provided in the Appendix, Section \ref{sec:proof:sre2sfa}.
\end{proof}

\begin{figure}
\centering
\begin{subfigure}[t]{0.7\textwidth}
	\includegraphics[width=0.75\textwidth]{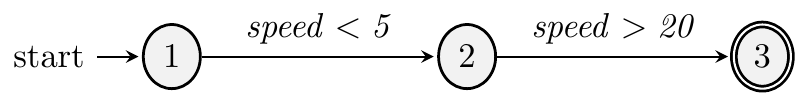}
	\caption{\sfa\ for the \sre\ $R := (\mathit{speed < 5}) \cdot (\mathit{speed > 20})$.}
	\label{fig:sfa_speed}
\end{subfigure}
\begin{subfigure}[t]{0.7\textwidth}
	\includegraphics[width=0.95\textwidth]{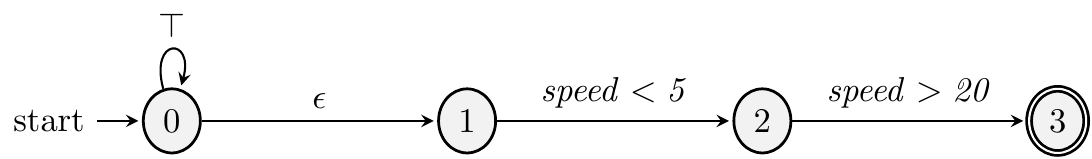}
	\caption{Streaming \sfa\ for $R := (\mathit{speed < 5}) \cdot (\mathit{speed > 20})$.
	$\top$ is a special predicate that always evaluates to \true. 
	$\top$ transitions are thus triggered for every event. 
	$\epsilon$ transitions triggered even in the absence of an event.
	}
	\label{fig:ssfa_speed}
\end{subfigure}
\caption{Examples of a symbolic automaton and streaming symbolic automaton. States with double circles are final.}
\label{fig:ssfa}
\end{figure}

\subsection{Streaming Expressions and Automata}

Our discussion thus far has focused on how \sre\ and \sfa\ can be applied to bounded strings that are known in their totality before recognition.
We feed a string to a \sfa\ and we expect an answer about whether the whole string belongs to the automaton's language or not.
However, in CER and CEF we need to handle continuously updated streams of events and detect instances of \sre\ satisfaction as soon as they appear in a stream. 
For example, the automaton of the (classical) regular expression $a \cdot b$ would accept only the string $a,b$.
In a streaming setting, 
we would like the automaton to report a match every time this string appears in a stream.
For the stream $a,b,c,a,b,c$, 
we would thus expect two matches to be reported,
one after the second symbol and one after the fifth.

In order to accommodate this scenario, 
slight modifications are required so that \sre\ and \sfa\ may work in a streaming setting.
First, 
we need to make sure that the automaton can start its recognition after every new element.
If we have a classical regular expression $R$,
we can achieve this by applying on the stream the expression $\Sigma^{*} \cdot R$,
where $\Sigma$ is the automaton's (classical) alphabet.
For example,
if we apply $R := \{a,b,c\}^{*} \cdot (a \cdot b)$ on the stream $a,b,c,a,b,c$,
the corresponding automaton would indeed reach its final state after reading the second and the fifth symbols.
In our case, 
events come in the form of tuples with both numerical and categorical values. 
Using database systems terminology we can speak of tuples from relations of a database schema \cite{DBLP:conf/icdt/GrezRU19}.
These tuples constitute the set of domain elements $\mathcal{D}$.
A stream $S$ then has the form of an infinite sequence $S=t_{1},t_{2},\cdots$, where each $t_{i}$ is a tuple ($t_{i} \in \mathcal{D}$).
Our goal is to report the indices $i$ at which a CE is detected.

More precisely, if $S_{1..k}=\cdots,t_{k-1},t_{k}$ is the prefix of $S$ up to the index $k$,
we say that an instance of a \sre\ $R$ is detected at $k$ iff there exists a suffix $S_{m..k}$ of $S_{1..k}$ such that $S_{m..k} \in \mathcal{L}(R)$.
In order to detect CEs of a \sre\ $R$ on a stream, we use a streaming version of \sre\ and \sfa.
\begin{definition}[Streaming SRE and SFA]
If $R$ is a \sre, then $R_{s}= \top^{*} \cdot R$ is called the streaming \sre\ (\ssre) corresponding to $R$.
A \sfa\ $M_{R_{s}}$ constructed from $R_{s}$ is called a streaming \sfa\ (\ssfa) corresponding to $R$.
\end{definition}
Using $R_{s}$ we can detect CEs of $R$ while reading a stream $S$,
since a stream segment $S_{m..k}$ belongs to the language of $R$ iff the prefix $S_{1..k}$ belongs to the language of $R_{s}$.
The prefix $\top^{*}$ lets us skip any number of events from the stream and start recognition at any index $m, 1 \leq m \leq k$.
\begin{proposition}
\label{proposition:streamingsre}
If $S=t_{1},t_{2},\cdots$ is a stream of domain elements from an effective Boolean algebra $\mathcal{A} = (\mathcal{D}$, $\Psi$, $\llbracket \_ \rrbracket$, $\bot$, $\top$, $\vee$, $\wedge$, $\neg$), where $t_{i} \in \mathcal{D}$, and $R$ is a symbolic regular expression over the same algebra,
then, for every $S_{m..k}$, $S_{m..k} \in \mathcal{L}(R)$ iff $S_{1..k} \in \mathcal{L}(R_{s})$ (and $S_{1..k} \in \mathcal{L}(M_{R_{s}})$).
\end{proposition}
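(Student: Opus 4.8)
The plan is to prove the biconditional $S_{m..k} \in \mathcal{L}(R)$ iff $S_{1..k} \in \mathcal{L}(R_s)$ directly from the definitions, and then obtain the parenthetical claim $S_{1..k} \in \mathcal{L}(M_{R_s})$ for free from Proposition~\ref{proposition:sre2sfa}, which guarantees $\mathcal{L}(R_s) = \mathcal{L}(M_{R_s})$. The whole argument reduces to understanding the language of a concatenation $\top^{*} \cdot R$ together with the fact that $\mathcal{L}(\top) = \mathcal{D}$, hence $\mathcal{L}(\top^{*}) = \mathcal{D}^{*}$.

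First I would unfold $\mathcal{L}(R_s)$. By the definition of $R_s = \top^{*} \cdot R$ and the concatenation clause of the definition of symbolic regular expressions, $\mathcal{L}(R_s) = \mathcal{L}(\top^{*}) \cdot \mathcal{L}(R)$. Since $\mathcal{L}(\top) = \llbracket \top \rrbracket = \mathcal{D}$ and Kleene-star of a language is the union of all its finite powers, $\mathcal{L}(\top^{*}) = \mathcal{D}^{*}$, i.e. it contains every finite string of domain elements, including the empty string $\epsilon$. Therefore $\mathcal{L}(R_s) = \mathcal{D}^{*} \cdot \mathcal{L}(R) = \{\, u \cdot v : u \in \mathcal{D}^{*},\ v \in \mathcal{L}(R) \,\}$.

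Next I would argue both directions. For the forward direction, suppose $S_{m..k} \in \mathcal{L}(R)$ for some $m$ with $1 \leq m \leq k$. Write $S_{1..k} = S_{1..m-1} \cdot S_{m..k}$, where $S_{1..m-1} \in \mathcal{D}^{*}$ (this is the empty string when $m = 1$). Since $S_{m..k} \in \mathcal{L}(R)$, the concatenation $S_{1..k}$ lies in $\mathcal{D}^{*} \cdot \mathcal{L}(R) = \mathcal{L}(R_s)$. For the converse, suppose $S_{1..k} \in \mathcal{L}(R_s) = \mathcal{D}^{*} \cdot \mathcal{L}(R)$. Then there is a factorization $S_{1..k} = u \cdot v$ with $u \in \mathcal{D}^{*}$ and $v \in \mathcal{L}(R)$. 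Because concatenation of strings splits a stream prefix at a unique position, $v$ must be exactly the suffix $S_{m..k}$ where $m = |u| + 1$ (and $1 \leq m \leq k+1$; the edge case $v = \epsilon$, i.e. $m = k+1$, occurs only if $\epsilon \in \mathcal{L}(R)$, in which case one may equivalently take the empty suffix — a point worth a one-line remark to match the statement's phrasing). Hence $S_{m..k} = v \in \mathcal{L}(R)$, as required. Finally, applying Proposition~\ref{proposition:sre2sfa} to $R_s$ yields a \sfa\ $M_{R_s}$ with $\mathcal{L}(M_{R_s}) = \mathcal{L}(R_s)$, so $S_{1..k} \in \mathcal{L}(R_s)$ iff $S_{1..k} \in \mathcal{L}(M_{R_s})$, completing the proof.

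I do not expect any genuine obstacle here; this is essentially bookkeeping with language concatenation. The only subtlety to handle carefully is the boundary behavior of the empty string — whether $\epsilon \in \mathcal{L}(R)$ and how the index $m$ ranges at the extremes ($m = 1$ and $m = k+1$) — so that the statement's quantification over suffixes $S_{m..k}$ matches the set-theoretic factorization of $\mathcal{L}(R_s)$ exactly. A sentence pinning down this convention is all that is needed.
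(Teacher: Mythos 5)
Your proposal is correct and follows essentially the same route as the paper's own proof: both directions are handled by factoring $S_{1..k}$ into a prefix in $\mathcal{L}(\top^{*})$ and a suffix in $\mathcal{L}(R)$, using $\mathcal{L}(R_{s}) = \mathcal{L}(\top^{*}) \cdot \mathcal{L}(R)$ and the fact that $\top^{*}$ accepts every string including $\epsilon$. Your extra care about the boundary cases and the explicit appeal to Proposition~\ref{proposition:sre2sfa} for the automaton part only make the argument slightly more detailed than the paper's version.
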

\begin{proof}
The proof is provided in the Appendix, Section \ref{sec:proof:streamingsre}.
\end{proof}
As an example,
if $R := (\mathit{speed < 5}) \cdot (\mathit{speed > 20})$ is the pattern for sudden acceleration,
then its \ssre\ would be $R_{s} := \top^{*} \cdot (\mathit{speed < 5}) \cdot (\mathit{speed > 20})$.
After reading the fourth event of the stream of Table \ref{table:stream},
$S_{1..4}$ would belong to the language of $\mathcal{L}(R_{s})$ and $S_{3..4}$ to the language of $\mathcal{L}(R)$.
Note that \ssre\ and \ssfa\ are just special cases of \sre\ and \sfa\ respectively.
Therefore, every result that holds for \sre\ and \sfa\ also holds for \ssre\ and \ssfa\ as well.
%Figure \ref{fig:ssfa_general} shows how a \ssfa\ can be constructed from a \sre\ $R$ and
Figure \ref{fig:ssfa_speed} shows an example \ssfa.

The streaming behavior of a \ssfa\ as it consumes a stream $S$ can be formally defined using the notion of configuration:
\begin{definition}[Configuration of sSFA]
Assume $S=t_{1},t_{2},\cdots$ is a stream of domain elements from an effective Boolean algebra,
$R$ a symbolic regular expression over the same algebra
and $M_{R_{s}}$ a \ssfa\ corresponding to $R$.
A configuration $c$ of $M_{R_{s}}$ is a tuple $[i,q]$,
where $i$ is the current position of the stream, 
i.e., the index of the next event to be consumed,
and $q$ the current state of $M_{R_{s}}$.
We say that $c'=[i',q']$ is a successor of $c$ iff:
\begin{itemize}
	\item $\exists \delta \in M_{R_{s}}.\Delta: \delta = (q,\psi,q') \wedge (t_{i} \in \llbracket \psi \rrbracket \vee \psi = \epsilon)$;
	\item $i=i'$ if $\delta = \epsilon$. Otherwise, $i' = i + 1$.
\end{itemize}
We denote a succession by $[i,q] \overset{\delta}{\rightarrow} [i',q']$.
\end{definition}

For the initial configuration $c^{s}$, before consuming any events,
we have that $i=1$ and $c^{s}.q = M_{R_{s}}.q^{s}$, 
i.e. the state of the first configuration is the initial state of $M_{R_{s}}$.
In other words, for every index $i$, we move from our current state $q$ to another state $q'$ if there is an outgoing transition from $q$ to $q'$ and the predicate on this transition evaluates to \true\ for $t_{i}$.
We then increase the reading position by 1.
Alternatively, if the transition is an $\epsilon$-transition, we move to $q'$ without increasing the reading position.

The actual behavior of a \ssfa\ upon reading a stream is captured by the notion of the run:
\begin{definition}[Run of sSFA over stream]
A run $\varrho$ of a \ssfa\ $M$ over a stream $S_{1..k}$ is a sequence of successor configurations
$[1,q_{1}=M.q^{s}] \overset{\delta_{1}}{\rightarrow} [2,q_{2}] \overset{\delta_{2}}{\rightarrow} \cdots \overset{\delta_{k}}{\rightarrow} [k+1,q_{k+1}]$.
A run is called accepting iff $q_{k+1} \in M.Q^{f}$.
\end{definition}
A run $\varrho$ of a \ssfa\ $M_{R_{s}}$ over a stream $S_{1..k}$ is accepting iff $S_{1..k} \in \mathcal{L}(R_{s})$,
since $M_{R_{s}}$, after reading $S_{1..k}$, must have reached a final state.
Therefore, for a \ssfa\ that consumes a stream, the existence of an accepting run with configuration index $k+1$ implies that a CE for the \sre\ $R$ has been detected at the stream index $k$.

As far as the temporal model is concerned,
we assume that all SDEs are instantaneous.
They all carry a \emph{timestamp} attribute which is single, unique numerical value.
We also assume that the stream of SDEs is temporally sorted.
A sequence/concatenation operator is thus satisfied if the event of its first operand precedes in time the event of its second operand.
The exception to the above assumptions is when the stream is composed of multiple partitions and the defined pattern is applied on a per-partition basis.
For example,
in the maritime domain a stream may be composed of the sub-streams generated by all vessels and we may want to detect the same pattern for each individual vessel.
In such cases,
the above assumptions must hold for each separate partition but not necessarily across all partitions.
Another general assumption is that there is no imposed limit on the time elapsed between consecutive events in a sequence operation.

\subsection{Expressive Power of Symbolic Regular Expressions}

We conclude this section with some remarks about the expressive power of \sre\ and \sfa\ and how it meets the requirements of a CER system.
As discussed in \cite{DBLP:journals/vldb/GiatrakosAADG20,DBLP:conf/icdt/GrezRU19},
besides the three operators of regular expressions that we have presented and implemented in this paper (disjunction, sequence, iteration),
there exist some extra operators which should be supported by a CER system.
\emph{Negation} is one them.
If we use $!$ to denote the negation operator,
then $R' := !R$ defines a language which is the complement of the language of $R$.
Since \sfa\ are closed under complement \cite{DBLP:conf/cav/DAntoniV17},
negation is an operator that can be supported by our framework and has also been implemented
(but not discussed further).

The same is true for the operator of \emph{conjunction}.
If we use $\wedge$ to denote conjunction,
then $R := R_{1} \wedge R_{2}$ is an expression whose language consists of concatenated elements of $\mathcal{L}(R_{1})$ and $\mathcal{L}(R_{2})$, regardless of their order,
i.e., $\mathcal{L}(R) = \mathcal{L}(R_{1}) \cdot \mathcal{L}(R_{2}) \cup \mathcal{L}(R_{2}) \cdot \mathcal{L}(R_{1})$.
This operator can thus be equivalently expressed using the already available operators of concatenation ($\cdot$) and disjunction ($+$).

Another important notion in CER is that of \emph{selection policies}.
An expression like $R := R_{1} \cdot R_{2}$ typically implies that an instance of $R_{2}$ must immediately follow an instance of $R_{1}$.
As a result,
for the stream of Table \ref{table:stream} and $R := (\mathit{speed < 5}) \cdot (\mathit{speed > 20})$,
only one match will be detected at $\mathit{timestamp}=4$.
With selection policies,
we can relax the requirement for contiguous instances.
For example,
with the so-called \skipany\ policy,
any number of events are allowed to occur between $R_{1}$ and $R_{2}$.
If we apply this policy on $R := (\mathit{speed < 5}) \cdot (\mathit{speed > 20})$,
we would detect six CEs,
since the first three events of Table \ref{table:stream} can be matched with the two events at $\mathit{timestamp}=4$ and at $\mathit{timestamp}=6$,
if we ignore all intermediate events.
Selection policies can also be accommodated by our framework and have been implemented.
For a proof, using symbolic transducers, see \cite{DBLP:conf/icdt/GrezRU19}.
Notice, for example,
that an expression $R := R_{1} \cdot R_{2}$ can be evaluated with \skipany\ by being rewritten as $R' := R_{1} \cdot \top^{*} \cdot R_{2}$,
so that any number of events may occur between $R_{1}$ and $R_{2}$.

Support for hierarchical definition of patterns,
i.e., the ability to define patterns in terms of other patterns,
is yet another important feature in many CER systems.
Since \sre\ and \sfa\ are compositional by nature,
hierarchies are supported by default in our framework.
Although we do not treat these operators and functionalities explicitly in this paper,
their incorporation is possible within the expressive limits of \sre\ and \sfa\ and the results that we present in the next sections would still hold.

\section{Building a Probabilistic Model}
\label{sec:prob}

The main idea behind our forecasting method is the following:
Given a pattern $R$ in the form of a \sre, we first construct a \ssfa\ as described in the previous section. 
For event recognition, this would already be enough,
but in order to perform event forecasting, we translate the \ssfa\ to an equivalent deterministic \sfa\ (\dsfa).
This \dsfa\ can then be used to learn a probabilistic model, 
typically a Markov chain, 
that encodes dependencies among the events in an input stream.
Note that a non-deterministic automaton cannot be directly converted to a Markov chain,
since from each state we might be able to move to multiple other target states with a given event. 
Therefore, we first determinize the automaton. 

The probabilistic model is learned from a portion of the input stream which acts as a training dataset and it is then used to derive forecasts about the expected occurrence of the CE encoded by the automaton.
The issue that we address in this paper is how to build a model which retains long-term dependencies that are useful for forecasting.

Figure \ref{fig:flow} depicts all the required steps in order to produce forecasts for a given pattern. 
We have already described steps 1 and 2. 
In Section \ref{sec:dsfa} we describe step 3.
In Sections \ref{sec:vmm} - \ref{sec:embed} we present step 4, our proposed method for constructing a probabilistic model for a pattern, 
based on prediction suffix trees.
Steps 5 and 6 are described in Section \ref{sec:forecasts}.
After learning a model,
we first need to estimate the so-called \emph{waiting-time distributions} for each state of our automaton. 
Roughly speaking, these distributions let us know the probability of reaching a final state from any other automaton state in $k$ events from now.
These distributions are then used to estimate forecasts,
which generally have the form of an interval within which a CE has a high probability of occurring.
Finally, Section \ref{sec:no-mc} discusses an optimization that allows us to bypass the explicit construction of the Markov chain 
and Section \ref{sec:complexity} presents a full complexity analysis.

\begin{figure}[t]
	\centering
	\includegraphics[width=0.99\textwidth]{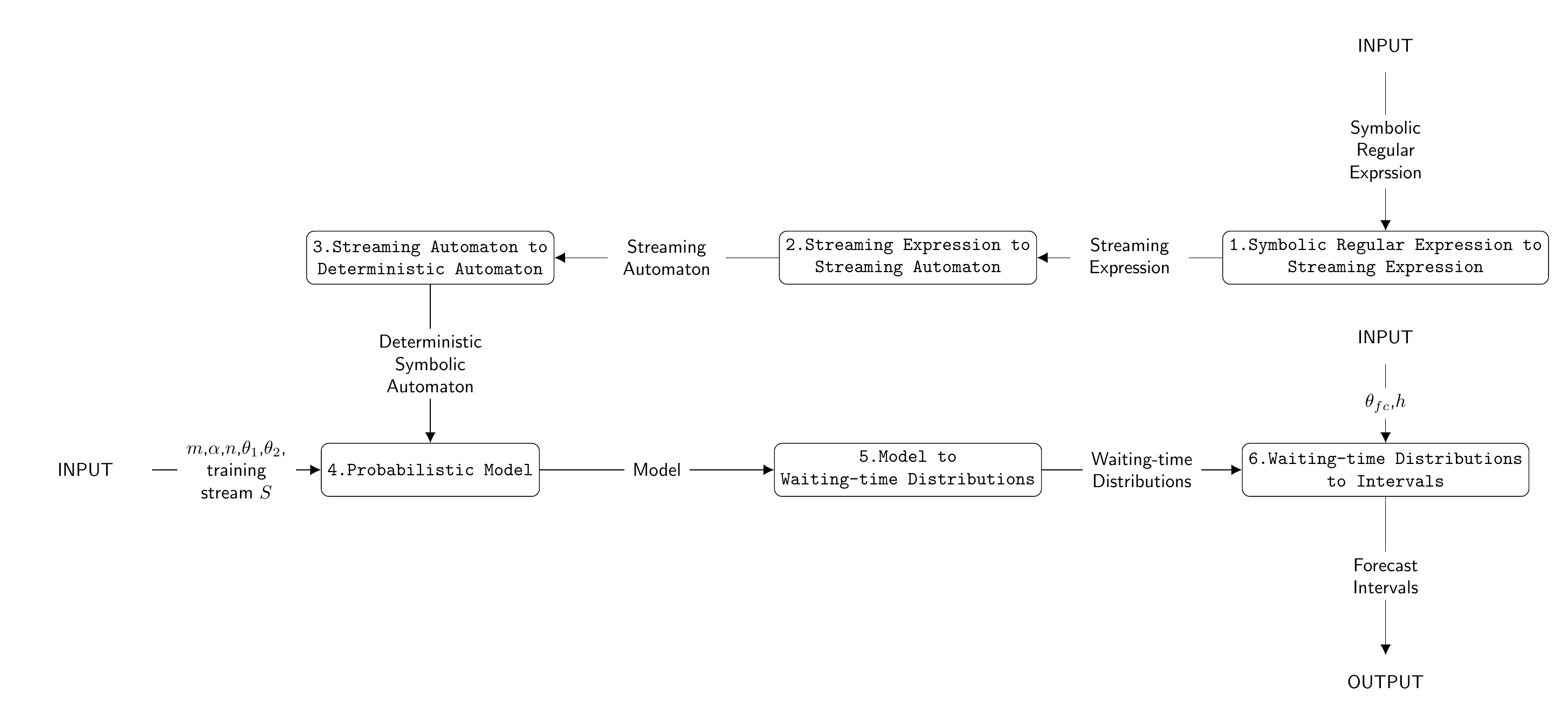}
	\caption{Steps for calculating the forecast intervals of a given pattern.}
	\label{fig:flow}
\end{figure}

\subsection{Deterministic Symbolic Automata}
\label{sec:dsfa}

The definition of \dsfa\ is similar to that of classical deterministic automata. 
Intuitively, we require that, for every state and every tuple/character, 
the \sfa\ can move to at most one next state upon reading that tuple/character.
We note though that it is not enough to require that all outgoing transitions from a state have different predicates as guards.
Symbolic automata differ from classical in one important aspect.
For the latter,
if we start from a given state and we have two outgoing transitions with different labels,
then it is not possible for both of these transition to be triggered simultaneously (i.e., with the same character).
For symbolic automata,
on the other hand,
two predicates may be different but still both evaluate to \true\ for the same tuple and thus two transitions with different predicates may both be triggered with the same tuple.
Therefore, the formal definition for \dsfa\ must take this into account: 
\begin{definition}[Deterministic SFA \cite{DBLP:conf/cav/DAntoniV17}]
A \sfa\ $M$ is deterministic if, for all transitions $(q,\psi_{1},q_{1}),(q,\psi_{2},q_{2}) \in M.\Delta$,
if $q_{1} \neq q_{2}$ then $\llbracket \psi_{1} \wedge \psi_{2} \rrbracket = \emptyset$.
\end{definition}
Using this definition for \dsfa\ it can be proven that \sfa\ are indeed closed under determinization \cite{DBLP:conf/cav/DAntoniV17}.
The determinization process first needs to create the \emph{minterms} of the predicates of a \sfa\ $M$, 
i.e., the set of maximal satisfiable Boolean combinations of such predicates,
denoted by $\mathit{Minterms}(\mathit{Predicates}(M))$,
and then use these minterms as guards for the \dsfa\ \cite{DBLP:conf/cav/DAntoniV17}.

There are two factors that can lead to a combinatorial explosion of the number of states of the resulting \dsfa:
first, the fact that the powerset of the states of the original \sfa\ must be constructed (similarly to classical automata);
second, the fact that the number of minterms (and, thus, outgoing transitions from each \dsfa\ state) is an exponential function of the number of the original \sfa\ predicates. 
In order to mitigate this doubly exponential cost, 
we follow two simple optimization techniques.
As is typically done with classical automata as well, 
instead of constructing the powerset of states of the \sfa\ and then adding transitions,
we construct the states of the \dsfa\ incrementally, starting from its initial state,
without adding states that will be inaccessible in the final \dsfa.
We can also reduce the number of minterms by taking advantage of some previous knowledge about some of the predicates that we might have.
In cases where we know that some of the predicates are mutually exclusive,
i.e., 
at most one of them can evaluate to \true,
then we can both discard some minterms and simplify some others. 
%This is a very common case in CER where we typically know that each event may have only one event type.
For example, 
if we have two predicates,
$\psi_{A} := \mathit{speed} < 5$ and $\psi_{B} := \mathit{speed} >20$,
then we also know that $\psi_{A}$ and $\psi_{B}$ are mutually exclusive.
As a result,
we can simplify the minterms,
as shown in Table \ref{table:minterms_simplified}.
\begin{table*}[t]
\centering
\caption{The set of simplified minterms for the predicates $\psi_{A} := \mathit{speed} < 5$ and $\psi_{B} := \mathit{speed} > 20$.}
\begin{tabular}{ccc}
\toprule
Original & Simplified & Reason\\
\midrule
$\psi_{A} \wedge \psi_{B}$ & discard & unsatisfiable \\
\midrule
$\psi_{A} \wedge \neg \psi_{B}$  & $\psi_{A}$ & $\psi_{A} \vDash \neg \psi_{B}$ \\
\midrule
$\neg \psi_{A} \wedge \psi_{B}$ & $\psi_{B}$ & $\psi_{B} \vDash \neg \psi_{B}$ \\
\midrule
$\neg \psi_{A} \wedge \neg \psi_{B}$ & $\neg \psi_{A} \wedge \neg \psi_{B}$ & for events whose speed is between 5 and 20 \\
\bottomrule
\end{tabular}
\label{table:minterms_simplified}
\end{table*}

Before moving to the discussion about how a \dsfa\ can be converted to a Markov chain,
we present a useful lemma.
We will show that a \dsfa\ always has an equivalent (through an isomorphism) deterministic classical automaton.
This result is important for two reasons:
a) it allows us to use methods developed for classical automata without having to always prove that they are indeed applicable to symbolic automata as well, and
b) it will help us in simplifying our notation, 
since we can use the standard notation of symbols instead of predicates. 

First note that $\mathit{Minterms}(\mathit{Predicates}(M))$ induces a finite set of equivalence classes on the (possibly infinite) set of domain elements of $M$ \cite{DBLP:conf/cav/DAntoniV17}. 
For example, if $\mathit{Predicates}(M)=\{\psi_{1},\psi_{2}\}$,
then $\mathit{Minterms}(\mathit{Predicates}(M))=\{\psi_{1} \wedge \psi_{2}, \psi_{1} \wedge \neg \psi_{2}, \neg \psi_{1} \wedge \psi_{2}, \neg \psi_{1} \wedge \neg \psi_{2}\}$, and we can map each domain element, which, in our case, is a tuple, to exactly one of these 4 minterms:
the one that evaluates to \true\ when applied to the element.
Similarly, the set of minterms induces a set of equivalence classes on the set of strings (event streams in our case).
For example, if $S{=}t_{1},\cdots,t_{k}$ is an event stream, 
then it could be mapped to $S'{=}a,\cdots,b$,
with $a$ corresponding to $\psi_{1} \wedge \neg \psi_{2}$ if $\psi_{1}(t_{1}) \wedge \neg \psi_{2}(t_{1}) = \true$, $b$ to $\psi_{1} \wedge \psi_{2}$, etc.
\begin{definition}[Stream induced by the minterms of a \dsfa]
If $S$ is a stream from the domain elements of the algebra of a \dsfa\ $M$ and $N=\mathit{Minterms}(\mathit{Predicates}(M))$,
then the stream $S'$ induced by applying $N$ on $S$ is the equivalence class of $S$ induced by $N$.
\end{definition}

We can now prove the lemma, 
which states that for every \dsfa\ there exists an equivalent classical deterministic automaton. 
\begin{lemma}
\label{lemma:isomorphism}
For every deterministic symbolic finite automaton (\dsfa) $M_{s}$ there exists a deterministic classical finite automaton (\dfa) $M_{c}$ such that $\mathcal{L}(M_{c})$ is the set of strings induced by applying $N=\mathit{Minterms}(\mathit{Predicates}(M_{s}))$ to $\mathcal{L}(M_{s})$.
\end{lemma}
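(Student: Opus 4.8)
The plan is to construct $M_{c}$ explicitly from $M_{s}$, keeping the state structure intact and taking the minterms of $M_{s}$ as the alphabet. Put $M_{c}.Q = M_{s}.Q$, $M_{c}.q^{s} = M_{s}.q^{s}$, $M_{c}.Q^{f} = M_{s}.Q^{f}$, and let $\Sigma$ be a set of fresh symbols in bijection with $N = \mathit{Minterms}(\mathit{Predicates}(M_{s}))$, writing $\sigma_{\psi}$ for the symbol attached to the minterm $\psi \in N$. For transitions, declare $(q,\sigma_{\psi},q') \in M_{c}.\Delta$ exactly when some $(q,\phi,q') \in M_{s}.\Delta$ satisfies $\llbracket \psi \wedge \phi \rrbracket \neq \emptyset$. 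Before anything else I would isolate the single fact that drives the whole proof: each minterm $\psi \in N$ is an \emph{atom} of the finite Boolean algebra generated by $\mathit{Predicates}(M_{s})$, so for every $\phi \in \mathit{Predicates}(M_{s})$ either $\llbracket \psi \rrbracket \subseteq \llbracket \phi \rrbracket$ or $\llbracket \psi \rrbracket \cap \llbracket \phi \rrbracket = \emptyset$; moreover $\llbracket \psi \rrbracket \neq \emptyset$, since minterms are the \emph{satisfiable} Boolean combinations. Consequently $\{\llbracket \psi \rrbracket : \psi \in N\}$ is a finite partition of $\mathcal{D}$, and each $t \in \mathcal{D}$ satisfies a unique minterm $\mathit{mt}(t)$ --- this is precisely the map underlying the notion of the stream (or string) induced by $N$. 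An immediate corollary, which I would state up front, is that for a minterm $\psi$ and an original predicate $\phi$ the conditions $\llbracket \psi \wedge \phi \rrbracket \neq \emptyset$ and $\llbracket \psi \rrbracket \subseteq \llbracket \phi \rrbracket$ are equivalent.

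Using this corollary, $M_{c}$ is deterministic: if $(q,\sigma_{\psi},q_{1})$ and $(q,\sigma_{\psi},q_{2})$ were both in $M_{c}.\Delta$ with $q_{1}\neq q_{2}$, they would come from transitions $(q,\phi_{1},q_{1}),(q,\phi_{2},q_{2}) \in M_{s}.\Delta$ with $\llbracket \psi \rrbracket \subseteq \llbracket \phi_{1}\rrbracket$ and $\llbracket \psi \rrbracket \subseteq \llbracket \phi_{2}\rrbracket$, hence $\llbracket \phi_{1}\wedge\phi_{2}\rrbracket \supseteq \llbracket \psi\rrbracket \neq \emptyset$, contradicting the determinism of $M_{s}$.

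For the language identity I would argue both inclusions directly on runs. If $w = t_{1}\cdots t_{k} \in \mathcal{L}(M_{s})$ with accepting run $q^{s}=q_{0}\overset{t_{1}}{\rightarrow}q_{1}\cdots\overset{t_{k}}{\rightarrow}q_{k}\in M_{s}.Q^{f}$, where step $j$ uses $(q_{j-1},\phi_{j},q_{j})$ with $t_{j}\in\llbracket\phi_{j}\rrbracket$, then $t_{j}\in\llbracket\mathit{mt}(t_{j})\rrbracket\cap\llbracket\phi_{j}\rrbracket$, so $(q_{j-1},\sigma_{\mathit{mt}(t_{j})},q_{j})\in M_{c}.\Delta$ by construction, and $\sigma_{\mathit{mt}(t_{1})}\cdots\sigma_{\mathit{mt}(t_{k})}$ --- exactly the string induced by $N$ on $w$ --- is accepted by $M_{c}$. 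Conversely, if $\sigma_{\psi_{1}}\cdots\sigma_{\psi_{k}}\in\mathcal{L}(M_{c})$ via an accepting run, each step comes from some $(q_{j-1},\phi_{j},q_{j})\in M_{s}.\Delta$ with $\llbracket\psi_{j}\rrbracket\subseteq\llbracket\phi_{j}\rrbracket$; choosing any $t_{j}\in\llbracket\psi_{j}\rrbracket$ (nonempty) gives a string $t_{1}\cdots t_{k}\in\mathcal{L}(M_{s})$ whose induced string is $\sigma_{\psi_{1}}\cdots\sigma_{\psi_{k}}$. The empty-string case is covered because $q^{s}\in Q^{f}$ holds in $M_{s}$ iff it holds in $M_{c}$, and no $\epsilon$-transitions arise, since the \dsfa\ obtained by determinization (unlike an \ssfa) has none. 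If one insists that a \dfa\ be total, completing $M_{c}$ with a non-accepting sink state leaves $\mathcal{L}(M_{c})$ unchanged.

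I do not expect a serious obstacle: the argument is essentially bookkeeping once the atomicity of minterms is in place. The one spot where a hasty proof could slip is conflating ``$\psi$ and $\phi$ are jointly satisfiable'' with ``$\psi$ entails $\phi$'' --- true here only because $\psi$ is a minterm and $\phi$ a predicate actually occurring in $M_{s}$ --- so I would make that equivalence an explicit, separately justified step, since it is reused both for determinism and for each direction of the language translation.
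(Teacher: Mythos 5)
Your proof is correct, and the core construction coincides with the paper's: keep the state graph of the \dsfa\ intact and turn the minterms into a classical alphabet. The difference is in how the language identity is justified. The paper's proof assumes what its determinization procedure guarantees --- that the guards of $M_{s}$ are exactly the minterms --- and then treats the passage to $M_{c}$ as a relabeling under a (free-monoid) isomorphism $\phi: N^{*} \rightarrow \Sigma^{*}$, invoking the standard fact that the behavior of a relabeled automaton is the image of the original behavior under the morphism; the whole argument is two or three lines plus a citation. You instead allow the transitions of $M_{s}$ to carry arbitrary predicates from $\mathit{Predicates}(M_{s})$, define the transitions of $M_{c}$ by joint satisfiability with a minterm, and then do the work the paper outsources: the atomicity observation (a minterm either entails or is disjoint from each original predicate, and the minterms partition $\mathcal{D}$), the determinism of $M_{c}$ derived from the determinism condition $\llbracket \psi_{1} \wedge \psi_{2} \rrbracket = \emptyset$, and both inclusions argued run-by-run, including the converse direction via a witness $t_{j} \in \llbracket \psi_{j} \rrbracket$. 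What your version buys is self-containedness and slightly greater generality (it does not presuppose minterm-labeled guards); what the paper's version buys is brevity, and it makes the isomorphism --- which the surrounding text exploits when it transfers runs of $M_{s}$ verbatim to runs of $M_{c}$ --- completely explicit, since under minterm guards the relabeling is a bijection on transitions rather than the many-to-one collapse your satisfiability condition permits in principle. Your flagged caveat (joint satisfiability versus entailment) is exactly the right point to isolate, and your handling of it is sound.
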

\begin{proof}
From an algebraic point of view,
the set $N=\mathit{Minterms}(\mathit{Predicates}(M))$ may be treated as a generator of the monoid $N^{*}$, 
with concatenation as the operation.
If the cardinality of $N$ is $k$,
then we can always find a set $\Sigma=\{a_{1},\cdots,a_{k}\}$ of $k$ distinct symbols and then a morphism (in fact, an isomorphism) $\phi: N^{*} \rightarrow \Sigma^{*}$ that maps each minterm to exactly one, unique $a_{i}$.
A classical \dfa\ $M_{c}$ can then be constructed by relabeling the \dsfa\ $M_{s}$ under $\phi$,
i.e.,
by copying/renaming the states and transitions of the original \dsfa\ $M_{s}$ and by replacing the label of each transition of $M_{s}$ by the image of this label under $\phi$.
Then, the behavior of $M_{c}$ (the language it accepts) is the image under $\phi$ of the behavior of $M_{s}$ \cite{DBLP:books/daglib/0023547}.
Or, equivalently, the language of $M_{c}$ is the set of strings induced by applying $N=\mathit{Minterms}(\mathit{Predicates}(M_{s}))$ to $\mathcal{L}(M_{s})$.
\end{proof}
A direct consequence drawn from the proof of the above lemma is that, 
for every run 
$\varrho = [1,q_{1}] \overset{\delta_{1}}{\rightarrow} [2,q_{2}] \overset{\delta_{2}}{\rightarrow} \cdots \overset{\delta_{k}}{\rightarrow} [k+1,q_{k+1}]$ 
followed by a \dsfa\ $M_{s}$ by consuming a symbolic string (stream of tuples) $S$,
the run that the equivalent \dfa\ $M_{c}$ follows by consuming the induced string $S'$ is also
$\varrho' = [1,q_{1}] \overset{\delta_{1}}{\rightarrow} [2,q_{2}] \overset{\delta_{2}}{\rightarrow} \cdots \overset{\delta_{k}}{\rightarrow} [k+1,q_{k+1}]$,
i.e., $M_{c}$ follows the same copied/renamed states and the same copied/relabeled transitions.

This direct relationship between \dsfa\ and classical \dfa\ allows us to transfer techniques developed for classical \dfa\ to the study of \dsfa. 
Moreover, we can simplify our notation by employing the terminology of symbols/characters and strings/words that is typical for classical automata. 
Henceforth, 
we will be using symbols and strings as in classical theories of automata and strings (simple lowercase letters to denote symbols),
but the reader should bear in mind that,
in our case,
each symbol always corresponds to a predicate and,
more precisely,
to a minterm of a \dsfa. 
For example, the symbol $a$ may actually refer to the minterm $(\mathit{speed} < 5) \wedge (\mathit{speed} >20)$,
the symbol $b$ to $(\mathit{speed} < 5) \wedge \neg (\mathit{speed} >20)$, etc.

\subsection{Variable-order Markov Models}
\label{sec:vmm}

Assuming that we have a deterministic automaton, 
the next question is how we can build a probabilistic model that captures the statistical properties of the streams to be processed by this automaton. 
With such a model,
we could then make inferences about the automaton's expected behavior as it reads event streams.
One approach would be to map each state of the automaton to a state of a Markov chain,
then apply the automaton on a training stream of symbols,
count the number of transitions from each state to every other target state and use these counts to calculate the transition probabilities.
This is the approach followed in \cite{DBLP:conf/debs/MuthusamyLJ10}.

However, there is an important issue with the way in which this approach models transition probabilities. 
Namely, a probability is attached to the transition between two states, 
say state 1 and state 2, 
ignoring the way in which state 1 has been reached, i.e., 
failing to capture the sequence of symbols.
For example, in Figure \ref{fig:dfasr}, 
state $0$ can be reached after observing symbol $b$ or symbol $c$. 
The outgoing transition probabilities do not distinguish between the two cases. 
Instead, they just capture the probability of $a$ given that the previous symbol was $b$ or $c$. 
This introduces ambiguity and if there are many such states in the automaton, 
we may end up with a Markov chain that is first-order (with respect to its states),
but nevertheless provides no memory of the stream itself.
It may be unable to capture first-order (or higher order) dependencies in the stream of events. In the worst case (if every state can be reached with any symbol), such a Markov chain may essentially assume that the stream is composed of i.i.d. events.
\begin{figure}[t]
	\centering
	\includegraphics[width=0.7\textwidth]{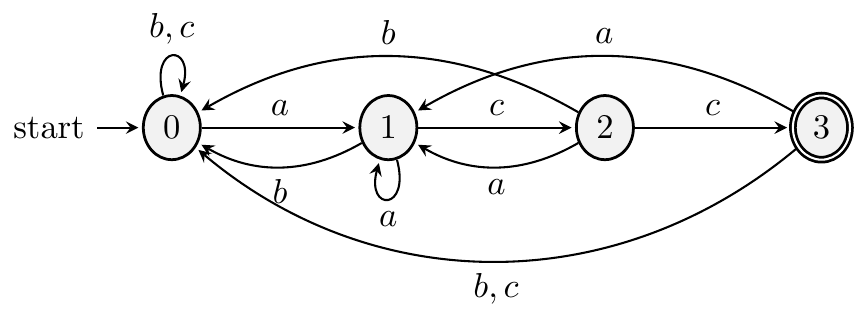}
	\caption{A classical automaton for the expression $R := a \cdot c \cdot c$ with alphabet $\Sigma=\{a,b,c\}$. State 1 can always remember the last symbol seen, since it can be reached only with $a$. State 0 can be reached with $b$ or $c$.}
	\label{fig:dfasr}
\end{figure}

An alternative approach,
followed in \cite{DBLP:conf/lpar/AlevizosAP18,DBLP:conf/debs/AlevizosAP17},
is to first set a maximum order $m$ that we need to capture
and then iteratively split each state of the original automaton into as many states as required so that each new state can remember the past $m$ symbols that have led to it. 
The new automaton that results from this splitting process is equivalent to the original,
in the sense that they recognize the same language,
but can always remember the last $m$ symbols of the stream.
With this approach,
it is indeed possible to guarantee that $m$-order dependencies can be captured.
As expected though, higher values of $m$ can quickly lead to an exponential growth of the number of states and the approach may be practical only for low values of $m$.

We propose the use of a variable-order Markov model (VMM) to mitigate the high cost of increasing the order $m$
\cite{DBLP:journals/jair/BegleiterEY04,buhlmann1999variable,DBLP:journals/ml/RonST96,DBLP:conf/nips/RonST93,DBLP:journals/tcom/ClearyW84,DBLP:journals/tit/WillemsST95}.
This allows us to increase $m$ to values not possible with the previous approaches and thus capture longer-term dependencies,
which can lead to a better accuracy.
An alternative would be to use hidden Markov models (HMMs) \cite{rabiner1989tutorial},
which are generally more expressive than bounded-order (either full or variable) Markov models.
However, HMMs often require large training datasets \cite{DBLP:journals/jair/BegleiterEY04,DBLP:journals/ml/AbeW92}.
Another problem is that it is not always obvious how a domain can be modeled through HMMs and a deep understanding of the domain may be required \cite{DBLP:journals/jair/BegleiterEY04}.
The relation between an automaton and the observed state of a HMM is not straightforward 
and it is not evident how a HMM would capture an automaton's behavior.

Different Markov models of variable order have been proposed in the literature
(see \cite{DBLP:journals/jair/BegleiterEY04} for a nice comparative study).
The general approach of such models is as follows:
let $\Sigma$ denote an alphabet, 
$\sigma \in \Sigma$ a symbol from that alphabet and $s \in \Sigma^{m}$ a string of length $m$ of symbols from that alphabet.
The aim is to derive a predictor $\hat{P}$ from the training data such that the average log-loss on a test sequence $S_{1..k}$ is minimized.
The loss is
given by 
$l(\hat{P},S_{1..k}) = - \frac{1}{T} \sum_{i=1}^{k} log \hat{P}(t_{i} \mid t_{1} \cdots t_{i-1})$.
Minimizing the log-loss is equivalent to maximizing the likelihood $\hat{P}(S_{1..k})=\prod_{i=1}^{k}\hat{P}(t_{i} \mid t_{1} \dots t_{i-1})$.
The average log-loss may also be viewed as a measure of the average compression rate achieved on the test sequence \cite{DBLP:journals/jair/BegleiterEY04}.
The mean (or expected) log-loss ($-\boldsymbol{E}_{P}\{log \hat{P}(S_{1..k}) \}$) is minimized if the derived predictor $\hat{P}$ is indeed the actual distribution $P$ of the source emitting sequences.

For full-order Markov models, 
the predictor $\hat{P}$ is derived through the estimation of conditional distributions $\hat{P}(\sigma \mid s)$,
with $m$ constant and equal to the assumed order of the Markov model. 
Variable-order Markov Models (VMMs), on the other hand, relax the assumption of $m$ being fixed.
The length of the ``context'' $s$ (as is usually called) may vary,
up to a \emph{maximum} order $m$, 
according to the statistics of the training dataset.  
By looking deeper into the past only when it is statistically meaningful,
VMMs can capture both short- and long-term dependencies.

\subsection{Prediction Suffix Trees}
\label{sec:pst}

We use Prediction Suffix Trees (\pst),
as described in \cite{DBLP:journals/ml/RonST96,DBLP:conf/nips/RonST93},
as our VMM of choice.
The reason is that,
once a \pst\ has been learned,
it can be readily converted to a probabilistic automaton.
More precisely, 
we learn a probabilistic suffix automaton (\psa),
whose states correspond to contexts of variable length.
The outgoing transitions from each state of the \psa\ encode the conditional distribution of seeing a symbol given the context of that state. 
As we will show,
this probabilistic automaton (or the tree itself)
can then be combined with a symbolic automaton in a way that allows us to infer when a CE is expected to occur.

The formal definition of a PST is the following:
\begin{definition}[Prediction Suffix Tree \cite{DBLP:journals/ml/RonST96}]
Let $\Sigma$ be an alphabet. 
A PST $T$ over $\Sigma$ is a tree
whose edges are labeled by symbols $\sigma \in \Sigma$ and each internal node has exactly one edge for every $\sigma \in \Sigma$
(hence, the degree is $\mid \Sigma \mid$).
Each node is labeled by a pair $(s,\gamma_{s})$,
where $s$ is the string associated with the walk starting from that node and ending at the root, 
and $\gamma_{s}: \Sigma \rightarrow [0,1]$ is the next symbol probability function related with $s$. 
For every string $s$ labeling a node, $\sum_{\sigma \in \Sigma} \gamma_{s}(\sigma) = 1$.
The depth of the tree is its order $m$.
\end{definition}
Figure \ref{fig:pstab1} shows an example of a \pst\ of order $m=2$.
According to this tree,
if the last symbol that we have encountered in a stream is $a$ 
and we ignore any other symbols that may have preceded it,
then the probability of the next input symbol being again $a$ is $0.7$.
However, we can obtain a better estimate of the next symbol probability by extending the context and looking one more symbol deeper into the past.
Thus, if the last two symbols encountered are $b,a$,
then the probability of seeing $a$ again is very different ($0.1$).
On the other hand,
if the last symbol encountered  is $b$,
the next symbol probability distribution is $(0.5,0.5)$ and,
since the node $b,(0.5,0.5)$ has not been expanded,
this implies that its children would have the same distribution if they had been created.
Therefore, the past does not affect the prediction and will not be used.
Note that a \pst\ whose leaves are all of equal depth $m$ corresponds to a full-order Markov model of order $m$,
as its paths from the root to the leaves correspond to every possible context of length $m$.

Our goal is to incrementally learn a \pst\ $\hat{T}$ by adding new nodes only when it is necessary
and then use $\hat{T}$ to construct a \psa\ $\hat{M}$ that will approximate the actual \psa\ $M$ that has generated the training data.
Assuming that we have derived an initial predictor $\hat{P}$
(as described in more detail in Section \ref{sec:prob_empirical}),
the learning algorithm in \cite{DBLP:journals/ml/RonST96} starts with a tree having only a single node,
corresponding to the empty string $\epsilon$.
Then, 
it decides whether to add a new context/node $s$ by checking two conditions:
\begin{itemize}
	\item First, there must exist $\sigma \in \Sigma$ such that $\hat{P}(\sigma \mid s) > \theta_{1}$ must hold, i.e., $\sigma$ must appear ``often enough'' after the suffix $s$;
	\item Second, $\frac{\hat{P}(\sigma \mid s)}{\hat{P}(\sigma \mid \mathit{suffix}(s))} > \theta_{2}$ (or $\frac{\hat{P}(\sigma \mid s)}{\hat{P}(\sigma \mid \mathit{suffix}(s))} <  \frac{1}{\theta_{2}}$) must hold, 
	i.e., it is ``meaningful enough'' to expand to $s$ because there is a significant difference in the conditional probability of $\sigma$ given $s$ with respect to the same probability given the shorter context $\mathit{suffix}(s)$,
where $\mathit{suffix}(s)$ is the longest suffix of $s$ that is different from $s$.
\end{itemize}
The thresholds $\theta_{1}$ and $\theta_{2}$ depend, among others, 
on parameters $\alpha$, $n$ and $m$, 
$\alpha$ being an approximation parameter, 
measuring how close we want the estimated \psa\ $\hat{M}$ to be compared to the actual \psa\ $M$,
$n$ denoting the maximum number of states that we allow $\hat{M}$ to have
and $m$ denoting the maximum order/length of the dependencies we want to capture. 
For example,
consider node $a$ in Figure \ref{fig:pstab1} and assume that we are at a stage of the learning process where we have not yet added its children, $aa$ and $ba$.
We now want to check whether it is meaningful to add $ba$ as a node.
Assuming that the first condition is satisfied,
we can then check the ratio  $\frac{\hat{P}(\sigma \mid s)}{\hat{P}(\sigma \mid \mathit{suffix}(s))} = \frac{\hat{P}(a \mid ba)}{\hat{P}(a \mid a)} = \frac{0.1}{0.7} \approx 0.14$.
If $\theta_{2} = 1.05$, then $\frac{1}{\theta_{2}} \approx 0.95$ and the condition is satisfied,
leading to the addition of node $ba$ to the tree.
For more details, see \cite{DBLP:journals/ml/RonST96}.
\begin{figure}
\begin{subfigure}[t]{0.54\textwidth}
	\includegraphics[width=0.99\textwidth]{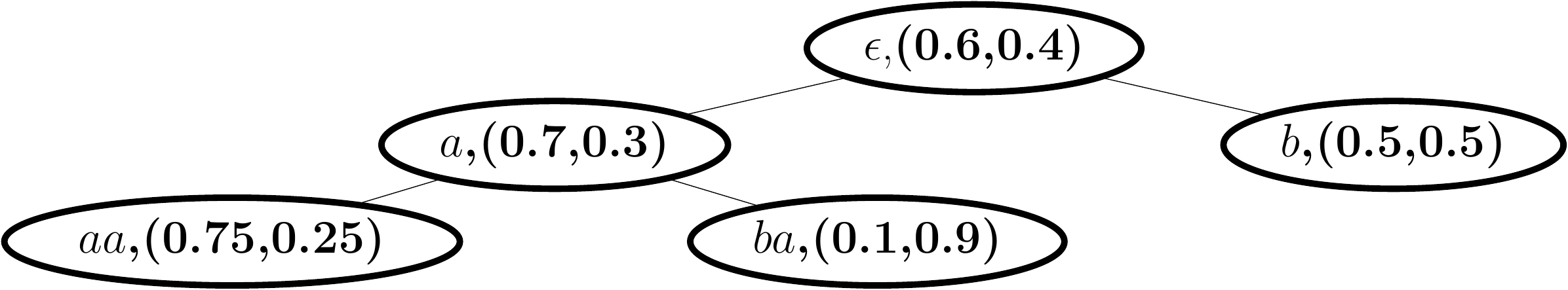}
	\caption{Example \pst\ $T$ for $\Sigma = \{a,b\}$ and $m=2$.
	Each node contains the label and the next symbol probability distribution for $a$ and $b$.}
	\label{fig:pstab1}
\end{subfigure}
\begin{subfigure}[t]{0.4\textwidth}
	\includegraphics[width=0.99\textwidth]{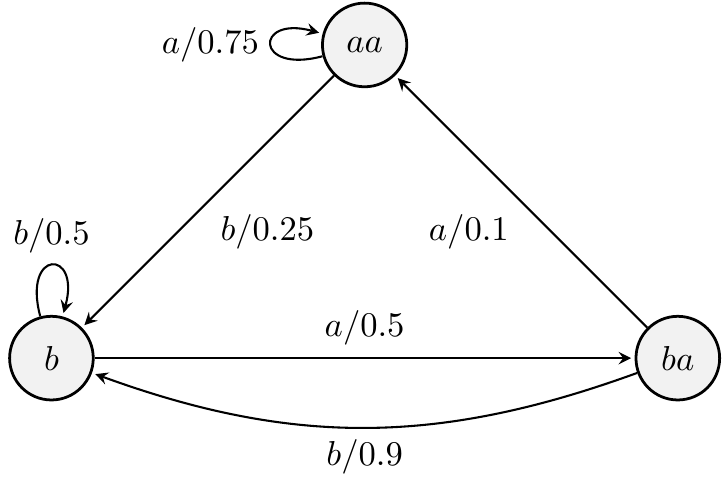}
	\caption{Example \psa\ $M_{S}$ constructed from the tree $T$. 
	Each state contains its label. 
	Each transition is composed of the next symbol to be encountered along with that symbol's probability.}
	\label{fig:psaab1}
\end{subfigure}
\caption{Example of a prediction suffix tree and its corresponding probabilistic suffix automaton.}
\label{fig:pstpsa}
\end{figure}

Once a \pst\ $\hat{T}$ has been learned,
we can convert it to a \psa\ $\hat{M}$.
The definition for \psa\ is the following:
\begin{definition}[Probabilistic Suffix Automaton \cite{DBLP:journals/ml/RonST96}]
\label{def:psa}
A Probabilistic Suffix Automaton $M$ is a tuple $(Q,\Sigma,\tau,\gamma,\pi)$, where:
\begin{itemize}
	\item $Q$ is a finite set of states;
	\item $\Sigma$ is a finite alphabet;
	\item $\tau: Q \times \Sigma \rightarrow Q$ is the transition function;
	\item $\gamma: Q \times \Sigma \rightarrow [0,1]$ is the next symbol probability function;
	\item $\pi: Q \rightarrow [0,1]$ is the initial probability distribution over the starting states; 
\end{itemize}
The following conditions must hold: 
\begin{itemize}
	\item For every $q \in Q$, it must hold that $\sum_{\sigma \in \Sigma} \gamma(q,\sigma) = 1$ and $\sum_{q \in Q} \pi(q) = 1$;
	\item Each $q \in Q$ is labeled by a string $s \in \Sigma^{*}$ and the set of labels is suffix free,
	i.e., no label $s$ is a suffix of another label $s'$;
	\item For every two states $q_{1},q_{2} \in Q$ and for every symbol $\sigma \in \Sigma$, 
	if $\tau(q_{1},\sigma)=q_{2}$ and $q_{1}$ is labeled by $s_{1}$, 
	then $q_{2}$ is labeled by $s_{2}$, such that $s_{2}$ is a suffix of $s_{1} \cdot \sigma$;
	\item For every $s$ labeling some state $q$, 
	and every symbol $\sigma$ for which $\gamma(q,\sigma) > 0$, 
	there exists a label which is a suffix of $s \cdot \sigma$;
	\item Finally, the graph of $M$ is strongly connected.
\end{itemize}
\end{definition}
Note that a \psa\ is a Markov chain.
$\tau$ and $\gamma$ can be combined into a single function, 
ignoring the symbols,
and this function, 
together with the first condition of Definition \ref{def:psa}, 
would define the transition matrix of a Markov chain.
The last condition about $M$ being strongly connected also ensures that the Markov chain is
composed of a single recurrent class of states.
Figure \ref{fig:psaab1} shows an example of a \psa,
the one that we construct from the \pst\ of Figure \ref{fig:pstab1},
using the leaves of the tree as automaton states.
A full-order \psa\ for $m=2$ would require a total of 4 states,
given that we have two symbols.
If we use the \pst\ of Figure \ref{fig:pstab1},
we can construct the \psa\ of Figure \ref{fig:psaab1} which has 3 states.
State $b$ does not need to be expanded to states $bb$ and $ab$,
since the tree tells us that such an expansion is not statistically meaningful. 

Using a \psa\,
we can process a stream of symbols and at every point be able to provide an estimate about the next symbols that will be encountered along with their probabilities.
The state of the \psa\ at every moment corresponds to a suffix of the stream.
For example, 
according to the \psa\ of Figure \ref{fig:psaab1},
if the last symbol consumed from the stream is $b$,
then the \psa\ would be in state $b$ and the probability of the next symbol being $a$ would be $0.5$.
If the last symbol in the stream is $a$,
we would need to expand this suffix to look at one more symbol in the past.
If the last two symbols are $aa$,
then the \psa\ would be in state $aa$ and the probability of the next symbol being $a$ again would be $0.75$.

Note that a \psa\ does not act as an acceptor (there are no final states),
but can act as a generator of strings.
It can use $\pi$, its initial distribution on states, to select an initial state and generate its label as a first string
and then continuously use $\gamma$ to generate a symbol, move to a next state and repeat the same process.
At every time, the label of its state is always a suffix of the string generated thus far. 
A \psa\ may also be used to read a string or stream of symbols.
In this mode,
the state of the \psa\ at every moment corresponds again to a suffix of the stream and the \psa\ can be used to calculate the probability of seeing any given string in the future, 
given the label of its current state. 
Our intention is to use this derived \psa\ to process streams of symbols,
so that, 
while consuming a stream $S_{1..k}$,
we can know what its meaningful suffix and use that suffix for any inferences.

However, there is a subtle technical issue about the convertibility of a \pst\ to a \psa.
Not every \pst\ can be converted to a \psa\ (but every \pst\ can be converted to a larger class of so-call probabilistic automata).
This is achievable under a certain condition. 
If this condition does not hold,
then the \pst\ can be converted to an automaton that is composed of a \psa\ as usual,
with the addition of some extra states.
These states, viewed as states of a Markov chain, are transient.
This means that the automaton will move through these states for some transitions, 
but it will finally end into the states of the \psa,
stay in that class and never return to any of the transient states.
In fact, if the automaton starts in any of the transient states,
then it will enter the single, recurrent class of the \psa\ in at most $m$ transitions.
Given the fact that in our work we deal with streams of infinite length,
it is certain that,
while reading a stream,
the automaton will have entered the \psa\ after at most $m$ symbols.
Thus, instead of checking this condition,
we prefer to simply construct only the \psa\ and wait (for at most $m$ symbols)
until the first $k \leq m$ symbols of a stream have been consumed and are equal to a label of the \psa.
At this point, we set the current state of the \psa\ to the state with that label and start processing.

The above discussion seems to suggest that a \psa\ is constructed from the leaves of a \pst.
Thus, it should be expected that the number of states of a \psa\ should always be smaller than the total number of nodes of its \pst.
However, this is not true in the general case.
In fact, in some cases the \pst\ nodes might be significantly less than the \psa\ states.
The reason is that a \pst,
as is produced by the learning algorithm described previously,
might not be sufficient to construct a \psa.
To remedy this situation,
we need to expand the original \pst\ $\hat{T}$ by adding more nodes in order to get a suitable \pst\ $\hat{T}'$ and then construct the \psa\ from $\hat{T}'$.
The leaves of $\hat{T}'$ (and thus the states of the \psa) could be significantly more than the leaves of $\hat{T}$.
This issue is further discussed in Section \ref{sec:no-mc}.

\subsection{Emitting Forecasts}
\label{sec:forecasts}

Our ultimate goal is to use the statistical properties of a stream,
as encoded in a \pst\ or a \psa,
in order to infer when a Complex Event (CE) with a given Symbolic Regular Expression (\sre) $R$ will be detected.
Equivalently, we are interested in inferring when the \sfa\ of $R$ will reach one of its final states. 
To achieve this goal, we work as follows.
We start with a \sre\ $R$ and a training stream $S$.
We first use $R$ to construct an equivalent \ssfa\ and then determinize this \ssfa\ into a \dsfa\ $M_{R}$.
$M_{R}$ can be used to perform recognition on any given stream,
but cannot be used for probabilistic inference.
Next, we use the minterms of $M_{R}$
(acting as ``symbols'', see Lemma \ref{lemma:isomorphism})
and the training stream $S$ to learn a \pst\ $T$ and (if required) a \psa\ $M_{S}$ which encode the statistical properties of $S$.
These probabilistic models do not yet have any knowledge of the structure of $R$ (they only know its minterms), 
are not acceptors (the \psa\ does not have any final states) and cannot be used for recognition.
We therefore need to combine the learned probabilistic model ($T$ or $M_{S}$) with the automaton used for recognition ($M_{R}$).

At this point,
there is a trade-off between memory and computation efficiency.
If the online performance of our system is critical and we are not willing to make significant sacrifices in terms of computation efficiency,
then we should combine the recognition automaton $M_{R}$ with the \psa\ $M_{S}$.
Using the \psa\, we can have a very efficient solution with minimal overhead on throughput.
The downside of this approach is its memory footprint, which may limit the order of the model. 
Although we may increase the order beyond what is possible with full-order models, 
we may still not achieve the desired values,
due to the significant memory requirements.
Hence, 
if high accuracy and thus high order values are necessary,
then we should combine the recognition automaton $M_{R}$ directly with the \pst\ $T$,
bypassing the construction of the \psa.
In practice prediction suffix trees often turn out to be more compact and memory efficient than probabilistic suffix automata,
but trees need to be constantly traversed from root to leaves whereas an automaton simply needs to find the triggered transition and immediately jump to the next state.
In the remainder of this Section,
we present these two alternatives.

\subsubsection{Using a Probabilistic Suffix Automaton (\psa)}
\label{sec:embed}

We can combine a recognition automaton $M_{R}$ and a \psa\ $M_{S}$ into a single automaton $M$ that has the power of both and can be used for recognizing and for forecasting occurrences of CEs of the expression $R$. 
We call $M$ the \emph{embedding} of $M_{S}$ in $M_{R}$.
The reason for merging the two automata is that we need to know at every point in time the state of $M_{R}$ in order to estimate which future paths might actually lead to a final state (and thus a complex event).
If only SDE forecasting was required, 
this merging would not be necessary.
We could use $M_{R}$ for recognition and then $M_{S}$ for SDE forecasting. 
In our case,
we need information about the structure of the pattern automaton and its current state to determine if and when it might reach a final state.
The formal definition of an embedding is given below,
where,
in order to simplify notation,
we use Lemma \ref{lemma:isomorphism} and represent \dsfa\ as classical deterministic automata.
\begin{definition}[Embedding of a \psa\ in a \dsfa]
Let $M_{R}$ be a \dsfa\ (actually its mapping to a classical automaton) and $M_{S}$ a \psa\ with the same alphabet.
An embedding of $M_{S}$ in $M_{R}$ is a tuple $M=(Q,Q^{s},Q^{f},\Sigma,\Delta,\Gamma,\pi)$, where:
\begin{itemize}
	\item $Q$ is a finite set of states;
	\item $Q^{s} \subseteq Q$ is the set of initial states;
	\item $Q^{f} \subseteq Q$ is the set of final states;
	\item $\Sigma$ is a finite alphabet;
	\item $\Delta: Q \times \Sigma \rightarrow Q$ is the transition function;
	\item $\Gamma: Q \times \Sigma \rightarrow [0,1]$ is the next symbol probability function;
	\item $\pi: Q \rightarrow [0,1]$ is the initial probability distribution.
\end{itemize}
The language $\mathcal{L}(M)$ of $M$ is defined, as usual, as the set of strings that lead $M$ to a final state.
The following conditions must hold,
in order for $M$ to be an embedding of $M_{S}$ in $M_{R}$:
\begin{itemize}
	\item $\Sigma = M_{R}.\Sigma = M_{S}.\Sigma$;
	%\item For every $q \in Q$, $q = (r,s)$ and $r \in M_{R}.Q$, $s \in M_{S}.Q$;
	\item $\mathcal{L}(M) = \mathcal{L}(M_{R})$;
	\item For every string/stream $S_{1..k}$, $P_{M}(S_{1..k}) = P_{M_{S}}(S_{1..k})$,
where $P_{M}$ denotes the probability of a string calculated by $M$ (through $\Gamma$) and $P_{M_{S}}$ the probability calculated by $M_{S}$ (through $\gamma$). 
\end{itemize}
\end{definition} 
The first condition ensures that all automata have the same alphabet.
The second ensures that $M$ is equivalent to $M_{R}$ by having the same language.
The third ensures that $M$ is also equivalent to $M_{S}$,
since both automata return the same probability for every string.

It can be shown that such an equivalent embedding can indeed be constructed for every \dsfa\ and \psa.
\begin{theorem}
\label{theorem:embedding}
For every \dsfa\ $M_{R}$ and \psa\ $M_{S}$ constructed using the minterms of $M_{R}$,
there exists an embedding of $M_{S}$ in $M_{R}$.
\end{theorem}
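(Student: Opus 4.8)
The plan is to construct $M$ as the product (Cartesian) automaton of $M_R$ and $M_S$, letting acceptance be governed solely by the $M_R$-component and probabilities solely by the $M_S$-component. Concretely, I would take the state set to be (the accessible part of) $M_R.Q \times M_S.Q$; define $\Delta((q,p),\sigma) = (\delta_R(q,\sigma),\tau(p,\sigma))$, where $\delta_R$ is the (deterministic) transition function of $M_R$ and $\tau$ that of $M_S$; declare $(q,p)$ final iff $q \in M_R.Q^f$; set $\Gamma((q,p),\sigma) = \gamma(p,\sigma)$ so the emitted probability ignores the $M_R$-component; and set the initial states to $Q^s = \{(M_R.q^s,p) : \pi_{M_S}(p) > 0\}$ with $\pi((M_R.q^s,p)) = \pi_{M_S}(p)$ and $\pi$ zero elsewhere. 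Since by hypothesis $M_S$ is built over the minterms of $M_R$, which Lemma \ref{lemma:isomorphism} identifies with the classical alphabet of $M_R$, the first embedding condition $\Sigma = M_R.\Sigma = M_S.\Sigma$ holds by construction.

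For the language condition, I would show by induction on $|w|$ that, after reading $w$, $M$ is in a state whose first component equals the state $M_R$ reaches on $w$; this uses the fact that the first coordinate of $\Delta$ depends only on the first coordinate of the current state, so it evolves exactly as $M_R$ does. Consequently $w$ drives $M$ to a final state iff it drives $M_R$ to a final state, i.e.\ $\mathcal{L}(M) = \mathcal{L}(M_R)$. Symmetrically, the second coordinate of $\Delta$ depends only on the second coordinate, so on input $S_{1..k}$ the sequence of $M_S$-components visited by $M$ (from initial state $(M_R.q^s,p)$) is exactly the sequence of states $M_S$ visits from $p$; since $\Gamma$ merely copies $\gamma$ along that sequence, the probability $M$ assigns to $S_{1..k}$ starting from $(M_R.q^s,p)$ equals the probability $M_S$ assigns to it starting from $p$. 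Weighting by $\pi$ and summing over initial states yields $P_M(S_{1..k}) = \sum_p \pi_{M_S}(p)\,P_{M_S}(S_{1..k}\mid\text{start }p) = P_{M_S}(S_{1..k})$, which is the third condition.

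The step that needs genuine care --- and I expect it to be the main obstacle --- is coping with the fact that a \psa's transition function is not necessarily total: recall the discussion of transient states just before the statement, where for a state $p$ and symbol $\sigma$ with $\gamma(p,\sigma)=0$ there may be no successor, so the naive product could get stuck on a string that $M_R$ nonetheless accepts, breaking $\mathcal{L}(M)=\mathcal{L}(M_R)$. I would handle this by first completing $M_S$ with an absorbing sink state, reached only via the missing ($\gamma$-value $0$) transitions; this affects neither acceptance (which is decided by the $M_R$-coordinate) nor any string probability (those transitions carry probability $0$), and it makes the two inductions above go through for all strings. Missing transitions of $M_R$ itself are handled analogously by routing to a non-final sink. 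A secondary subtlety, worth an explicit remark, is that when a stream is read from its very beginning the \psa-coordinate is only pinned down after the first $m$ symbols; this is exactly why $M$ is permitted several initial states weighted by $\pi_{M_S}$, and it is the formal counterpart of the operational ``wait until the first $k\le m$ symbols match a \psa\ label'' mechanism described above. Once these points are settled, the theorem reduces to the two routine inductions.
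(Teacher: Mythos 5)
Your construction is the same as the paper's: the product automaton $M_R.Q \times M_S.Q$ with component-wise transitions, finals determined by the $M_R$-coordinate, $\Gamma$ copying $\gamma$, initial distribution copying $\pi_{M_S}$ over pairs $(M_R.q^s,p)$, and the two coordinate-tracking inductions for language and probabilistic equivalence. Your extra sink-state completion addresses a totality detail the paper sidesteps (Definition~\ref{def:psa} already takes $\tau$ to be a total function, and the paper's proof treats the start-up issue by letting the \psa\ run begin at index $l>1$), so it is a harmless refinement rather than a different route.
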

\begin{proof}
Construct an embedding in the following straightforward manner:
First let its states be the Cartesian product $M_{R}.Q \times M_{S}.Q$,
i.e., for every $q \in Q$, $q = (r,s)$ and $r \in M_{R}.Q$, $s \in M_{S}.Q$.
Set the initial states of $M$ as follows:
for every $q = (r,s)$ such that $r = M_{R}.q^{s}$, set $q \in Q^{s}$.
Similarly, for the final states,
for every $q = (r,s)$ such that $r \in M_{R}.Q^{f}$, set $q \in Q^{f}$.
Then let the transitions of $M$ be defined as follows:
A transition $\delta((r,s), \sigma) = (r^{'},s^{'})$ is added to $M$
if there exists a transition $\delta_{R}(r, \sigma) = r^{'}$ in $M_{R}$
and a transition $\tau(s, \sigma) = s^{'}$ in $M_{S}$.
Let  also $\Gamma$ be defined as follows: $\Gamma((r,s), \sigma) = \gamma(s,\sigma)$.
Finally, for the initial state distribution, we set:
\[ \pi((r,s)) =
  \begin{cases}
    M_{S}.\pi(s) & \quad \text{if } r = M_{R}.q^{s}   \\
    0  & \quad \text{otherwise} \\
  \end{cases}
\]

Proving that $\mathcal{L}(M) = \mathcal{L}(M_{R})$ is done with induction on the length of strings.
The inductive hypothesis is that, for strings $S_{1..k} = t_{1} \cdots t_{k}$ of length $k$,
if $q=(r,s)$ is the state reached by $M$ and $q_{R}$ the state reached by $M_{R}$,
then $r=q_{R}$.
Note that both $M_{R}$ and $M$ are deterministic and complete automata and thus only one state is reached for every string (only one run exists).
If a new element $t_{k+1}$ is read, $M$ will move to a new state $q^{'}=(r^{'},s^{'})$ and $M_{R}$ to $q_{R}^{'}$.
From the construction of the transitions of $M$, we see that $r^{'} = q_{R}^{'}$.
Thus, the induction hypothesis holds for $S_{1..k+1}$ as well. 
It also holds for $k=0$,
since, 
for every $q = (r,s) \in Q^{s}$, $r = M_{R}.q^{s}$.
Therefore, it holds for all $k$.
As a result, if $M$ reaches a final state $(r,s)$, 
$r$ is reached by $M_{R}$.
Since $r \in M_{R}.Q^{f}$, $M_{R}$ also reaches a final state.

For proving probabilistic equivalence,
first note that the probability of a string given by a predictor $P$ is 
$P(S_{1..k})=\prod_{i=1}^{k}P(t_{i} \mid t_{1} \dots t_{i-1})$.
Assume now that a \psa\ $M_{S}$ reads a string $S_{1..k}$ and follows a run 
$\varrho = [l,q_{l}] \overset{t_{l}}{\rightarrow} [l+1,q_{l+1}] \overset{t_{l+1}}{\rightarrow} \cdots \overset{t_{k}}{\rightarrow} [k+1,q_{k+1}]$.
We define a run in a manner similar to that for runs of a \dsfa.
The difference is that a run of a \psa\ may begin at an index $l>1$,
since it may have to wait for $l$ symbols before it can find a state $q_{l}$ whose label is equal to $S_{1..l}$.
We also treat the \psa\ as a reader (not a generator) of strings for which we need to calculate their probability.
The probability of $S_{1..k}$ is then given by 
$P_{M_{S}}(S_{1..k}) = M_{S}.\pi(q_{l}) \cdot \prod_{i=l}^{k} M_{S}.\gamma(q_{i},t_{i})$.
Similarly, for the embedding $M$,
assume it follows the run
$\varrho^{'} = [l,q^{'}_{l}] \overset{t_{l}}{\rightarrow} [l+1,q^{'}_{l+1}] \overset{t_{l+1}}{\rightarrow} \cdots \overset{t_{k}}{\rightarrow} [k+1,q^{'}_{k+1}]$.
Then,
$P_{M}(S_{1..k}) = M.\pi(q^{'}_{l}) \cdot \prod_{i=l}^{k} M.\Gamma(q^{'}_{i},t_{i})$.
Now note that $M$ has the same initial state distribution as $M_{S}$,
i.e.,
the number of the initial states of $M$ is equal to the number of states of $M_{S}$ and they have the same distribution.
With an inductive proof, as above, we can prove that whenever $M$ reaches a state $q=(r,s)$ and $M_{S}$ reaches $q_{S}$,
$s = q_{S}$.
As a result, for the initial states of $M$ and $M_{S}$,
$M.\pi(q^{'}_{l}) = M_{S}.\pi(q_{l})$.
From the construction of the embedding,
we also know that $M_{S}.\gamma(s,\sigma) = M.\Gamma(q,\sigma)$ for every $\sigma \in \Sigma$.
Therefore, 
$M_{S}.\gamma(q_{i},t_{i}) = M.\Gamma(q^{'}_{i},t_{i})$ for every $i$ and 
$P_{M}(S_{1..k}) = P_{M_{S}}(S_{1..k})$.
\end{proof}

\begin{figure}
\begin{subfigure}[t]{0.44\textwidth}
	\includegraphics[width=0.99\textwidth]{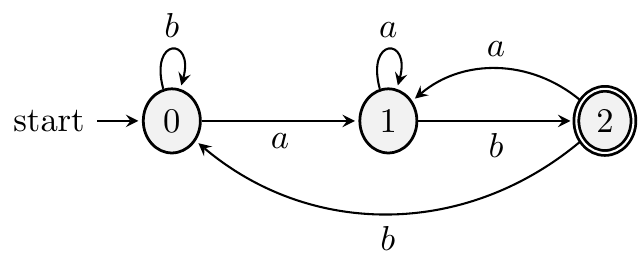}
	\caption{\dsfa\ $M_{R}$ for $R:=a \cdot b$ and $\Sigma = \{a,b\}$.}
	\label{fig:dsfaab}
\end{subfigure}
\begin{subfigure}[t]{0.5\textwidth}
	\includegraphics[width=0.99\textwidth]{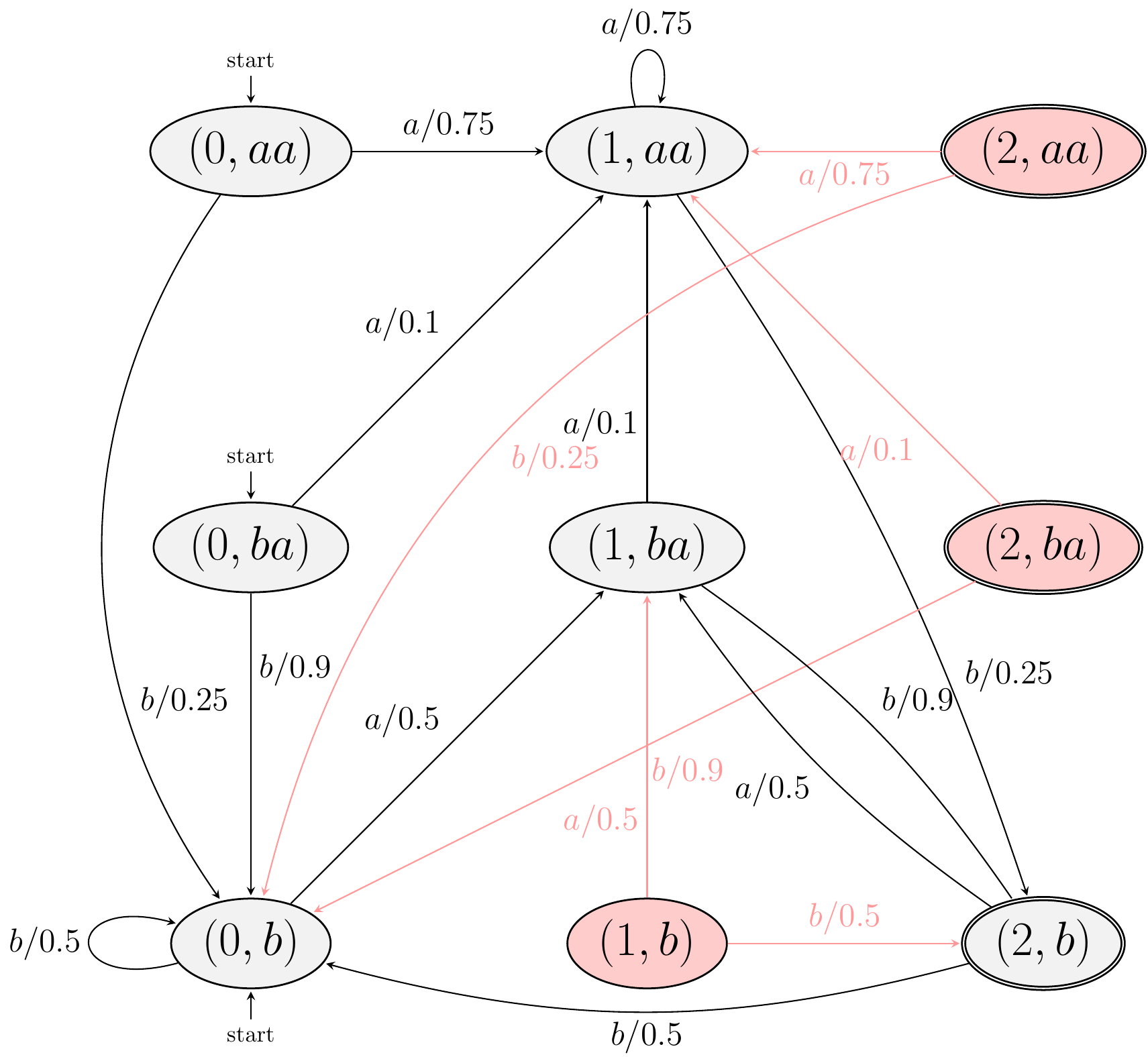}
	\caption{Embedding of $M_{S}$ of Figure \ref{fig:psaab1} in $M_{R}$ of Figure \ref{fig:dsfaab}.}
	\label{fig:merged}
\end{subfigure}
\caption{Embedding example.}
\label{fig:dsfapsa}
\end{figure}

As an example,
consider the \dsfa\ $M_{R}$ of Figure \ref{fig:dsfaab}
for the expression $R = a \cdot b$ with $\Sigma = \{a,b\}$.
We present it as a classical automaton, 
but we remind readers that symbols in $\Sigma$ correspond to minterms.
Figure \ref{fig:pstab1} depicts a possible \pst\ $T$ that could be learned from a training stream composed of symbols from $\Sigma$. 
Figure \ref{fig:psaab1} shows the \psa\ $M_{S}$ constructed from $T$.
Figure \ref{fig:merged} shows the embedding $M$ of $M_{S}$ in $M_{R}$ that would be created,
following the construction procedure of the proof of Theorem \ref{theorem:embedding}.
Notice, however, that this embedding has some redundant states and transitions;
namely the states indicated with red that have no incoming transitions and are thus inaccessible.
The reason is that some states of $M_{R}$ in Figure \ref{fig:dsfaab} have a ``memory'' imbued to them from the structure of the automaton itself.
For example, state 2 of $M_{R}$ has only a single incoming transition with $b$ as its symbol. 
Therefore, there is no point in merging this state with all the states of $M_{S}$,
but only with state $b$.
If we follow a straightforward construction,
as described above,
the result will be the automaton depicted in Figure \ref{fig:merged},
including the redundant red states.
To avoid the inclusion of such states,
we can merge $M_{R}$ and $M_{S}$ in an incremental fashion 
(see Algorithm \ref{algorithm:merging}).
The resulting automaton would then consist only of the black states and transitions of Figure \ref{fig:merged}.
In a streaming setting,
we would thus have to wait at the beginning of the stream for some input events to arrive before deciding the start state with which to begin.
For example,
if $b$ were the first input event,
we would then begin with the bottom left state $(0,b)$.
On the other hand, 
if $a$ were the first input event,
we would have to wait for yet another event.
If another $a$ arrived as the second event,
we would begin with the top left state $(0,aa)$.
In general, 
if $m$ is our maximum order,
we would need to wait for at most $m$ input events before deciding.

\begin{algorithm}[!h]
%\SetAlgoNoLine
\KwIn{A \dsfa\ $M_{R}$ and a \psa\ $M_{S}$ learnt with the minterms of $M_{R}$}
\KwOut{An embedding $M$ of $M_{S}$ in $M_{R}$ equivalent to both $M_{R}$ and $M_{S}$}
\tcc{{\footnotesize First create the initial states of the merged automaton by combining the initial state of $M_{R}$ with all the states of $M_{S}$.}}\
$Q^{s} \leftarrow \emptyset$\;
\ForEach{$s \in M_{S}.Q$}{
	$q \leftarrow \mathit{CreateNewState}(M_{R}.q^{s},s)$\;
	\tcc{{\footnotesize $q$ is a tuple $(r,s)$}}\
	$Q^{s} \leftarrow Q^{s} \cup \{q\}$\;
}
\tcc{{\footnotesize A frontier of states is created, including states of $M$ that have no outgoing transitions yet. First frontier consists of the initial states of $M$}}\
$\mathit{Frontier} \leftarrow Q^{s}$;
$\mathit{Checked} \leftarrow \emptyset$;
$\Delta \leftarrow \emptyset$;
$\Gamma \leftarrow \emptyset$\;
\While{$\mathit{Frontier} \neq \emptyset$}{
	$q \leftarrow$ pick an element from $\mathit{Frontier}$\;
	\ForEach{$\sigma \in M_{S}.\Sigma$}{
		$s^{next} \leftarrow M_{S}.\tau(q.s,\sigma)$;
		$r^{next} \leftarrow M_{R}.\delta(q.r,\sigma)$\;
		\uIf{$(r^{next},s^{next}) \notin \mathit{Checked}$}{
			$q^{next} \leftarrow \mathit{CreateNewState}(r^{next},s^{next})$\;
			$\mathit{Frontier} \leftarrow \mathit{Frontier} \cup q^{next}$\;
		}
		\uElse {
			$q^{next} \leftarrow (r^{next},s^{next})$\;
		}
		\tcc{{\footnotesize Both the symbol $\sigma$ and its probability are added to the transition.}}\
		$\delta \leftarrow \mathit{CreateNewTransition}(q,\sigma,q^{next})$\;
		$\gamma \leftarrow \mathit{CreateNewProbability(q,\sigma,M_{S}.\gamma(q.s,\sigma))}$\;
		$\Delta \leftarrow \Delta \cup \delta$;
		$\Gamma \leftarrow \Gamma \cup \gamma$\;		
	}
	$\mathit{Checked} \leftarrow \mathit{Checked} \cup \{q\}$;
	$\mathit{Frontier} \leftarrow \mathit{Frontier} \setminus \{q\}$\;
}
$Q \leftarrow \mathit{Checked}$\;
\tcc{{\footnotesize Create the final states of $M$ by gathering all states of $M$ whose second element is a final state of $M_{R}$.}}\
$Q^{f} \leftarrow \emptyset$\;
\ForEach{$q \in Q$}{
	\uIf{$q.q_{R} \in M_{R}.Q^{f}$}{
		$Q^{f} \leftarrow Q^{f} \cup \{q\}$\;
	}
}
$\Sigma \leftarrow M_{S}.\Sigma$\;
return $M=(Q,Q^{s},Q^{f},\Sigma,\Delta,\Gamma)$\;
\caption{Embedding of a \psa\ in a \dsfa\ (incremental).}
\label{algorithm:merging}
\end{algorithm}

After constructing an embedding $M$ from a \dsfa\ $M_{R}$ and a \psa\ $M_{S}$,
we can use $M$ to perform forecasting on a test stream.
Since $M$ is equivalent to $M_{R}$,
it can also consume a stream and detect the same instances of the expression $R$ as $M_{R}$ would detect.
However, 
our goal is to use $M$ to forecast the detection of an instance of $R$.
More precisely, 
we want to estimate the number of transitions from any state in which $M$ might be until it reaches for the first time one of its final states.
Towards this goal,
we can use the theory of Markov chains.
Let $N$ denote the set of non-final states of $M$ and $F$ the set of its final states.
We can organize the transition matrix of $M$ in the following way
(we use bold symbols to refer to matrices and vectors and normal ones to refer to scalars or sets):
\begin{equation}
\label{eq:matrix}
\boldsymbol{\Pi} = 
\begin{pmatrix} 
\boldsymbol{N} & \boldsymbol{N_{F}}  \\ 
\boldsymbol{F_{N}} & \boldsymbol{F}
\end{pmatrix}
\end{equation}
where $\boldsymbol{N}$ is the sub-matrix containing the probabilities of transitions from non-final to non-final states,
$\boldsymbol{F}$ the probabilities from final to final states,
$\boldsymbol{F_{N}}$ the probabilities from final to non-final states
and $\boldsymbol{N_{F}}$ the probabilities from non-final to final states.
By partitioning the states of a Markov chain into two sets,
such as $N$ and $F$,
the following theorem can be used to estimate the probability of reaching a state in $F$ starting from a state in $N$:
\begin{theorem}[\cite{fu2003distribution}]
\label{theorem:non-finals}
Let $\boldsymbol{\Pi}$ be the transition probability matrix of a homogeneous Markov chain $Y_{t}$ in the form of Equation \eqref{eq:matrix}
and $\boldsymbol{\xi}_{init}$ its initial state distribution.
The probability for the time index $n$ when the system first enters the set of states $F$,
starting from a state in $N$, 
can be obtained from
\begin{equation}
\label{eq:wtd:non-finals}
P(Y_{n} \in F, Y_{n-1} \in N, \cdots, Y_{2} \in N, Y_{1} \in N \mid \boldsymbol{\xi_{init}}) =
\boldsymbol{\xi_{N}}^{T}\boldsymbol{N}^{n-1}(\boldsymbol{I}-\boldsymbol{N})\boldsymbol{1}
\end{equation}
where $\xi_{N}$ is the vector consisting of the elements of $\xi_{init}$ corresponding to the states of $N$.
\end{theorem}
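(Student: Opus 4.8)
The plan is to recognise this as a textbook \emph{finite Markov chain imbedding} computation, in which the blocks $\boldsymbol{F_N}$ and $\boldsymbol{F}$ of $\boldsymbol{\Pi}$ play no role at all: the event on the left of \eqref{eq:wtd:non-finals} pins down every one of the states $Y_1,\dots,Y_n$, so only transitions that stay inside $N$ or that leave $N$ can ever contribute. I read the conditioning on $\boldsymbol{\xi_{init}}$ as fixing the law of $Y_0$, which lies in $N$ (this is the content of the ``starting from a state in $N$'' hypothesis), and I observe that the stated event $\{Y_n \in F,\, Y_{n-1}\in N,\dots,Y_1\in N\}$ is exactly $\{W=n\}$, where $W = \min\{k \ge 1 : Y_k \in F\}$ is the first entry time of $F$; requiring $Y_1,\dots,Y_{n-1}\in N$ is precisely what turns ``entry at time $n$'' into ``\emph{first} entry at time $n$''.

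First I would compute the survival probabilities. Repeatedly applying the Markov property to $\{W > k\} = \{Y_1\in N,\dots,Y_k\in N\}$ and using that the one-step probability of moving between two states of $N$ is by definition an entry of the block $\boldsymbol{N}$, a path $Y_0 = i \to i_1 \to \cdots \to i_k$ with all $i_j \in N$ contributes $(\boldsymbol{N}^{k})_{i,i_k}$; summing over the free endpoint $i_k \in N$ and averaging over $Y_0 \sim \boldsymbol{\xi_{init}}$ gives
\begin{equation}
\label{eq:wtd:survival}
P(W > k \mid \boldsymbol{\xi_{init}}) \;=\; \boldsymbol{\xi_N}^{T}\,\boldsymbol{N}^{k}\,\boldsymbol{1} .
\end{equation}
This is the only step with any actual content, and it is routine; the single thing to watch is that no index is allowed to leave $N$ along the way, which is exactly why $\boldsymbol{N}$ rather than $\boldsymbol{\Pi}$ is the relevant matrix.

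Finally, since $\{W = n\}$ equals $\{W > n-1\}\setminus\{W > n\}$ with $\{W>n\}\subseteq\{W>n-1\}$, subtracting two instances of \eqref{eq:wtd:survival} yields
\[
P(W = n \mid \boldsymbol{\xi_{init}}) \;=\; \boldsymbol{\xi_N}^{T}\boldsymbol{N}^{n-1}\boldsymbol{1} - \boldsymbol{\xi_N}^{T}\boldsymbol{N}^{n}\boldsymbol{1} \;=\; \boldsymbol{\xi_N}^{T}\boldsymbol{N}^{n-1}(\boldsymbol{I}-\boldsymbol{N})\boldsymbol{1},
\]
which is \eqref{eq:wtd:non-finals}. (Alternatively, one skips the telescoping and argues directly that $\boldsymbol{\xi_N}^{T}\boldsymbol{N}^{n-1}\boldsymbol{N_F}\boldsymbol{1}$ enumerates exactly the paths staying in $N$ for $n-1$ steps and then crossing into $F$, together with the identity $\boldsymbol{N_F}\boldsymbol{1} = (\boldsymbol{I}-\boldsymbol{N})\boldsymbol{1}$, which holds because the rows of $\boldsymbol{\Pi}$ sum to one.) I do not anticipate a genuine obstacle here: the argument is essentially bookkeeping, and the only care needed is in fixing the indexing convention for $\boldsymbol{\xi_{init}}$ (so that the exponent comes out as $n-1$ and not $n-2$) and in noting that the dynamics after $F$ is first reached, encoded in $\boldsymbol{F_N}$ and $\boldsymbol{F}$, is simply irrelevant to the event being measured. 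Since this is a known result, citing \cite{fu2003distribution} for the full details would also be perfectly legitimate.
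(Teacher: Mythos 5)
Your proposal is correct, but note that the paper itself offers no proof of this statement: Theorem \ref{theorem:non-finals} is imported verbatim from \cite{fu2003distribution}, and only Theorem \ref{theorem:finals} is proved in the appendix (by reduction to this one). Your derivation is the standard first-passage argument and both of your routes are sound: the survival identity $P(W>k\mid\boldsymbol{\xi_{init}})=\boldsymbol{\xi_N}^{T}\boldsymbol{N}^{k}\boldsymbol{1}$ followed by telescoping, or the direct path decomposition $\boldsymbol{\xi_N}^{T}\boldsymbol{N}^{n-1}\boldsymbol{N_F}\boldsymbol{1}$ together with $\boldsymbol{N_F}\boldsymbol{1}=(\boldsymbol{I}-\boldsymbol{N})\boldsymbol{1}$, which indeed follows from the rows of $\boldsymbol{\Pi}$ summing to one. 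The one point you flag---fixing the indexing convention---is genuinely the only delicate issue, and you resolve it in the way that makes the displayed formula literally true: attaching $\boldsymbol{\xi_{init}}$ to $Y_0$ (supported on $N$) so that the event $\{Y_1\in N,\dots,Y_{n-1}\in N,Y_n\in F\}$ involves $n-1$ within-$N$ transitions and hence the exponent $n-1$. It is worth being aware that the paper is not fully uniform on this: the appendix proof of Theorem \ref{theorem:finals} (the $n=2$ case, $P(Y_2\in F,Y_1=i\in F\mid\boldsymbol{\xi_{init}})=\boldsymbol{\xi}(i)\sum_{j\in F}\pi_{ij}$) implicitly treats $Y_1$ as the initial state, which is a different convention and would shift the exponent by one in \eqref{eq:wtd:non-finals}; your explicit choice sidesteps this ambiguity, so no gap remains in your argument.
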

In our case,
the sets $N$ and $F$ have the meaning of being the non-final and final states of $M$.
The above theorem then gives us the desired probability of reaching a final state.

However, notice that this theorem assumes that we start in a non-final state ($Y_{1} \notin F$).
A similar result can be given if we assume that we start in a final state.
\begin{theorem}
\label{theorem:finals}
Let $\boldsymbol{\Pi}$ be the transition probability matrix of a homogeneous Markov chain $Y_{t}$ in the form of Equation \eqref{eq:matrix}
and $\boldsymbol{\xi}_{init}$ its initial state distribution.
The probability for the time index $n$ when the system first enters the set of states $F$,
starting from a state in $F$, 
can be obtained from
\begin{equation}
\label{eq:wtd:finals}
P(Y_{n} \in F, Y_{n-1} \in N, \cdots, Y_{2} \in N, Y_{1} \in F \mid \boldsymbol{\xi_{init}}) =
  \begin{cases}
    \boldsymbol{\xi_{F}}^{T} \boldsymbol{F}  \boldsymbol{1} & \quad \text{if } n=2   \\
    \boldsymbol{\xi_{F}}^{T}  \boldsymbol{F_{N}} \boldsymbol{N}^{n-2}(\boldsymbol{I}-\boldsymbol{N})\boldsymbol{1} & \quad \text{otherwise} \\
  \end{cases}
\end{equation}
where $\xi_{F}$ is the vector consisting of the elements of $\xi_{init}$ corresponding to the states of $F$.
\end{theorem}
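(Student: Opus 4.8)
The plan is to imitate, almost step for step, the proof of Theorem~\ref{theorem:non-finals} --- the first-passage identity for a chain that starts in $N$ --- and to change only the role played by the very first transition. Keeping the block decomposition of $\boldsymbol{\Pi}$ from Equation~\eqref{eq:matrix}, the quantity we want is the probability of a first-passage-type event: the chain starts in a final state (so its initial mass on the relevant states is the sub-vector $\boldsymbol{\xi}_{F}$), and thereafter it must avoid $F$ until, at time index $n$, it lands back in $F$. The only structural novelty with respect to Theorem~\ref{theorem:non-finals} is that the step by which the chain leaves its starting region is a step \emph{out of} $F$, governed by the block $\boldsymbol{F_N}$, rather than a step that stays inside $N$.

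I would organise the argument by conditioning on the state occupied one step after the start. If that state already lies in $F$, the chain has returned to $F$ immediately; this is exactly the boundary case $n=2$, whose probability is obtained by restricting the initial distribution to $F$, applying the one-step final-to-final block $\boldsymbol{F}$, and summing over all final target states, that is, $\boldsymbol{\xi}_{F}^{T}\boldsymbol{F}\boldsymbol{1}$. If instead that state lies in $N$, the remainder of the trajectory is precisely an $N$-to-$N$ sojourn terminated by a single step back into $F$, which is exactly the object handled by Theorem~\ref{theorem:non-finals}. I would therefore invoke that theorem with the (unnormalised) initial vector on $N$ taken to be $\boldsymbol{\xi}_{F}^{T}\boldsymbol{F_N}$ and with the time index decremented by the one step already spent leaving $F$; because the right-hand side of Equation~\eqref{eq:wtd:non-finals} is linear in that initial vector, the substitution is legitimate, and realigning the index by the wasted step turns $\boldsymbol{N}^{n-1}$ into the $\boldsymbol{N}^{n-2}$ that appears in the ``otherwise'' branch. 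The single algebraic fact that lets the exit step be written as $(\boldsymbol{I}-\boldsymbol{N})\boldsymbol{1}$ instead of $\boldsymbol{N_F}\boldsymbol{1}$ is that the rows of $\boldsymbol{\Pi}$ indexed by $N$ sum to one, so $\boldsymbol{N}\boldsymbol{1}+\boldsymbol{N_F}\boldsymbol{1}=\boldsymbol{1}$.

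If a self-contained derivation is preferred to this reduction, I would instead expand $P(Y_n\in F, Y_{n-1}\in N,\dots,Y_1\in F\mid\boldsymbol{\xi}_{init})$ directly as an iterated sum over the intermediate states, group the summations according to which block of $\boldsymbol{\Pi}$ each transition belongs to, and read the resulting matrix product off the grouping. This is the same bookkeeping as in the proof of Theorem~\ref{theorem:non-finals}, with a single leading factor $\boldsymbol{F_N}$ (respectively $\boldsymbol{F}$, when $n=2$) prepended to the product.

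The part that needs the most care --- and the only place where an off-by-one is likely to sneak in --- is the index arithmetic: pinning down the exponent of $\boldsymbol{N}$ once the clock has been shifted by the initial step out of $F$, and making sure the degenerate case $n=2$, in which there is no $N$-sojourn at all, is split off and treated separately rather than being (incorrectly) folded into the general formula. Beyond that, the proof is a routine re-run of the reasoning behind Theorem~\ref{theorem:non-finals}.
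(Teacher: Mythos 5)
Your proposal is correct and follows essentially the same route as the paper's own proof: split off the $n=2$ case, and for $n>2$ condition on the first step out of $F$, take $\boldsymbol{\xi}_{F}^{T}\boldsymbol{F_{N}}$ as the (sub-probability) initial vector on $N$, and invoke Theorem \ref{theorem:non-finals} with the time index shifted by one, which yields the exponent $n-2$. Your explicit remarks on linearity in the initial vector and on $(\boldsymbol{I}-\boldsymbol{N})\boldsymbol{1}=\boldsymbol{N_{F}}\boldsymbol{1}$ are just slightly more careful statements of steps the paper leaves implicit.
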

\begin{proof}
The proof may be found in the Appendix, Section \ref{sec:proof:finals}.
\end{proof}

Note that the above formulas do not use $\boldsymbol{N_{F}}$,
as it is not needed when dealing with probability distributions.
As the sum of the probabilities is equal to $1$, 
we can derive $\boldsymbol{N_{F}}$ from $\boldsymbol{N}$.
This is the role of the term $(\boldsymbol{I}-\boldsymbol{N})\boldsymbol{1}$ in the formulas,
which is equal to $\boldsymbol{N_{F}}$ when there is only a single final state and equal to the sum of the columns of $\boldsymbol{N_{F}}$ when there are multiple final states,
i.e., each element of the matrix corresponds to the probability of reaching one of the final states from a given non-final state.

Using Theorems \ref{theorem:non-finals} and \ref{theorem:finals},
we can calculate the so-called waiting-time distributions for any state $q$ of the automaton,
i.e.,
the distribution of the index $n$, 
given by the waiting-time variable
$W_{q}=inf\{n: Y_{0},Y_{1},...,Y_{n}, Y_{0}=q, q \in Q \backslash F, Y_{n} \in F\}$.
Theorems \ref{theorem:non-finals} and \ref{theorem:finals} provide a way to calculate the probability of reaching a final state,
given an initial state distribution $\boldsymbol{\xi_{init}}$.
In our case,
as the automaton is moving through its various states,
$\boldsymbol{\xi_{init}}$ takes a special form. 
At any point in time, 
the automaton is (with certainty) in a specific state $q$.
In that state,
$\boldsymbol{\xi_{init}}$ is a vector of $0$,
except for the element corresponding to the current state of the automaton,
which is equal to $1$.

Figure \ref{fig:wtdfas} shows an example of an automaton 
(its exact nature is not important,
as long as it can also be described as a Markov chain),
along with the waiting-time distributions for its non-final states. 
For this example,
if the automaton is in state 2,
then the probability of reaching the final state 4 for the first time in 2 transitions is $\approx 50\%$.
However, it is $0\%$ for 3 transitions,
as the automaton has no path of length 3 from state 2 to state 4.

\begin{figure}[t]
    \centering
    \begin{subfigure}[b]{0.55\textwidth}
        \includegraphics[width=\textwidth]{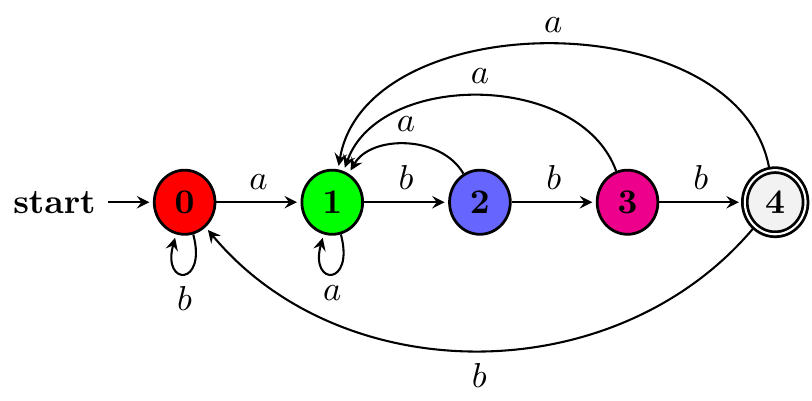}
        \caption{DFA.}\label{fig:dfaabbb}
    \end{subfigure}\\
    \begin{subfigure}[b]{0.63\textwidth}
        \includegraphics[width=\textwidth]{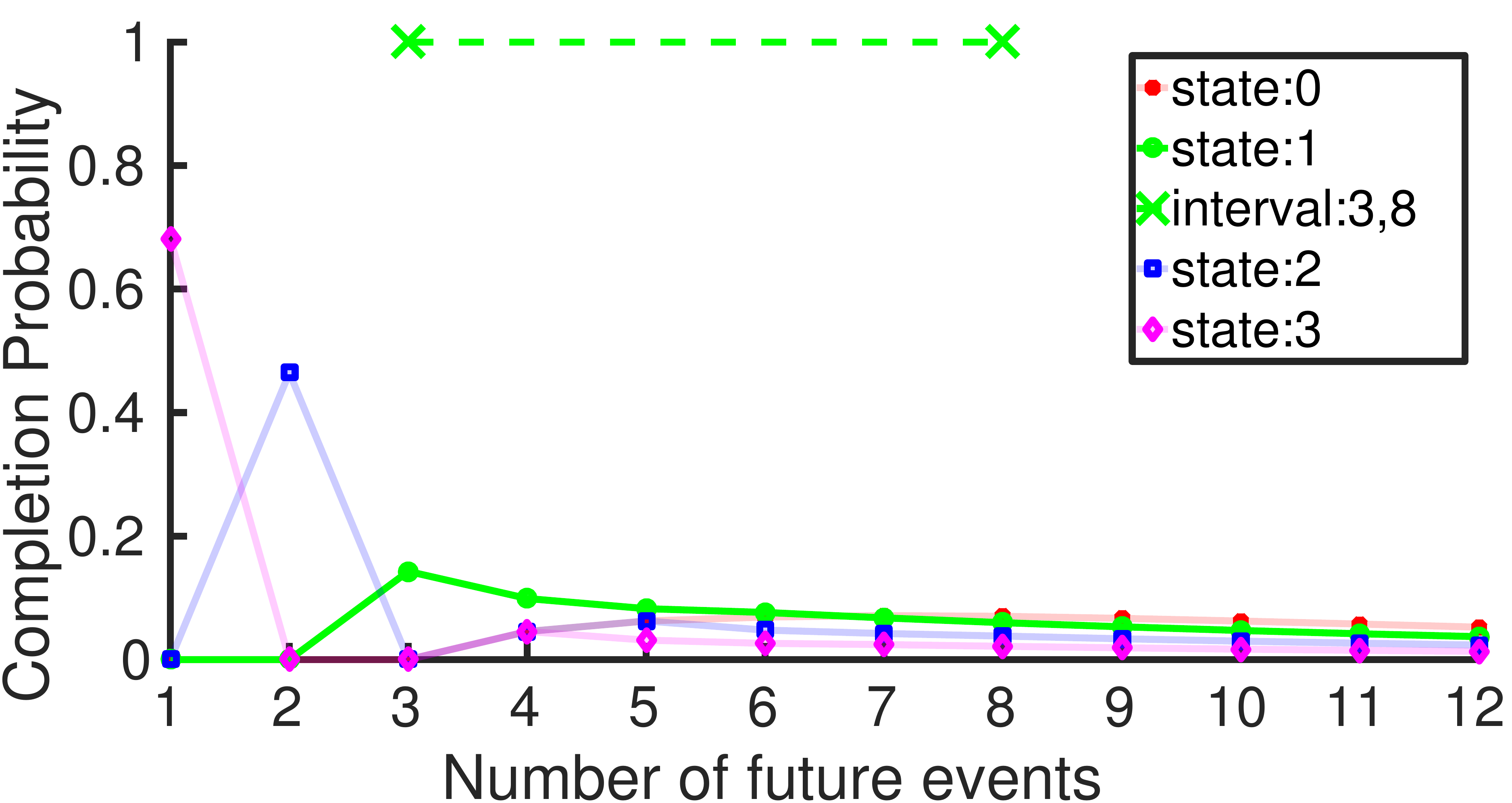}
        \caption{Waiting-time distributions and shortest interval, i.e. $[3,8]$, exceeding a confidence threshold $\theta_{fc} = 50\%$ for state 1.}\label{fig:wt1}
    \end{subfigure}
    \caption{Automaton and waiting-time distributions for $R=a\cdot b\cdot b\cdot b$, $\Sigma=\{a,b\}$.}\label{fig:wtdfas}
\end{figure}

We can use the waiting-time distributions to produce various kinds of forecasts.
In the simplest case,
we can select the future point with the highest probability and return this point as a forecast.
We call this type of forecasting \emph{REGRESSION-ARGMAX}.
Alternatively, 
we may want to know how likely it is that a CE will occur within the next $w$ input events.
For this,
we can sum the probabilities of the first $w$ points of a distribution
and if this sum exceeds a given threshold
we emit a ``positive'' forecast (meaning that a CE is indeed expected to occur);
otherwise a ``negative'' (no CE is expected) forecast is emitted.
We call this type of forecasting \emph{CLASSIFICATION-NEXTW}.
These kinds of forecasts are easy to compute.

There is another kind of useful forecasts,
which are however more computationally demanding.
Given that we are in a state $q$, 
we may want to forecast whether the automaton,
with confidence at least $\theta_{fc}$, 
will have reached its final state(s) in $n$ transitions, 
where $n$ belongs to a future interval $I=[\mathit{start},\mathit{end}]$.
The confidence threshold $\theta_{fc}$ is a parameter set by the user.
The forecasting objective is to select the shortest possible interval $I$ that satisfies $\theta_{fc}$.
Figure \ref{fig:wt1} shows the forecast interval produced for state 1 of the automaton of Figure \ref{fig:dfaabbb}, 
with $\theta_{fc} = 50\%$.
We call this third type of forecasting \emph{REGRESSION-INTERVAL}.
We have implemented all of the above types of forecasting.

\begin{algorithm}[t]
%\SetAlgoNoLine
\KwIn{A waiting-time distribution $P$ with horizon $h$ and a threshold $\theta_{fc} < 1.0$}
\KwOut{The smallest interval $I=(s,e)$ such that $1 \leq s,e \leq h$, $s \leq e$ and $P(I) \geq \theta_{fc}$}
$s \leftarrow -1$;
$e \leftarrow -1$;
$i \leftarrow 1$;
$j \leftarrow 1$;
$p \leftarrow P(1)$\;
\While{$j \neq h$}{
\tcc{{\footnotesize Loop invariant: $(s,e)$ is the smallest interval with $P((s,e))>\theta_{fc}$ among all intervals with $e \leq j$ (or $s=e=-1$ in the first iteration).}}\
\tcc{{\footnotesize Expansion phase.}}\
	\While{$(p < \theta_{fc}) \wedge (j < h)$}{
		$j \leftarrow j + 1$\;
		$p \leftarrow p + P(j)$\;
	}
\tcc{{\footnotesize Shrinking phase.}}\
	\While{$p \geq \theta_{fc}$}{
		$i \leftarrow i + 1$\;
		$p \leftarrow p - P(i)$\;
	}
	$i \leftarrow i - 1$\;
	\tcc{{\footnotesize $s=-1$ indicates that no interval has been found yet, i.e., that this is the first iteration.}}\
	\If{$(\mathit{spread}((i,j)) < \mathit{spread}((s,e))) \vee (s = -1)$}{
		$s \leftarrow i$\; 
		$e \leftarrow j$\;
	}
}
return $(s,e)$\;

\caption{Estimating a forecast interval from a waiting-time distribution.}
\label{algorithm:interval}
\end{algorithm}

A naive way to estimate the forecast interval from a waiting-time distribution whose domain is $[1,h]$
(we call $h$, the maximum index of the distribution, its \emph{horizon})
is to first enumerate all possible intervals $(\mathit{start},\mathit{end})$,
such that $1 \leq \mathit{start},\mathit{end} \leq h$ and $\mathit{start} \leq \mathit{end}$, 
and then calculate each interval's probability by summing the probabilities of all of its points.
The complexity of such an exhaustive algorithm is $O(h^{3})$.
To prove this,
first note that the algorithm would have to check 1 interval of length $h$, 2 intervals of length $h-1$, etc., and $h$ intervals of length 1. 
Assuming that the cost of estimating the probability of an interval is proportional to its length $l$
($l$ points need to be retrieved and $l-1$ additions be performed),
the total cost would thus be:
\begin{equation*}
\begin{aligned}
1h + 2(h-1) + 3(h-2) + \cdots + h1 =  & \sum_{i=1}^{h}i(h - (i-1))  \\
= &  \sum_{i=1}^{h}(ih - i^{2} + i) \\
= &  h\sum_{i=1}^{h}i - \sum_{i=1}^{h}i^{2} + \sum_{i=1}^{h}i \\
= & h \frac{h(h+1)}{2} - \frac{h(h+1)(2h+1)}{6} + \frac{h(h+1)}{2} \\
= & \cdots \\
= & \frac{1}{6}h(h+1)(h+2) \\
= & O(h^{3})
\end{aligned}
\end{equation*}
Note that this is just the cost of estimating the probabilities of the intervals,
ignoring the costs of actually creating them first and then searching for the best one,
after the step of probability estimation.

We can find the best forecast interval with a more efficient algorithm that has a complexity linear in $h$
(see Algorithm \ref{algorithm:interval}).
We keep two pointers $i,j$ that we initially set them equal to the first index of the distribution.
We then repeatedly move $i,j$ in the following manner:
We first move $j$ to the right by incrementing it by 1 until $P(i,j)$ exceeds $\theta_{fc}$,
where each $P(i,j)$ is estimated incrementally by repeatedly adding $P(j)$ to an accumulator.
We then move $i$ to the right by $1$ until $P(i,j)$ drops below $\theta_{fc}$,
where $P(i,j)$ is estimated by incremental subtractions.
If the new interval $(i,j)$ is smaller than the smallest interval exceeding $\theta_{fc}$ thus far,
we discard the old smallest interval and keep this new one.
This wave-like movement of $i,j$ stops when $j=h$.
This algorithm is more efficient (linear in the $h$, see Proposition \ref{proposition:complexity6} in Section \ref{sec:complexity}) by avoiding intervals that cannot possibly exceed $\theta_{fc}$.
The proof for the algorithm's correctness is presented in the Appendix, Section \ref{sec:proof:interval}.

Note that the domain of a waiting-time distribution is not composed of timepoints and thus a forecast does not explicitly refer to time.
Each value of the index $n$ on the $x$ axis essentially refers to the number of transitions that the automaton needs to take before reaching a final state, or, equivalently, to the number of future input events to be consumed. 
If we were required to output forecasts referring to time,
we would need to convert these basic event-related forecasts to time-related ones.
If input events arrive at regular time intervals,
then this conversion is a straightforward multiplication of the forecast by the time interval.
However, in the general case where the intervals between input events are not regular and fixed,
we would need to build another probabilistic model describing the time that elapses between events and use this model to convert event-related to time-related forecasts.
Building such a time model might not always be possible or might be prohibitively expensive.
In this paper we decided to focus on the number of steps for two reasons:
a) Sometimes it might not be desirable to give time-related forecasts. 
Event-related forecasts might be more suitable, 
as is the case, for example, in the domain of credit card fraud management, 
where we need to know whether or not the next transaction(s) will be fraudulent.
We examine this use case in Section \ref{sec:cards}.
b) Time-related forecasts might be very difficult (or almost impossible) to produce if the underlying process exhibits a high degree of randomness.
For example,
this is the case in the maritime domain,
where the intervals between vessel position (AIS) messages are wildly random and depend on many (even human-related) factors,
e.g., the crew of a vessel simply forgetting to switch on the AIS equipment.
In such cases,
it might be preferable to perform some form of sampling or interpolation on the original stream of input events in order to derive another stream similar to the original one but with regular intervals.
This is the approach we follow in our experiments in the maritime domain (Section \ref{sec:maritime}).  
For these reasons, we initially focused on event-related forecasts.
This, however, does not exclude the option of using event-related forecasts as a first step in order to subsequently produce time-related ones,
whenever this is possible. 
For example, 
a simple solution would be to try to model the time elapsed between events via a Poisson process.
We intend to pursue this line of work in the future.

\subsubsection{Using a Prediction Suffix Tree (\pst)}
\label{sec:no-mc}

The reason for constructing an embedding of the \psa\ $M_{S}$ learned from the data into the automaton $M_{R}$ used for recognition, 
as described in the previous section,
is that the embedding is based on a variable-order model that will consist on average of much fewer states than a full-order model.
There is, however, one specific step in the process of creating an embedding that may act as a bottleneck and prevent us from increasing the order to desired values:
the step of converting a \pst\ to a \psa.
The number of nodes of a \pst\ is often order of magnitudes smaller than the number of states of the \psa\ constructed from that \pst.
Motivated by this observation,
we devised a way to estimate the required waiting-time distributions without actually constructing the embedding.
Instead, we make direct use of the \pst,
which is more memory efficient. 
Thus, given a \dsfa\ $M_{R}$ and its \pst\ $T$,
we can estimate the probability for $M_{R}$ to reach for the first time one of its final states
in the following manner.

As the system processes events from the input stream,
besides feeding them to $M_{R}$,
it also stores them in a buffer that holds the $m$ most recent events,
where $m$ is equal to the maximum order of the \pst\ $T$.
After updating the buffer with a new event,
the system traverses $T$ according to the contents of the buffer and arrives at a leaf $l$ of $T$.
The probability of any future sequence of events can be estimated with the use of the probability distribution at $l$.
In other words,
if $S_{1..k}=\cdots,t_{k-1},t_{k}$ is the stream seen thus far,
then the next symbol probability for $t_{k+1}$, 
i.e., $P(t_{k+1} \mid t_{k-m+1},\cdots,t_{k})$, 
can be directly retrieved from the distribution of the leaf $l$.
If we want to look further into the future,
e.g., into $t_{k+2}$,
we can repeat the same process as necessary.
Namely, if we fix $t_{k+1}$, 
then the probability for $t_{k+2}$,
$P(t_{k+2} \mid t_{k-m+2},\cdots,t_{k+1})$,
can be retrieved from $T$,
by retrieving the leaf $l^{'}$ reached with $t_{k+1},\cdots,t_{k-m+2}$.
%($l^{'}$ is not necessarily a child of $l$).
In this manner, we can estimate the probability of any future sequence of events.
Consequently,
we can also estimate the probability of any future sequence of states of the \dsfa\ $M_{R}$,
since we can simply feed these future event sequences to $M_{R}$ and let it perform ``forward'' recognition with these projected events.
In other words,
we can let $M_{R}$ ``generate'' a sequence of future states,
based on the sequence of projected events, 
in order to determine when $M_{R}$ will reach a final state.
Finally, since we can estimate the probability for any future sequence of states of $M_{R}$,
we can use the definition of the waiting-time variable 
(${W_{q}=inf\{n: Y_{0},Y_{1},...,Y_{n}, Y_{0}=q, q \in Q \backslash F, Y_{n} \in F\}}$)
to calculate the waiting-time distributions.
Figure \ref{fig:nomc} shows an example of this process for the automaton $M_{R}$ of Figure \ref{fig:dsfaab}.
Figure \ref{fig:pstab} displays an example \pst\ $T$ learned with the minterms/symbols of $M_{R}$.
\begin{figure}
\centering
\begin{subfigure}[t]{0.75\textwidth}
	\includegraphics[width=0.99\textwidth]{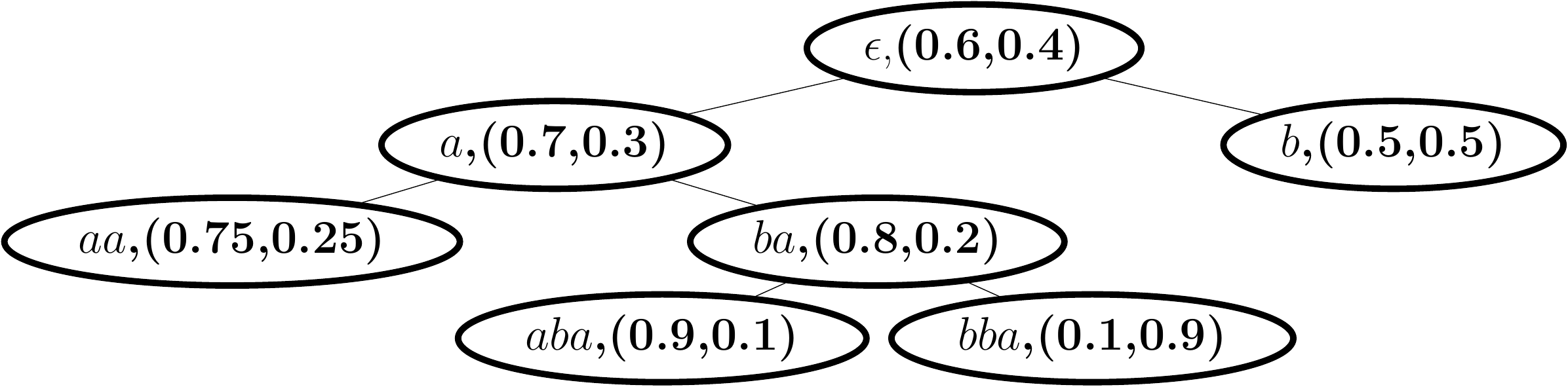}
	\caption{The \pst\ $T$ for the automaton $M_{R}$ of Figure \ref{fig:dsfaab}.}
	\label{fig:pstab}
\end{subfigure}\\
\begin{subfigure}[t]{0.7\textwidth}
	\includegraphics[width=0.99\textwidth]{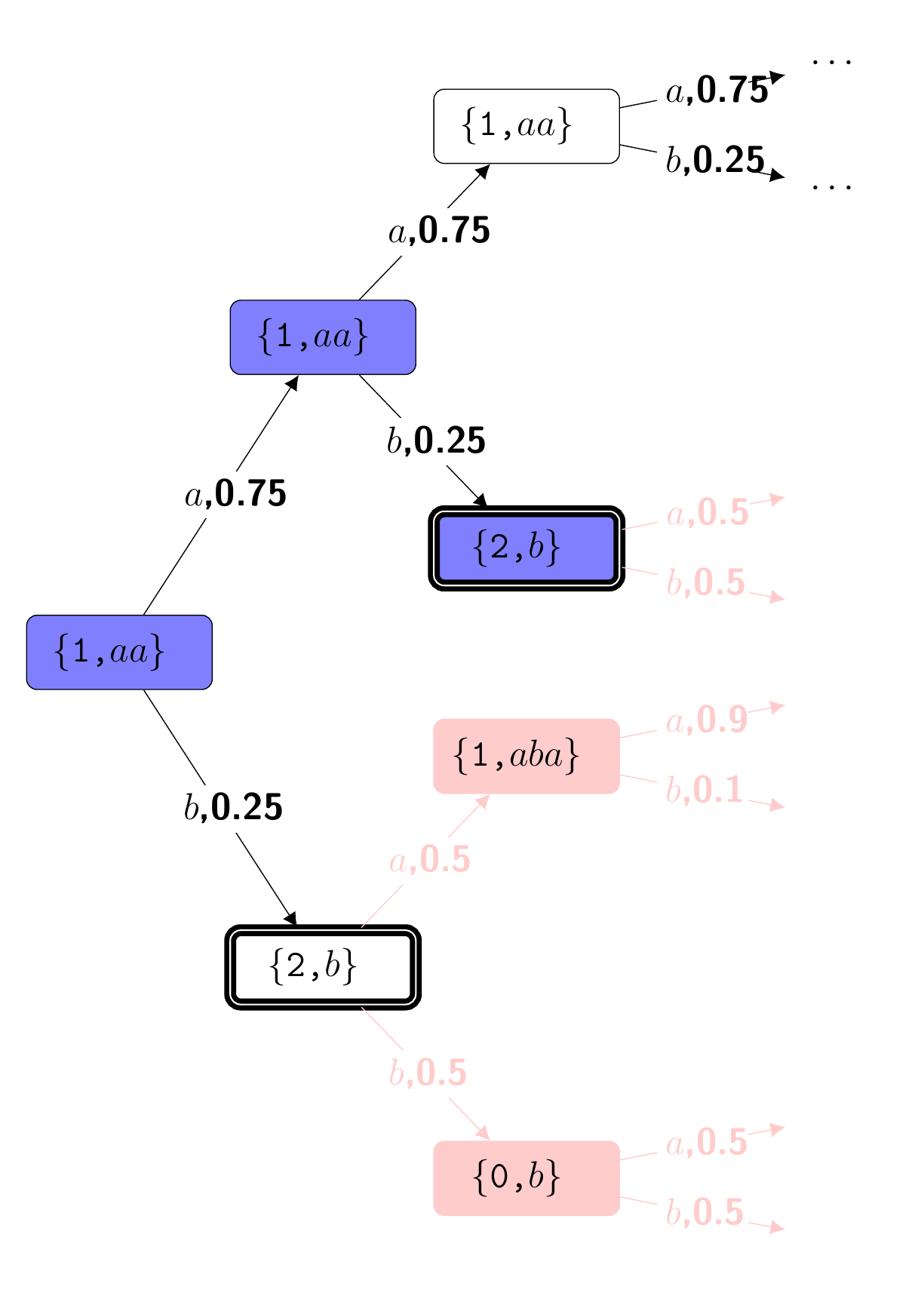}
	\caption{Future paths followed by $M_{R}$ and $T$ starting from state $1$ of $M_{R}$ and node $aa$ of $T$. Purple nodes correspond to the only path of length $k=2$ that leads to a final state. Pink nodes are pruned. Nodes with double borders correspond to final states of $M_{R}$.}
	\label{fig:future}
\end{subfigure}
\caption{Example of estimating a waiting-time distribution without a Markov chain.}
\label{fig:nomc}
\end{figure}

One remark should be made at this point in order to showcase how an attempt to convert $T$ to a \psa\ could lead to a blow-up in the number of states.
The basic step in such a conversion is to take the leaves of $T$ and use them as states for the \psa.
If this was sufficient,
the resulting \psa\ would always have fewer states than the \pst. 
As this example shows,
this is not the case.
Imagine that the states of the \psa\ are just the leaves of $T$ and that we are in the right-most state/node,
$b,(0.5,0.5)$.
What will happen if an $a$ event arrives?
We would be unable to find a proper next state.
The state $aa,(0.75,0.25)$ is obviously not the correct one,
whereas states $aba,(0.9,0.1)$ and $bba,(0.1,0.9)$ are both ``correct'',
in the sense that $ba$ is a suffix of both $aba$ and $bba$.
In order to overcome this ambiguity regarding the correct next state,
we would have to first expand node $b,(0.5,0.5)$ of $T$ and then use the children of this node as states of the \psa.
In this simple example,
this expansion of a single problematic node would not have serious consequences.
But for deep trees and large alphabets,
the number of states generated by such expansions are far more than the number of the original leaves.
For this reason,
the size of the \psa\ is far greater than that of the original, unexpanded \pst. 

Figure \ref{fig:future} illustrates how we can estimate the probability for any future sequence of states of the \dsfa\ $M_{R}$, 
using the distributions of the \pst\ $T$.
Let us assume that,
after consuming the last event, 
$M_{R}$ is in state $1$ and $T$ has reached its left-most node, $aa,(0.75,0.25)$.
This is shown as the left-most node also in Figure \ref{fig:future}.
Each node in this figure has two elements:
the first one is the state of $M_{R}$ and the second the node of $T$,
starting with $\{1,aa\}$ as our current ``configuration''.
Each node has two outgoing edges, one for $a$ and one for $b$,
indicating what might happen next and with what probability.
For example,
from the left-most node of Figure \ref{fig:future},
we know that, 
according to $T$, 
we might see $a$ with probability $0.75$ and $b$ with probability $0.25$.
If we do encounter $b$, 
then $M_{R}$ will move to state 2 and $T$ will reach leaf $b,(0.5,0.5)$.
This is shown in Figure \ref{fig:future} as the white node $\{2,b\}$.
This node has a double border to indicate that $M_{R}$ has reached a final state.

In a similar manner,
we can keep expanding this tree into the future
and use it to estimate the waiting-time distribution for its node $\{1,aa\}$.
In order to estimate the probability of reaching a final state for the first time in $k$ transitions,
we first find all the paths of length $k$ which start from the original node 
and end in a final state without including another final state.
In our example of Figure \ref{fig:future},
if $k=1$,
then the path from $\{1,aa\}$ to $\{2,b\}$ is such a path and its probability is $0.25$.
Thus, $P(W_{\{1,aa\}}=1)=0.25$.
For $k=2$,
the path with the purple nodes leads to a final state after 2 transitions.
Its probability is $0.75*0.25=0.1875$,
i.e., the product of the probabilities on the path edges.
Thus, $P(W_{\{1,aa\}}=2)=0.1875$.
If there were more such alternative paths,
we would have to add their probabilities.

Note that the tree-like structure of Figure \ref{fig:future} is not an actual data structure that we need to construct and maintain. 
It is only a graphical illustration of the required computation steps.
The actual computation is performed recursively on demand.
At each recursive call,
a new frontier of virtual future nodes at level $k$ is generated.
We thus do not maintain all the nodes of this tree in memory, 
but only access the \pst\ $T$,
which is typically much more compact than a \psa.
Despite this fact though, 
the size of the frontier after each recursive call grows exponentially as we try to look deeper into the future.
This cost can be significantly reduced by employing the following optimizations.
First, note in Figure \ref{fig:future}, 
that the paths starting from the two $\{2,b\}$ nodes are pink.
This indicates that these paths do not actually need to be generated, 
as they start from a final state. 
We are only interested in the first time $M_{R}$ reaches a final state and not in the second, third, etc.
As a result,
paths with more than one final states are not useful.
With this optimization,
we can still do an exact estimation of the waiting-time distribution.
Another useful optimization is to prune paths that we know will have a very low probability,
even if they are necessary for an exact estimation of the distributions. 
The intuition is that such paths will not contribute significantly to the probabilities of our waiting-time distribution, 
even if we do expand them. 
We can prune such paths,
accepting the risk that we will have an approximate estimation of the waiting-time distribution.
This pruning can be done without generating the paths in their totality.
As soon as a partial path has a low probability,
we can stop its expansion,
since any deeper paths will have even lower probabilities.
We have found this optimization to be very efficient while having negligible impact on the distribution for a wide range of cut-off thresholds.

\subsection{Estimation of Empirical Probabilities}
\label{sec:prob_empirical}

We have thus far described how an embedding of a \psa\ $M_{S}$ in a \dsfa\ $M_{R}$ can be constructed
and how we can estimate the forecasts for this embedding.
We have also presented how this can be done directly via a \pst, 
without going through a \psa.
However,
before learning the \pst,
as described in Section \ref{sec:pst},
we first need to estimate the empirical probabilities for the various symbols.
We describe here this extra initial step.
In \cite{DBLP:journals/ml/RonST96},
it is assumed that,
before learning a \pst,
the empirical probabilities of symbols given various contexts are available.
The suggestion in \cite{DBLP:journals/ml/RonST96} is that these empirical probabilities can be calculated either by repeatedly scanning the training stream or by using a more time-efficient algorithm that keeps pointers to all occurrences of a given context in the stream.
We opt for a variant of the latter choice.

First, note that the empirical probabilities of the strings ($s$) and the expected next symbols ($\sigma$) observed in a stream are given by the following formulas \cite{DBLP:journals/ml/RonST96}:
\begin{equation}
\label{eq:emprob1}
\hat{P}(s) = \frac{1}{k-m}\sum_{j=m}^{k-1}\chi_{j}(s)
\end{equation} 
\begin{equation}
\label{eq:emprob2}
\hat{P}(\sigma \mid s) = \frac{\sum_{j=m}^{k-1}\chi_{j+1}(s \cdot \sigma)}{\sum_{j=m}^{k-1}\chi_{j}(s)}
\end{equation} 
where $k$ is the length of the training stream $S_{1..k}$,
$m$ is the maximum length of the strings ($s$) that will be considered %(maximum order of the \psa\ to be constructed)
and
\begin{equation}
\label{eq:counters}
\chi_{j}(s) =
  \begin{cases}
    1 & \quad \text{if } S_{(j - \lvert s \rvert + 1) \cdots j} = s   \\
    0 & \quad \text{otherwise} \\
  \end{cases}
\end{equation}
In other words,
we need to count the number of occurrences of the various candidate strings $s$ in $S_{1..k}$.
The numerators and denominators in Eq. \eqref{eq:emprob1} and \eqref{eq:emprob2} are essentially counters for the various strings.

In order to keep track of these counters,
we can use a tree data structure which allows to scan the training stream only once.
We call this structure a \emph{Counter Suffix Tree} (\cst).
Each node in a \cst\ is a tuple $(\sigma,c)$ where $\sigma$ is a symbol from the alphabet (or $\epsilon$ only for the root node) and $c$ a counter.
For each level $k$ of the tree, 
it always holds that 
$\mathit{SumOfCountersAtK} \leq \mathit{ParentCounter}$
and  
$\mathit{SumOfCountersAtK} \geq \mathit{ParentCounter} - (k-1)$.
By following a path from the root to a node,
we get a string $s=\sigma_{0} \cdot \sigma_{1} \cdots \sigma_{n}$,
where $\sigma_{0} = \epsilon$ corresponds to the root node.
The property maintained as  a \cst\ is built from a stream $S_{1..k}$ is that the counter of the node $\sigma_{n}$ that is reached with $s$ 
gives us the number of occurrences of the string $\sigma_{n} \cdot \sigma_{n-1} \cdots \sigma_{1}$ (the reversed version of $s$) in $S_{1..k}$.
As an example,
see Figure \ref{fig:cst},
which depicts the \cst\ of maximum depth 2 for the stream $S=aaabaabaaa$.
If we want to retrieve the number of occurrences of the string $b \cdot a$ in $S$,
we follow the left child $(a,7)$ of the root and then the right child of this.
We thus reach $(b,2)$ and indeed $b \cdot a$ occurs twice in $S$.
\begin{figure}[t]
	\centering
	\includegraphics[width=0.5\textwidth]{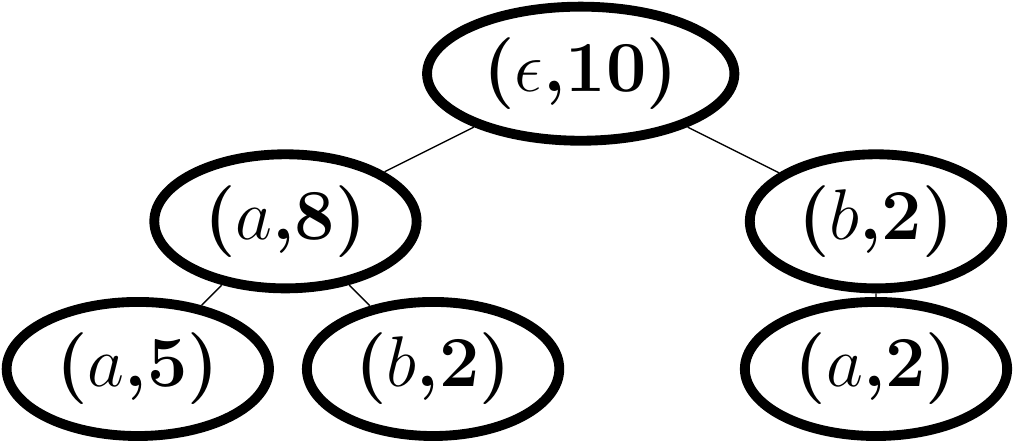}
	\caption{Example of a Counter Suffix Tree with $m=2$ and $S=aaabaabaaa$.}
	\label{fig:cst}
\end{figure}

A \cst\ can be incrementally constructed by maintaining a buffer of size $m$ that always holds the last $m$ symbols of $S$.
The contents of the buffer are fed into the \cst\ after the arrival of a new symbol.
The update algorithm follows a path through the \cst\ according to the whole string provided by the buffer.
For every node that already exists,
its counter is incremented by 1. 
If a node does not exist,
it is created and its counter is set to 1.
At any point, 
having been updated with the training stream, 
the \cst\ can be used to retrieve the necessary counters
and estimate the empirical probabilities of Equations \eqref{eq:emprob1} and \eqref{eq:emprob2} that are subsequently used in the \pst\ construction.

\section{Complexity Analysis}
\label{sec:complexity}

Figure \ref{fig:vmmflow} depicts the steps required for estimating forecasts,
along with the input required for each of them. 
The first step (box $1$) takes as input the minterms of a \dsfa, the maximum order $m$ of dependencies to be captured and a training stream.
Its output is a \cst\ of maximum depth $m$ (Section \ref{sec:prob_empirical}).
In the next step (box $2$), 
the \cst\ is converted to a \pst,
using an approximation parameter $\alpha$ and a parameter $n$ for the maximum number of states for the \psa\ to be constructed subsequently (Section \ref{sec:pst}).
For the third step,
we have two options:
we can either use the \pst\ to directly estimate the waiting-time distributions (box $3b$, Section \ref{sec:no-mc})
or we can convert the \pst\ to a \psa,
by using the leaves of the \pst\ as states of the \psa\ (box $3a$, Section \ref{sec:pst}).
If we follow the first path,
we can then move on directly to the last step of estimating the actual forecasts,
using the confidence threshold $\theta_{fc}$ provided by the user (box $6$).
If we follow the alternative path,
the \psa\ is merged with the initial \dsfa\ to create the embedding of the \psa\ in the \dsfa\ (box $4$, Section \ref{sec:embed}).
From the embedding we can calculate the waiting-time distributions (box $5$),
which can be used to derive the forecasts (box $6$).

The learning algorithm of step $2$,
as presented in \cite{DBLP:journals/ml/RonST96},
is polynomial in $m$, $n$, $\frac{1}{\alpha}$ and the size of the alphabet (number of minterms in our case).
Below, 
we give complexity results for the remaining steps.

\begin{figure}[t]
	\centering
	\includegraphics[width=0.99\textwidth]{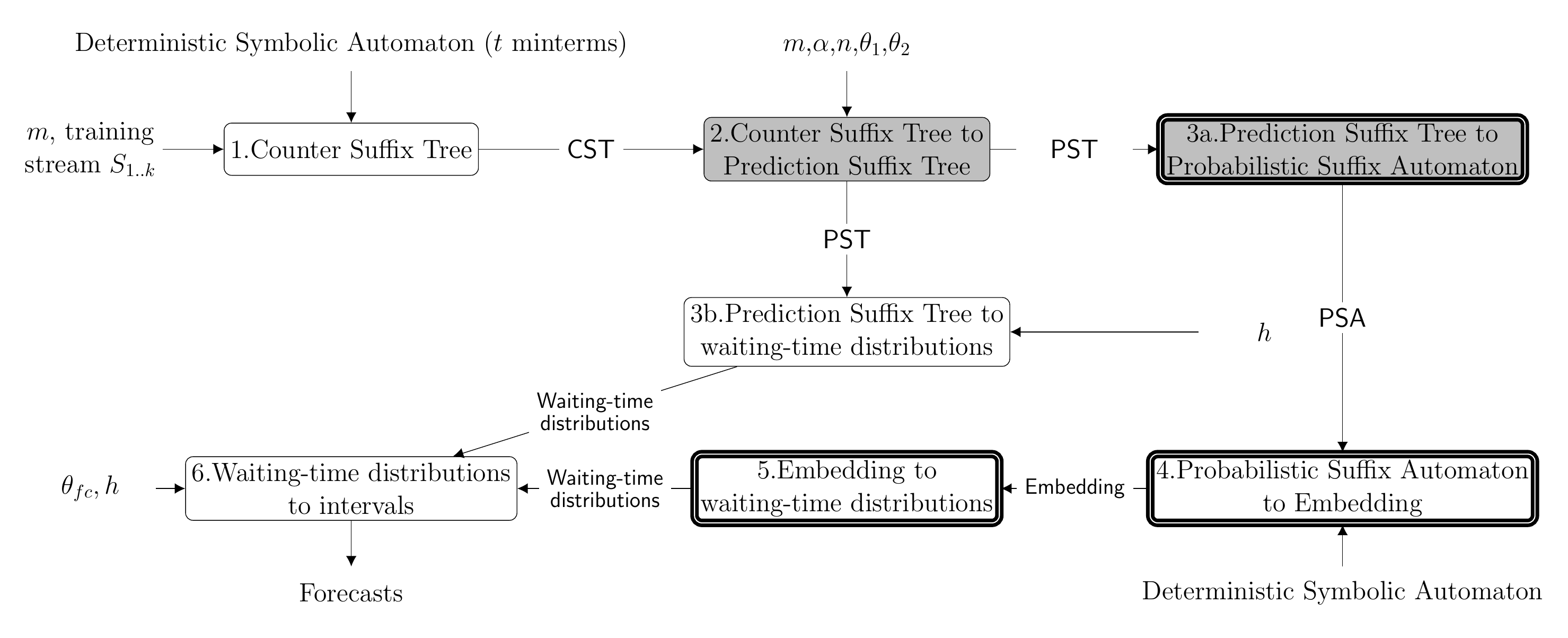}
	\caption{Steps for calculating forecasts. $S_{1..k}$: training stream, $m$: maximum assumed order, $\alpha$: approximation parameter, $n$: maximum number of states for the \psa, $\theta_{fc}$: confidence threshold, $\theta_{1},\theta_{2}$: thresholds for learning a \pst, $k$: size of training stream, $t$: number of minterms, $w$: forecasting window. Gray blocks indicate steps described in \cite{DBLP:journals/ml/RonST96}. White blocks indicate steps introduced in this paper. Blocks with double borders represent the steps required exclusively when going through a PSA embedding, while single-border blocks represent steps bypassing PSA construction.}
	\label{fig:vmmflow}
\end{figure}

\begin{proposition}[Step 1 in Figure \ref{fig:vmmflow}]
\label{proposition:complexity1}
Let $S_{1..k}$ be a stream and $m$ the maximum depth of the Counter Suffix Tree $T$ to be constructed from $S_{1..k}$.
The complexity of constructing $T$ is $O(m(k-m))$.
\end{proposition}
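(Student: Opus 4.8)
The plan is to analyse directly the incremental construction sketched at the end of Section~\ref{sec:prob_empirical}: I would bound the work performed per processed stream position and multiply by the number of positions. The two ingredients of the algorithm are (i) a buffer of size $m$ holding the $m$ most recent symbols of $S$, maintained as the stream is consumed, and (ii) for each arriving symbol, a single descent in the tree $T$ from the root along the path spelled by the (reversed) buffer contents, during which every visited node either has its counter incremented by $1$ or, if it does not yet exist, is created with counter $1$.

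First I would count the number of descents that matter. By Equations~\eqref{eq:emprob1}--\eqref{eq:counters}, the counters that $T$ has to maintain correspond to the windows $S_{(j-|s|+1)\cdots j}$ with $j$ ranging over $\{m, m+1, \dots, k-1\}$, so there are exactly $k-m$ such windows. Filling the buffer with the first $m$ symbols of $S$ costs only $O(m)$ overall and feeds no path longer than it will later, so it is dominated by the rest; hence it suffices to account for $k-m$ tree updates, up to an additive $O(m)$.

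Next I would bound the cost of one update. A descent visits at most $m$ nodes (one per symbol in the buffer, from the root down to depth $\le m$). At each node the algorithm locates the child labelled by the next buffer symbol, creating it if absent, and increments one counter; under the standard assumption that the children of a node are indexed by symbol via a hash map or an array of size $|\Sigma|$, both operations are $O(1)$, so one update costs $O(m)$. Rotating the buffer is $O(1)$ per symbol with a circular buffer and is therefore negligible. Multiplying, the total cost is $O(m) + (k-m)\cdot O(m) = O(m(k-m))$, which is the claimed bound; as a by-product this also shows that $T$ has $O(m(k-m))$ nodes.

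The argument is essentially bookkeeping, so the step I would be most careful to state explicitly is the $O(1)$-per-node assumption: without a suitable symbol-indexed child structure the per-node cost would carry a factor depending on $|\Sigma|$ (the number of minterms), which the paper tracks as a separate parameter elsewhere, so I would make clear that this dependence is deliberately absorbed into the data-structure choice rather than hidden in the bound. A minor secondary point is the warm-up over positions $<m$; the clean way to dispose of it is to observe that counting begins only once the buffer is full, so the warm-up contributes nothing beyond the $O(m)$ already charged for filling the buffer.
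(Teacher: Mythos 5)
Your proposal is correct and follows essentially the same argument as the paper: count the $\Theta(k-m)$ window updates, charge $O(m)$ per update because each touches at most $m$ nodes with constant-cost visit/create/increment operations, and conclude $O(m(k-m))$. Your explicit remark that the $O(1)$-per-node cost presupposes symbol-indexed child access (hash map or array) is a slightly more careful statement of an assumption the paper makes implicitly, but it does not change the route of the proof.
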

\begin{proof}
See Appendix, Section \ref{sec:proof:complexity1}.
\end{proof}

\begin{proposition}[Step 3a in Figure \ref{fig:vmmflow}]
\label{proposition:complexity3}
Let $T$ be a \pst\ of maximum depth $m$, learned with the $t$ minterms of a \dsfa\ $M_{R}$.
The complexity of constructing a \psa\ $M_{S}$ from $T$ is $O(t^{m+1} \cdot m)$.
\end{proposition}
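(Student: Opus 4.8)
The plan is to bound separately (i)~the number of states of the \psa\ $M_{S}$ and (ii)~the work needed per state to install its transitions $\tau$ and its next-symbol probabilities $\gamma$, and then to multiply the two. First I would recall from Section~\ref{sec:pst} that a \psa\ cannot in general be read off directly from the leaves of the learned \pst\ $T$: one must first \emph{expand} $T$ into a suitable \pst\ $\hat{T}'$ whose leaf-labels form a suffix-free set that is appropriately closed under appending a symbol, and only then take the leaves of $\hat{T}'$ as the states of $M_{S}$. The key structural observation is that the expansion step only fills in missing branching up to the order of the model: every node of a \pst\ of order $m$ sits at depth at most $m$ and has out-degree exactly $t$ (one child per minterm), so $\hat{T}'$ is a (rooted) subtree of the complete $t$-ary tree of depth $m$. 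Consequently $\hat{T}'$ has at most $\sum_{i=0}^{m} t^{i} = O(t^{m})$ nodes and at most $t^{m}$ leaves, which gives $\lvert M_{S}.Q \rvert \le t^{m}$. Building $\hat{T}'$ itself — creating the missing nodes and attaching to each new leaf the next-symbol distribution inherited from its nearest ancestor that was a leaf of the original $T$ — costs $O(t^{m}\cdot m)$, which will turn out to be dominated.

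Next I would analyse the construction of $\tau$ and $\gamma$. For each of the at most $t^{m}$ states $q$, labelled by a string $s$ with $\lvert s \rvert \le m$, and for each of the $t$ symbols $\sigma \in \Sigma$, Definition~\ref{def:psa} forces $\tau(q,\sigma)$ to be the unique state whose label is the longest suffix of $s\cdot\sigma$ occurring as a leaf-label of $\hat{T}'$; this state is located by walking down $\hat{T}'$ from the root, reading $s\cdot\sigma$ right-to-left, until a leaf is reached, which takes $O(m)$ steps since the tree has depth at most $m$. The value $\gamma(q,\sigma)$ is simply the value $\gamma_{s}(\sigma)$ stored at the leaf $q$, retrieved in $O(1)$. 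Summing over all state/symbol pairs yields $O(t^{m}\cdot t\cdot m) = O(t^{m+1}\cdot m)$, and adding the $O(t^{m} m)$ expansion cost leaves the bound unchanged. The remaining obligations of Definition~\ref{def:psa} (that $\pi$ be a distribution over the appropriate start states and that the graph be strongly connected, with transient states handled as in the discussion following Definition~\ref{def:psa}) can be discharged by a single traversal over the $O(t^{m})$ states and $O(t^{m+1})$ transitions, again within budget. Hence the total cost is $O(t^{m+1}\cdot m)$.

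The step I expect to be the main obstacle is pinning down the worst-case size of $\hat{T}'$: one has to argue carefully that the expansion procedure of Section~\ref{sec:pst}, however it is realised, never introduces a node deeper than $m$ and therefore cannot leave the complete depth-$m$ $t$-ary tree, which is exactly what yields the $t^{m}$ ceiling on the number of states. Once that is established, the $O(m)$ suffix look-up per transition and the final multiplication are routine, and the $O(t^{m})$ cost of the expansion and the strong-connectivity bookkeeping are absorbed into the $O(t^{m+1}\cdot m)$ term.
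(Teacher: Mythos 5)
Your proof is correct and follows essentially the same route as the paper's: bound the number of \psa\ states by $t^{m}$ (worst case, all length-$m$ suffixes appear as leaves), then charge each of the $t$ outgoing transitions per state an $O(m)$ walk through the tree to locate the target leaf, giving $O(t^{m}\cdot t\cdot m)=O(t^{m+1}\cdot m)$. The only difference is that you make explicit the expansion of $T$ into $\hat{T}'$ and the $\pi$/connectivity bookkeeping, which the paper absorbs into its worst-case assumption; both add-ons stay within the stated bound.
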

\begin{proof}
See Appendix, Section \ref{sec:proof:complexity3}.
\end{proof}

\begin{proposition}[Step 3b in Figure \ref{fig:vmmflow}]
\label{proposition:complexity3prime}
Let $T$ be a \pst\ of maximum depth $m$, learned with the $t$ minterms of a \dsfa\ $M_{R}$.
The complexity of estimating the waiting-time distribution for a state of $M_{R}$ and a horizon of length $h$ directly from $T$ is $O((m+3) \frac{t - t^{h+1}}{1 - t})$.
\end{proposition}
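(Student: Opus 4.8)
The plan is to bound the total work done by the recursive "virtual future tree" computation described in Section \ref{sec:no-mc}, which is the actual algorithm used to estimate the waiting-time distribution directly from $T$. Recall that, starting from a fixed configuration $\{q, l\}$ (a state of $M_R$ paired with a node of $T$), the computation expands a tree of future possibilities: at each node we consult $T$ to obtain the next-symbol probabilities over the $t$ minterms, and for each of the $t$ symbols we produce a child configuration by (i) moving $M_R$ one step via its (deterministic, $O(1)$) transition function, and (ii) descending/re-traversing $T$ to locate the leaf reached by the updated length-$m$ context. Since the horizon is $h$, we only ever expand this tree down to depth $h$ (paths reaching a final state earlier are cut, which can only reduce the work, so for an upper bound we ignore that pruning).

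The key steps are as follows. First I would count the number of virtual nodes generated at each level $j \in \{0,1,\dots,h\}$: level $j$ has at most $t^{j}$ nodes, since each node has fan-out $t$. Summing, the total number of virtual nodes is at most $\sum_{j=0}^{h} t^{j} = \frac{t^{h+1}-1}{t-1} = \frac{t - t^{h+1}}{1-t}$ (writing it in the form the statement uses). Second, I would pin down the per-node cost. At each virtual node the work consists of retrieving the next-symbol distribution at the current \pst\ node (the distribution has $t$ entries) and, for each of the $t$ outgoing symbols, taking one $M_R$ transition ($O(1)$) and performing one traversal of $T$ from the root along a context of length at most $m$ to find the next leaf (cost $O(m)$, since $T$ has depth $m$ and each step down is $O(1)$). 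Thus the per-node cost is $O(t) + t\cdot O(m+1) = O(t(m+1))$; folding the distribution-retrieval and transition bookkeeping into the constant and being slightly generous gives the $(m+3)$ coefficient quoted in the statement, i.e. per-node cost $O(m+3)$ when one charges the traversal work per generated child rather than per parent. Multiplying the node count by the per-node cost yields $O\!\left((m+3)\,\frac{t - t^{h+1}}{1-t}\right)$.

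The one subtlety I would be careful about — and what I expect to be the main obstacle — is the exact accounting of the $(m+3)$ factor versus the geometric sum $\frac{t-t^{h+1}}{1-t}$, because the two naturally "live" at different granularities: the geometric series counts configurations (one per node), whereas the $O(m)$ traversal cost is naturally incurred once per \emph{edge} of the virtual tree, i.e. $t$ times per internal node. The cleanest way to reconcile this is to charge the cost of producing a child configuration (the $O(1)$ automaton step plus the $O(m)$ re-traversal of $T$, plus $O(1)$ for reading off and multiplying the edge probability and $O(1)$ for the final-state check) to the child itself, so that every one of the $\frac{t-t^{h+1}}{1-t}$ configurations carries an $O(m+3)$ price tag, and the root carries only $O(m)$. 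I would also remark, as the surrounding text already notes, that the probability-based pruning optimization only removes subtrees and therefore does not invalidate the upper bound; and that maintaining the buffer of the last $m$ symbols and rolling it forward is $O(1)$ amortized per step, so it does not affect the asymptotics. With that charging scheme the stated bound $O\!\left((m+3)\,\frac{t - t^{h+1}}{1-t}\right)$ follows immediately.
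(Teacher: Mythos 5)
Your proposal is correct and follows essentially the same argument as the paper: bound the number of nodes of the virtual future tree by the geometric sum $\sum_{i=1}^{h} t^{i} = \frac{t-t^{h+1}}{1-t}$ and charge each node an $O(m+3)$ cost dominated by the depth-$m$ traversal of the \pst\ plus constant-time bookkeeping (node creation/automaton step, probability multiplication, final-state check). The only cosmetic difference is that the paper splits the accounting into a construction phase ($(m+1)$ per node) and a traversal/multiplication phase ($2$ per node) rather than using your per-child charging scheme, and your inclusion of the root in the geometric sum is an asymptotically irrelevant off-by-one.
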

\begin{proof}
See Appendix, Section \ref{sec:proof:complexity3prime}.
\end{proof}

\begin{proposition}[Step 4 in Figure \ref{fig:vmmflow}]
\label{proposition:complexity4}
Let $M_{R}$ be a \dsfa\ with $t$ minterms and $M_{S}$ a \psa\ learned with the minterms of $M_{R}$.
The complexity of constructing an embedding $M$ of $M_{S}$ in $M_{S}$ with Algorithm \ref{algorithm:merging} is $O(t \cdot \lvert M_{R}.Q \times M_{S}.Q\rvert)$.
\end{proposition}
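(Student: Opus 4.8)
The plan is to treat Algorithm~\ref{algorithm:merging} as a standard reachability/exploration procedure over the product automaton and to charge a constant amount of work to each outgoing transition of each reachable state, exactly as one does for a BFS/DFS on a graph with out-degree $t$.

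First I would split the algorithm into its two phases. The initialization phase creates one initial state $(M_{R}.q^{s},s)$ for each $s \in M_{S}.Q$, at cost $O(\lvert M_{S}.Q \rvert)$. The second phase is the \texttt{while} loop, which incrementally explores the product states. Every state ever created by the algorithm is a pair $(r,s)$ with $r \in M_{R}.Q$ and $s \in M_{S}.Q$, so the final state set satisfies $\lvert Q \rvert \le \lvert M_{R}.Q \times M_{S}.Q \rvert$; this is the quantity that will control the bound.

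Next I would account for the loop. Under the usual assumptions that $\mathit{Checked}$ is a hash set keyed on the label $(r,s)$ (so membership tests and insertions are $O(1)$) and that the transition functions $M_{S}.\tau$ and $M_{R}.\delta$ support $O(1)$ lookups, each iteration picks a state $q$, iterates over the $t$ minterms $\sigma \in \Sigma$, and for each $\sigma$ performs two transition lookups, one membership test, at most one state creation, and the creation of one transition and one probability entry --- all $O(1)$. Hence one iteration costs $O(t)$. Since a state is moved to $\mathit{Checked}$ exactly once and is never re-enqueued afterwards (the guard $(r^{next},s^{next}) \notin \mathit{Checked}$ blocks re-exploration), the number of iterations is $\lvert Q \rvert \le \lvert M_{R}.Q \times M_{S}.Q \rvert$, giving loop cost $O(t \cdot \lvert M_{R}.Q \times M_{S}.Q \rvert)$. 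The trailing computation of $Q^{f}$ is a single pass over $Q$, and $\Sigma \leftarrow M_{S}.\Sigma$ is $O(t)$; together with the $O(\lvert M_{S}.Q \rvert)$ initialization, all of these are dominated by $O(t \cdot \lvert M_{R}.Q \times M_{S}.Q \rvert)$, which yields the claim.

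The one point requiring care --- and the place where a careless argument could go wrong --- is justifying that the \texttt{while} loop really runs at most $\lvert Q \rvert$ times rather than more. This rests on the invariant that $\mathit{Frontier}$ contains at most one occurrence of each label $(r,s)$ and that no finished state (already in $\mathit{Checked}$) is ever re-added; once this invariant is stated and verified by a short induction on the iterations, the remaining accounting is the routine per-transition charging above. I would include that inductive invariant in the appendix proof and then state the running-time sum as the conclusion.
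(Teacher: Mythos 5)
Your proof is correct and follows essentially the same route as the paper's: under the same constant-time assumptions for state/transition creation and lookup, both arguments bound the number of created product states by $\lvert M_{R}.Q \times M_{S}.Q \rvert$ and charge $O(t)$ work per state for its outgoing minterm transitions. The only difference is that you make explicit the frontier/\textit{Checked} invariant guaranteeing each product state is processed once, a point the paper's proof leaves implicit.
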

\begin{proof}
See Appendix, Section \ref{sec:proof:complexity4}.
\end{proof}

Notice that the cost of learning a \psa\ might be exponential in $m$.
In the worst case,
all permutations of the $t$ minterms of length $m$ need to be added to the \pst\ and the \psa.
This may happen if the statistical properties of the training stream are such that all these permutations are deemed as important. 
In this case, 
the final embedding in the \dsfa\ $M_{R}$ might have up to $t^{m} \cdot \lvert M_{R}.Q \rvert$ states.
This is also the upper bound of the number of states of the automaton the would be created using the method of \cite{DBLP:conf/debs/AlevizosAP17},
where every state of an initial automaton is split into at most $t^{m}$ sub-states,
regardless of the properties of the stream.
Thus, 
in the worst case,
our approach would create an automaton of size similar to an automaton encoding a full-order Markov chain.
Our approach provides an advantage when the statistical properties of the training stream allow us to retain only some of the dependencies of order up to $m$.

\begin{proposition}[Step 5 in Figure \ref{fig:vmmflow}]
\label{proposition:complexity5}
Let $M$ be the embedding of a \psa\ $M_{S}$ in a \dsfa\ $M_{R}$.
The complexity of estimating the waiting-time distribution for a state of $M$ and a horizon of length $h$ using Theorem \ref{theorem:non-finals} is $O((h-1) k^{2.37})$, where $k$ is the dimension of the square matrix $\boldsymbol{N}$. A similar result may be obtained for Theorem \ref{theorem:finals}.
\end{proposition}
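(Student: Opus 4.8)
The plan is to read the cost straight off Equation~\eqref{eq:wtd:non-finals}, evaluating it for every index in the horizon while sharing as much work as possible across indices. Producing the waiting-time distribution for a fixed state $q$ of $M$ amounts to computing the $h$ numbers $P(W_{q}=n)$ for $n=1,\dots,h$; since the start state is fixed, the initial distribution $\boldsymbol{\xi}_{init}$ is a unit vector, so the row vector $\boldsymbol{\xi_{N}}^{T}$ is determined once and for all.

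First I would precompute the column vector $\boldsymbol{b}=(\boldsymbol{I}-\boldsymbol{N})\boldsymbol{1}$; this is a single matrix--vector product, costing $O(k^{2})$. Next I would build the sequence of powers $\boldsymbol{N}^{0}=\boldsymbol{I},\boldsymbol{N}^{1},\dots,\boldsymbol{N}^{h-1}$, each obtained from its predecessor by one matrix--matrix multiplication. Using a fast multiplication routine (Strassen's algorithm or its refinements, with exponent $\omega\approx 2.37$), each such product costs $O(k^{2.37})$, so the $h-1$ multiplications cost $O((h-1)k^{2.37})$ in total. Finally, for each $n$ I would evaluate $P(W_{q}=n)=\boldsymbol{\xi_{N}}^{T}\boldsymbol{N}^{n-1}\boldsymbol{b}$ as a row-vector/matrix/column-vector product at cost $O(k^{2})$, hence $O(hk^{2})$ over all indices. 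Summing the three contributions, the dominating term is $O((h-1)k^{2.37})$, which is exactly the claim.

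It is worth remarking that the matrix powers need not be materialized at all: maintaining the row vector $\boldsymbol{v}_{n}=\boldsymbol{\xi_{N}}^{T}\boldsymbol{N}^{n-1}$ via $\boldsymbol{v}_{1}=\boldsymbol{\xi_{N}}^{T}$ and $\boldsymbol{v}_{n+1}=\boldsymbol{v}_{n}\boldsymbol{N}$, each update is an $O(k^{2})$ vector--matrix product and each $P(W_{q}=n)=\boldsymbol{v}_{n}\boldsymbol{b}$ is $O(k)$, which even yields the sharper bound $O((h-1)k^{2})$; the stated $O((h-1)k^{2.37})$ then holds a fortiori. For Theorem~\ref{theorem:finals} and Equation~\eqref{eq:wtd:finals} the only differences are that the starting row vector is $\boldsymbol{\xi_{F}}^{T}\boldsymbol{F_{N}}$ (with $\boldsymbol{\xi_{F}}^{T}\boldsymbol{F}$ handling the $n=2$ base case), each obtained with one additional matrix--vector product, so the identical scheme gives the same asymptotic bound.

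The analysis is essentially routine; the single point that needs care --- the only thing resembling an obstacle --- is to make sure the $h$ evaluations reuse the incrementally built powers (or the incrementally updated row vector) instead of recomputing $\boldsymbol{N}^{n-1}$ from scratch for each $n$, which would add a spurious factor of $h$, and to charge the one-time precomputation of $(\boldsymbol{I}-\boldsymbol{N})\boldsymbol{1}$ correctly as a lower-order term.
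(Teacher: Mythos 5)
Your proof is correct and follows essentially the same route as the paper's: precompute $(\boldsymbol{I}-\boldsymbol{N})\boldsymbol{1}$ once, build the powers $\boldsymbol{N}^{n-1}$ incrementally with one fast ($O(k^{2.37})$) matrix multiplication per horizon step, and finish each index with cheap vector products, so the $(h-1)$ matrix multiplications dominate. Your closing remark that keeping only the row vector $\boldsymbol{\xi_{N}}^{T}\boldsymbol{N}^{n-1}$ gives the sharper $O((h-1)k^{2})$ bound is a valid strengthening not exploited in the paper, and the stated bound indeed follows a fortiori.
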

\begin{proof}
See Appendix, Section \ref{sec:proof:complexity5}.
\end{proof}

\begin{proposition}[Step 6 in Figure \ref{fig:vmmflow}]
\label{proposition:complexity6}
For a waiting-time distribution with a horizon of length $h$,
the complexity of finding the smallest interval that exceeds a confidence threshold $\theta_{fc}$ with Algorithm \ref{algorithm:interval} is $O(h)$. 
\end{proposition}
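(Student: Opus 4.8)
The plan is to bound the running time of Algorithm~\ref{algorithm:interval} by $O(h)$ through a standard two-pointer amortized argument. Every elementary operation performed by the algorithm can be charged to one of three accounts: (i) the increments of $j$ inside the expansion phases, each costing $O(1)$ (one lookup of $P(j)$, one addition to the accumulator $p$); (ii) the increments of $i$ inside the shrinking phases, each costing $O(1)$ (one subtraction of $P(i)$ from $p$); and (iii) the $O(1)$ work done once per iteration of the outer loop, namely a constant number of comparisons and assignments together with the two $\mathit{spread}$ evaluations, which are $O(1)$ since a $\mathit{spread}$ is just a difference of indices. It therefore suffices to prove that each of the three quantities --- the total number of $j$-increments, the total number of $i$-increments, and the number of outer iterations --- is $O(h)$.

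The first bound is immediate: $j$ is only ever increased, by one at a time; it starts at $1$; and the outer loop halts as soon as $j = h$. Hence $j$ is incremented at most $h-1$ times over the whole run. For the $i$-increments, observe that $i$ grows only inside shrinking phases and decreases only through the single statement $i \leftarrow i-1$ executed once per outer iteration; since $i$ never leaves the finite range $\{0,1,\dots,h\}$, summing its changes over the execution shows that the total number of $i$-increments equals the net change of $i$ plus the number of outer iterations, hence is at most $h$ plus the number of outer iterations. So the analysis reduces to bounding the number of outer iterations.

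I expect this last step to be the only non-routine part. The key claim is that no two consecutive outer iterations can both leave $j$ unchanged. Indeed, if the expansion phase of some iteration makes no increment, this can only be because $p \ge \theta_{fc}$ held at the top of that iteration (the other possibility, $j=h$, would already have exited the outer loop); then the shrinking phase runs at least once and terminates with $p < \theta_{fc}$, and since $i \leftarrow i-1$ does not modify $p$, the following outer iteration --- which must exist, as $j < h$ --- begins with $p < \theta_{fc}$ and hence performs at least one $j$-increment. Consequently the iterations that advance $j$ are at least as numerous as those that do not, so the number of outer iterations is at most $2(h-1)+O(1) = O(h)$; alternatively, this step can be shortcut by invoking the loop invariant $p < \theta_{fc}$ established in the correctness proof (Appendix, Section~\ref{sec:proof:interval}), which directly guarantees a $j$-increment in every non-terminal iteration. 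Combining the three $O(h)$ bounds gives total running time $O(h)$.
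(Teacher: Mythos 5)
Your proof is correct and follows essentially the same two-pointer argument as the paper's own proof: each of the indices $i$ and $j$ scans the distribution once, so the total work is $O(h)$. You additionally make explicit what the paper's terser proof leaves implicit --- the accounting for the per-iteration $O(1)$ overhead and the backstep $i \leftarrow i-1$, handled via your bound on the number of outer-loop iterations --- but the underlying approach is the same.
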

\begin{proof}
See Appendix, Section \ref{sec:proof:complexity6}.
\end{proof}

The complexity of the last step ($6$), 
when the forecasts are ``classification'' decisions about whether a CE will occur within the next $w$ input events, 
is $O(w)$.
In order to make such a positive or negative decision,
we can simply sum the probabilities of the first $w$ points of a waiting-time distribution and compare this sum to the given threshold $\theta_{fc}$.
If this sum exceeds the given threshold,
then the decision is positive.
The cost of the summation is $O(w)$. 

\section{Measuring the Quality of Forecasts}
\label{sec:metrics}

As described in Section \ref{sec:forecasts},
there are various types of forecasts that could be produced from the waiting-time distributions of an automaton.
In this section,
we discuss in more detail these different forecasting tasks and how the quality of the produced forecasts can be quantified in each case.
We distinguish three different types of forecasting tasks:
a) SDE forecasting, where our goal is to forecast the next SDE in the input stream;
b) regression CE forecasting, where our goal is to forecast when a CE will occur (either \emph{REGRESSION-ARGMAX} or \emph{REGRESSION-INTERVAL});
c) classification CE forecasting, where our goal is to forecast whether or not a CE will occur within a short future window (\emph{CLASSIFICATION-NEXTW}).  

\subsection{SDE Forecasting}

Although our main focus is on forecasting occurrences of CEs,
we can start with a simpler task,
targeting SDEs.
Not only does this allow us to establish a baseline with some more easily interpretable results,
but it also enables us to show the differences between SDE and CE forecasting.
As we will show,
CE forecasting is more challenging than SDE forecasting,
in terms of the feasibility of looking deep into the past.
Another reason for running experiments for SDE forecasting is to find the best values for the hyperparameters used for learning a prediction suffix tree. 
Since it is much faster to run this type of experiments, 
compared to CE forecasting experiments,
we can use a hypergrid of hyperparameter values and for each hypergrid point we run SDE forecasting.

In this type of experiments,
our goal is to investigate how well our proposed framework can forecast the next SDE to appear in the stream,
given that we know the last $m$ SDEs.  
This task is the equivalent of \emph{next symbol prediction} in the terminology of the compression community \cite{DBLP:journals/jair/BegleiterEY04}.
As explained in Section \ref{sec:vmm},
the metric that we use to estimate the quality of a predictor $\hat{P}$ is the average log-loss with respect to a test sequence $S_{1..k}=t_{1},t_{2},\cdots,t_{k}$,
given by 
$l(\hat{P},S_{1..k}) = - \frac{1}{T} \sum_{i=1}^{k} log \hat{P}(t_{i} \mid t_{1} \cdots t_{i-1})$.
The lower the average log-loss, the better the predictor is assumed to be.

We remind that the ``symbols'' which we try to predict in these experiments are essentially the minterms of our \dsfa\ in our case.
In other words, 
we do not try to predict exactly what the next SDE will be,
but which minterm the next SDE will satisfy.
For example,
if we have the minterms of Table \ref{table:minterms_simplified},
then our task is to predict whether the next SDE will satisfy $\psi_{A}$ (i.e., the speed of the vessel will be below 5 knots),
$\psi_{B}$ (i.e., the speed will be above 20 knots) or
$\neg \psi_{A} \wedge \neg \psi_{B}$ (i.e., the speed will be between 5 and 20 knots).

\subsection{Regression CE Forecasting}
\label{sec:regression}

After SDE forecasting, 
we move on to regression CE forecasting.
Our goal in this task to forecast when a CE will occur. 
We call them \emph{regression} experiments due to the fact that the forecasts are ``continuous'' values,
in the form of forecast intervals/points. 
This is in contrast to the next forecasting task,
where each forecast is a binary value,
indicating whether a CE will occur or not and is thus called a \emph{classification} task.

One important difference between SDE and CE forecasting (both regression and classification) is that,
in SDE forecasting, a forecast is emitted after every new SDE is consumed.
On the other hand, in CE forecasting,
emitting a forecast after every new SDE is feasible in principle,
but not very useful and can also produce results that are misleading.
By their very nature,
CEs are relatively rare within a stream of input SDEs.
As a result,
if we emit a forecast after every new SDE,
some of these forecasts (possibly even the vast majority) will have a significant temporal distance from the CE to which they refer.
As an example,
consider a pattern from the maritime domain which detects the entrance of a vessel in the port of Tangiers.
We can also try to use this pattern for forecasting, 
with the goal of predicting when the vessel will arrive at the port of Tangiers.
However, the majority of the vessel's messages may lie in areas so distant from the port (e.g., in the Pacific ocean) that it would be practically useless to emit forecasts when the vessel is in these areas.
Moreover, if we do emit forecasts from these distant areas,
the scores and metrics that we use to evaluate the quality of the forecasts will be dominated by these, necessarily low-quality, distant forecasts.

For these reasons,
before running a regression experiment,
we must first go through a preprocessing step.  
We must find the timepoints within a stream where it is ``meaningful'' to emit forecasts.
We call these timepoints the \emph{checkpoints} of the stream.
To do so,
we must first perform recognition on the stream to find the timepoints where CEs are detected.
We then set a required distance $d$ that we want to separate a forecast from its CE,
in the sense that we require each forecast to be emitted $d$ events before the CE.
After finding all the CEs in a stream and setting a value for $d$,
we set as checkpoints all the SDEs which occur $d$ events before the CEs.
This typically means that we end up with as many checkpoints as CEs for a given value of $d$
(unless the distance between two consecutive CEs is smaller than $d$,
in which case no checkpoint for the second CE exists).
We can then show results for various values of $d$,
starting from the smallest possible value of $1$ 
(i.e., emitting forecasts from the immediately previous SDE before the CE).

At each checkpoint,
a forecast interval is produced, 
as per Section \ref{sec:forecasts}.
Some of the metrics we can use to assess the quality of the forecasts assume that forecasts are in the form of points.
Such point metrics are the following:
the Root Mean Squared Error (RMSE) and
the Mean Absolute Error (MAE)
(the latter is less sensitive than RMSE to outliers).
If we denote by $C$ the set of all checkpoints,
by $y_{c}$ the actual distance (in number of events) between a checkpoint $c$ and its CE (which is always $d$)
and by $\hat{y}_{c}$ our forecast,
then the definitions for RMSE and MAE are as follows:
\begin{equation}
\mathit{RMSE} = \sqrt{ \frac{1}{\lvert C \rvert} \sum_{c \in C}{    {\lvert \hat{y}_{c} - y_{c} \rvert}}^{2} }
\end{equation}
\begin{equation}
\label{eq:mae}
\mathit{MAE} = \frac{1}{\lvert C \rvert} \sum_{c \in C}{\lvert \hat{y}_{c} - y_{c} \rvert}
\end{equation}
When these metrics are used,
we need to impose an extra constraint on the forecasts,
requiring that the maximum spread of each forecast is $0$,
i.e., we produce point (instead of interval) forecasts.

Besides these points metrics,
we can also use an interval metric,
the so-called \emph{negatively oriented interval score} \cite{gneiting2007strictly}.
If $\hat{y}_{c}=(l_{c},u_{c})$ is an interval forecast produced with confidence $b=1-a$ and $y_{c}$ the actual observation (distance), 
then the negatively oriented interval score (NOIS) for this forecast is given by:
\begin{equation}
\label{eq:nois}
\mathit{NOIS}_{c} = (u_{c}-l_{c}) + \frac{2}{a}(l_{c}-y_{c})I_{x<l_{c}} + \frac{2}{a}(y_{c}-u_{c})I_{x>u_{c}}
\end{equation}
We can then estimate the average negatively oriented interval score (ANOIS) as follows:
\begin{equation}
\mathit{ANOIS} = \frac{1}{\lvert C \rvert} \sum_{c \in C}{\mathit{NOIS}_{c}}
\end{equation}
The best possible value for ANOIS is $0$ and is achieved only when all forecasts are point forecasts
(so that $u_{c}-l_{c}$ is always $0$) and all of them are also correct (so that the last two terms in Eq. \ref{eq:nois} are always $0$).
In every other case,
a forecast is penalized if its interval is long 
(so that focused intervals are promoted),
regardless of whether it is correct.
If it is indeed correct,
no other penalty is added.
If it is not correct,
then an extra penalty is added,
which is essentially the deviation of the forecast from the actual observation,
weighted by a factor that grows with the confidence of the forecast.
For example, if the confidence is $100\%$, 
then $b=1.0$, $a=0.0$ and the extra penalty,
according to Eq. \ref{eq:nois},
grows to infinity.
Incorrect forecasts produced with high confidence are thus severely penalized.
Note that if we emit only point forecasts ($\hat{y}_{c}=l_{c}=u_{c}$),
then NOIS and ANOIS could be written as follows:
\begin{equation}
\label{eq:nois0}
\mathit{NOIS}_{c} = \frac{2}{a} \lvert \hat{y}_{c}-y_{c} \rvert
\end{equation}
\begin{equation}
\mathit{ANOIS} = \frac{1}{\lvert C \rvert} \sum_{c \in C}{  \frac{2}{a}\lvert \hat{y}_{c}-y_{c} \rvert }
\end{equation}
ANOIS then becomes a weighted version of MAE.

\subsection{Classification CE Forecasting}

The last forecasting task is the most challenging. 
In contrast to regression experiments, 
where we emit forecasts only at a specified distance before each CE,
in classification experiments we emit forecasts regardless of whether a CE occurs or not.
The goal is to predict the occurrence of CEs within a short future window or provide a ``negative'' forecast if our model predicts that no CE is likely to occur within this window.
In practice,
this task could be the first one performed at every new event arrival.
If a positive forecast is emitted,
then the regression task could also be performed in order to pinpoint more accurately when the CE will occur.

One issue with classification experiments is that it is not so straightforward to establish checkpoints in the stream.
In regression experiments,
CEs provide a natural point of reference.
In classification experiments,
we do not have such reference points,
since we are also required to predict the absence of CEs.
As a result,
instead of using the stream to find checkpoints,
we can use the structure of the automaton itself.
We may not know the actual distance to a CE,
but the automaton can provide us with an ``expected'' or ``possible'' distance,
as follows.
For an automaton that is in a final state,
it can be said that the distance to a CE is $0$.
More conveniently,
we can say that the ``process'' which it describes has been completed or, equivalently, that there remains $0\%$ of the process until completion.
For an automaton that is in a non-final state but separated from a final state by $1$ transition,
it can be said that the ``expected'' distance is $1$.
We use the term ``expected'' because we are not interested in whether the automaton will actually take the transition to a final state.
We want to establish checkpoints both for the presence and the absence of CEs.
When the automaton fails to take the transition to a final state (and we thus have an absence of a CE),
this ``expected'' distance is not an actual distance,
but a ``possible'' one that failed to materialize.
We also note that there might also exist other walks from this non-final state to a final one whose length could be greater than $1$
(in fact, there might exist walks with ``infinite length'', in case of loops).
In order to estimate the ``expected'' distance of a non-final state, 
we only use the shortest walk to a final state.

After estimating the expected distances of all states,
we can then express them as percentages by dividing them by the greatest among them.
A $0\%$ distance will thus refer to final states,
whereas a $100\%$ distance to the state(s) that are the most distant to a final state,
i.e.,
the automaton has to take the most transitions to reach a final state.
These are the start states. 
We can then determine our checkpoints by specifying the states in which the automaton is permitted to emit forecasts,
according to their ``expected'' distance.
For example,
we may establish checkpoints by allowing only states with a distance between $40\%$ and $60\%$ to emit forecasts.
The intuition here is that,
by increasing the allowed distance,
we make the forecasting task more difficult. 
Another option for measuring the distance of a state to a possible future CE would be to use the waiting-time distribution of the state and set its expectation as the distance.
However, this assumes that we have gone through the training phase first and learned the distributions.
For this reason, we avoid using this way to estimate distances.

The evaluation task itself consists of the following steps.
At the arrival of every new input event,
we first check whether the distance of the new automaton state falls within the range of allowed distances,
as explained above.
If the new state is allowed to emit a forecast,
we use its waiting-time distribution to produce the forecast.
Two parameters are taken into account: 
the length of the future window $w$ within which we want to know whether a CE will occur and the confidence threshold $\theta_{fc}$.
If the probability of the first $w$ points of the distribution exceeds the threshold $\theta_{fc}$,
we emit a positive forecast, essentially affirming that a CE will occur within the next $w$ events;
otherwise, we emit a negative forecast, essentially rejecting the hypothesis that a CE will occur.
We thus obtain a binary classification task.

As a consequence,
we can make use of standard classification measures,
like precision and recall.
Each forecast is evaluated: 
a) as a \emph{true positive} (TP) if the forecast is positive and the CE does indeed occur within the next $w$ events from the forecast;
b) as a \emph{false positive} (FP) if the forecast is positive and the CE does not occur;
c) as a \emph{true negative} (TN) if the forecast is negative and the CE does not occur and
d) as a \emph{false negative} (FN) if the forecast is negative and the CE does occur;
Precision is then defined as $\mathit{Precision}=\frac{TP}{TP + FP}$ and recall (also called sensitivity or true positive rate) as $\mathit{Recall}=\frac{TP}{TP + FN}$.
As already mentioned,
CEs are relatively rare in a stream.
It is thus important for a forecasting engine to be as specific as possible in identifying the true negatives.
For this reason,
besides precision and recall,
we also use \emph{specificity} (also called true negative rate),
defined as $\mathit{Specificity}=\frac{TN}{TN + FP}$.

A classification experiment is performed as follows.
For various values of the ``expected'' distance and the confidence threshold $\theta_{fc}$,
we estimate precision, recall and specificity on a test dataset.
For a given distance, 
$\theta_{fc}$ acts as a cut-off parameter.
For each value of $\theta_{fc}$,
we estimate the recall (sensitivity) and specificity scores and we plot these scores as a ROC curve.
For each distance,
we then estimate the area under curve (AUC) for the ROC curves.
The higher the AUC value,
the better the model is assumed to be.

The setting described above is the most suitable for evaluation purposes,
but might not be the most appropriate when such a system is actually deployed.
For deployment purposes,
another option would be to simply set a best, fixed confidence threshold 
(e.g., by selecting, after evaluation, the threshold with the highest F1-score or Matthews correlation coefficient)
and emit only positive forecasts,
regardless of their distance.
Forecasts with low probabilities (i.e., negative forecasts) will thus be ignored/suppressed.
This is justified by the fact that a user would typically be more interested in positive forecasts.
For evaluation purposes, 
this would not be an appropriate experimental setting,
but it would suffice for deployment purposes,
where we would then be focused on fine-tuning the confidence threshold.
In this paper,
we focus on evaluating our system and thus do not discuss further any deployment solution.

\section{Empirical Evaluation}
\label{sec:experiments}

We now present experimental results on two datasets, 
a synthetic one (Section \ref{sec:cards}) and a real-world one (Section \ref{sec:maritime}).
We first briefly discuss in Section \ref{sec:test-models} the models that we evaluated 
and present our software and hardware settings in Section \ref{sec:settings}.
We show results for SDE forecasting and classification CE forecasting.
Due to space limitations,
we omit results for regression CE forecasting .
We intend to present them in future work.

\subsection{Models Tested}
\label{sec:test-models}

In the experiments that we present,
we evaluated the variable-order Markov model that we have presented in this paper in its two versions:
the memory efficient one that bypasses the construction of a Markov chain and makes direct use of the \pst\ learned from a stream (Section \ref{sec:no-mc}) and the computationally efficient one that constructs a \psa\ (Section \ref{sec:embed}).
We compared these against four other models inspired by the relevant literature.

The first, described in \cite{DBLP:conf/debs/AlevizosAP17,DBLP:conf/lpar/AlevizosAP18}, 
is the most similar in its general outline to our proposed method.
It is a previous version of our system presented in this paper and is also based on automata and Markov chains.
The main difference is that it attempts to construct full-order Markov models of order $m$
and is thus typically restricted to low values for $m$.

The second model is presented in \cite{DBLP:conf/debs/MuthusamyLJ10},
where automata and Markov chains are used once again.
However, the automata are directly mapped to Markov chains and no attempt is made to ensure that the Markov chain is of a certain order.
Thus, in the worst case,
this model essentially makes the assumption that SDEs are i.i.d. and $m=0$.

As a third alternative,
we evaluated a model that is based on Hidden Markov Models (HMM),
similar to the work presented in \cite{DBLP:conf/colcom/PandeyNC11}.
That work uses the Esper event processing engine \cite{esper} and attempts to model a business process as a HMM.
For our purposes,
we use a HMM to describe the behavior of an automaton,
constructed from a given symbolic regular expression.
The observation variable of the HMM corresponds to the states of the automaton.
Thus, the set of possible values of the observation variable is the set of automaton states.
An observation sequence of length $l$ for the HMM consists of the sequence of $l$ states visited by the automaton after consuming $l$ SDEs.
The $l$ SDEs (symbols) are used as values for the hidden variable.
The last $l$ symbols are the last $l$ values of the hidden variable. 
Therefore, this HMM always has $l$ hidden states,
whose values are taken from the SDEs,
connected to $l$ observations,
whose values are taken from the automaton states.  
We can train such a HMM with the Baum-Welch algorithm,
using the automaton to generate a training observation sequence from the original training stream.
We can then use this learned HMM to produce forecasts on a test dataset.
We produce forecasts in an online manner as follows:
as the stream is consumed,
we use a buffer to store the last $l$ states visited by the pattern automaton.
After every new event,
we ``unroll'' the HMM using the contents of the buffer as the observation sequence
and the transition and emission matrices learned during the training phase.
We can then use the forward algorithm to estimate the probability of all possible future observation sequences (up to some length),
which, in our case, correspond to future states visited by the automaton.
Knowing the probability of every future sequence of states allows us to estimate the waiting-time distribution for the current state of the automaton and thus build a forecast,
as already described.
Note that, 
contrary to the previous approaches,
the estimation of the waiting-time distribution via a HMM must be performed online.
We cannot pre-compute the waiting-time distributions and store the forecasts in a look-up table,
due to the possibly large number of entries.
For example, 
assume that $l=5$ and the size of the ``alphabet'' (SDE symbols) of our automaton is $10$.
For each state of the automaton,
we would have to pre-compute $10^{5}$ entries. 
In other words,
as with Markov chains,
we still have a problem of combinatorial explosion.
We try to ``avoid'' this problem by estimating the waiting-time distributions online.

Our last model is inspired by the work presented in \cite{DBLP:journals/is/AalstSS11}.
This method comes from the process mining community and has not been previously applied to CEF.
However,
due to its simplicity,
we use it here as a baseline method.
We again use a training dataset to learn the model.
In the training phase,
every time the pattern automaton reaches a certain state $q$,
we simply count how long (how many transitions) we have to wait until it reaches a final state.
After the training dataset has been consumed,
we end up with a set of such ``waiting times'' for every state.
The forecast to be produced by each state is then estimated simply by calculating the average ``waiting time''.

As far as the Markov models are concerned,
we try to increase their order to the highest possible value, 
in order to determine if and how high-order values offer an advantage.
We have empirically discovered that our system can efficiently handle automata and Markov chains that have up to about  $1200$ states.
Beyond this point, 
it becomes almost prohibitive (with our hardware) to create and handle transition matrices with more than $1200^{2}$ elements.
We have thus set this number as an upper bound and increased the order of a model until this number is reached.
This restriction is applied both to full-order models and variable-order models that use a \psa\ and an embedding,
since in both of these cases we need to construct a Markov chain.
For the variable-order models that make direct use of a \pst,
no Markov chain is constructed.
We thus increase their order until their performance scores seem to reach a stable number or a very high number,
beyond which it makes little sense to continue testing.

\subsection{Hardware and Software Settings}
\label{sec:settings}
All experiments were run on a 64-bit Debian 10 (buster) machine with Intel Core i7-8700 CPU @ 3.20GHz × 12  processors and 16 GB of memory.
Our framework was implemented in Scala 2.12.10.
We used Java 1.8,
with the default values for the heap size. 
For the HMM models,
we relied on the Smile machine learning library \cite{smile}.
All other models were developed by us.
No attempt at parallelization was made.

\subsection{Credit Card Fraud Management}
\label{sec:cards}

The first dataset used in our experiments is a synthetic one,
inspired by the domain of credit card fraud management \cite{DBLP:conf/debs/ArtikisKCBMSFP17}.
We start with a synthetically generated dataset in order to investigate how our method performs under conditions that are controlled and produce results more readily interpretable.
The data generator was developed in collaboration with Feedzai\footnote{\url{https://feedzai.com}}, 
our partner in the SPEEDD project\footnote{\url{http://speedd-project.eu}}. 

In this dataset,
each event is supposed to be a credit card transaction,
accompanied by several arguments,
such as the time of the transaction, the card ID, the amount of money spent, the country where the transaction took place, etc.
In the real world,
a very small proportion of such transactions are fraudulent and the goal of a CER system would be to detect, 
with very low latency,
fraud instances.
To do so, 
a set of fraud patterns must be provided to the engine.
For typical cases of such patterns in a simplified form,
see \cite{DBLP:conf/debs/ArtikisKCBMSFP17}.
In our experiments,
we use one such pattern,
consisting of a sequence of consecutive transactions, 
where the amount spent at each transaction is greater than that of the previous transaction.
Such a trend of steadily increasing amounts constitutes a typical fraud pattern.
The goal in our forecasting experiments is to predict if and when such a pattern will be completed,
even before it is detected by the engine (if in fact a fraud instance occurs),
so as to possibly provide a wider margin for action to an analyst.

We generated a dataset consisting of 1,000,000 transactions in total from 100 different cards.
About $20\%$ of the transactions are fraudulent.
Not all of these instances of fraud belong to the pattern of increasing amounts. 
We actually inject seven different types of known fraudulent patterns in the dataset, 
including, for instance, a decreasing trend.
Each fraudulent sequence for the increasing trend consists of eight consecutive transactions with increasing amounts,
where the amount is increased each time by $100$ monetary units or more.
We additionally inject sequences of transactions with increasing amounts,
which are not fraudulent.
In those cases,
we randomly interrupt the sequence before it reaches the eighth transaction.
In the legitimate sequences
the amount is increased each time by $0$ or more units.
With this setting,
we want to test the effect of long-term dependencies on the quality of the forecasts.
For example,
a sequence of six transactions with increasing amounts,
where all increases are $100$ or more units is very likely to lead to a fraud detection.
On the other hand,
a sequence of just two transactions with the same characteristics,
could still possibly lead to a detection,
but with a significantly reduced probability.
We thus expect that models with deeper memories will perform better.
We used $75\%$ of the dataset for training and the rest for testing.
No k-fold cross validation is performed,
since each fold would have exactly the same statistical properties.

Formally, the symbolic regular expression that we use to capture the pattern of an increasing trend in the amount spent is the following:
\begin{equation}
\label{exp:amount}
\begin{aligned}
R := \ & \ (\mathit{amountDiff > 0}) \cdot (\mathit{amountDiff > 0}) \cdot (\mathit{amountDiff > 0}) \cdot (\mathit{amountDiff > 0}) \cdot \\
\ &\ (\mathit{amountDiff > 0}) \cdot (\mathit{amountDiff > 0}) \cdot (\mathit{amountDiff > 0}) 
\end{aligned}
\end{equation}
$\mathit{amountDiff}$ is an extra attribute 
(besides the card ID, the amount spent, the transaction country and the other standard attributes) 
with which we enrich each event and is equal to the difference between the amount spent by the current transaction and that spent by the immediately previous transaction from the same card.
The expression consists of seven terminal sub-expressions, 
in order to capture eight consecutive events.
The first terminal sub-expression captures an increasing amount between the first two events in a fraudulent pattern.

If we attempted to perform forecasting based solely on Pattern \eqref{exp:amount},
then the minterms that would be created would be based only on the predicate $\mathit{amountDiff}>0$:
namely, the predicate itself, along with its negation $\neg (\mathit{amountDiff}>0)$.
As expected,
such an approach does not yield good results,
as the language is not expressive enough to differentiate between fraudulent and legitimate transaction sequences.
In order to address this lack of informative (for forecasting purposes) predicates,
we have incorporated a mechanism in our system that allows us to incorporate extra predicates when building a probabilistic model,
without affecting the semantics of the initial expression (exactly the same matches are detected).
We do this by using any such extra predicates during the construction of the minterms.
For example,
if $\mathit{country}=\mathit{MA}$ is such an extra predicate that we would like included,
then we would construct the following minterms for Pattern \eqref{exp:amount}:
a) $m_{1} = (\mathit{amountDiff > 0}) \wedge (\mathit{country}=\mathit{MA})$;
b) $m_{2} = (\mathit{amountDiff > 0}) \wedge \neg (\mathit{country}=\mathit{MA})$;
c) $m_{3} = \neg (\mathit{amountDiff > 0}) \wedge (\mathit{country}=\mathit{MA})$;
d) $m_{4} = \neg (\mathit{amountDiff > 0}) \wedge \neg (\mathit{country}=\mathit{MA})$).
We can then use these enhanced minterms as guards on the automaton transitions in a way that does not affect the semantics of the expression.
For example,
if an initial transition has the guard $\mathit{amountDiff > 0}$,
then we can split it into two derived transitions,
one for $m_{1}$ and one for $m_{2}$.
The derived transitions would be triggered exactly when the initial one is triggered,
the only difference being that the derived transitions also have information about the country.
For our experiments and for Pattern \eqref{exp:amount},
if we include the extra predicate
$\mathit{amountDiff}>100$,
we expect the model to be able to differentiate between sequences involving genuine transactions 
(where the difference in the amount can by any value above $0$)
and fraudulent sequences 
(where the difference in the amount is always above $100$ units).

We now present results for SDE forecasting. 
As already mentioned in Section \ref{sec:test-models},
for this type of experiments we do not use the automaton created by Pattern \eqref{exp:amount}.
We instead use only its minterms which will constitute our ``alphabet''.
In our case,
there are four minterms:
a) $\mathit{amountDiff}>0 \wedge \mathit{amountDiff}>100$;
b) $\mathit{amountDiff}>0 \wedge \neg (\mathit{amountDiff}>100)$;
c) $\neg (\mathit{amountDiff}>0) \wedge \mathit{amountDiff}>100$;
d) $\neg (\mathit{amountDiff}>0) \wedge \neg (\mathit{amountDiff}>100)$.
Thus, the goal is to predict,
as the stream is consumed,
which one of these minterms will be satisfied.
Notice that, for every possible event,
exactly one minterm is satisfied
(the third one, $\neg (\mathit{amountDiff}>0) \wedge \mathit{amountDiff}>100$, is actually unsatisfiable). 
We use $75\%$ of the original dataset 
(which amounts to 750.000 transactions)
for training and the rest (250.000) for testing.
We do not employ cross-validation,
as the dataset is synthetic and the statistical properties of its folds would not differ.

\begin{figure}[t]
\begin{subfigure}[t]{0.45\textwidth}
	\includegraphics[width=0.95\textwidth]{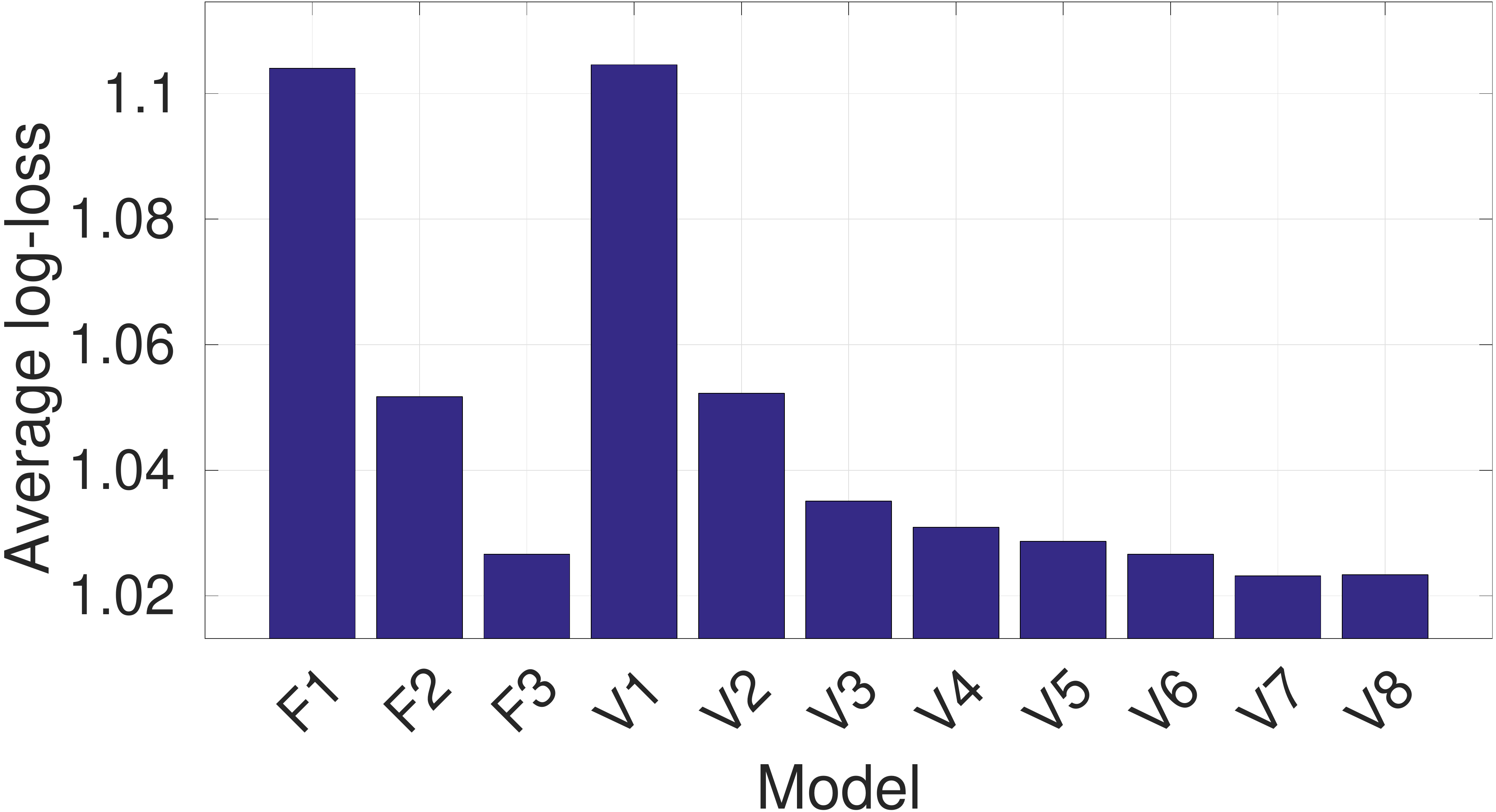}
	\caption{Average log-loss.}
	\label{fig:exp-feedzai-sde-logloss}
\end{subfigure}
\begin{subfigure}[t]{0.45\textwidth}
	\includegraphics[width=0.95\textwidth]{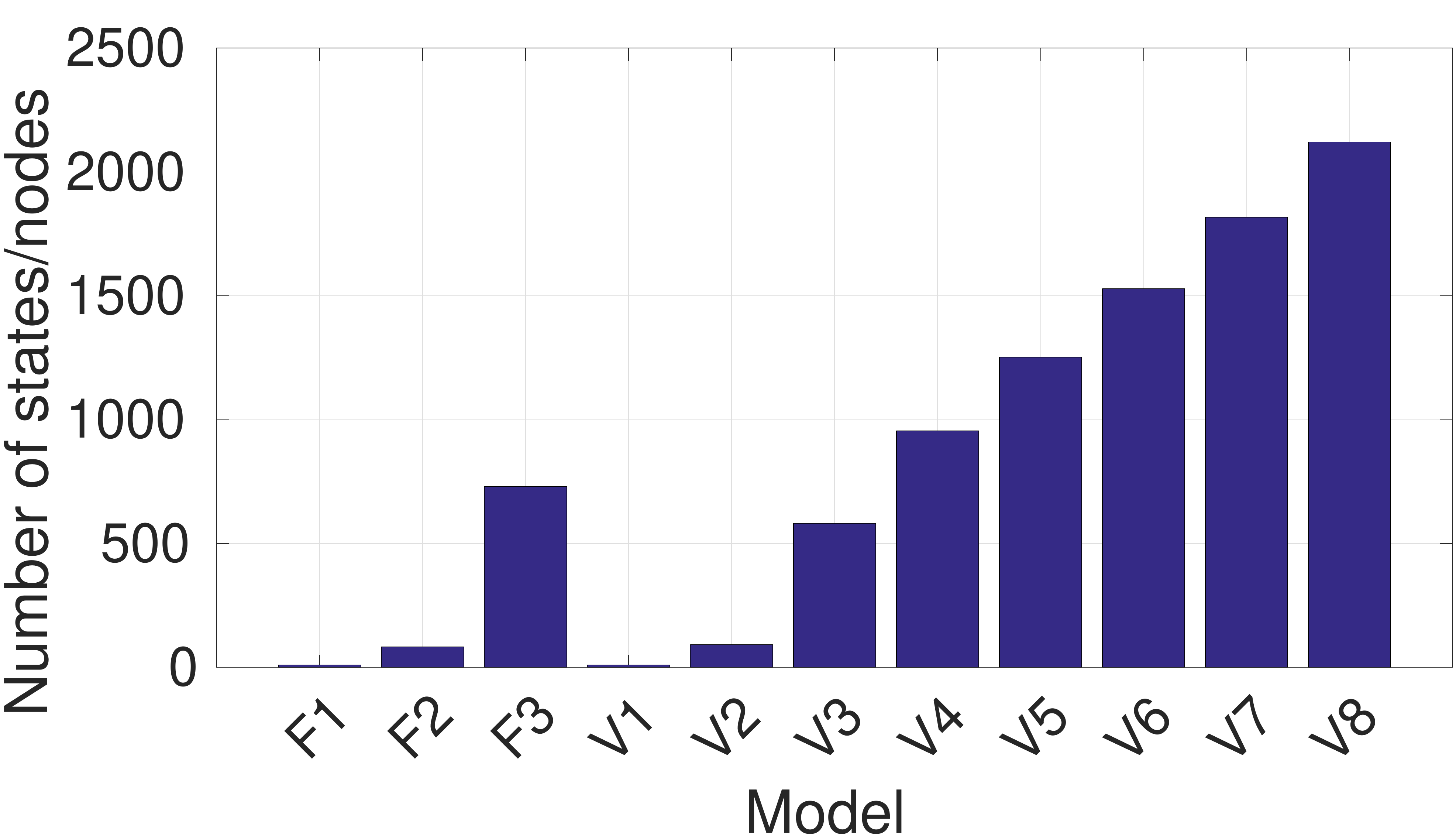}
	\caption{Number of states/nodes.}
	\label{fig:exp-feedzai-sde-states}
\end{subfigure}
\caption{Results for SDE forecasting from the domain of credit card fraud management. Fx stands for a Full-order Markov Model of order $x$. Vx stands for a Variable-order Markov Model (a prediction suffix tree) of maximum order $x$.}
\label{fig:exp-feedzai-sde}
\end{figure}

\begin{figure}[t]
\centering
\begin{subfigure}[t]{0.48\textwidth}
	\includegraphics[width=0.99\textwidth]{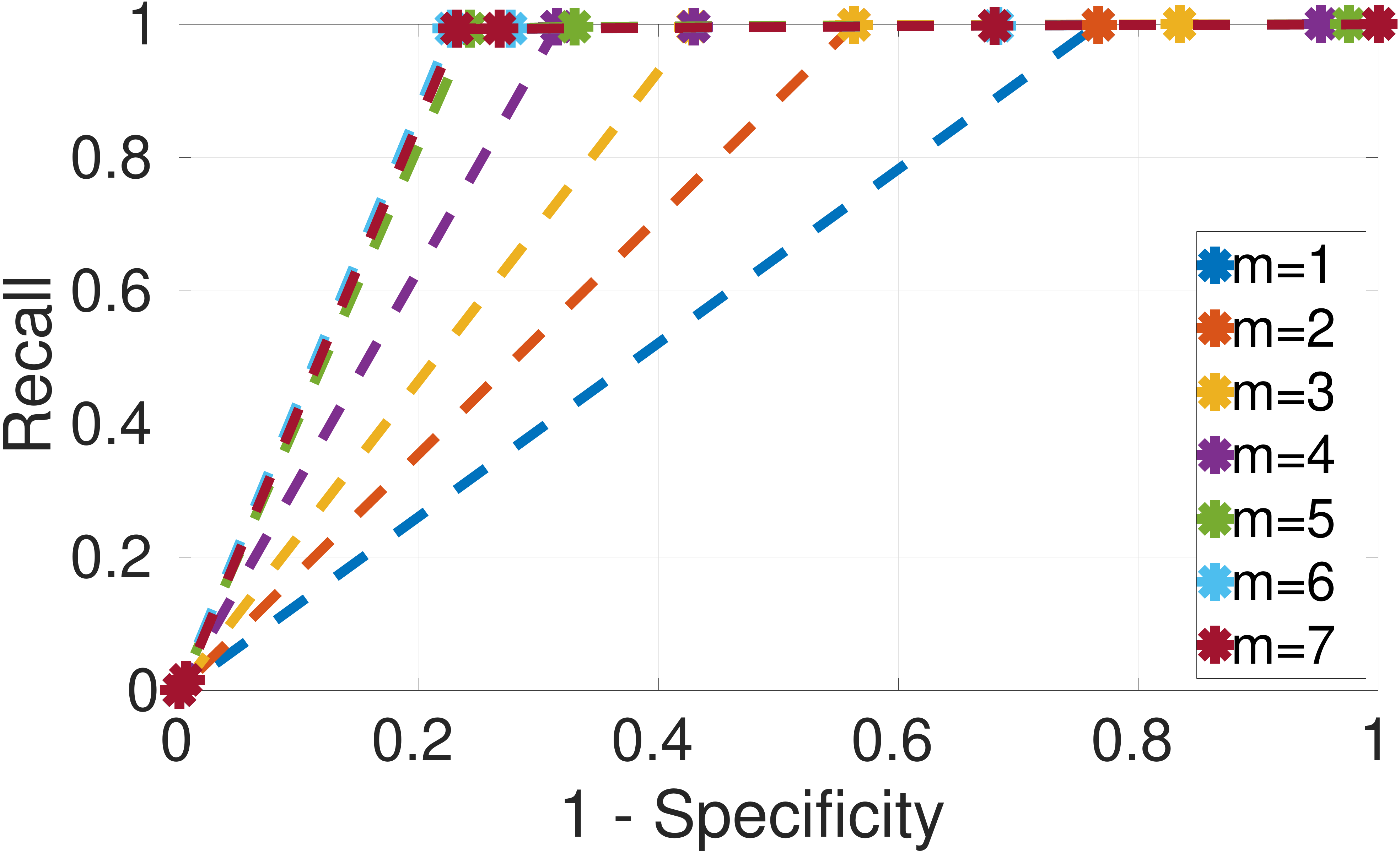}
	\caption{ROC curves for the variable-order model using the \pst\ for various values of the maximum order $m$. $\mathit{distance} \in [0.2,0.4]$.}
	\label{fig:exp-feedzai-classification-roc1}
\end{subfigure}
\begin{subfigure}[t]{0.48\textwidth}
	\includegraphics[width=0.99\textwidth]{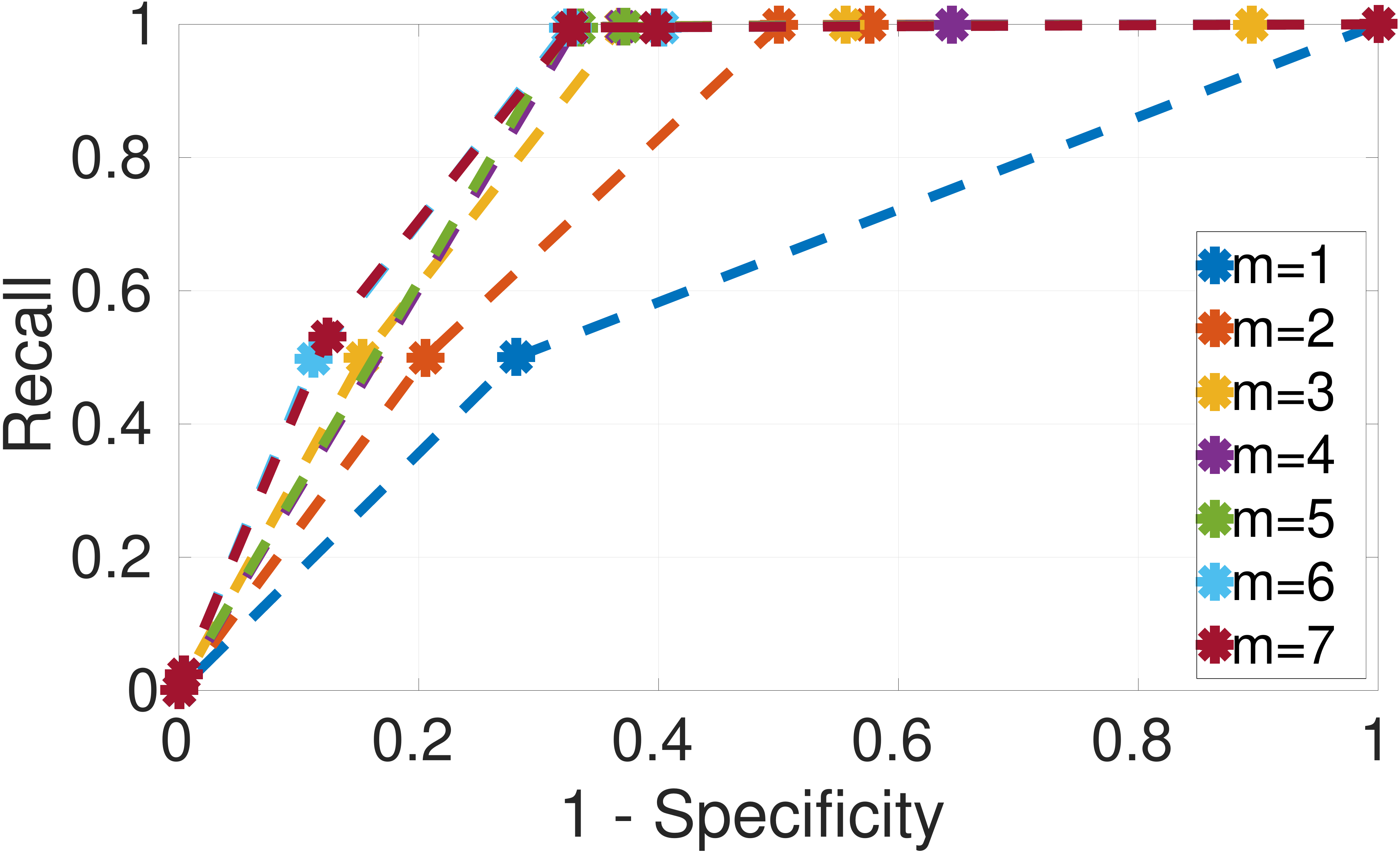}
	\caption{ROC curves for the variable-order model using the \pst\ for various values of the maximum order. $\mathit{distance} \in [0.4,0.6]$.}
	\label{fig:exp-feedzai-classification-roc2}
\end{subfigure}\\
\begin{subfigure}[t]{0.65\textwidth}
	\includegraphics[width=0.99\textwidth]{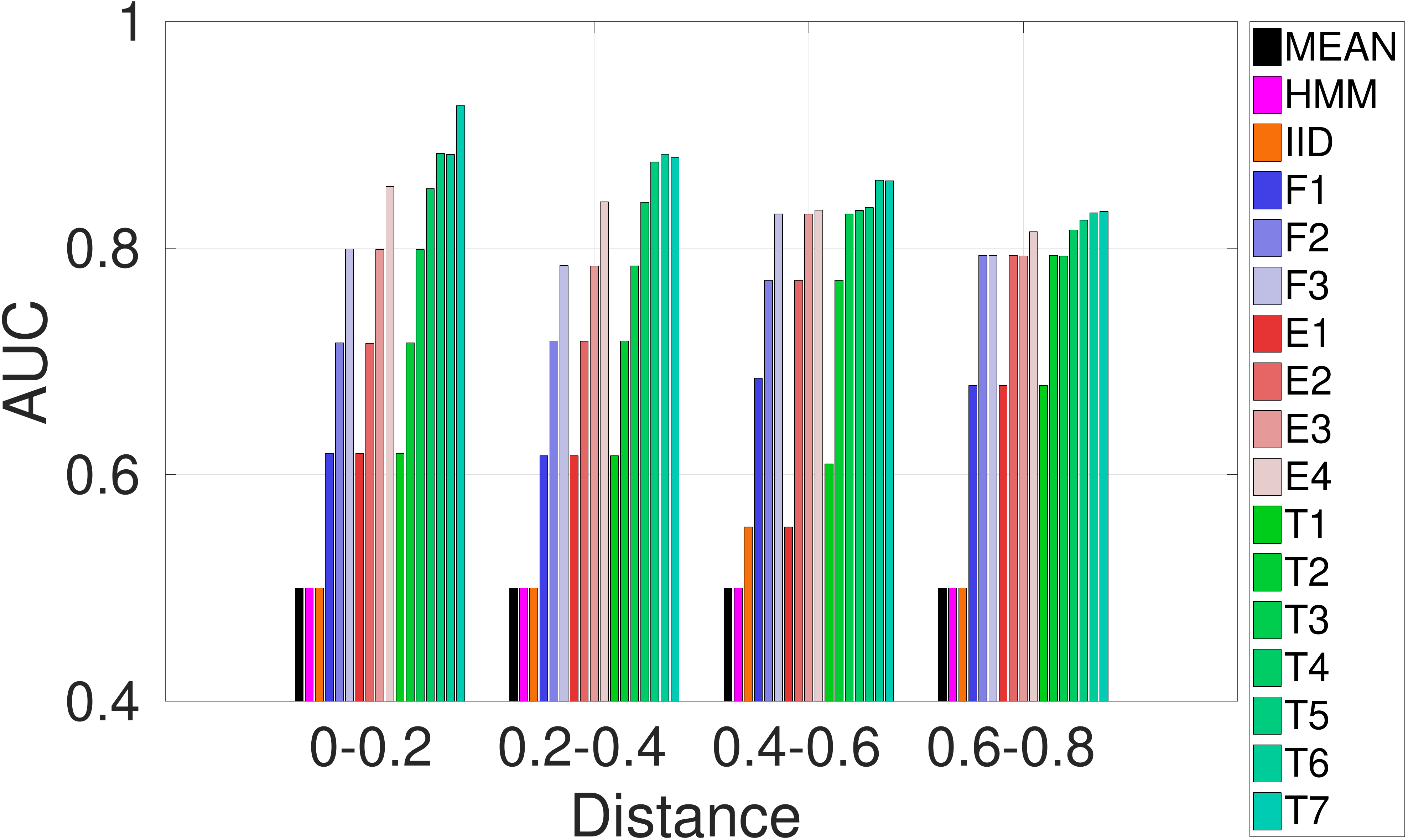}
	\caption{AUC for ROC curves for all models.}
	\label{fig:exp-feedzai-classification-aucroc}
\end{subfigure}
\caption{Results for CE forecasting from the domain of credit card fraud management. Fx stands for a Full-order Markov Model of order $x$. Ex stands for a Variable-order Markov Model of maximum order $x$ that uses a \psa\ and creates an embedding. Tx stands for a Variable-order Markov Model of maximum order $x$ that is constructed directly from a \pst. MEAN stands for the method of estimating the mean of ``waiting-times''. HMM stands for Hidden Markov Model. IID stands for the method assuming (in the worst case) that SDEs are i.i.d. Ex and Tx models are the ones proposed in this paper.}
\label{fig:exp-feedzai-classification-roc}
\end{figure}

Figure \ref{fig:exp-feedzai-sde-logloss} shows the average log-loss obtained for various models and orders $m$
and Figure \ref{fig:exp-feedzai-sde-states} shows the number of states for the full-order models or nodes for the variable-order models, 
which are prediction suffix trees.
The best result is achieved with a variable-order Markov model of maximum order 7.
The full-order Markov models are slightly better than their equivalent (same order) variable-order models. 
This is an expected behavior,
since the variable-order models are essentially approximations of the full-order ones.
We increase the order of the full-order models until $m=3$,
in which case the Markov chain has $\approx\ 750$ states.
We avoid increasing the order any further,
because Markov chains with more than 1000 states become significantly difficult to manage in terms of memory usage 
(and in terms of the computational cost of estimating the waiting-time distributions for the experiments of CE forecasting that follow).
Note that a Markov chain with $1000$ states would require a transition matrix with $1000^{2}$ entries.
On the contrary,
we can increase the maximum order of the variable-order model until we find the best one,
i.e., the order after which the average log-loss starts increasing again. 
The size of the prediction suffix tree can be allowed to increase to more than $1000$ nodes,
since we are not required to build a transition matrix.

\begin{figure*}[t]
\centering
\begin{subfigure}[t]{0.58\textwidth}
	\includegraphics[width=0.99\textwidth]{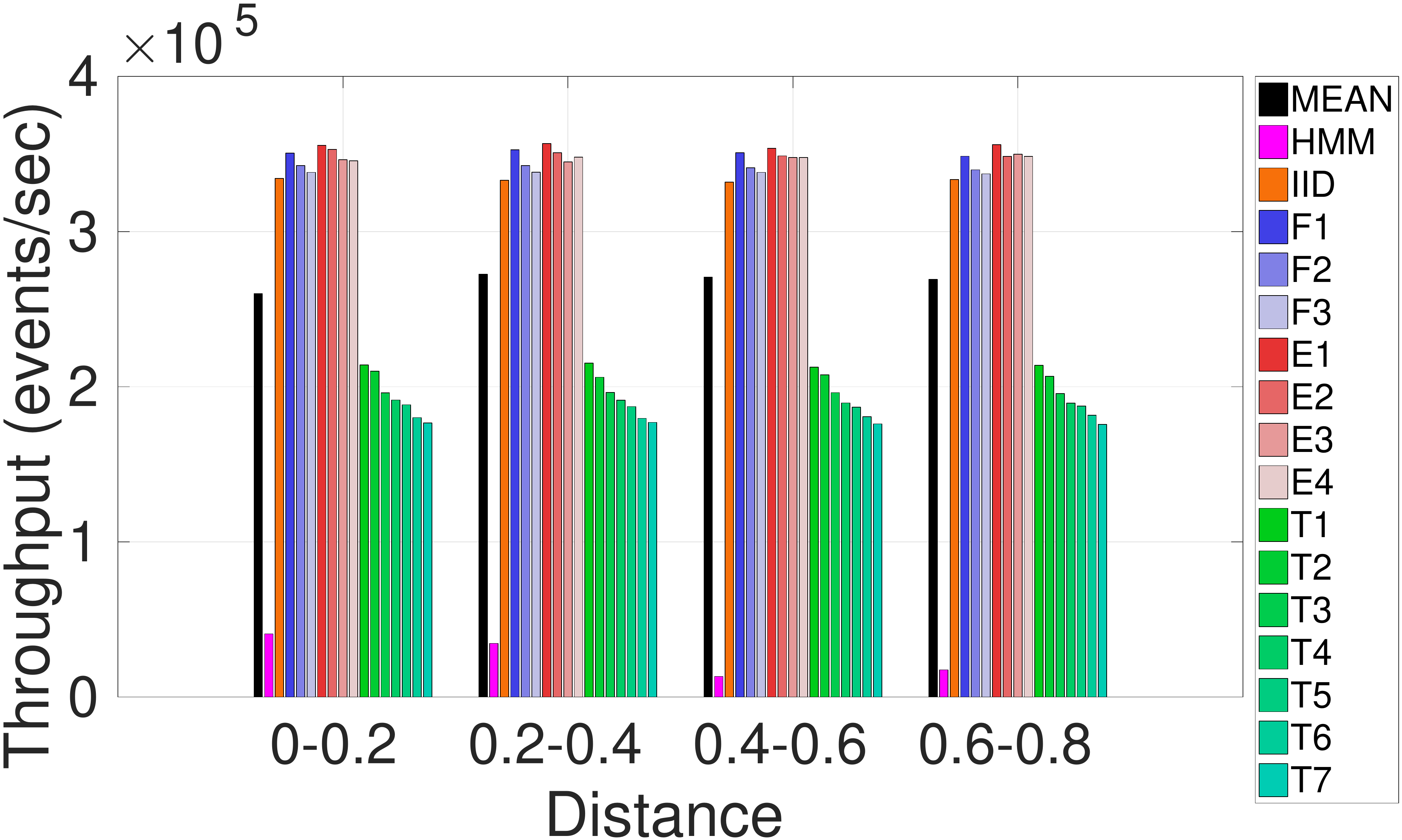}
	\caption{Throughput.}
	\label{fig:exp-feedzai-classification-throughput}
\end{subfigure} 
\begin{subfigure}[t]{0.45\textwidth}
	\includegraphics[width=0.99\textwidth]{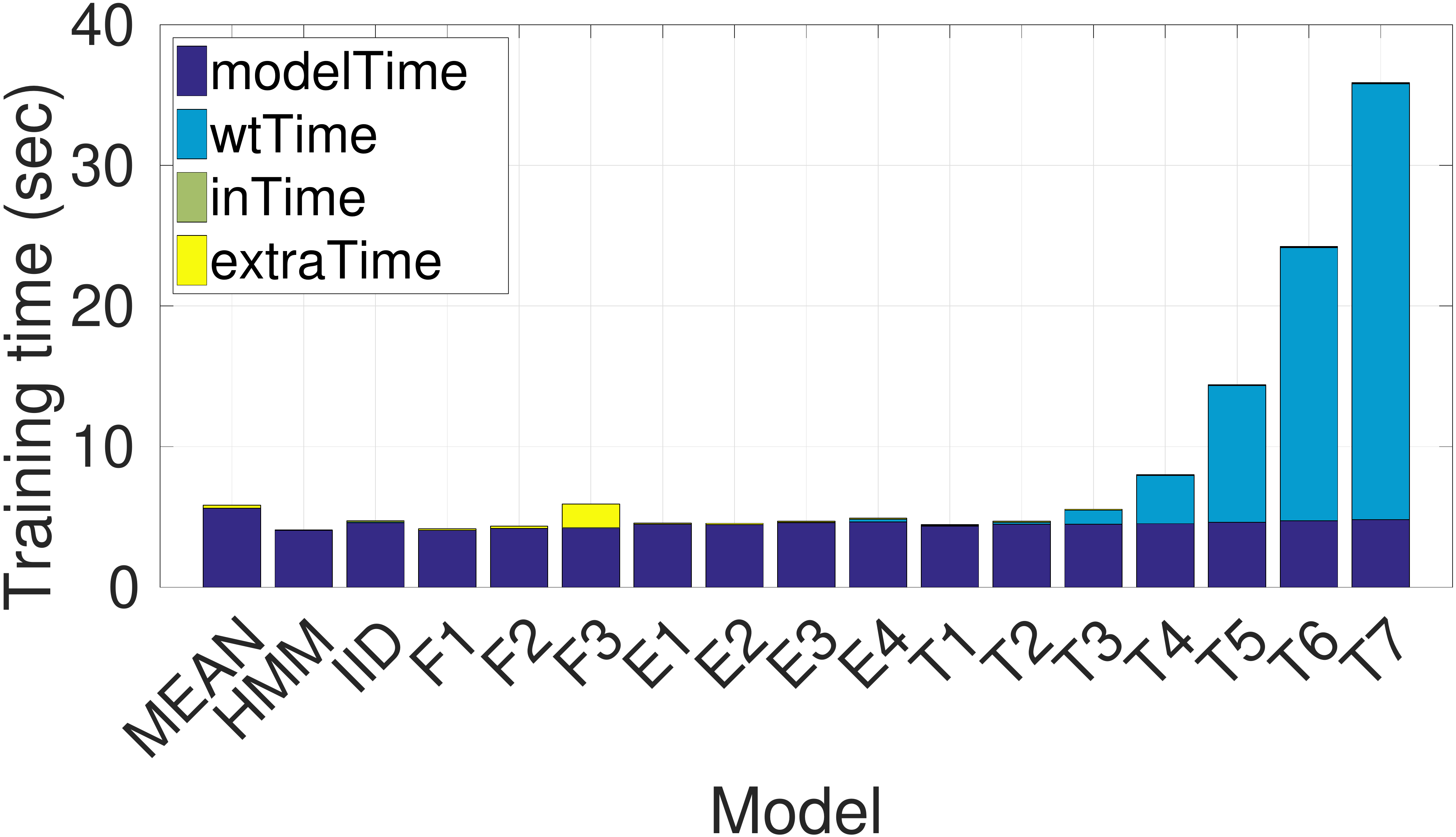}
	\caption{Training time.}
	\label{fig:exp-feedzai-classification-training}
\end{subfigure}
\begin{subfigure}[t]{0.45\textwidth}
	\includegraphics[width=0.99\textwidth]{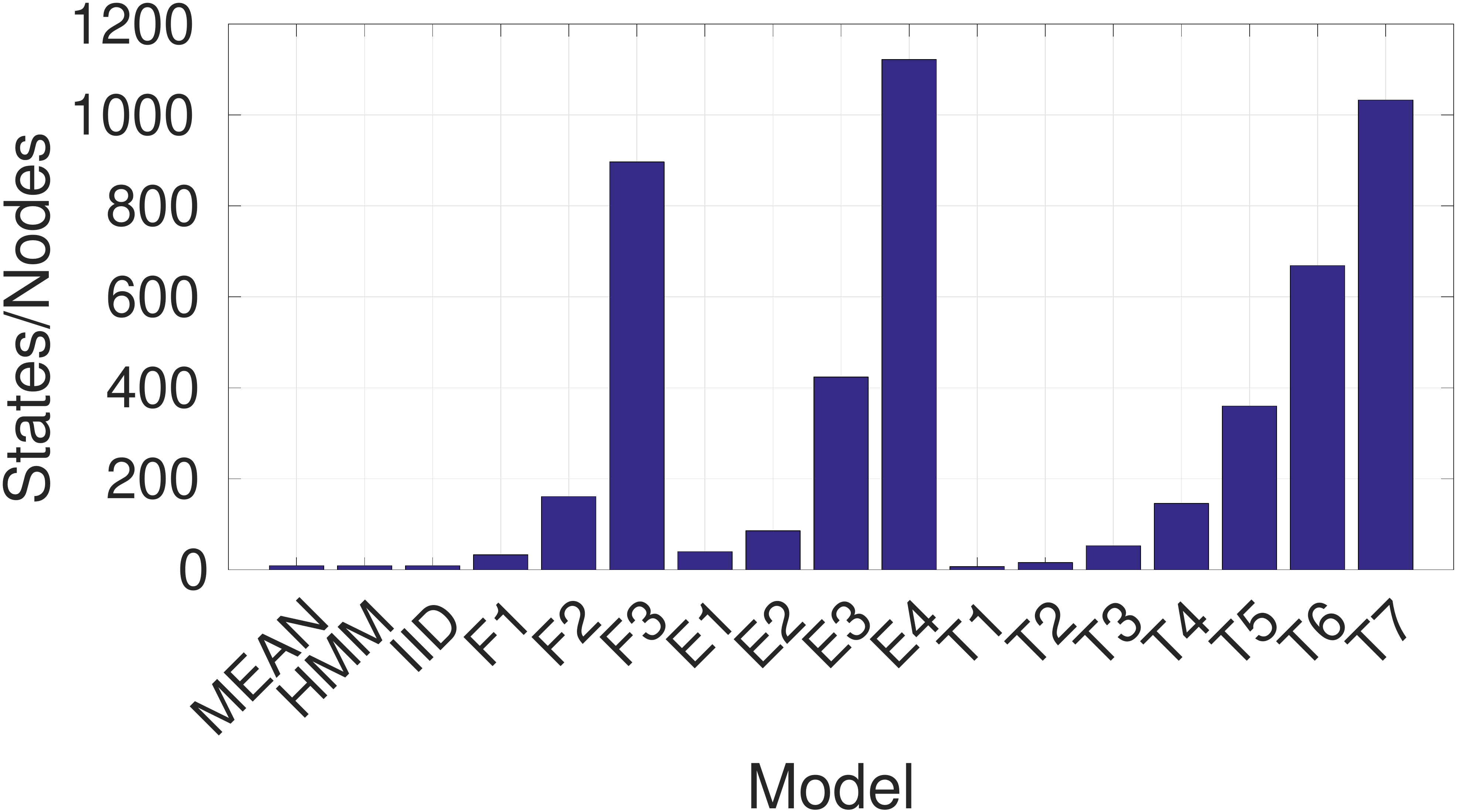}
	\caption{Number of states/nodes.}
	\label{fig:exp-feedzai-classification-states}
\end{subfigure}
\caption{Results for throughput, training time and number of automaton states/tree nodes for classification CE forecasting from the domain of credit card fraud management. modelTime = time to construct the model. wtTime = time to estimate the waiting-time distributions for all states. inTime = time to estimate the forecast of all states from their waiting-time distributions. extraTime = time to determinize an automaton (+ disambiguation time for full-order models).}
\label{fig:exp-feedzai-classification-performance}
\end{figure*}

We now move on to the classification experiments.
Figure \ref{fig:exp-feedzai-classification-roc} shows the ROC curves of the variable-order model that directly uses a \pst.
We show results for two different ``expected'' distance ranges:
$\mathit{distance} \in [0.2,0.4]$ in Figure \ref{fig:exp-feedzai-classification-roc1}
and $\mathit{distance} \in [0.4,0.6]$ in Figure \ref{fig:exp-feedzai-classification-roc2}.
The ideal operating point in the ROC is the top-left corner and thus, 
the closer to that point the curve is, the better. 
Thus, the first observation is that by increasing the maximum order we obtain better results.
Notice, however, that in Figure \ref{fig:exp-feedzai-classification-roc2},
where the distance is greater and the forecasting problem is harder,
increasing the order from 6 to 7 yields only a marginal improvement.
%This implies that, 
%the more the distance increases,
%the less important it is to increase the order.

Figure \ref{fig:exp-feedzai-classification-aucroc} displays ROC results for different distances and all models, 
in terms of the Area Under the ROC Curve (AUC), 
which is a measure of the models' classification accuracy.
The first observation is that the MEAN and HMM methods consistently underperform, 
compared to the Markov models.
Focusing on the Markov models,
as expected,
the task becomes more challenging and the ROC scores decrease,
as the distance increases.
It is also evident that higher orders lead to better results.
The advantage of increasing the order becomes less pronounced 
(or even non-existent)
as the distance increases.
The variable-order models that use an embedding are only able to go as far as $m=4$,
due to increasing memory requirements,
whereas the tree-based versions can go up to $m=7$ 
(and possibly even further, but we did not try to extend the order beyond this point).
Although the embedding (\psa) can indeed help achieve better scores than full-order models  by reaching higher orders, 
this is especially true for the tree-based models which bypass the embedding.
We can thus conclude that full-order models are doing well up to the order that they we can achieve with them. 
\psa\ models can reach roughly the same levels, 
as they are also practically restricted. 
The performance of \pst\ models is similar to that of the other models for the same order, 
but the fact that they can use higher orders allows them to finally obtain superior performance.

We show performance results in Figure \ref{fig:exp-feedzai-classification-performance},
in terms of computation and memory efficiency.
Figure \ref{fig:exp-feedzai-classification-throughput} displays throughput results.
We can observe the trade-off between the high forecasting accuracy of the tree-based high-order models and the performance penalty that these models incur.
The models based on \pst\ have a throughput figure that is almost half that of the full-order models and the embedding-based variable-order ones.
In order to emit a forecast, 
the tree-based models
need to traverse a tree after every new event arrives at the system,
as described in Section \ref{sec:no-mc}.
The automata-based full- and variable-order models, 
on the contrary,
only need to evaluate the minterms on the outgoing transitions of their current state and simply jump to the next state.
It would be possible to improve the throughput of the tree-based models, 
by using caching techniques,
so that we can reuse some of the previously estimated forecasts,
but we reserve such optimizations for future work.
By far the worst throughput, however,
is observed for the HMM models.
The reason is that the waiting-time distributions and forecasts are always estimated online,
as explained in Section \ref{sec:test-models}.

Figure \ref{fig:exp-feedzai-classification-training} shows training times as a stacked, bar plot.
For each model,
the total training time is broken down into 4 different components,
each corresponding to a different phase of the forecast building process.
\emph{modelTime} is the time required to actually construct the model from the training dataset.
\emph{wtTime} is the time required to estimate the waiting-time distributions, 
once the model has been constructed.
\emph{inTime} measures the time required to estimate the forecast of each waiting-time distribution.
Finally, \emph{extraTime} measures the time required to determinize the automaton of our initial pattern.
For the full-order Markov models,
it also includes the time required to convert the deterministic automaton into its equivalent, disambiguated automaton. 
We observe that the tree-based models exhibit significantly higher times than the rest,
for high orders.
The other models have similar training times,
almost always below $5$ seconds.
Thus, if we need high accuracy,
we again have to pay a price in terms of training time.
Even in the case of high-order tree-based models though,
the training time is almost half a minute for a training dataset composed of 750,000 transactions,
which allows us to be confident that training could be performed online.

Figure \ref{fig:exp-feedzai-classification-states} shows the memory footprint of the models in terms of the size of their basic data structures.
For automata-based methods, 
we show the number of states,
whereas for the tree-based methods we show the number of nodes.
We see that variable-order models, 
especially the tree-based ones,
are significantly more compact than the full-order ones, 
for the same order.
We also observe that the tree-based methods,
for the same order,
are much more compact (fewer nodes) than the ones based on the embedding (more states).
This allows us to increase the order up to $7$ with the tree-based approach,
but only up to $4$ with the embedding.

\subsection{Maritime Situational Awareness}
\label{sec:maritime}

The second dataset that we used in our experiments is a real--world dataset coming from the field of maritime monitoring.
It is composed of a set of trajectories from ships sailing at sea,
emitting AIS (Automatic Identification System) messages that relay information about their position, heading, speed, etc.,
as described in the running example of Section \ref{sec:example}.
These trajectories can be analyzed, 
using the techniques of Complex Event Recognition,
in order to detect interesting patterns in the behavior of vessels \cite{DBLP:journals/geoinformatica/PatroumpasAAVPT17}.
The dataset that we used is publicly available, contains AIS kinematic messages from vessels sailing in the Atlantic Ocean around the port of Brest, France, and spans a period from 1 October 2015 to 31 March 2016 \cite{ray_cyril_2018_1167595}.
%Due to the fact that AIS messages are noisy and noise can adversely affect our probabilistic models,
We used a derivative dataset that contains clean and compressed trajectories, 
consisting only of critical points \cite{patroumpas_2018_2563256}.
Critical points are the important points of a trajectory that indicate a significant change in the behavior of a vessel.
Using critical points, 
one can reconstruct quite accurately the original trajectory \cite{DBLP:journals/geoinformatica/PatroumpasAAVPT17}.
We further processed the dataset by interpolating between the critical points in order to produce trajectories where two consecutive points have a temporal distance of exactly 60 seconds. 
The reason for this pre-processing step is that AIS messages typically arrive at unspecified time intervals.
These intervals can exhibit a very wide variation,
depending on many factors (e.g., human operators may turn on/off the AIS equipment),
without any clear pattern that could be encoded by our probabilistic model.
Consequently, our system performs this interpolation as a first step.

\begin{figure}[t]
	\centering
	\includegraphics[width=0.8\textwidth]{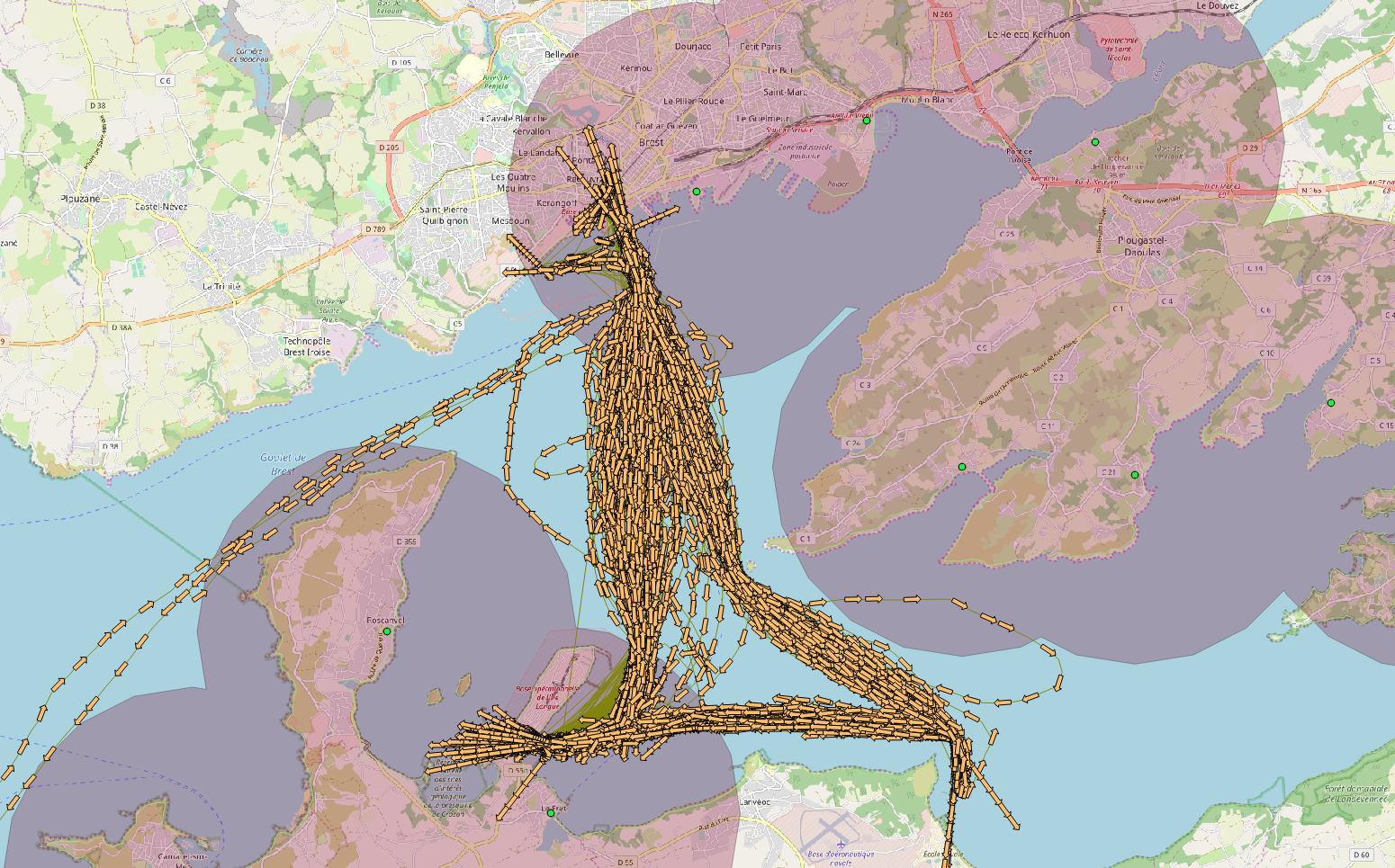}
	\caption{Trajectories of the vessel with the most matches for Pattern \eqref{exp:port} around the port of Brest. 
	Green points denote ports. Red, shaded circles show the areas covered by each port. They are centered around each port and have a radius of 5 km.}
	\label{fig:port-traffic}
\end{figure}

\begin{figure}[t]
\begin{subfigure}[t]{0.45\textwidth}
	\includegraphics[width=0.95\textwidth]{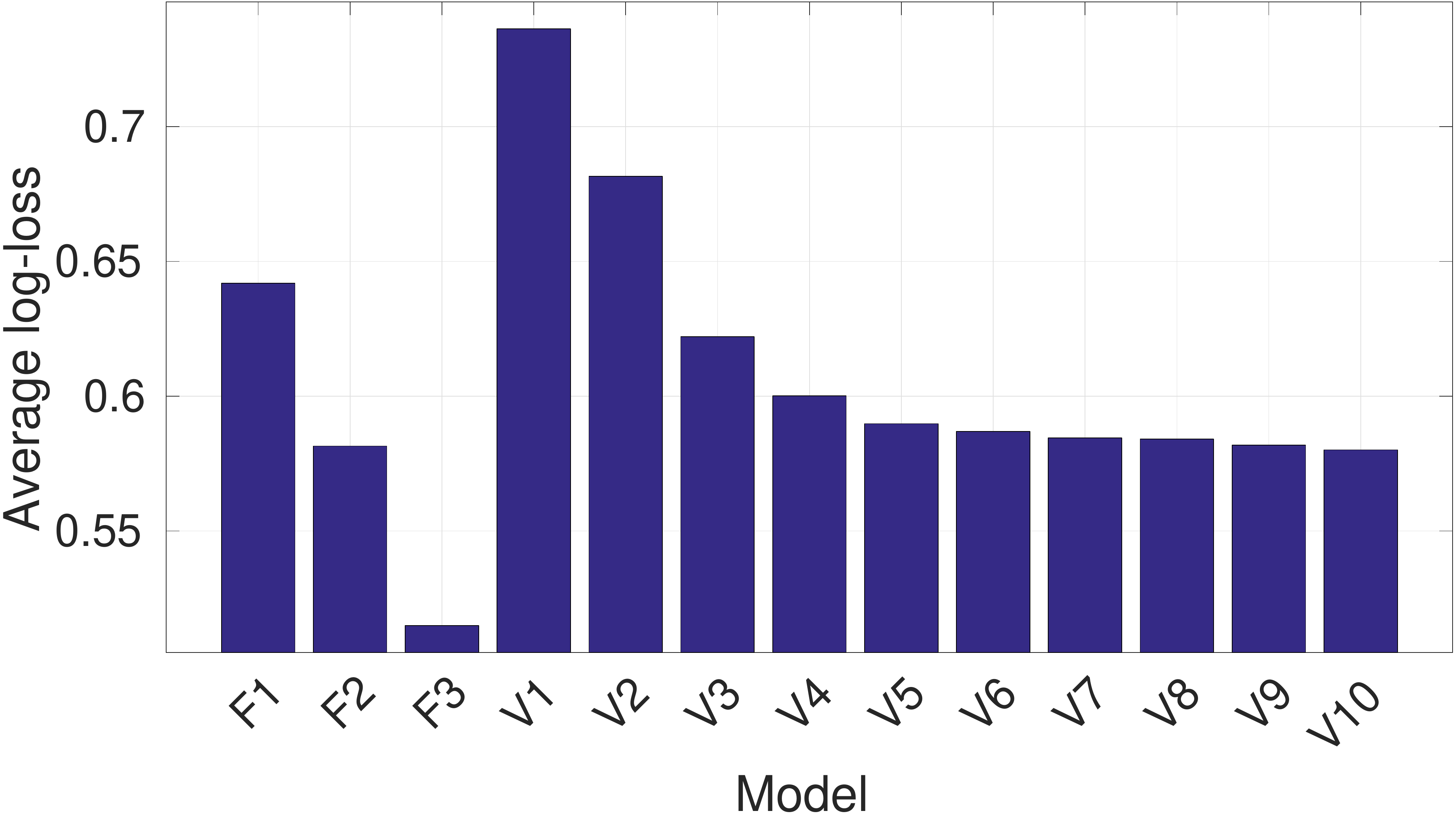}
	\caption{Average log-loss.}
	\label{fig:exp-maritime-sde-logloss}
\end{subfigure}
\begin{subfigure}[t]{0.45\textwidth}
	\includegraphics[width=0.95\textwidth]{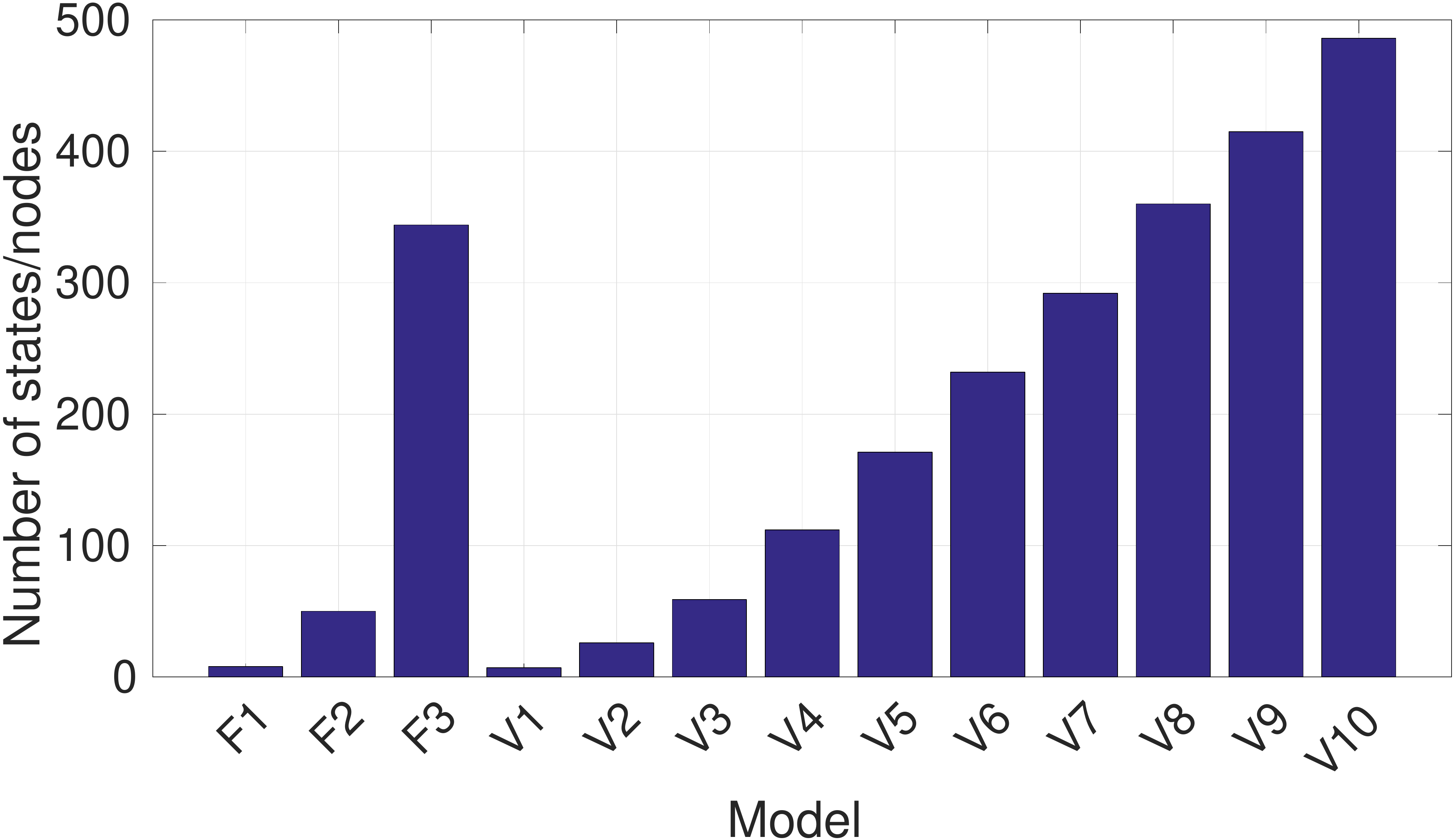}
	\caption{Number of states/nodes.}
	\label{fig:exp-maritime-sde-states}
\end{subfigure}
\caption{Results for SDE forecasting from the domain of maritime monitoring. Fx stands for a Full-order Markov Model of order $x$. Vx stands for a Variable-order Markov Model (a prediction suffix tree) of maximum order $x$.}
\label{fig:exp-maritime-sde}
\end{figure}

The pattern that we used  in the experiments is a movement pattern in which a vessel approaches the main port of Brest.
The goal is to forecast when a vessel will enter the port.
This way, 
port traffic management may be optimized, 
in order to reduce the carbon emissions of vessels waiting to enter the port.
The symbolic regular expression for this pattern is the following:
\begin{equation}
\label{exp:port}
\begin{aligned}
R := \ & \ (\neg \mathit{InsidePort(Brest)})^{*} \cdot (\neg \mathit{InsidePort(Brest)}) \cdot  \\
\ &\ (\neg \mathit{InsidePort(Brest)}) \cdot (\mathit{InsidePort(Brest)})
\end{aligned}
\end{equation}
The intention is to detect the entrance of a vessel in the port of Brest.
The predicate $\mathit{InsidePort(Brest)}$ evaluates to \true\ whenever a vessel has a distance of less than 5 km from the port of Brest (see Figure \ref{fig:port-traffic}).
In fact,
the predicate is generic and takes as arguments the longitude and latitude of any point,
but we show here a simplified version, 
using the port of Brest, 
for reasons of readability.
The pattern defines the entrance to the port as a sequence of at least 3 consecutive events, 
only the last of which satisfies the $\mathit{InsidePort(Brest)}$ predicate.
In order to detect an entrance,
we must first ensure that the previous event(s) indicated that the vessel was outside the port.
For this reason,
we require that,
before the last event,
there must have occurred at least 2 events where the vessel was outside the port.
We require 2 or more such events to have occurred (instead of just one), 
in order to avoid detecting ``noisy'' entrances.

\begin{figure}[t]
\centering
\begin{subfigure}[t]{0.48\textwidth}
	\includegraphics[width=0.99\textwidth]{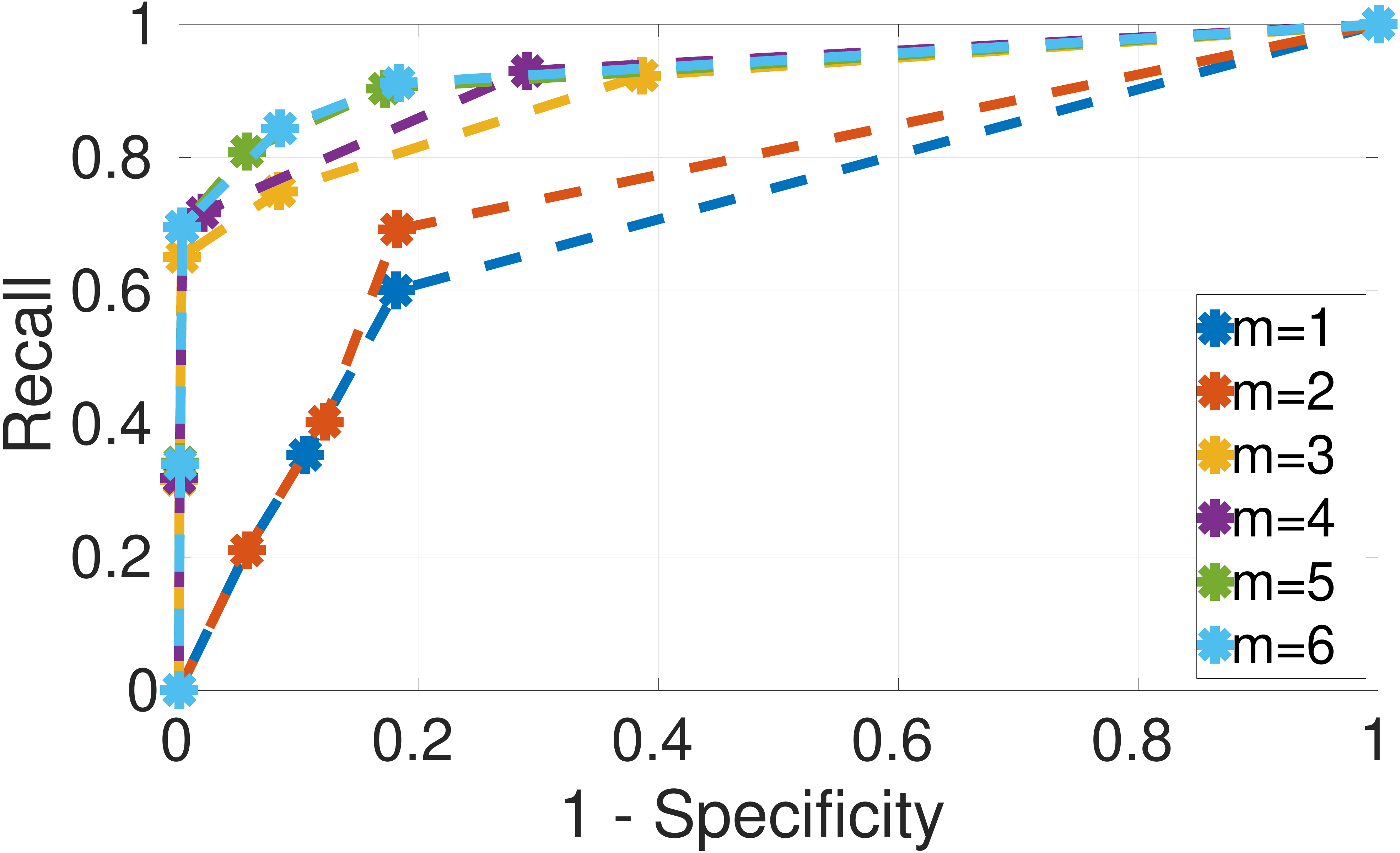}
	\caption{ROC curves for the variable-order model using the \pst\ for various values of the maximum order. $\mathit{distance} \in [0.0,0.5]$.}
	\label{fig:exp-maritime-classification-roc1}
\end{subfigure}
\begin{subfigure}[t]{0.48\textwidth}
	\includegraphics[width=0.99\textwidth]{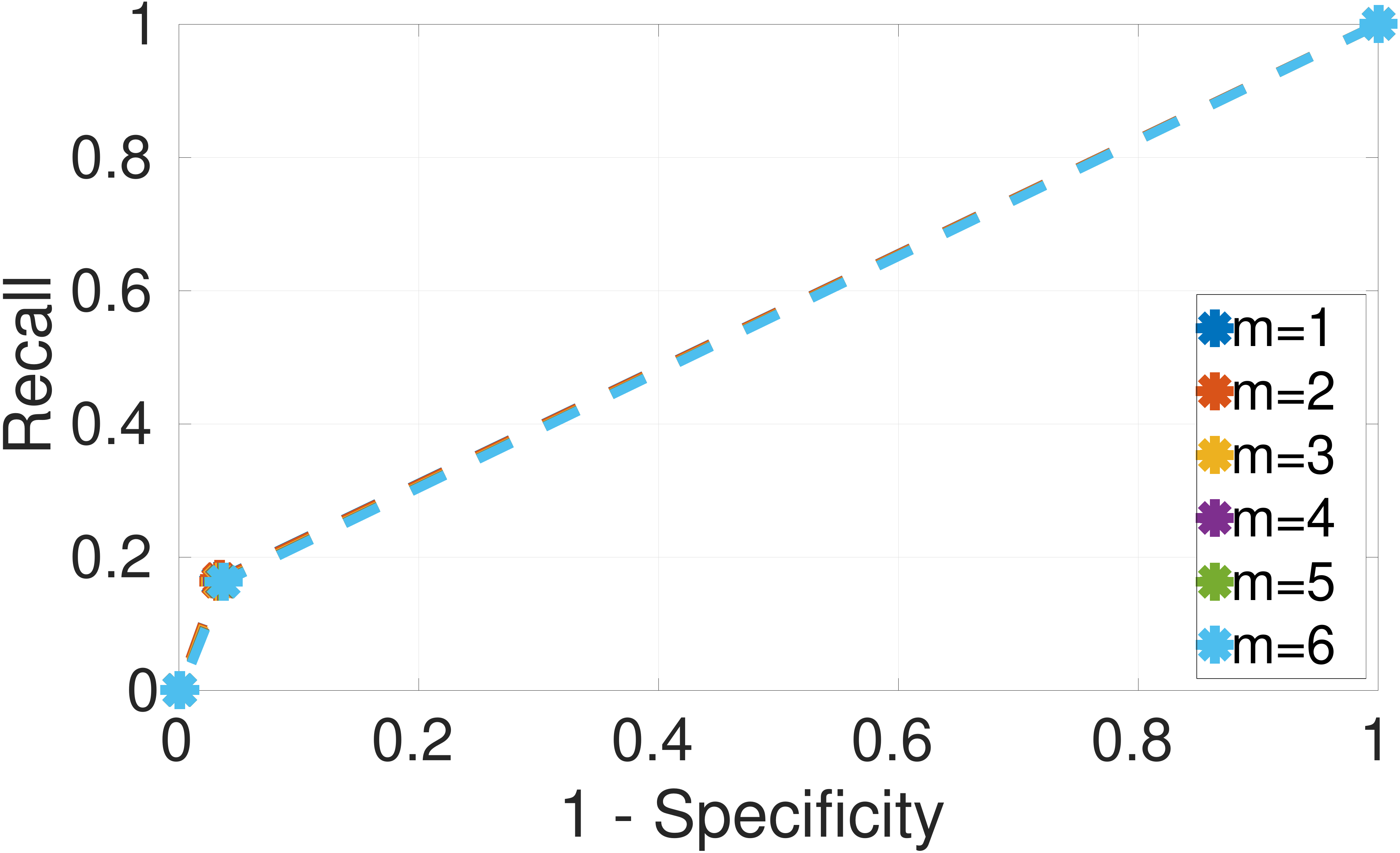}
	\caption{ROC curves for the variable-order model using the \pst\ for various values of the maximum order. $\mathit{distance} \in [0.5,1.0]$.}
	\label{fig:exp-maritime-classification-roc2}
\end{subfigure}\\
\begin{subfigure}[t]{0.65\textwidth}
	\includegraphics[width=0.99\textwidth]{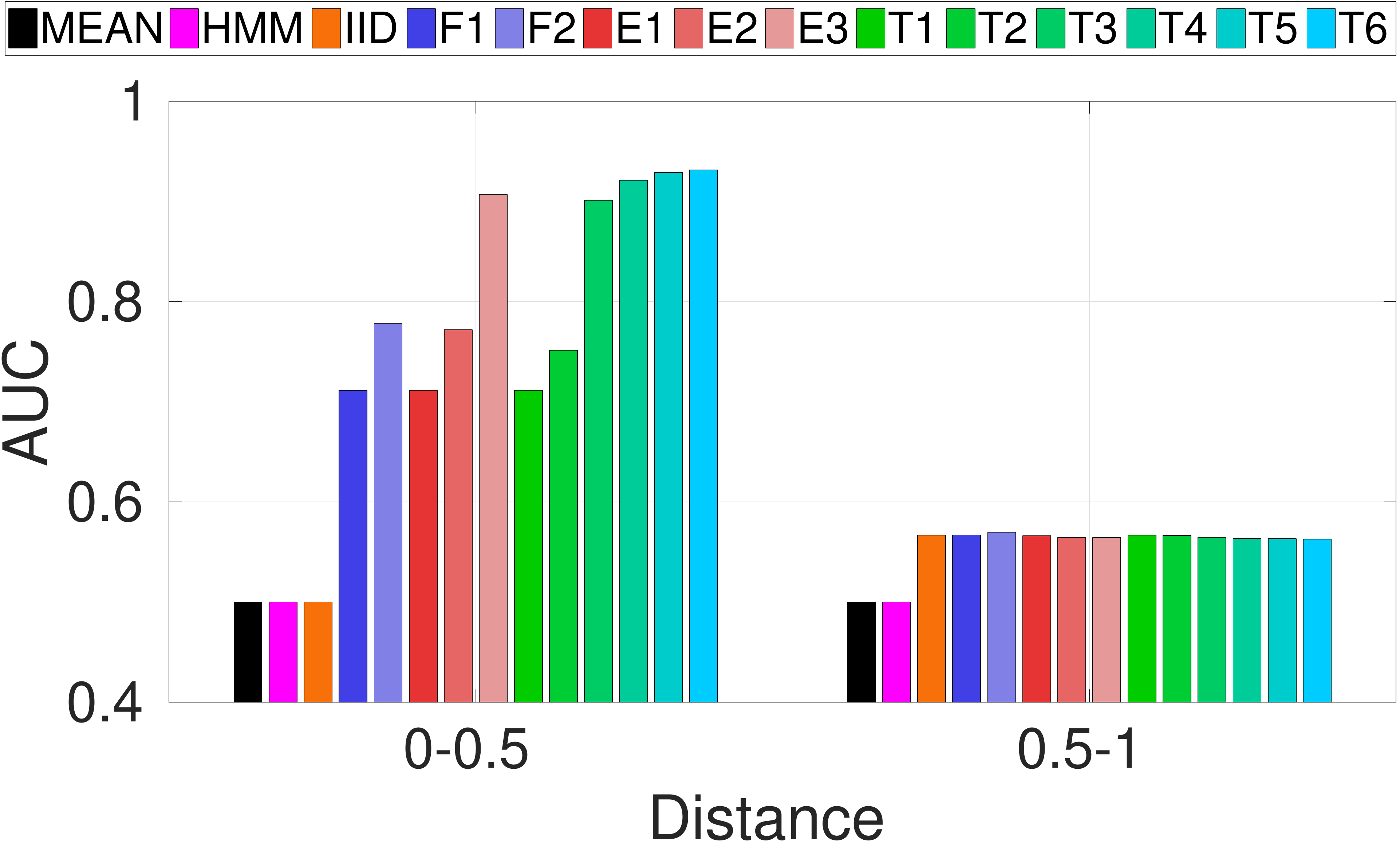}
	\caption{AUC for ROC curves for all models.}
	\label{fig:exp-maritime-classification-aucroc}
\end{subfigure}
\caption{Results for CE forecasting in the domain of maritime situational awareness. Fx stands for a Full-order Markov Model of order $x$. Vx stands for a Variable-order Markov Model of maximum order $x$ using an embedding. Tx stands for a Variable-order Markov Model of maximum order $x$ using a prediction suffix tree. MEAN stands for the method of estimating the mean of ``waiting-times''. HMM stands for Hidden Markov Model. Ex and Tx models are the ones proposed in this paper.}
\label{fig:exp-maritime-classification-roc}
\end{figure}

In addition to the $\mathit{InsidePort(Brest)}$ predicate, 
we included 5 extra ones providing information about the distance of a vessel from a port when it is outside the port.
Each of these predicates evaluates to \true\ when a vessel lies within a specified range of distances from the port.
The first returns \true\ when a vessel has a distance between 5 and 6 km from the port, 
the second when the distance is between 6 and 7 km and the other three extend similarly 1 km until 10 km.
We investigated the sensitivity of our models to the presence of various extra predicates in the recognition pattern. 

For all experimental results that follow,
we always present average values over 4 folds of cross-validation.
We start by analyzing the trajectories of a single vessel and then move to multiple, selected vessels.
There are two issues that we tried to address by separating our experiments into single-vessel and multiple-vessel ones.
First, we wanted to have enough data for training.
For this reason, 
we only retained vessels for which we can detect a significant number of matches for Pattern \eqref{exp:port}.
Second, our system can work in two modes:
a) it can build a separate model for each monitored object and use this collection of models for personalized forecasting;
b) it can build a global model out of all the monitored objects.
We thus wanted to examine whether building a global model from multiple vessels could produce equally good results, 
as these obtained for a single vessel with sufficient training data.

We first used Pattern \eqref{exp:port} to perform recognition on the whole dataset in order to find the number of matches detected for each vessel.
The vessel with the most matches was then isolated and we retained only the events emitted from this vessel.
In total, we detected 368 matches for this vessel and the number of SDEs corresponding to it is $\approx$ 30.000. 
Figure \ref{fig:port-traffic} shows the isolated trajectories for this vessel,
seemingly following a standard route between various ports around the Brest area.

\begin{figure}[t]
\begin{subfigure}[t]{0.45\textwidth}
	\includegraphics[width=0.95\textwidth]{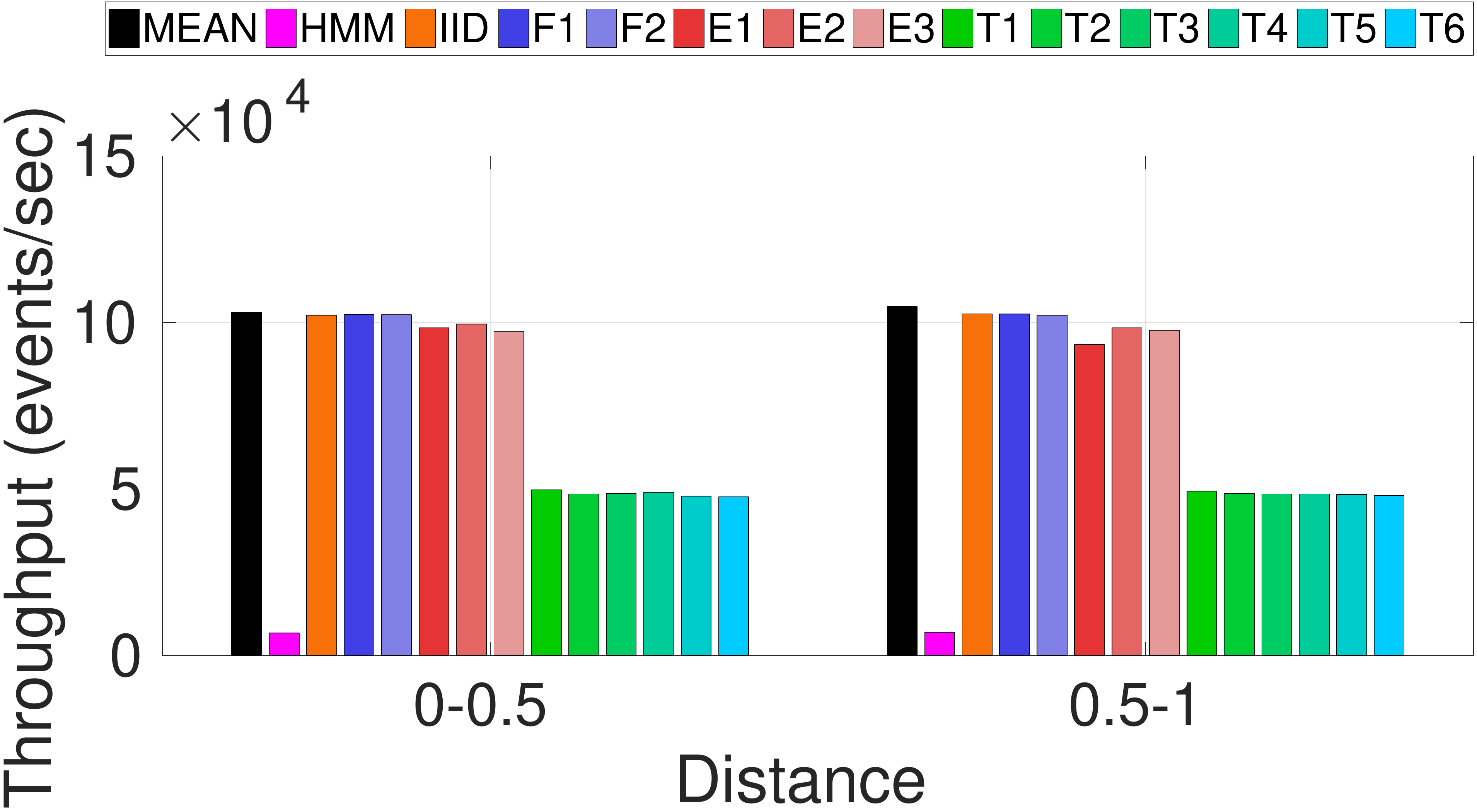}
	\caption{Throughput.}
	\label{fig:exp-maritime-classification-throughput}
\end{subfigure}
\begin{subfigure}[t]{0.45\textwidth}
	\includegraphics[width=0.95\textwidth]{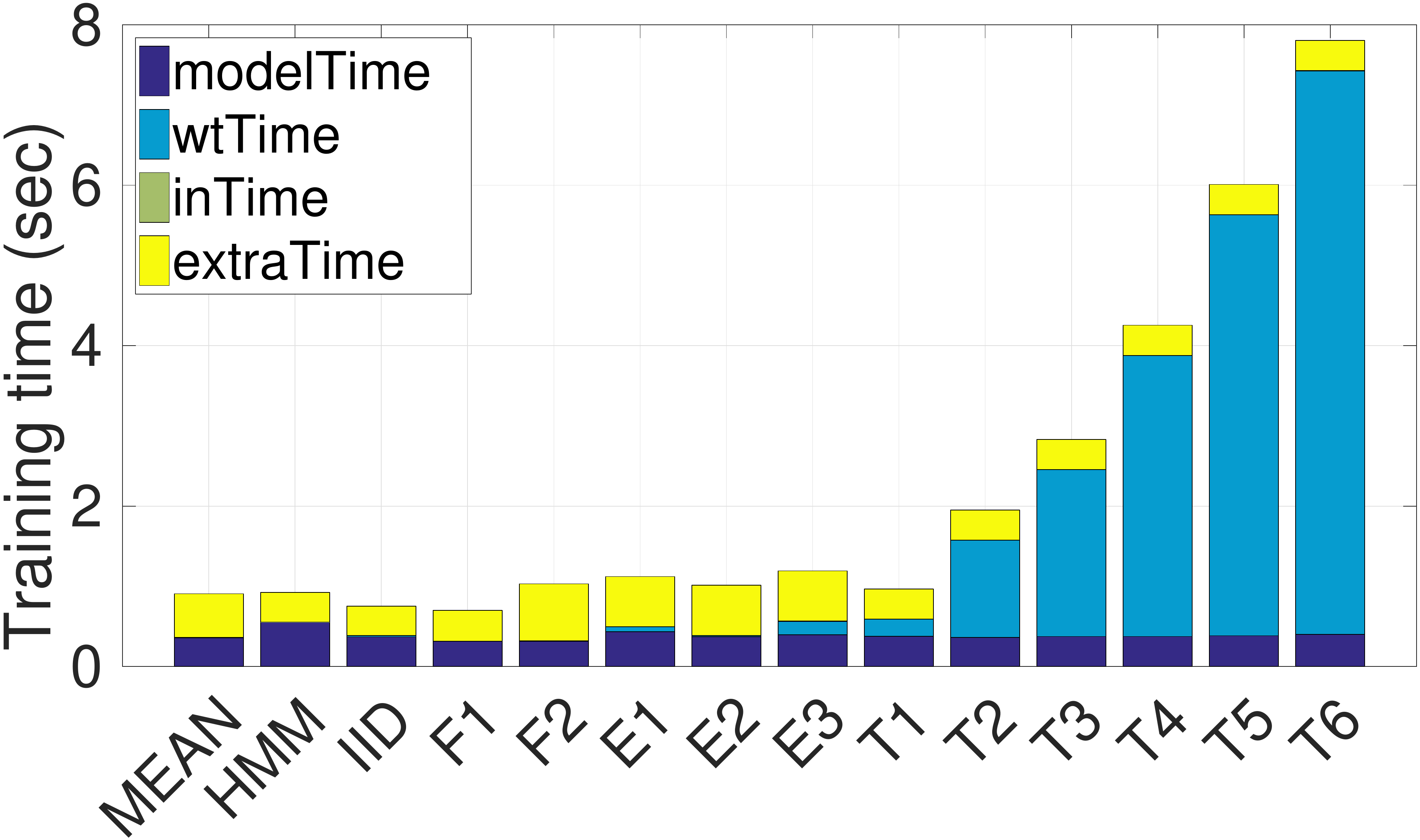}
	\caption{Training time.}
	\label{fig:exp-maritime-classification-training}
\end{subfigure}
\caption{Throughput and training time results for classification CE forecasting for maritime situational awareness. modelTime = time to construct the model. wtTime = time to estimate the waiting-time distributions for all states. inTime = time to estimate the forecast interval of all states from their waiting-time distributions. extraTime = time to determinize an automaton (+ disambiguation time for full-order models).}
\label{fig:exp-maritime-classification-performance}
\end{figure}

Figure \ref{fig:exp-maritime-sde} shows results for SDE forecasting.
The best average log-loss is achieved with a full-order Markov model, with $m=3$, and is $\approx 0.51$.
For the best hyper-parameter values out of those that we tested for the variable-order model,
with $m=10$,
we can achieve an average log-loss that is $\approx 0.57$.
Contrary to the case of credit card data,
increasing the order of the variable-order model does not allow us to achieve a better log-loss score than the best one achieved with a full-oder model.
However, as we will show,
this does not imply that the same is true for CE forecasting.

Using the vessel of Figure \ref{fig:port-traffic}, 
we obtained the results shown in Figures \ref{fig:exp-maritime-classification-roc} and \ref{fig:exp-maritime-classification-performance}.
Since the original \dsfa\ is smaller in this case 
(one start and one final state plus two intermediate states),
we have fewer distance ranges 
(e.g., there no states in the range $[0.4,0.6]$).
Thus, we use only two distance ranges: $[0,0.5]$ and $[0.5,1]$.
We observe the importance of being able to increase the order of our models for distances smaller than $50\%$. 
For distances greater than $50\%$,
the area under curve is $\approx$ 0.5 for all models.
This implies that they cannot effectively differentiate between positives and negatives.
Their forecasts are either all positive,
where we have $\mathit{Recall}=100\%$ and $\mathit{Specificity}=0\%$,
or all negative,
where we have $\mathit{Recall}=0\%$ and $\mathit{Specificity}=100\%$
(see Figure \ref{fig:exp-maritime-classification-roc2}).
Notice that the full-order Markov models can now only go up to $m=2$,
since the existence of multiple extra predicates makes it prohibitive to increase the order any further. 
Achieving higher accuracy with higher-order models comes at a computational cost, 
as shown in Figure \ref{fig:exp-maritime-classification-performance}. 
The results are similar to those in the credit card experiments.
The training time for variable-order models tends to increase as we increase the order,
but is always less than 8 seconds.
The effect on throughput is again significant for the tree-based variable-order models.
Throughput figures are also lower here compared to the credit card fraud experiments,
since the predicates that we need to evaluate for every new input event
(like $\mathit{InsidePort(Brest)}$) involve more complex calculations
(the $\mathit{amountDiff}>0$ predicate is a simple comparison).

As a next step,
we wanted to investigate the effect of the optimization technique mentioned at the end of Section \ref{sec:no-mc} on the accuracy and performance of our system.
The optimization prunes future paths whose probability is below a given cutoff threshold.
We re-run the experiments described above for distances between $0\%$ and $50\%$ for various values of the cutoff threshold,
starting from $0.0001$ up to $0.2$.  
Figure \ref{fig:exp-maritime-classification-cutoff} shows the relevant results.
We observe that the accuracy is affected only for high values of the cutoff threshold, above $0.1$ 
(Figure \ref{fig:exp-maritime-classification-cutoff-auc}).
We can also see that throughput remains essentially unaffected (Figure \ref{fig:exp-maritime-classification-cutoff-throughput}).
This result is expected,
since the cutoff threshold is only used in the estimation of the waiting-time distributions. 
Throughput reflects the online performance of our system,
after the waiting-time distributions have been estimated, 
and is thus not affected by the choice of the cutoff threshold.
However, 
the training time is indeed significantly affected
(Figure \ref{fig:exp-maritime-classification-cutoff-training}).
As expected,
the result of increasing the value of the cutoff threshold is a reduction of the training time,
as fewer paths are retained.
Beyond a certain point though,
further increases of the cutoff threshold affect the accuracy of the system.
Therefore, the cutoff threshold should be below $0.01$ so as not to compromise the accuracy of our forecasts.

\begin{figure}[t]
\centering
\begin{subfigure}[t]{0.48\textwidth}
	\includegraphics[width=0.95\textwidth]{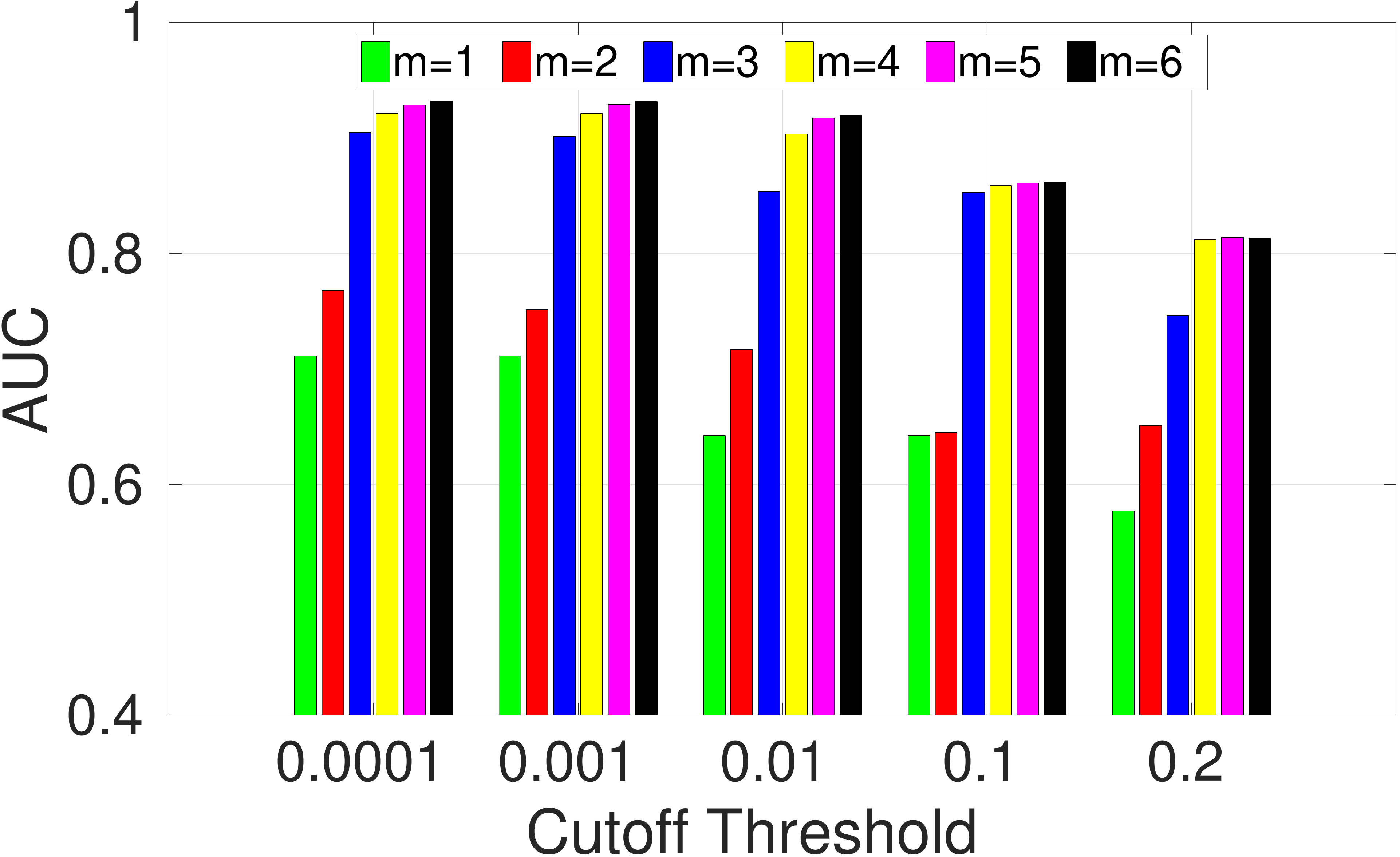}
	\caption{AUC-ROC.}
	\label{fig:exp-maritime-classification-cutoff-auc}
\end{subfigure}\\
\begin{subfigure}[t]{0.48\textwidth}
	\includegraphics[width=0.95\textwidth]{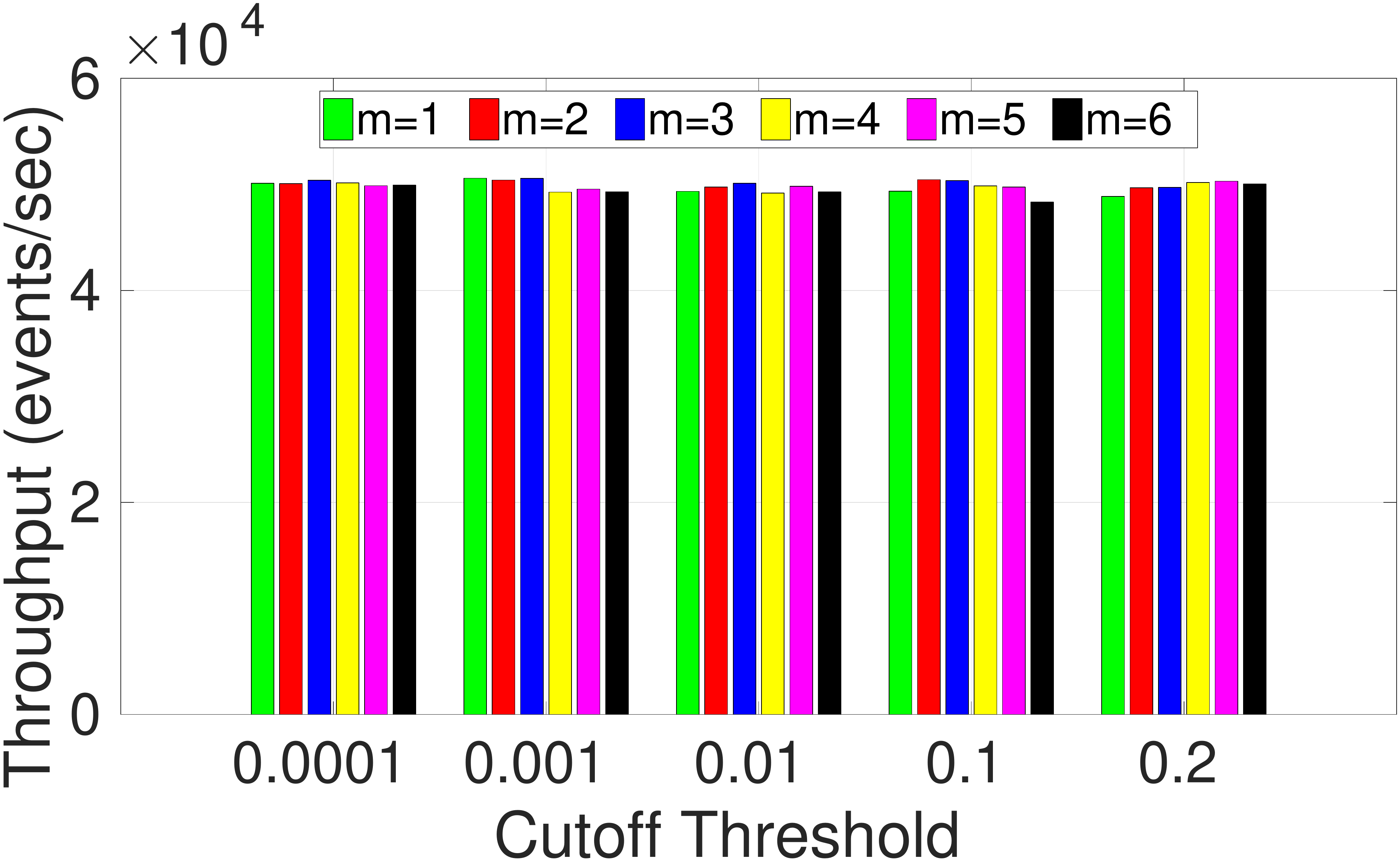}
	\caption{Throughput.}
	\label{fig:exp-maritime-classification-cutoff-throughput}
\end{subfigure}
\begin{subfigure}[t]{0.48\textwidth}
	\includegraphics[width=0.95\textwidth]{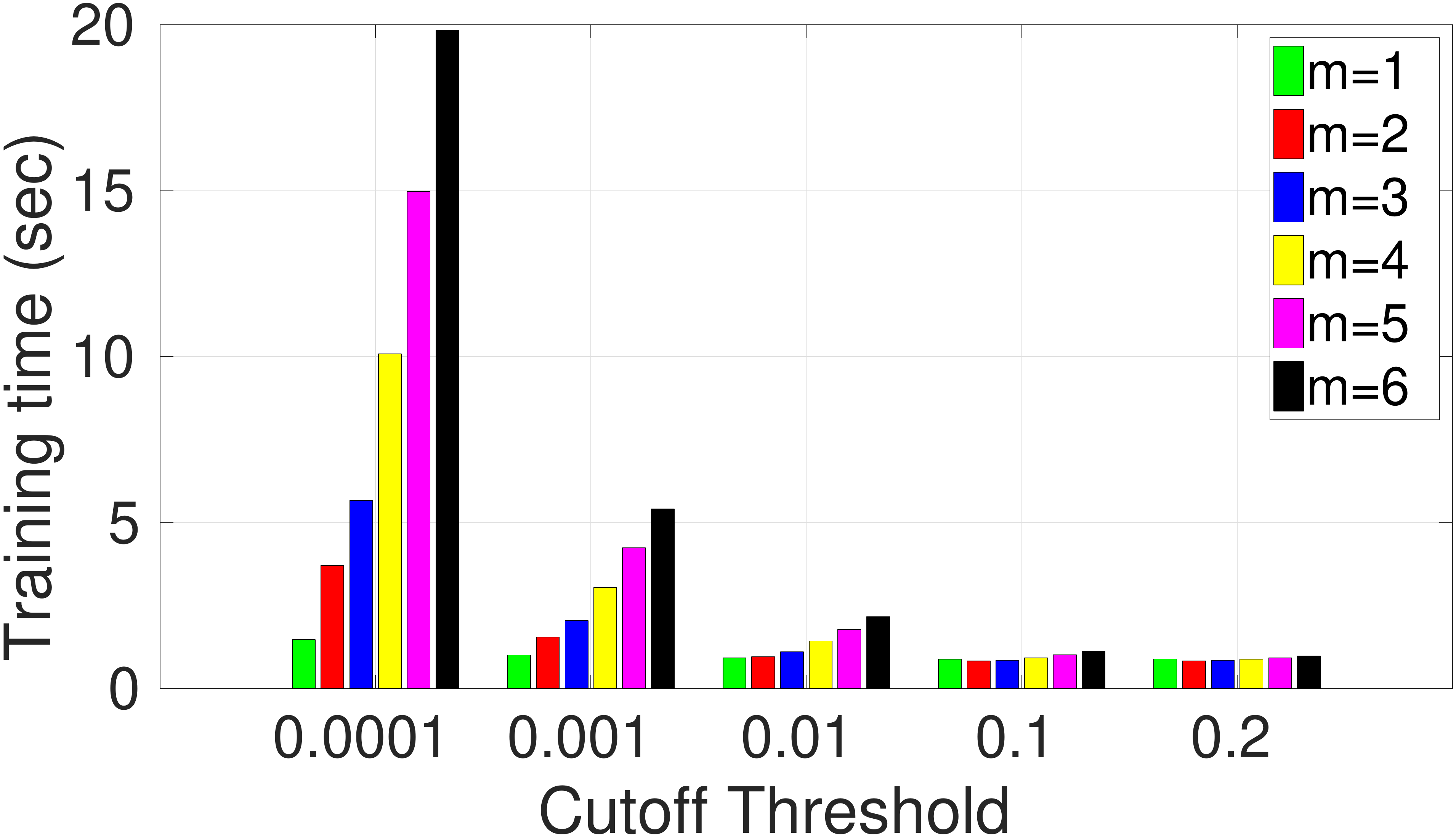}
	\caption{Training time.}
	\label{fig:exp-maritime-classification-cutoff-training}
\end{subfigure}
\caption{Effect of cutoff threshold on accuracy, throughput and training time.}
\label{fig:exp-maritime-classification-cutoff}
\end{figure}

We additionally investigated the sensitivity of our approach to the extra predicates that are used.
Figure \ref{fig:exp-maritime-classification-aucroc-distance3} shows results when the extra 5 predicates referring to the distance of a vessel from the port are modified so that each ``ring'' around the port has a width of 3 km, 
instead of 1 km. 
With these extra features,
increasing the order does indeed make an important difference,
but only when the order becomes high (5 and beyond),
which is possible only by using the tree-based variable-order models.
Moreover, the best score achieved is still lower than the best score achieved with ``rings'' of 1 km (Figure \ref{fig:exp-maritime-classification-aucroc}).
``Rings'' of 1 km are thus more appropriate as predictive features.
We also wanted to investigate whether other information about the vessel's movement could affect forecasting. 
In particular, 
we kept the 1 km ``rings'' and we also added a predicate to check whether the heading of a vessel points towards the port.
More precisely, we used the vessel's speed and heading to project its location 1 hour ahead in the future and then checked whether this projected segment and the circle around the port intersect.
The intuition for adding this feature is that the knowledge of whether a vessel is heading towards the port has predictive value.
As shown in Figure \ref{fig:exp-maritime-classification-aucroc-distanceheading}, 
this additional information led to higher scores even with low full-order orders
(compare to Figure \ref{fig:exp-maritime-classification-aucroc}).
The heading feature is indeed important.
On the other hand, the high-order models that did not use this feature seemed to be able to compensate for the missing information about the vessel's heading by going into higher orders.
A plateau is thus reached, 
which cannot be ``broken'' with the heading information. 
Notice that here we can only go up to $m=1$ for full-order models.
The inclusion of the heading predicate leads to an increase of the number of states beyond $1200$.

\begin{figure}[t]
\centering
\begin{subfigure}[t]{0.46\textwidth}
	\includegraphics[width=0.9\textwidth]{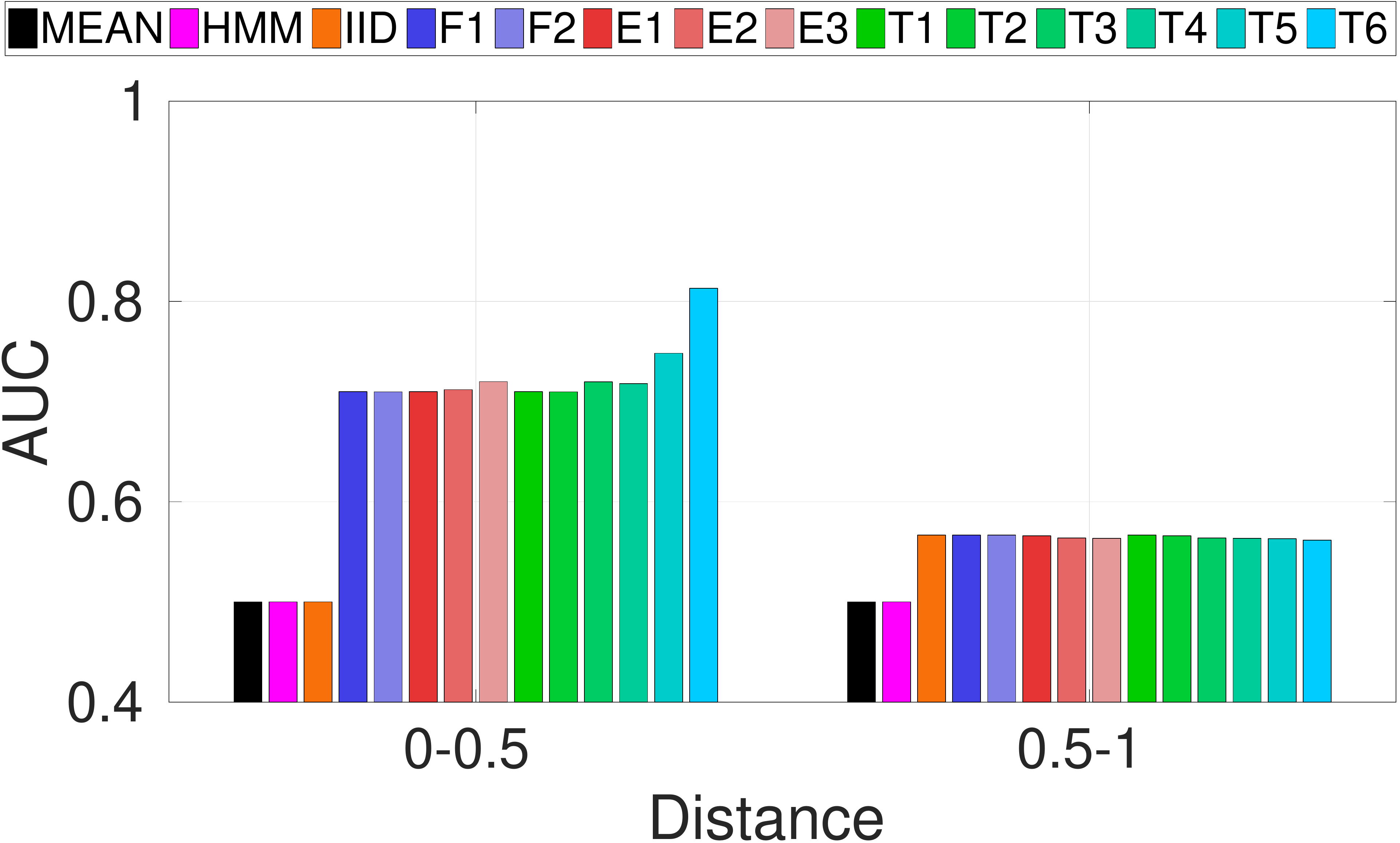}
	\caption{AUC for ROC curves. Extra features included: concentric rings around the port every 3 km. Single vessel.}
	\label{fig:exp-maritime-classification-aucroc-distance3}
\end{subfigure}\\
\begin{subfigure}[t]{0.46\textwidth}
	\includegraphics[width=0.9\textwidth]{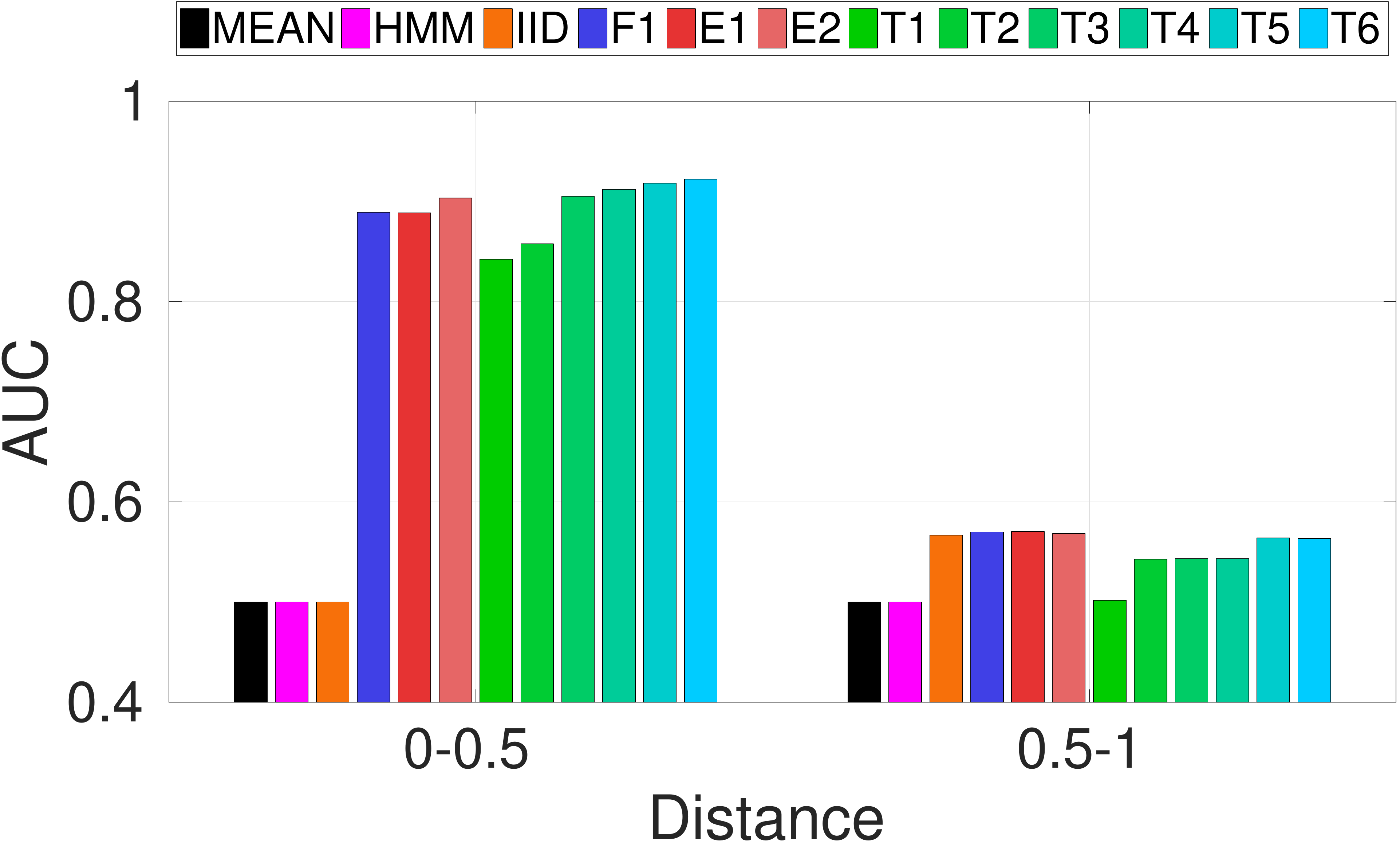}
	\caption{AUC for ROC curves. Extra features included: concentric rings around the port every 1 km and heading. Single vessel.}
	\label{fig:exp-maritime-classification-aucroc-distanceheading}
\end{subfigure}
\begin{subfigure}[t]{0.46\textwidth}
	\includegraphics[width=0.9\textwidth]{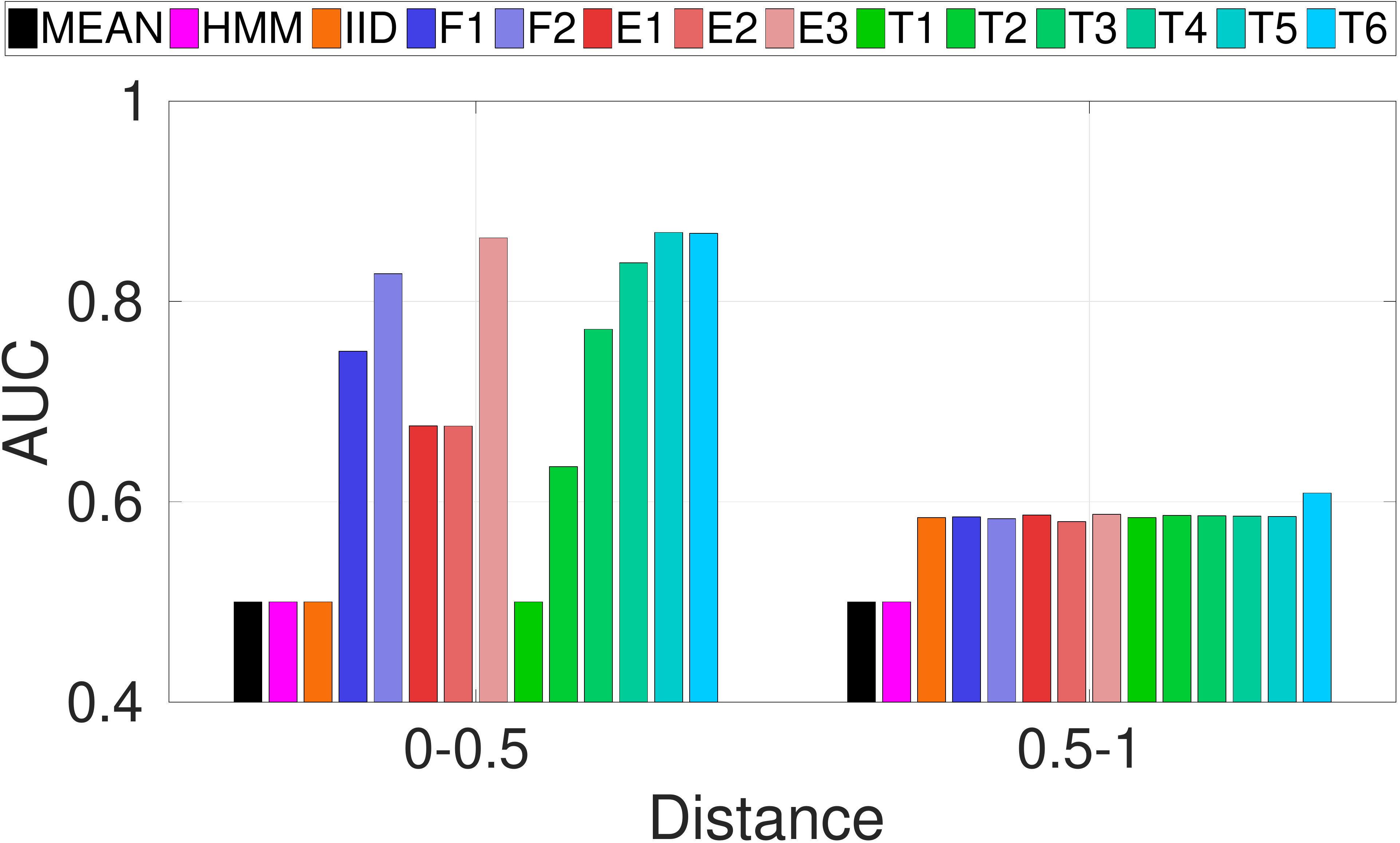}
	\caption{AUC for ROC curves. Extra features included: concentric rings around the port every 1 km. Model constructed for the 9 vessels that have more than 100 matches.}
	\label{fig:exp-maritime-classification-aucroc-multi}
\end{subfigure}
\caption{Results for classification CE forecasting from the domain of maritime monitoring for various sets of extra features and for multiple vessels.}
\label{fig:exp-maritime-classification-auc-features}
\end{figure}

Finally, we also tested our method when more than one vessel need to be monitored.
Instead of isolating the single vessel with the most matches,
we isolated all vessels which had more than 100 matches.
There are in total 9 such vessels in the dataset.
The resulting dataset has $\approx$ 222.000 events.
Out of the 9 retained vessels,
we constructed a global probabilistic model and produced forecasts.
An alternative option would be to build a single model for each vessel,
but in this scenario we wanted to test the robustness of our aprroach when a global model is built from multiple entities.
Figure \ref{fig:exp-maritime-classification-aucroc-multi} presents the corresponding results.
Interestingly, the scores of the global model remain very close to the scores of the experiments for the single vessel with the most matches (Figure \ref{fig:exp-maritime-classification-aucroc}).
This is an indication of the ability of the global model to capture the peculiarities of individual vessels.

\section{Summary \& Future Work}
\label{sec:outro}

We have presented a framework for Complex Event Forecasting (CEF), 
based on a variable-order Markov model.
It allows us to delve deeper into the past and capture long-term dependencies,
not feasible with full-order models.
Our comprehensive evaluation on two application domains has illustrated the advantages of being able to use such high-order models.
Namely, the use of higher-order modeling allows us to achieve higher accuracy than what is possible with full-order models or other state-of-the-art solutions.
We have described two alternative ways in which variable-order models may be used,
depending on the imposed requirements.
One option is to use a highly efficient but less accurate model,
when online performance is a top priority.
We also provide an option that achieves high accuracy scores,
but with a performance cost.
Another important feature of our proposed framework is that it requires minimal intervention by the user.
A given Complex Event pattern is declaratively defined and subsequently automatically translated to an automaton and then to a Markov model,
without requiring domain knowledge that should guide the modeling process.

Still, the user needs to set up the model and there seems to be room for further automation. 
In particular, 
the user needs to set the maximum order allowed by the probabilistic model.
Additionally, we have started investigating ways to handle concept drift by continuously training and updating the probabilistic model of a pattern.

With respect to the expressive power of our framework,
there is one functionality that we do not currently support and whose incorporation would require us to move to a more advanced automaton model. 
This is the functionality of applying $n$-ary (with $n>1$) predicates to two or more sub-expressions of an expression,
instead of only unary predicates,
as is allowed in symbolic automata.
Note that $n$ refers to the number of ``terminal symbols'' / events that a predicate may reference. 
Each transition predicate of a symbolic automaton may refer only to one event,
the last event consumed.
It cannot apply a predicate to the last event and other earlier events.
As an example,
consider the pattern $R := x \cdot y\ \where\ y.\mathit{speed} > x.\mathit{speed}$,
detecting an increase in the speed of a vessel,
where we now need to use the variables $x$ and $y$.
Such patterns cannot be captured with \sfa\,
since they would require a memory structure to store some of the past events of a stream,
as is possible with extended symbolic automata \cite{DBLP:journals/fmsd/DAntoniV15}. 
We intend to present in future work an automaton model which can support patterns with memory,
suitable for CER.
Some results towards this direction may be found in \cite{DBLP:journals/corr/abs-1804-09999}.

Finally, our framework could also be used for a task that is not directly related to Complex Event Forecasting.
Since predictive modeling  and compression are two sides of the same coin,
our framework could be used for pattern-driven lossless stream compression, 
in order to minimize the communication cost,
which is a severe bottleneck for geo-distributed CER \cite{DBLP:journals/vldb/GiatrakosAADG20}.
The probabilistic model that we construct with our approach could be pushed down to the event sources,
such as the vessels in the maritime domain, 
in order to compress each individual stream and then these compressed streams could be transmitted to a centralized CER engine to perform recognition.

%%
%% Bibliography
%%

%% Please use bibtex, 

\bibliography{refs}

\appendix

\section{Appendix}
\label{sec:appendix}

\subsection{Complete proof of Proposition \ref{proposition:sre2sfa}}
\label{sec:proof:sre2sfa}

\begin{proposition*}
For every symbolic regular expression $R$ there exists a symbolic finite automaton $M$ such that $\mathcal{L}(R) = \mathcal{L}(M)$.
\end{proposition*}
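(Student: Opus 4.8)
The plan is to proceed by structural induction on the symbolic regular expression $R$, constructing $M$ via an adaptation of Thompson's construction. To keep the inductive invariant simple, I would maintain that each \sfa\ built along the way has exactly one initial state with no incoming transitions and exactly one final state with no outgoing transitions, and I would allow $\epsilon$-labelled transitions in the intermediate automata, eliminating them only at the very end by the standard $\epsilon$-closure argument (exactly as one removes $\epsilon$-moves from an $\epsilon$-NFA), since the \sfa\ definition given in the paper has $\Delta \subseteq Q \times \Psi_{\mathcal{A}} \times Q$ with no native $\epsilon$ component.

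For the base cases: if $R = \epsilon$, take $Q = \{q^{s}\}$, $Q^{f} = \{q^{s}\}$, $\Delta = \emptyset$; if $R = \emptyset$, take $Q = \{q^{s}\}$, $Q^{f} = \emptyset$ so that no string is accepted; and if $R = \psi$ for $\psi \in \Psi$, take $Q = \{q^{s}, q^{f}\}$, $Q^{f} = \{q^{f}\}$ and the single transition $(q^{s}, \psi, q^{f})$, so that $\mathcal{L}(M) = \llbracket \psi \rrbracket = \mathcal{L}(\psi)$. This last case is the only place where the construction differs from the classical one — a transition is guarded by a predicate rather than labelled by an alphabet symbol — and it is immediate. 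For the inductive step, given \sfa s $M_{1}, M_{2}$ for $R_{1}, R_{2}$ with disjoint state sets (obtained by renaming) and $\mathcal{L}(M_{i}) = \mathcal{L}(R_{i})$: for $R_{1} + R_{2}$ add a fresh initial state with $\epsilon$-transitions into the initial states of $M_{1}, M_{2}$ and a fresh final state reached by $\epsilon$-transitions from their final states; for $R_{1} \cdot R_{2}$ glue them with an $\epsilon$-transition from the final state of $M_{1}$ to the initial state of $M_{2}$, keeping the initial state of $M_{1}$ and the final state of $M_{2}$; for $R^{*}$ add a fresh initial and a fresh final state with $\epsilon$-transitions fresh-initial $\to$ old-initial, old-final $\to$ old-initial, old-final $\to$ fresh-final and fresh-initial $\to$ fresh-final. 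In each case one verifies, directly from the definition of string acceptance, that every accepting run of the composite automaton decomposes into accepting runs of the components (and conversely); these verifications are word-for-word the classical ones, because acceptance of $w = a_{1} \cdots a_{k}$ refers only to the existence of transitions whose guards are satisfied by the $a_{i}$.

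I do not expect a genuine obstacle here — the paper itself notes the proof mirrors the classical one — but two points deserve a remark. First, one should observe that \emph{none} of the effective-Boolean-algebra hypotheses (decidability of satisfiability, computability of $\vee, \wedge, \neg$) are used for this proposition; they matter only later, for determinization and the minterm construction. The expressive equivalence of \sre\ and \sfa\ needs nothing beyond the ability to carry a predicate as a transition guard, so the argument goes through over an arbitrary Boolean algebra. Second, because the paper's \sfa\ has no $\epsilon$-transitions, I would either (i) introduce an auxiliary notion of \sfa\ with $\epsilon$-moves, carry out the induction there, and then invoke the folklore lemma that every such automaton has an equivalent $\epsilon$-free \sfa\ (compute $\epsilon$-closures and push guarded transitions through them), or (ii) give a direct $\epsilon$-free inductive construction by identifying the glued states in the concatenation and star cases. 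Option (i) yields the cleaner write-up, so that is the route I would take; the full details appear in Section~\ref{sec:proof:sre2sfa}.
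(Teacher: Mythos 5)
Your proposal is correct and follows essentially the same route as the paper: a structural induction using the classical Thompson-style gadgets, with the only genuinely symbolic ingredient being the base case where a transition is guarded by a predicate $\psi$ so that $\mathcal{L}(M) = \llbracket \psi \rrbracket$. The paper's own proof simply uses $\epsilon$-transitions in the intermediate automata without comment (as its streaming construction also does), whereas you make the $\epsilon$-elimination step explicit — a harmless presentational difference, not a different argument.
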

\begin{proof}
Except for the first case, for the other three cases the induction hypothesis is that the theorem holds for the sub-expressions of the initial expression.

\begin{figure}
\begin{subfigure}[t]{0.5\textwidth}
	\includegraphics[width=0.95\textwidth]{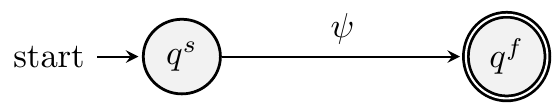}
	\caption{Base case of a single predicate. $R = \psi$.}
	\label{fig:sre2sfa:base}
\end{subfigure}
\begin{subfigure}[t]{0.5\textwidth}
	\includegraphics[width=0.95\textwidth]{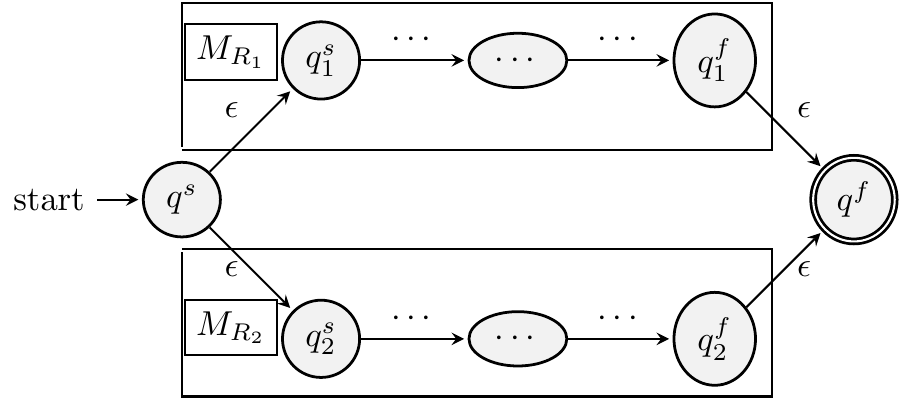}
	\caption{OR. $R = R_{1} + R_{2}$.}
	\label{fig:sre2sfa:or}
\end{subfigure}\\
\begin{subfigure}[t]{0.5\textwidth}
	\includegraphics[width=0.95\textwidth]{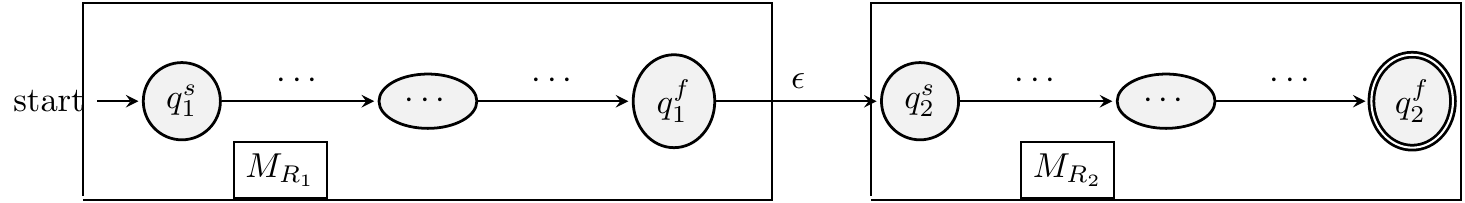}
	\caption{Concatenation. $R = R_{1} \cdot R_{2}$.}
	\label{fig:sre2sfa:seq}
\end{subfigure}
\begin{subfigure}[t]{0.5\textwidth}
	\includegraphics[width=0.95\textwidth]{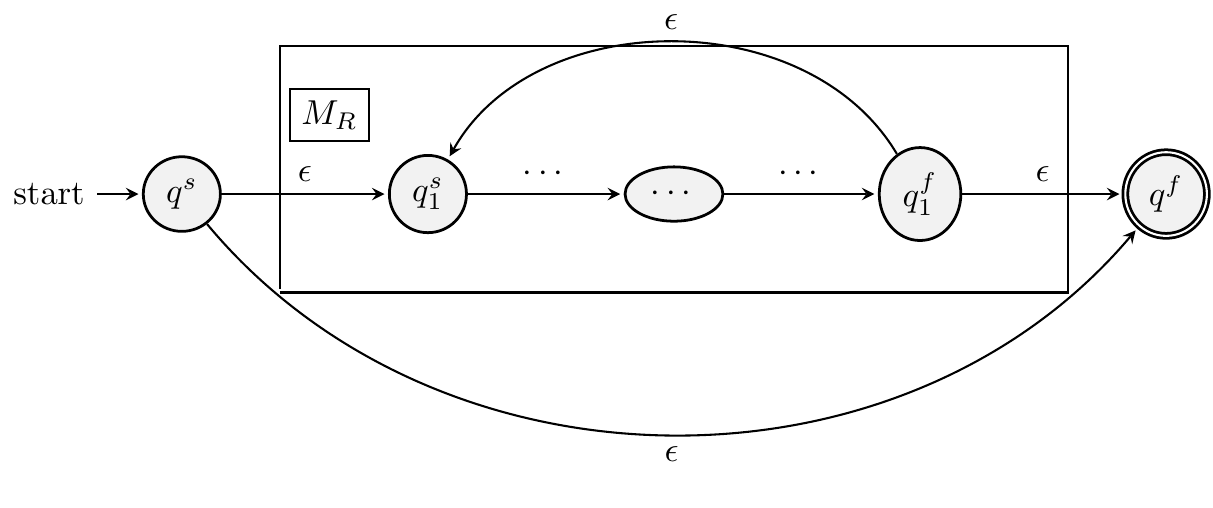}
	\caption{Iteration. $R^{'} = R^{*}$.}
	\label{fig:sre2sfa:iter}
\end{subfigure}
\caption{The four cases for constructing a SFA from a SRE}
\label{fig:sre2sfa}
\end{figure}

\textbf{Case where $R = \psi, \psi \in \Psi$.}\\
We construct a SFA as in Figure \ref{fig:sre2sfa:base}.
If $w \in \mathcal{L}(R)$, then $w$ is a single character and $w \in \llbracket \psi \rrbracket$, i.e., $\psi$ evaluates to \true\ for $w$.
Thus, upon seeing $w$, the SFA of Figure \ref{fig:sre2sfa:base} moves from $q^{s}$ to $q^{f}$ and since $q^{f}$ is a final state, then $w$ is accepted by this SFA. 
Conversely, if a string $w$ is accepted by this SFA then it must again be a single character and $\psi$ must evaluate to \true\ since the SFA moved to its final state through the transition equipped with $\psi$. Thus, $w \in \llbracket \psi \rrbracket$ and $w \in \mathcal{L}(R)$.

\textbf{Case where $R = R_{1} + R_{2}$.}\\
We construct a SFA as in Figure \ref{fig:sre2sfa:or}.
If $w \in \mathcal{L}(R)$, then either $w \in \mathcal{L}(R_{1})$ or $w \in \mathcal{L}(R_{2})$ (or both).
Without loss of generality, assume $w \in \mathcal{L}(R_{1})$. 
From the induction hypothesis, it also holds that $w \in \mathcal{L}(M_{R_{1}})$.
Thus, from Figure \ref{fig:sre2sfa:or}, upon reading $w$, $M_{R_{1}}$ will have reached $q_{1}^{f}$.
Therefore, $M_{R}$ will have reached $q^{f}$ throught the $\epsilon$-transition connecting $q_{1}^{f}$ to $q^{f}$ and thus $w$ is accepted by $M_{R}$.
Conversely, if $w \in \mathcal{L}(M_{R})$, then the SFA $M_{R}$ of Figure \ref{fig:sre2sfa:or} must have reached $q^{f}$ and therefore also $q_{1}^{f}$ or $q_{2}^{f}$ (or both). 
Assume it has reached $q_{1}^{f}$. 
Then $w \in \mathcal{L}(M_{R_{1}})$ and, from the induction hypothesis $w \in \mathcal{L}(R_{1})$.
Similarly, if its has reached $q_{2}^{f}$, then $w \in \mathcal{L}(R_{2})$. 
Therefore, $w \in \mathcal{L}(R_{1}) \cup \mathcal{L}(R_{2}) = \mathcal{L}(R)$.

\textbf{Case where $R = R_{1} \cdot R_{2}$.}\\
We construct a SFA as in Figure \ref{fig:sre2sfa:seq}.
If $w \in \mathcal{L}(R)$, then $w \in \mathcal{L}(R_{1}) \cdot \mathcal{L}(R_{2})$ or $w=w_{1} \cdot w_{2}$ such that $w_{1} \in \mathcal{L}(R_{1})$ and $w_{2} \in \mathcal{L}(R_{2})$.
Therefore, from the induction hypothesis, upon reading $w_{1}$, $M_{R}$ will have reached $q_{1}^{f}$ and $q_{2}^{s}$.
Upon reading the rest of $w$ ($w_{2}$), again from the induction hypothesis, $M_{R}$ will have reached $q_{2}^{f}$.
As a result, $w \in \mathcal{L}(M_{R})$.
Conversely, if $w \in \mathcal{L}(M_{R})$, $M_{R}$ will have reached $q_{2}^{f}$ upon reading $w$ and therefore will have also passed through $q_{1}^{f}$ upon reading a prefix $w_{1}$ of $w$. 
Thus, $w = w_{1} \cdot w_{2}$ with $w_{1} \in \mathcal{L}(M_{R_{1}})$ and $w_{2} \in \mathcal{L}(M_{R_{2}})$.
From the induction hypothesis, it also holds that $w_{1} \in \mathcal{L}(R_{1})$ and $w_{2} \in \mathcal{L}(R_{2})$ and therefore that $w \in \mathcal{L}(R)$.

\textbf{Case where $R^{'} = R^{*}$.}\\
We construct a SFA as in Figure \ref{fig:sre2sfa:iter}.
If $w \in \mathcal{L}(R^{'})$, then $w \in (\mathcal{L}(R))^{*}$ or, equivalently, $w=w_{1} \cdot w_{2} \cdot \cdots \cdot w_{k}$ such that $w_{i} \in \mathcal{L}(R)$ for all $w_{i}$.
From the induction hypothesis and Figure \ref{fig:sre2sfa:iter}, upon reading $w_{1}$, $M_{R^{'}}$ will have reached $q_{1}^{f}$ and $q_{1}^{s}$.
Therefore, the same will be true after reading $w_{2}$ and all other $w_{i}$, including $w_{k}$.
Thus, $w \in \mathcal{L}(M_{R^{'}})$.
Note that if $w=\epsilon$, the $\epsilon$-transition from $q^{s}$ to $q^{f}$ ensures that $w \in \mathcal{L}(M_{R^{'}})$.
Conversely, assume $w \in \mathcal{L}(M_{R^{'}})$.
If $w=\epsilon$, then by the definition of the $^{*}$ operator, $w \in (\mathcal{L(R)})^{*}$.
In every other case, $M_{R^{'}}$ must have reached $q_{1}^{f}$ and must have passed through $q_{1}^{s}$.
Therefore, $w$ may be written as $w=w_{1} \cdot w_{2}$ where $w_{2} \in \mathcal{M_{R}}$
and, for $w_{1}$, upon reading it, $M_{R^{'}}$ must have reached $q_{1}^{s}$. 
There are two cases then: either $w_{1}=\epsilon$ and $q_{1}^{s}$ was reached from $q^{s}$ or $w_{1} \neq \epsilon$ and $q_{1}^{s}$ was reached from $q_{1}^{f}$.
In the former case, $w = \epsilon \cdot w_{2} = w_{2}$ and thus $w \in (\mathcal{L}(R))^{*}$.
In the latter case, we can apply a similar reasoning recursively to $w_{1}$ in order to split it to sub-strings $w_{i}$ such that $w_{i} \in \mathcal{L}(R)$.
Therefore, $w \in (\mathcal{L}(R))^{*}$ and $w \in \mathcal{L}(R^{'})$.  

\end{proof}

\subsection{Proof of Proposition \ref{proposition:streamingsre}}
\label{sec:proof:streamingsre}

\begin{proposition*}
If $S=t_{1},t_{2},\cdots$ is a stream of domain elements from an effective Boolean algebra $\mathcal{A} = (\mathcal{D}$, $\Psi$, $\llbracket \_ \rrbracket$, $\bot$, $\top$, $\vee$, $\wedge$, $\neg$), where $t_{i} \in \mathcal{D}$, and $R$ is a symbolic regular expression over the same algebra,
then, for every $S_{m..k}$, $S_{m..k} \in \mathcal{L}(R)$ iff $S_{1..k} \in \mathcal{L}(R_{s})$ (and $S_{1..k} \in \mathcal{L}(M_{R_{s}})$).
\end{proposition*}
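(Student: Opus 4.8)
The plan is to unfold the definition of $R_{s} = \top^{*} \cdot R$ and reduce the claim to an elementary statement about string suffixes. First I would establish the auxiliary identity $\mathcal{L}(\top^{*}) = \mathcal{D}^{*}$: since $\top$ is the predicate with $\llbracket \top \rrbracket = \mathcal{D}$, the definition of the language of a single-predicate \sre\ gives $\mathcal{L}(\top) = \mathcal{D}$, hence $\mathcal{L}(\top^{*}) = (\mathcal{L}(\top))^{*} = \mathcal{D}^{*}$. Combined with the definition of the language of a concatenation, this yields $\mathcal{L}(R_{s}) = \mathcal{L}(\top^{*}) \cdot \mathcal{L}(R) = \mathcal{D}^{*} \cdot \mathcal{L}(R)$, i.e.\ the set of all strings that can be written as an arbitrary prefix followed by a string of $\mathcal{L}(R)$.

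For the left-to-right direction, suppose $S_{m..k} \in \mathcal{L}(R)$ for some $m$ with $1 \le m \le k+1$ (allowing the empty suffix $S_{k+1..k} = \epsilon$). Then $S_{1..k} = S_{1..m-1} \cdot S_{m..k}$ with $S_{1..m-1} \in \mathcal{D}^{*} = \mathcal{L}(\top^{*})$, so $S_{1..k} \in \mathcal{L}(\top^{*}) \cdot \mathcal{L}(R) = \mathcal{L}(R_{s})$. For the converse, suppose $S_{1..k} \in \mathcal{L}(R_{s}) = \mathcal{D}^{*} \cdot \mathcal{L}(R)$. Then there is a factorization $S_{1..k} = u \cdot v$ with $u \in \mathcal{D}^{*}$ and $v \in \mathcal{L}(R)$; setting $m = \lvert u \rvert + 1$, the string $v$ is precisely the suffix $S_{m..k}$, whence $S_{m..k} \in \mathcal{L}(R)$. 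This gives the stated equivalence (with the natural existential reading of $m$ in the right-to-left direction).

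The parenthetical claim $S_{1..k} \in \mathcal{L}(M_{R_{s}})$ then follows at once: applying Proposition~\ref{proposition:sre2sfa} to the \sre\ $R_{s}$ produces an \sfa\ $M_{R_{s}}$ with $\mathcal{L}(M_{R_{s}}) = \mathcal{L}(R_{s})$, so membership in the automaton's language coincides with membership in $\mathcal{L}(R_{s})$.

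There is no deep obstacle here — the argument is essentially a routine unfolding of definitions. The only steps that warrant a little care are the identity $\mathcal{L}(\top^{*}) = \mathcal{D}^{*}$, which hinges on $\top$ matching every single character of $\mathcal{D}$, and the handling of the empty-suffix boundary case $m = k+1$, which I would treat explicitly so that the biconditional remains correct even when $\epsilon \in \mathcal{L}(R)$.
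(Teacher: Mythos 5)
Your proposal is correct and follows essentially the same route as the paper's own proof: both decompose $S_{1..k}$ as a prefix in $\mathcal{L}(\top^{*})$ concatenated with a suffix in $\mathcal{L}(R)$ and argue both directions from $\mathcal{L}(R_{s}) = \mathcal{L}(\top^{*}) \cdot \mathcal{L}(R)$, with the empty-string boundary case treated explicitly. Your added remarks (the identity $\mathcal{L}(\top^{*}) = \mathcal{D}^{*}$ and the appeal to Proposition~\ref{proposition:sre2sfa} for the automaton claim) merely make explicit what the paper leaves implicit.
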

\begin{proof}
First, assume that $S_{m..k} \in \mathcal{L}(R)$ for some $m, 1 \leq m \leq k$
(we set $S_{1..0} = \epsilon$). 
Then, for $S_{1..k} = S_{1..(m-1)} \cdot S_{m..k}$, $S_{1..(m-1)} \in \mathcal{L}(\top^{*})$, 
since $\top^{*}$ accepts every string (sub-stream),
including $\epsilon$. 
We know that $S_{m..k} \in \mathcal{L}(R)$, thus $S_{1..k} \in \mathcal{L}(\top^{*}) \cdot \mathcal{L}(R) = \mathcal{L}(\top^{*} \cdot R) = \mathcal{L}(R_{s})$.
Conversely, assume that $S_{1..k} \in \mathcal{L}(R_{s})$.
Therefore, $S_{1..k} \in \mathcal{L}(\top^{*} \cdot R) = \mathcal{L}(\top^{*}) \cdot \mathcal{L}(R)$.
As a result, $S_{1..k}$ may be split as $S_{1..k} = S_{1..(m-1)} \cdot S_{m..k}$ such that $S_{1..(m-1)} \in \mathcal{L}(\top^{*})$ and $S_{m..k} \in \mathcal{L}(R)$. 
Note that $S_{1..(m-1)} = \epsilon$ is also possible, in which case the result still holds, since $\epsilon \in \mathcal{L}(\top^{*})$.
\end{proof}

\subsection{Proof of Theorem \ref{theorem:finals}}
\label{sec:proof:finals}
\begin{theorem*}
Let $\boldsymbol{\Pi}$ be the transition probability matrix of a homogeneous Markov chain $Y_{t}$ in the form of Equation \eqref{eq:matrix}
and $\boldsymbol{\xi}_{init}$ its initial state distribution.
The probability for the time index $n$ when the system first enters the set of states $F$,
starting from a state in $F$, 
can be obtained from
\begin{equation*}
P(Y_{n} \in F, Y_{n-1} \notin F,\cdots,Y_{1} \in F \mid \boldsymbol{\xi_{init}}) =
  \begin{cases}
    \boldsymbol{\xi_{F}}^{T} \boldsymbol{F}  \boldsymbol{1} & \quad \text{if } n=2   \\
    \boldsymbol{\xi_{F}}^{T}  \boldsymbol{F_{N}} \boldsymbol{N}^{n-2}(\boldsymbol{I}-\boldsymbol{N})\boldsymbol{1} & \quad \text{otherwise} \\
  \end{cases}
\end{equation*}
where $\xi_{F}$ is the vector consisting of the elements of $\xi_{init}$ corresponding to the states of $F$.
\end{theorem*}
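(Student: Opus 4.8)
The plan is to reduce Theorem~\ref{theorem:finals} to Theorem~\ref{theorem:non-finals} by peeling off the first transition of the chain. Fix $n\ge 2$ and let $E_n$ denote the event $\{Y_1\in F,\ Y_2\in N,\ \ldots,\ Y_{n-1}\in N,\ Y_n\in F\}$ whose probability we must compute (for $n=2$ there are no intermediate states, so $E_2=\{Y_1\in F,\ Y_2\in F\}$). Since the distribution of $Y_1$ is $\boldsymbol{\xi_{init}}$ and $E_n$ conditions on $Y_1\in F$, only the sub-vector $\boldsymbol{\xi_F}$ of $\boldsymbol{\xi_{init}}$ on the $F$-coordinates is relevant. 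The first transition $Y_1\to Y_2$ is governed by the bottom row of the block matrix of Equation~\eqref{eq:matrix}: it stays in $F$ with the probabilities recorded in $\boldsymbol{F}$ and moves to $N$ with those recorded in $\boldsymbol{F_N}$. This already separates the two cases of the statement, because $Y_2\in F$ is possible only when $n=2$, whereas $n\ge 3$ forces $Y_2\in N$.

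For $n=2$ the computation is immediate from the block structure: $P(E_2\mid\boldsymbol{\xi_{init}})$ is the total mass of $\boldsymbol{\xi_F}$ pushed forward by $\boldsymbol{F}$, i.e.\ $\boldsymbol{\xi_F}^{T}\boldsymbol{F}\boldsymbol{1}$. For $n\ge 3$ I would apply the Markov property at time $2$. Summing over the landing state $Y_2=j\in N$, the first step contributes the $j$-th component of the (sub-stochastic) row vector $\boldsymbol{\xi_F}^{T}\boldsymbol{F_N}$, namely the joint probability $P(Y_1\in F,\ Y_2=j)$; and, conditioned on $Y_2=j$, the remaining requirement ``$Y_2,\ldots,Y_{n-1}\in N$ and $Y_n\in F$'' is, after the relabelling $Y'_k:=Y_{k+1}$, exactly the event ``$Y'_1,\ldots,Y'_{n-2}\in N$ and $Y'_{n-1}\in F$'' of Theorem~\ref{theorem:non-finals} started from the point mass at $j$, with horizon shifted from $n$ to $n-1$. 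Theorem~\ref{theorem:non-finals} (Equation~\eqref{eq:wtd:non-finals}) then yields $\boldsymbol{N}^{(n-1)-1}(\boldsymbol{I}-\boldsymbol{N})\boldsymbol{1}=\boldsymbol{N}^{n-2}(\boldsymbol{I}-\boldsymbol{N})\boldsymbol{1}$ for that point mass, and since the right-hand side of \eqref{eq:wtd:non-finals} is linear in the initial distribution we may replace the point mass by the vector $\boldsymbol{\xi_F}^{T}\boldsymbol{F_N}$ and sum, obtaining $P(E_n\mid\boldsymbol{\xi_{init}})=\boldsymbol{\xi_F}^{T}\boldsymbol{F_N}\,\boldsymbol{N}^{n-2}(\boldsymbol{I}-\boldsymbol{N})\boldsymbol{1}$, as claimed.

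The routine ingredients are the two block-matrix identities (recall that $(\boldsymbol{I}-\boldsymbol{N})\boldsymbol{1}$ encodes the one-step exit probabilities from $N$ into $F$) together with the verbatim invocation of Theorem~\ref{theorem:non-finals}. The step that needs the most care — and the one I would double-check — is the index bookkeeping when the tail chain is fed into Theorem~\ref{theorem:non-finals}: one must verify that peeling off the single $F\to N$ step shifts the horizon by exactly one, so that the exponent of $\boldsymbol{N}$ comes out as $n-2$ rather than $n-1$ or $n-3$, and one must make explicit that passing the non-normalised vector $\boldsymbol{\xi_F}^{T}\boldsymbol{F_N}$ to that theorem is legitimate, which is justified purely by the linearity of \eqref{eq:wtd:non-finals} in $\boldsymbol{\xi_N}$. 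Finally I would record that the two displayed cases are exhaustive over $n\ge 2$, the split being governed by whether the chain re-enters $F$ immediately ($n=2$) or only after a sojourn in $N$ ($n\ge 3$).
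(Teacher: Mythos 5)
Your proposal is correct and follows essentially the same route as the paper's proof: handle $n=2$ directly from the block $\boldsymbol{F}$, and for $n\ge 3$ peel off the single $F\to N$ transition, treat $\boldsymbol{\xi_F}^{T}\boldsymbol{F_N}$ as the effective initial distribution over $N$, and invoke Theorem \ref{theorem:non-finals} with the horizon shifted by one to obtain the exponent $n-2$. Your extra remark that feeding the non-normalised vector $\boldsymbol{\xi_F}^{T}\boldsymbol{F_N}$ into Equation \eqref{eq:wtd:non-finals} is justified by linearity is a slightly more careful articulation of the step the paper performs implicitly, but it is the same argument.
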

\begin{proof}

\textbf{Case where $n=2$.}\\
In this case,
we are in a state $i \in F$ and we take a transition that leads us back to $F$ again.
Therefore, $P(Y_{2} \in F, Y_{1}=i \in F \mid \boldsymbol{\xi_{init}}) = \boldsymbol{\xi}(i) \sum_{j \in F} \pi_{ij}$,
i.e.,
we first take the probability of starting in $i$ and multiply it by the sum of all transitions from $i$ that lead us back to $F$.
This result folds for a certain state $i \in F$.
If we start in any state of $F$, 
$P(Y_{2} \in F, Y_{1} \in F \mid \boldsymbol{\xi_{init}}) = \sum_{i \in F} \boldsymbol{\xi}(i) \sum_{j \in F} \pi_{ij}$.
In matrix notation, this is equivalent to 
$P(Y_{2} \in F, Y_{1} \in F \mid \boldsymbol{\xi_{init}}) = \boldsymbol{\xi_{F}}^{T} \boldsymbol{F}  \boldsymbol{1}$.

\textbf{Case where $n>2$.}\\
In this case,
we must necessarily first take a transition from $i \in F$ to $j \in N$,
then, for multiple transitions we remain in $N$ and we finally take a last transition from $N$ to $F$.
We can write
\begin{equation}
\label{eq:broken}
\begin{aligned}
P(Y_{n} \in F, Y_{n-1} \notin F,...,Y_{1} \in F \mid \boldsymbol{\xi_{init}} ) = &  P(Y_{n} \in F, Y_{n-1} \notin F,...,Y_{2} \notin F \mid \boldsymbol{\xi^{'}_{N}} ) \\
= & P(Y_{n-1} \in F, Y_{n-2} \notin F,...,Y_{1} \notin F \mid \boldsymbol{\xi^{'}_{N}} )
\end{aligned}
\end{equation}
where $\boldsymbol{\xi^{'}_{N}}$ is the state distribution (on states of $N$) after having taken the first transition from $F$ to $N$.
This is given by $\boldsymbol{\xi^{'}_{N}} = \boldsymbol{\xi_{F}}^{T} \boldsymbol{F_{N}}$.
By using this as an initial state distribution in Eq. \ref{eq:wtd:non-finals} and running the index $n$ from $1$ to $n-1$,
as in Eq. \ref{eq:broken},
we get
\begin{equation*}
P(Y_{n} \in F, Y_{n-1} \notin F,...,Y_{1} \in F \mid \boldsymbol{\xi_{init}}) = \boldsymbol{\xi_{F}}^{T}  \boldsymbol{F_{N}} \boldsymbol{N}^{n-2}(\boldsymbol{I}-\boldsymbol{N})\boldsymbol{1}
\end{equation*}
\end{proof}

\subsection{Proof of Correctness for Algorithm \ref{algorithm:interval}}
\label{sec:proof:interval}
\begin{proposition*}
Let $P$ be a waiting-time distribution with horizon $h$ and let $\theta_{fc} < 1.0$ be a confidence threshold. 
Algorithm \ref{algorithm:interval} correctly finds the smallest interval whose probability exceeds $\theta_{fc}$.
\end{proposition*}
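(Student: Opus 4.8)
The plan is to prove correctness by a loop-invariant argument, preceded by a reduction that shrinks the space of candidate intervals down to a one-parameter family. First I would record the elementary monotonicity fact that, since every $P(n)$ is non-negative, for a fixed right endpoint $j$ the mass $P\big([a,j]\big)=\sum_{k=a}^{j}P(k)$ is non-increasing in the left endpoint $a$. Hence whenever some interval ending at $j$ has mass at least $\theta_{fc}$, there is a unique \emph{largest} admissible left endpoint $\ell(j)$, and $[\ell(j),j]$ is then the shortest interval ending at $j$ that meets the threshold. Furthermore $\ell$ is non-decreasing, because $P\big([a,j]\big)\ge\theta_{fc}$ implies $P\big([a,j+1]\big)\ge P\big([a,j]\big)\ge\theta_{fc}$. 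It follows that a globally shortest interval with mass $\ge\theta_{fc}$ may always be taken of the canonical form $[\ell(j),j]$, so it suffices for the algorithm to inspect these canonical candidates and retain the shortest.

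Next I would set up the loop invariant for the outer \texttt{while} loop and prove it by induction on iterations. Writing $j$ for the value of the right pointer at the end of an iteration, the invariant has three parts: (i) the accumulator $p$ always equals the probability mass of the window currently delimited by the pointers $i$ and $j$, which follows by direct inspection of the incremental additions in the expansion phase and subtractions in the shrinking phase; (ii) immediately after the shrinking phase and the correction $i\leftarrow i-1$, the pair $(i,j)$ equals $(\ell(j),j)$, because the shrinking loop stops exactly when the window mass first drops below $\theta_{fc}$, and the correction restores the last left endpoint for which it was still $\ge\theta_{fc}$; and (iii) $(s,e)$ is a shortest interval of mass $\ge\theta_{fc}$ among all intervals whose right endpoint is at most $j$, with $(s,e)=(-1,-1)$ exactly when no such interval exists. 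The heart of the argument is establishing (iii): the right-pointer values $j_1<j_2<\cdots$ at which candidates are recorded do not in general run through every feasible right endpoint, but if $j$ is skipped — say $j_r<j<j_{r+1}$ after recording $[\ell(j_r),j_r]$ — then during the expansion phase we have $P\big([\ell(j_r)+1,j]\big)<\theta_{fc}\le P\big([\ell(j_r),j_r]\big)\le P\big([\ell(j_r),j]\big)$, which forces $\ell(j)=\ell(j_r)$; hence the canonical candidate $[\ell(j),j]$ contains, and is therefore no shorter than, the interval $[\ell(j_r),j_r]$ already stored in $(s,e)$, so nothing is lost by not visiting $j$. Preservation of (iii) across one iteration then reduces to comparing $[\ell(j),j]$ with the stored $(s,e)$, which is precisely what the final \texttt{if} does.

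Finally I would argue termination and conclude: the right pointer is strictly increased at least once per iteration and is bounded by $h$, so the loop halts with $j=h$; by invariant (iii) the returned $(s,e)$ is then a shortest interval of mass $\ge\theta_{fc}$ over \emph{all} right endpoints in $[1,h]$, and by the reduction this is a globally shortest such interval, as required. The main obstacle I anticipate is the joint bookkeeping of the two pointers and the accumulator $p$ — that is, pinning down precisely which window $p$ tracks at each point in the interleaving of the two inner loops — together with the boundary case where the expansion phase exits because $j=h$ rather than because $p$ has reached $\theta_{fc}$, where one must verify that the terminal window is never recorded in place of a genuinely shorter feasible interval (invoking, if needed, a mild assumption such as the total probability mass on $[1,h]$ exceeding $\theta_{fc}$). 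Once the invariant is phrased carefully, verifying that each phase maintains it is routine.
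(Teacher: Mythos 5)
Your proof is correct and follows essentially the same route as the paper's: the same loop invariant --- $(s,e)$ is a shortest interval meeting the threshold among all intervals with right endpoint at most $j$ --- established by induction over outer-loop iterations, with the crucial step (a skipped right endpoint $j$ with $j_r < j < j_{r+1}$ contributes no better interval, because the expansion condition forces $P\left(\left[\ell(j_r)+1,\, j\right]\right) < \theta_{fc}$ and hence $\ell(j)=\ell(j_r)$) being exactly the paper's case analysis of the unchecked intervals, only repackaged through the monotone largest-feasible-left-endpoint function $\ell(j)$ instead of the paper's explicit three-way split. You also correctly flag the two delicate points that the paper's own argument glosses over: pinning down which window the accumulator $p$ tracks across the uncompensated $i \leftarrow i-1$ correction, and the terminal iteration where expansion halts at $j=h$ without reaching $\theta_{fc}$.
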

\begin{proof}[Proof of correctness for Algorithm \ref{algorithm:interval}]

\begin{figure}[t]
	\centering
	\includegraphics[width=0.65\textwidth]{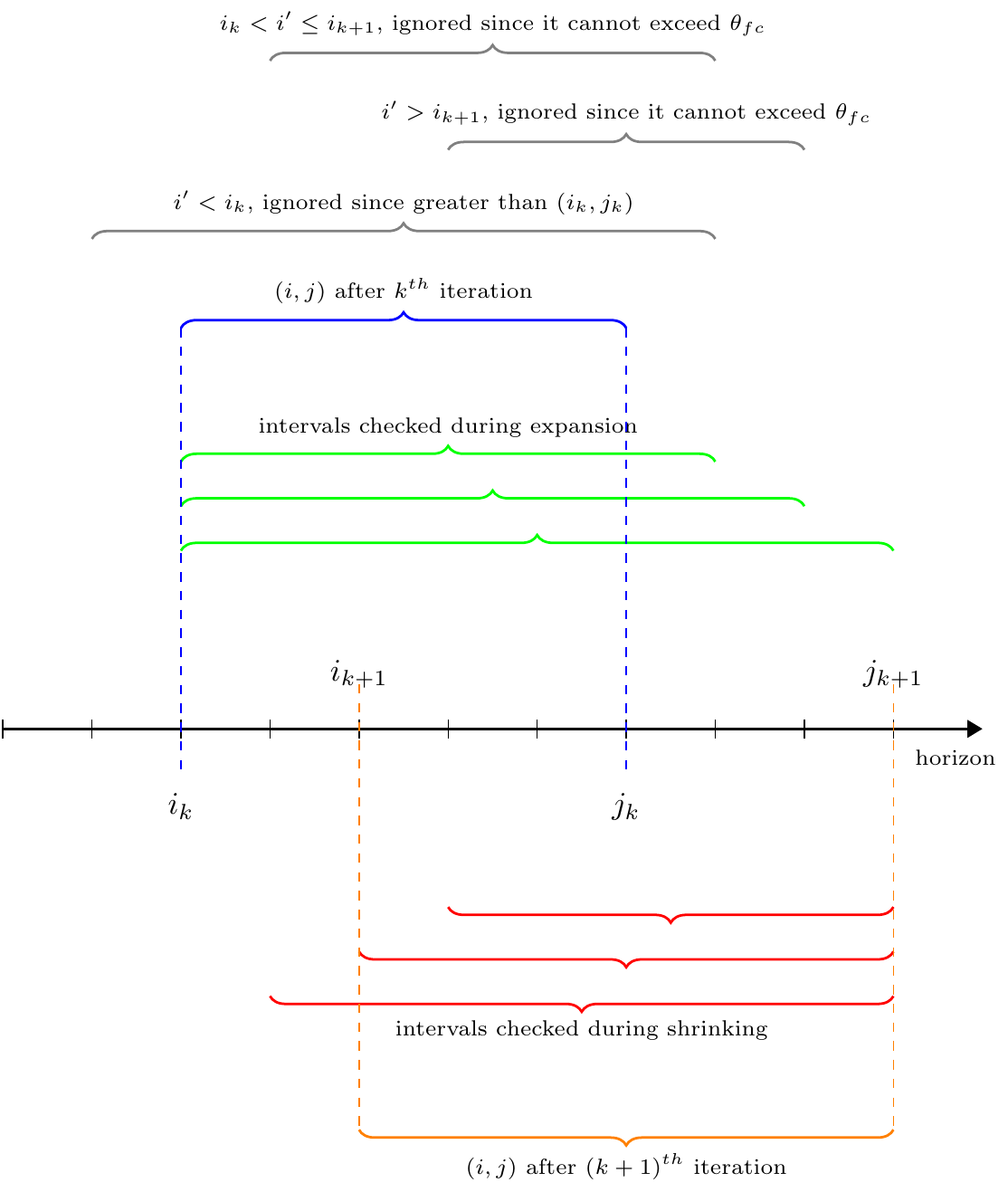}
	\caption{One iteration of Algorithm \ref{algorithm:interval}. After the $k^{th}$ iteration, $(i,j)$ is either the smallest interval or there already exists a smaller one. In the $(k+1)^{th}$ iteration, we need to check intervals for which $j_{k} < j' \leq j_{k+1}$, .i.e., whose right limit is greater than $j_{k}$ and smaller (but including) $j_{k+1}$. Out of those intervals, the ones that start before $i_{k}$ need not be checked because they are greater than $(i_{k},j_{k})$. The ones that start after $i_{k+1}$ are also ignored, since we already know, from the shrinking phase, that their super-interval $(i_{k+1}+1,j_{k+1})$ does not exceed $\theta_{fc}$. Finally, we can also ignore the intervals that start between $i_{k}$ and $i_{k+1}$ (not including $i_{k}$) and end before $j_{k+1}$ (not including $j_{k+1}$), because they cannot exceed $\theta_{fc}$. If such an interval actually existed, then the expansion phase would have stopped before $j_{k+1}$.}
	\label{fig:intervals}
\end{figure}

We only need to prove the loop invariant.
Assume that after the $k^{th}$ iteration of the outer while loop $i=i_{k}$ and $j=j_{k}$ and that after the $(k+1)^{th}$ iteration $i=i_{k+1}$ and $j=j_{k+1}$.
If the invariant holds after the $k_{th}$ iteration,
then all intervals with $e \leq j_{k}$ have been checked and we know that $(s,e)$ is the best interval up to $j_{k}$.
It can be shown that,
during the $(k+1)^{th}$ iteration,
the intervals up to $j_{k+1}$ that are not explicitly checked are intervals which cannot possibly exceed $\theta_{fc}$ or cannot be better than the currently held best interval $(s,e)$.
There are three such sets of unchecked intervals (see also Figure \ref{fig:intervals}):
\begin{itemize}
	\item All intervals $(i',j')$ such that $i' < i_{k}$ and $j_{k} \leq j' \leq j_{k+1}$, 
	i.e., we ignore all intervals that start before $i_{k}$.
	Even if these intervals exceed $\theta_{fc}$, they cannot possibly be smaller than $(s,e)$, since we know that $(s,e)=(i_{k},j_{k})$ or that $(s,e)$ is even smaller than $(i_{k},j_{k})$.
	\item All intervals $(i',j')$ such that $i' > i_{k+1}$ and $j_{k} \leq j' \leq j_{k+1}$,
	i.e., we ignore all intervals that start after $i_{k+1}$.
	These intervals cannot possibly exceed $\theta_{fc}$, since $(i_{k+1}+1,j_{k+1})$ is below $\theta_{fc}$ and all these intervals are sub-intervals of $(i_{k+1}+1,j_{k+1})$.
	\item We are thus left with intervals $(i',j')$ such that $i_{k} \leq i' \leq i_{k+1}$ and $j_{k} \leq j' \leq j_{k+1}$. 
	Out of all the interval that can be constructed from combining these $i'$ and $j'$,
	the algorithm checks the intervals $(i'=i_{k},j')$ and $(i',j'=j_{k+1})$.
	The intervals that are thus left unchecked are the intervals $(i', j')$ such that $i_{k} < i' \leq i_{k+1}$ and $j_{k} \leq j' < j_{k+1}$. The question is: is it possible for such an interval to exceed $\theta_{fc}$. The answer is negative. Assume that there is such an interval $(i',j')$. 
	If this were the case, then the algorithm, during its expansion phase, would have stopped at $j'$,
	because $(i_{k},j')$ would exceed $\theta_{fc}$. Therefore, these intervals cannot exceed $\theta_{fc}$.
\end{itemize}
A similar reasoning allows us to show that the loop invariant holds after the first iteration.
It thus holds after every iteration.
\end{proof}

\subsection{Proofs of Complexity Results}

\subsubsection{Proof of Proposition \ref{proposition:complexity1}}
\label{sec:proof:complexity1}
\begin{proposition*}[Step 1 in Figure \ref{fig:vmmflow}]
Let $S_{1..k}$ be a stream and $m$ the maximum depth of the Counter Suffix Tree $T$ to be constructed from $S_{1..k}$.
The complexity of constructing $T$ is $O(m(k-m))$.
\end{proposition*}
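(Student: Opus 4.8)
The plan is to analyze the incremental construction algorithm of Section~\ref{sec:prob_empirical} directly: bound the cost of processing one new stream symbol, and multiply by the number of symbols that actually trigger a tree update.

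First I would recall the algorithm. The system keeps a FIFO buffer holding the $m$ most recent symbols of $S$; whenever a new symbol arrives and the buffer is full, the algorithm reads the buffer as a string of length $m$ and walks down the \cst\ from the root along the path labelled by that string, either incrementing the counter of each visited node (if the node already exists) or creating the node with counter $1$ (if it does not). The key observation is that the work done at a single node of this path is $O(1)$: each internal node has at most one child per alphabet symbol, so the children can be stored indexed by symbol and the next child is located in constant time, and both incrementing an existing counter and allocating a fresh node-with-counter are $O(1)$ operations. Since the walk always has length exactly $m$, processing one symbol (once the buffer is full) costs $O(m)$.

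Next I would count how many such updates occur. The buffer first becomes full right after the $m$-th symbol is consumed, and from that point on exactly one update is performed for each incoming symbol; hence the updates correspond to the symbols at positions $m, m+1, \dots, k$, giving $k-m+1 = \Theta(k-m)$ updates (the $O(m)$ work of merely filling the buffer over the first $m-1$ symbols is absorbed). Multiplying the per-update cost by the number of updates yields a total of $O(m)\cdot O(k-m)=O(m(k-m))$, which is the claimed bound; the degenerate case $k=m$ is covered up to the single $O(m)$ buffer-initialisation term. The argument is essentially routine — the only points that need care are the $O(1)$-per-node claim (which relies on representing a node's children by symbol so that a child lookup is not $O(t)$ or $O(\log t)$ in the alphabet size $t$) and pinning down when the first update occurs, so that the update count is $\Theta(k-m)$ rather than $\Theta(k)$. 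I do not expect any real obstacle here.
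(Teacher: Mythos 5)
Your proposal is correct and follows essentially the same route as the paper's proof: both bound the per-symbol work by the window length $m$ (with constant cost per node visited, incremented or created) and multiply by the $k-m+1$ windows $S_{l-m+1..l}$, $m \leq l \leq k$, to obtain $O(m(k-m))$. Your additional remarks on constant-time child lookup and on when the buffer first fills are harmless refinements of the same counting argument.
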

\begin{proof}
There are three operations that affect the cost:
incrementing the counter of a node by $1$, with constant cost $i$;
inserting a new node, with constant cost $n$;
visiting an existing node with constant cost $v$;
We assume that $n > v$.
For every $S_{l-m+1..l}$, $m \leq l \leq k$ of length $m$,
there will be $m$ increment operations and $m$ nodes will be ``touched'',
i.e.,
either visited if already existing or created. 
Therefore, the total number of increment operations is $(k-m+1) m = km - m^{2} +m = m(k-m)+m$.
The same result applies for the number of node ``touches''.
It is always true that $m < k$ and typically $m \ll k$.
Therefore, the cost of increments is $O(m(k-m))$ and the cost of visits/creations is also $O(m(k-m))$.
Thus, the total cost is $O(m(k-m)) + O(m(k-m)) = O(m(k-m))$.
In fact, 
the worst case is when all $S_{l-m+1..l}$ are different and have no common suffixes.
In this case,
there are no visits to existing nodes,
but only insertions, 
which are more expensive than visits.
Their cost would again be $O(nm(k-m))=O(m(k-m))$, ignoring the constant $n$.
\end{proof}

\subsubsection{Proof of Proposition \ref{proposition:complexity3}}
\label{sec:proof:complexity3}
\begin{proposition*}[Step 3a in Figure \ref{fig:vmmflow}]
Let $T$ be a \pst\ of maximum depth $m$, learned with the $t$ minterms of a \dsfa\ $M_{R}$.
The complexity of constructing a \psa\ $M_{S}$ from $T$ is $O(t^{m+1} \cdot m)$.
\end{proposition*}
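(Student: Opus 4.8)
The plan is to follow the two-stage construction outlined in Section~\ref{sec:pst}. First, expand the given \pst\ $T$ into a \pst\ $\hat{T}'$ whose leaves can serve as the states of a \psa: concretely, $\hat{T}'$ must have a suffix-free set of leaf labels such that for every leaf label $s$ and every minterm $\sigma$ there is a leaf label that is a suffix of $s\cdot\sigma$ (cf.\ Definition~\ref{def:psa}). Second, read off $Q$, $\tau$ and $\gamma$ from $\hat{T}'$. I would bound the cost of each stage separately and then sum.

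For the expansion stage, the key combinatorial fact is that the worst case is when $\hat{T}'$ is the complete $t$-ary tree of depth $m$: every internal node of a \pst\ has exactly $t$ children, no node lies below depth $m$, and the expansion only ever fills in missing children of already-existing internal nodes. Hence $\hat{T}'$ has at most $1+t+\cdots+t^{m}=O(t^{m})$ nodes, of which at most $t^{m}$ are leaves, and it can be produced from $T$ by a single top-down pass creating each missing node in $O(1)$ time and, when needed, copying the next-symbol distribution $\gamma_{s}$ of the (length $\le m$) parent context, for a total of $O(m\cdot t^{m})$.

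For the read-off stage, $Q$ is the set of at most $t^{m}$ leaves of $\hat{T}'$, and each $\gamma(s,\cdot)$ is just the label $\gamma_{s}$ already stored at leaf $s$. The only real work is, for every leaf $s$ and every minterm $\sigma$, computing the target $\tau(s,\sigma)$, which by Definition~\ref{def:psa} is the leaf whose label is a suffix of $s\cdot\sigma$. I would compute it by walking from the root of $\hat{T}'$ along the edges labelled by the characters of $s\cdot\sigma$ read in reverse order (first $\sigma$, then the last character of $s$, and so on); since $\hat{T}'$ has depth at most $m$, this walk touches at most $m+1$ nodes and, by the defining property of the expanded tree, ends at a leaf. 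Each transition therefore costs $O(m)$. With at most $t^{m}$ leaves and $t$ minterms there are at most $t^{m+1}$ transitions, giving $O(m\cdot t^{m+1})$; adding the $O(m\cdot t^{m})$ expansion cost yields $O(m\cdot t^{m+1})$ overall.

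The main obstacle I expect is not the counting but the correctness and termination of this suffix-walk, i.e., arguing that reading $s\cdot\sigma$ backwards from the root lands on the \emph{unique} leaf whose label is a suffix of $s\cdot\sigma$ (uniqueness from suffix-freeness of the leaf set; existence, and the fact that the walk cannot ``fall off'' the tree before reaching a leaf, from the expansion). Once that structural property of $\hat{T}'$ is nailed down, the stated complexity follows from the elementary node count above.
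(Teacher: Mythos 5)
Your proof is correct and takes essentially the same route as the paper's: in the worst case the (expanded) tree has $t^{m}$ leaves, these leaves become the states of the \psa, and each of the $t$ outgoing transitions per state is resolved by an $O(m)$ walk in the tree, giving $O(t^{m+1}\cdot m)$. Your explicit accounting of the expansion stage and of the correctness/termination of the reverse suffix-walk makes precise what the paper leaves implicit (it simply assumes the worst-case fully expanded tree and constant-time leaf access), and the extra $O(m\cdot t^{m})$ expansion cost is dominated by the transition-construction term, so the stated bound is unaffected.
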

\begin{proof}
We assume that the cost of creating new states and transitions for $M_{S}$ is constant.
In the worst case,
all possible suffixes of length $m$ have to be added to $T$ as leaves.
$T$ will thus have $t^{m}$ leaves.
The main idea of the algorithm for converting a \pst\ $T$ to a \psa\ $M_{S}$ is to use the leaves
%\textbf{leaves} \textbf{CHECK AGAIN} 
of $T$ as states of $M_{S}$ and for every symbol (minterm) $\sigma$ find the next state/leaf and set the transition probability to be equal to the probability of $\sigma$ from the source leaf. 
If we assume that the cost of accessing a leaf is constant
(e.g., by keeping separate pointers to the leaves),
the cost for constructing $M_{S}$ is dominated by the cost of constructing the $k^{m}$ states of $M_{S}$ and the $t$ transitions from each such state. 
For each transition,
finding the next state requires traversing a path of length $m$ in $T$. 
The total cost is thus $O(t^{m} \cdot t \cdot m)$ = $O(t^{m+1} \cdot m)$.  
\end{proof}

\subsubsection{Proof of Proposition \ref{proposition:complexity4}}
\label{sec:proof:complexity4}
\begin{proposition*}[Step 4 in Figure \ref{fig:vmmflow}]
Let $M_{R}$ be a \dsfa\ with $t$ minterms and $M_{S}$ a \psa\ learned with the minterms of $M_{R}$.
The complexity of constructing an embedding $M$ of $M_{S}$ in $M_{S}$ with Algorithm \ref{algorithm:merging} is $O(t \cdot \lvert M_{R}.Q \times M_{S}.Q\rvert)$.
\end{proposition*}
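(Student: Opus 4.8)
The plan is to analyze Algorithm \ref{algorithm:merging} step by step, counting the total work done across all iterations of its outer \texttt{while} loop. The key observation is that the algorithm builds the embedding incrementally: each state it creates is a pair $(r,s)$ with $r \in M_{R}.Q$ and $s \in M_{S}.Q$, and no state is ever created twice because of the $\mathit{Checked}$ set guarding state creation. Therefore the loop body is executed at most once per reachable state of $M$, and the number of reachable states is bounded by $\lvert M_{R}.Q \times M_{S}.Q \rvert$.

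First I would isolate the cost of a single iteration of the outer \texttt{while} loop. For a fixed state $q$ popped from the $\mathit{Frontier}$, the algorithm iterates over all $\sigma \in M_{S}.\Sigma$; since the alphabet consists of the $t$ minterms of $M_{R}$, this inner \texttt{foreach} runs $t$ times. Each inner iteration performs a constant number of primitive operations: one lookup in $M_{S}.\tau$, one lookup in $M_{R}.\delta$, one membership test against $\mathit{Checked}$, and the creation of a new state, a new transition, and a new probability entry. Assuming (as elsewhere in the paper) that these lookups and creations have constant cost, one iteration of the outer loop costs $O(t)$.

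Then I would combine the two bounds. The outer loop processes each element of the frontier exactly once, and every element ever placed in the frontier is a state of $M$, of which there are at most $\lvert M_{R}.Q \times M_{S}.Q \rvert$. Multiplying the per-iteration cost $O(t)$ by the number of iterations $O(\lvert M_{R}.Q \times M_{S}.Q \rvert)$ gives the claimed bound $O(t \cdot \lvert M_{R}.Q \times M_{S}.Q \rvert)$. I would also note that the initialization phase (creating the $\lvert M_{S}.Q \rvert$ initial states) and the final phase (scanning all states of $M$ to identify the final ones) are both dominated by this bound, since $\lvert M_{S}.Q \rvert \le \lvert M_{R}.Q \times M_{S}.Q \rvert$ and the final scan is $O(\lvert M_{R}.Q \times M_{S}.Q \rvert)$.

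The main obstacle is being careful about the implicit constant-cost assumptions: membership testing in $\mathit{Checked}$ and the transition-function lookups $M_{S}.\tau$ and $M_{R}.\delta$ must be taken as constant-time (e.g., via hash maps), which I would state explicitly as in the analogous propositions. A secondary subtlety is confirming that the frontier never revisits a state — this follows because a state is added to $\mathit{Frontier}$ only when $(r^{next},s^{next}) \notin \mathit{Checked}$, and $q$ is moved into $\mathit{Checked}$ at the end of each iteration, so the total number of outer-loop iterations is exactly the number of distinct states created, which is at most $\lvert M_{R}.Q \times M_{S}.Q \rvert$.
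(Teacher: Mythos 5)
Your proposal is correct and follows essentially the same argument as the paper's proof: in the worst case all of $\lvert M_{R}.Q \times M_{S}.Q \rvert$ product states are created, each contributing $O(t)$ work for its outgoing minterm transitions, under the same explicit constant-time assumptions for state/transition creation and lookup. Your accounting via the outer-loop/frontier structure is just a slightly more detailed rendering of the paper's reasoning, so there is nothing substantive to add.
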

\begin{proof}
We assume that the cost of constructing new states and transitions for $M$ is constant.
We also assume that the cost of finding a given state in both $M_{R}$ and $M_{S}$ is constant,
e.g., 
by using a linked data structure for representing the automaton with a hash table on its states (or an array),
and the cost of finding the next state from a given state is also constant.
In the worst case,
even with the incremental algorithm \ref{algorithm:merging},
we would need to create the full Cartesian product $M_{R}.Q \times M_{S}.Q$ to get the states of $M$.
For each of these states,
we would need to find the states of $M_{R}$ and $M_{S}$ from which it will be composed and to create $t$ outgoing transitions.
Therefore,
the complexity of creating $M$ would be
$O(t \cdot \lvert M_{R}.Q \times M_{S}.Q\rvert)$.
\end{proof}

\subsubsection{Proof of Proposition \ref{proposition:complexity5}}
\label{sec:proof:complexity5}
\begin{proposition*}[Step 5 in Figure \ref{fig:vmmflow}]
Let $M$ be the embedding of a \psa\ $M_{S}$ in a \dsfa\ $M_{R}$.
The complexity of estimating the waiting-time distribution for a state of $M$ and a horizon of length $h$ using Theorem \ref{theorem:non-finals} is $O((h-1) k^{2.37})$ where $k$ is the dimension of the square matrix $\boldsymbol{N}$.
\end{proposition*}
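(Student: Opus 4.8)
The plan is to read the cost directly off Equation~\eqref{eq:wtd:non-finals}: producing the full waiting-time distribution over a horizon of length $h$ means evaluating that formula for every index $n$ in $\{1,\dots,h\}$, and the only quantity depending on $n$ is the matrix power $\boldsymbol{N}^{n-1}$ (with $\boldsymbol{N}^{0}=\boldsymbol{I}$). So the analysis reduces to bounding three things: (a) the one-off preprocessing, (b) the generation of the powers of $\boldsymbol{N}$, and (c) the per-index evaluation.

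For (a), the vector $\boldsymbol{b}:=(\boldsymbol{I}-\boldsymbol{N})\boldsymbol{1}$ is obtained with one matrix--vector product and a subtraction, i.e.\ in $O(k^{2})$ time, while $\boldsymbol{\xi_{N}}^{T}$ is given. For (b), I would build $\boldsymbol{N}^{1},\dots,\boldsymbol{N}^{h-1}$ incrementally, each from its predecessor by a single $k\times k$ matrix--matrix multiplication; invoking fast matrix multiplication, each such product costs $O(k^{2.37})$, and there are $h-1$ of them, giving $O((h-1)\,k^{2.37})$. For (c), for each $n$ the probability $P(Y_{n}\in F,\dots)=\boldsymbol{\xi_{N}}^{T}\boldsymbol{N}^{n-1}\boldsymbol{b}$ is one vector--matrix product plus one inner product, i.e.\ $O(k^{2})$ per index and $O(h\,k^{2})$ overall. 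Summing (a)--(c), the $O((h-1)\,k^{2.37})$ term from (b) dominates, which is exactly the stated bound. The case of Theorem~\ref{theorem:finals} is identical except that the relevant powers are $\boldsymbol{N}^{n-2}$ and there is an additional $O(k^{2})$ cost to form the prefix $\boldsymbol{\xi_{F}}^{T}\boldsymbol{F_{N}}$ appearing in Equation~\eqref{eq:wtd:finals}; the asymptotic bound is unchanged.

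There is no deep obstacle in this proof; the only two points that need to be stated carefully are: first, that one should reuse each power $\boldsymbol{N}^{n-1}$ to obtain the next one rather than recompute it from scratch for every $n$ --- otherwise one would pay an extra logarithmic factor, e.g.\ $O(h\log h)$ matrix multiplications via repeated squaring; and second, that the $k^{2.37}$ exponent is precisely the cost of multiplying two square $k\times k$ matrices with a fast algorithm, against which all the remaining linear-algebra steps (the $O(k^{2})$ matrix--vector and $O(k)$ vector--vector operations) are of lower order and are absorbed into it.
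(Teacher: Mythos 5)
Your proposal is correct and follows essentially the same route as the paper's proof: compute $(\boldsymbol{I}-\boldsymbol{N})\boldsymbol{1}$ once in $O(k^{2})$, obtain each successive power of $\boldsymbol{N}$ incrementally with one fast $O(k^{2.37})$ matrix multiplication per horizon point, and note that the per-index matrix--vector and inner-product costs are lower order, so the $(h-1)$ multiplications dominate. The only differences are trivial indexing conventions (the paper runs $n$ from $2$ to $h+1$), so nothing of substance is missing.
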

\begin{proof}
We want to use Equation \ref{eq:wtd:non-finals} to estimate the distribution of a state.
The equation is repeated below:
\begin{equation*}
P(Y_{n} \in F, Y_{n-1} \notin F,...,Y_{1} \notin F \mid \boldsymbol{\xi_{init}}) =
\boldsymbol{\xi_{N}}^{T}\boldsymbol{N}^{n-1}(\boldsymbol{I}-\boldsymbol{N})\boldsymbol{1}
\end{equation*}
We want to estimate the distribution for the $h$ points of the horizon,
i.e.,
for $n=2$, $n=3$ up to $n=h+1$.
For $n=2$,
we have 
\begin{equation*}
P(Y_{2} \in F, Y_{1} \notin F \mid \boldsymbol{\xi_{init}}) =
\boldsymbol{\xi_{N}}^{T}\boldsymbol{N}(\boldsymbol{I}-\boldsymbol{N})\boldsymbol{1}
\end{equation*}
For $n=3$,
we have
\begin{equation*}
P(Y_{3} \in F, Y_{2} \notin F, Y_{1} \notin F \mid \boldsymbol{\xi_{init}}) =
\boldsymbol{\xi_{N}}^{T}\boldsymbol{N}^{2}(\boldsymbol{I}-\boldsymbol{N})\boldsymbol{1}
\end{equation*}
In general,
for $n=i$,
we can use the power of $\boldsymbol{N}$ that we have estimated in the previous step for $n=i-1$, $\boldsymbol{N}^{i-2}$, in order to estimate the next power $\boldsymbol{N}^{i-1}$ via a multiplication by $\boldsymbol{N}$ so as to avoid estimating this power from scratch.
Then $\boldsymbol{N}^{i-1}$ can be multiplied by $(\boldsymbol{I}-\boldsymbol{N})\boldsymbol{1}$,
which remains fixed for all $i$ and can thus be calculated only once.

The cost of estimating $(\boldsymbol{I}-\boldsymbol{N})$ is $k^{2}$ due to the $k^{2}$ subtractions.
Multiplying the matrix $(\boldsymbol{I}-\boldsymbol{N})$ by the vector $\boldsymbol{1}$ results in a new vector with $k$ elements. 
Each of these elements requires $k$ multiplications and $k - 1$ additions or $2k - 1$ operations.
Thus, the estimation of $(\boldsymbol{I}-\boldsymbol{N})\boldsymbol{1}$ has a cost of $k^{2} + k(2k - 1) = 3 k^{2} - k$.

Now, for $n=i$, estimating the power $\boldsymbol{N}^{i-1}$ from $\boldsymbol{N}^{i-2}$ has a cost of $k^{2.37}$ using an efficient multiplication algorithm such as the Coppersmith–Winograd algorithm \cite{DBLP:journals/jsc/CoppersmithW90} or the improvement proposed by Stothers \cite{stothers2010complexity}.

Additionally, $\boldsymbol{N}^{i-1}$ must then be multiplied by the vector $(\boldsymbol{I}-\boldsymbol{N})\boldsymbol{1}$,
with a cost of $2k^{2}-k$,
resulting in a new vector with $k$ elements.

This vector must then be multiplied by $\boldsymbol{\xi_{N}}^{T}$ to produce the final probability value with a cost of $2k-1$ for the $k$ multiplications and the $k-1$ additions.

We thus have a fixed initial cost of $3k^{2} - k$ and then for every iteration $i$ a cost of 
$k^{2.37} I_{\{i>2\}} + 2k^{2}-k + 2k-1 = k^{2.37} I_{\{i>2\}} + 2k^{2} + k - 1$,
where $I_{\{i>2\}}$ is an indicator function ($1$ for $i>2$ and $0$ otherwise).
Note that the cost $k^{2.37}$ is not included for $i=2$ because in this case we do not need to raise $\boldsymbol{N}$ to a power.
The total cost would thus be:
\begin{equation*}
\begin{aligned}
3k^{2} - k  + & k^{2.37} \cdot 0 + 2k^{2} + k - 1 & \text{for } n=2 \\
 + & k^{2.37} \cdot 1 + 2k^{2} + k - 1 & \text{for } n=3 \\
 & \cdots & \\
   + & k^{2.37} \cdot 1 + 2k^{2} + k - 1 & \text{for } n=h+1 \\
   = & (h-1)k^{2.37} + (2h+3)k^{2} + (h-1)k -h & = O((h-1)k^{2.37})
\end{aligned}
\end{equation*}

\end{proof}

\subsubsection{Proof of Proposition \ref{proposition:complexity6}}
\label{sec:proof:complexity6}
\begin{proposition*}[Step 6 in Figure \ref{fig:vmmflow}]
For a waiting-time distribution with a horizon of length $h$,
the complexity of finding the smallest interval that exceeds a confidence threshold $\theta_{fc}$ with Algorithm \ref{algorithm:interval} is $O(h)$. 
\end{proposition*}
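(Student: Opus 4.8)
I would give an aggregate (amortized) analysis of the two pointers $i$ and $j$ in Algorithm \ref{algorithm:interval}, of the kind standard for two‑pointer / sliding‑window procedures. The central observations are that $j$ only ever increases and is capped by $h$, that $i$ only moves forward apart from the single decrement performed at the end of each outer iteration, and that the interval $(i,j)$ stays valid, so $i \le j \le h$ throughout. Once these are in place, every elementary operation executed by the algorithm can be charged either to an increment of $j$, to an increment of $i$, or to a single pass through the outer loop, and each of these three quantities is $O(h)$.

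\textbf{Bounding the number of outer iterations.} First I would argue that each pass through the outer \textbf{while} loop strictly increases $j$. At the start of the first iteration, and at the start of every subsequent one, the accumulator satisfies $p < \theta_{fc}$: in the nontrivial case $p = P(1) < \theta_{fc}$ initially, and after a shrinking phase the loop guard forces $p < \theta_{fc}$, while the following $i \leftarrow i-1$ leaves $p$ unchanged. Hence the expansion phase of each iteration executes its body at least once, incrementing $j$ (this is exactly the statement underlying the loop invariant of Section \ref{sec:proof:interval}, where the $(k{+}1)$‑st iteration examines only intervals with $j_k < j' \le j_{k+1}$). Since $j$ starts at $1$, never decreases, and the loop stops as soon as $j = h$, the number of outer iterations is $O(h)$.

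\textbf{Accounting for the total work.} Across the whole run the expansion phases increment $j$ a total of at most $h-1$ times (they never decrease it, and it is capped at $h$), each increment costing $O(1)$ for the single addition $p \leftarrow p + P(j)$; so all expansion phases together cost $O(h)$. For the shrinking phases, $i$ is incremented only inside them, is decremented exactly once per outer iteration, and is kept $\le j \le h$; therefore the total number of $i$‑increments is at most $h$ plus the number of outer iterations, i.e.\ $O(h)$, each increment costing $O(1)$ for the subtraction $p \leftarrow p - P(i)$. All remaining bookkeeping — the two guard comparisons, the $i \leftarrow i-1$ correction, and the $\mathit{spread}$ test with its possible update of $(s,e)$ — is $O(1)$ per outer iteration and hence $O(h)$ in total. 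Summing the three contributions gives $O(h)$, in contrast with the naive $O(h^{3})$ enumeration discussed in Section \ref{sec:forecasts}.

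\textbf{Main obstacle.} The only delicate point is the second step: making precise that each outer iteration strictly advances $j$, which is what simultaneously bounds the number of iterations and the number of times the $-1$ correction is applied to $i$. Everything else is a routine charging argument. I would also remark that the boundary behaviour (e.g.\ $P(1) \ge \theta_{fc}$, or the window momentarily collapsing to a single point) does not break the bound, since even then $i$ stays in $[1,h]$ and $j$ still advances on the next iteration.
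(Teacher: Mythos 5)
Your proof is correct and follows essentially the same route as the paper's own argument, which simply notes that the two indices $i$ and $j$ each scan the distribution at most once with $O(1)$ work (one addition or subtraction) per move, giving $O(h)$ total. Your version just makes the same amortized two-pointer accounting more explicit (bounding outer iterations via the strict advance of $j$ and charging the per-iteration bookkeeping), which the paper leaves implicit.
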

\begin{proof}
Indexes $i$ and $j$ of Algorithm \ref{algorithm:interval} scan the distribution only once.
The cost for $j$ is the cost of $h$ points of the distribution that need to be accessed plus 
$h-1$ additions.
Similarly, the cost for $i$ is the cost of (at most) $h$ accessed points plus the cost of (at most) $h-1$ subtractions.
Thus the total cost is $O(h)$.
\end{proof}

\subsubsection{Proof of Proposition \ref{proposition:complexity3prime}}
\label{sec:proof:complexity3prime}
\begin{proposition*}[Step 3b in Figure \ref{fig:vmmflow}]
Let $T$ be a \pst\ of maximum depth $m$, learned with the $t$ minterms of a \dsfa\ $M_{R}$.
The complexity of estimating the waiting-time distribution for a state of $M_{R}$ and a horizon of length $h$ directly from $T$ is $O((m+3) \frac{t - t^{h+1}}{1 - t})$.
\end{proposition*}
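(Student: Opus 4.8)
The plan is to analyze directly the recursive procedure of Section~\ref{sec:no-mc} (illustrated in Figure~\ref{fig:future}), which never materializes a Markov chain. Recall that, to obtain the waiting-time distribution of a state $q$ of $M_{R}$ for a horizon of length $h$, the procedure explores a \emph{virtual} tree whose nodes are pairs $(r,l)$, with $r$ a state of $M_{R}$ and $l$ the node of $T$ reached by the last $m$ projected symbols; the root is $(q,l_{0})$ at level $0$, and a node at level $k$ is expanded into its children at level $k+1$ by reading each of the $t$ minterms of $M_{R}$. The probability $P(W_{q}=k)$ is then the sum, over all level-$k$ nodes whose $M_{R}$-component is final and along whose root path no earlier final component occurs, of the product of the edge probabilities, each read off the next-symbol function stored at the corresponding node of $T$. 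Estimating the whole distribution up to the horizon therefore amounts to generating all such virtual nodes up to level $h$ and accumulating their path probabilities.

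First I would bound the number of generated virtual nodes. Since every internal node branches on exactly the $t$ minterms, starting from the single root there are at most $t^{k}$ nodes at level $k$, hence at most $\sum_{k=1}^{h} t^{k} = \frac{t - t^{h+1}}{1-t}$ virtual nodes in total (for $t \neq 1$; when $t = 1$ the sum is simply $h$ and the bound degenerates to $O((m+3)h)$). This is a worst-case bound: the two optimizations discussed in Section~\ref{sec:no-mc} — not continuing past a node whose $M_{R}$-component is final, and pruning partial paths whose probability has dropped below the cutoff — can only delete nodes, so they do not invalidate it.

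Next I would account for the cost of producing one child node $(r',l')$ from its parent $(r,l)$ under a minterm $\sigma$: (i) advance the \dsfa, $r' = \delta_{R}(r,\sigma)$, in $O(1)$; (ii) advance inside $T$ to the node $l'$ determined by the updated buffer of the last $m$ symbols, which costs $O(m)$ because $T$ has depth $m$; (iii) multiply the accumulated path probability by the value $\gamma_{l}(\sigma)$ retrieved from $l$ in $O(1)$; and (iv) test whether $r'$ is a final state of $M_{R}$ in $O(1)$. Thus each virtual node is created at cost $m+3$, and since there are at most $\frac{t-t^{h+1}}{1-t}$ of them, the total work — which also dominates the final aggregation of the surviving path probabilities into the $h$ entries of the distribution — is $O\!\left((m+3)\,\frac{t - t^{h+1}}{1-t}\right)$, as claimed.

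The only delicate points are the two just flagged: making precise that the $t^{k}$-per-level count is a genuine upper bound respected by the optimizations, and pinning down the constant ``$3$'' in the per-node cost by enumerating exactly the $O(1)$ operations (the \dsfa\ transition, the probability multiplication, the final-state test) that accompany the $O(m)$ traversal of $T$. Everything else is the routine geometric-series identity $\sum_{k=1}^{h} t^{k} = \frac{t-t^{h+1}}{1-t}$.
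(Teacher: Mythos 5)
Your proposal is correct and follows essentially the same argument as the paper's proof: bound the number of virtual future nodes by the geometric series $\sum_{k=1}^{h} t^{k} = \frac{t - t^{h+1}}{1-t}$ and charge each node $O(m)$ for the traversal of $T$ plus a constant number of additional $O(1)$ operations, yielding the $(m+3)$ factor. The only difference is cosmetic bookkeeping of the constant (the paper itemizes it as node creation, node access and one multiplication, whereas you itemize the \dsfa\ transition, the multiplication and the final-state test), which does not affect the bound.
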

\begin{proof}
After every new event arrival,
we first have to construct the tree of future states,
as shown in Figure \ref{fig:future}.
In the worst case,
no paths can be pruned and the tree has to be expanded until level $h$.
The total number of nodes that have to be created is thus a geometric progress:
$t + t^{2} + \cdots + t^{h}=\sum_{i=1}^{h}t^{i}=\frac{t - t^{h+1}}{1-t}$.
Assuming that it takes constant time to create a new node,
this formula gives the cost of creating the nodes of the trees.
Another cost that is involved concerns the time required to find the proper leaf of the \pst\ $T$ before the creation of each new node.
In the worst case,
all leaves will be at level $m$.
The cost of each search will thus be $m$.
The total search cost for all nodes will be
$mt + mt^{2} + \cdots + mt^{h}=\sum_{i=1}^{h}mt^{i}=m\frac{t - t^{h+1}}{1-t}$.
The total cost (node creation and search) for constructing the tree is
$\frac{t - t^{h+1}}{1-t} + m\frac{t - t^{h+1}}{1-t} = (m+1)\frac{t - t^{h+1}}{1-t}$.
With the tree of future states at hand,
we can now estimate the waiting-time distribution.
In the worst case,
the tree will be fully expanded and we will have to access all its paths until level $h$.
We will first have to visit the $t$ nodes of level 1,
then the $t^{2}$ nodes of level 2, etc.
The access cost will thus be  
$t + t^{2} + \cdots + t^{h}=\sum_{i=1}^{h}t^{i}=\frac{t - t^{h+1}}{1-t}$.
We also need to take into account the cost of estimating the probability of each node.
For each node,
one multiplication is required,
assuming that we store partial products and do not have to traverse the whole path to a node to estimate its probability.
As a result,
the number of multiplications will also be $\frac{t - t^{h+1}}{1-t}$.
The total cost (path traversal and multiplications) will thus be $2\frac{t - t^{h+1}}{1-t}$,
where we ignore the cost of summing the probabilities of final states,
assuming it is constant.
By adding the cost of constructing the tree ($(m+1)\frac{t - t^{h+1}}{1-t}$) and the cost of estimating the distribution ($2\frac{t - t^{h+1}}{1-t}$),
we get a complexity of $O((m+3)\frac{t - t^{h+1}}{1-t})$.
\end{proof}

\end{document}